\newtheorem{theorem}{Theorem}[section]
\newtheorem{lemma}[theorem]{Lemma}
\newtheorem{proposition}[theorem]{Proposition}
\newtheorem{assumption}{Assumption}
\definecolor{change}{rgb}{0,0,0}
\newcommand{\vectorize}[1]{\textnormal{vec}(#1)}
\newcommand{\argmax}{\arg\max}
\newcommand{\argmin}{\arg\min}
\newcommand{\zero}{\boldsymbol{0}}
\title{Estimation of High-Dimensional Markov-Switching VAR Models with an Approximate EM Algorithm}
\author[1]{Xiudi Li}
\author[2]{Abolfazl Safikhani}
\author[3,4]{Ali Shojaie}
\affil[1]{Division of Biostatistics, University of California, Berkeley}
\affil[2]{Department of Statistics, George Mason University}
\affil[3]{Department of Biostatistics, University of Washington}
\affil[4]{Department of Statistics, University of Washington}
\date{}
\begin{document}
\maketitle

\begin{abstract}
    Regime shifts in high-dimensional time series arise naturally in many applications, from neuroimaging to finance. This problem has received considerable attention in low-dimensional settings, with both Bayesian and frequentist methods used extensively for parameter estimation. The EM algorithm is a particularly popular strategy for parameter estimation in low-dimensional settings, although the statistical properties of the resulting estimates have not been well understood. Furthermore, its extension to high-dimensional time series has proved challenging. To overcome these challenges, in this paper we propose an approximate EM algorithm for Markov-switching VAR models that leads to efficient computation and also facilitates the investigation of asymptotic properties of the resulting parameter estimates. We establish the consistency of the proposed EM algorithm in high dimensions and investigate its performance via simulation studies. We also demonstrate the  algorithm by analyzing a brain electroencephalography (EEG) dataset recorded on a patient experiencing epileptic seizure.
\end{abstract}

\noindent \textbf{Keywords:} High-Dimensional Time Series; Penalized Estimation; $\beta$-Mixing.

\section{Introduction}
The presence of regime shifts is an important feature for many time series arising from various applications; that is, the time series may exhibit changing behavior over time. While some of these regime shifts are attributable to certain deterministic structural changes, in many cases the dynamics of the observed time series is governed by an exogenous stochastic process that determines the regime. For example, the behavior of key macroeconomic indicators may depend on the phase of the business cycle, such as recession versus expansion  \citep{hamilton1989new,artis2004european}; the relationship between stock market return and exchange rate may vary between high- and low-volatility regimes \citep{chkili2014exchange}; in neuroimaging studies, the connectivity between different brain regions may change over time according to the brain's underlying states  \citep{fiecas2021approximate}. As in these examples, the exogenous process that determines the regime is oftentimes latent or not directly observable. This relates naturally to the notion of state-space models \citep{koller2009probabilistic}. 

A prominent example of such a state-space model is the hidden Markov model (HMM) \citep[see, for example, ][]{rabiner1989tutorial}. In an HMM, the observations over time are conditionally independent given a latent process that determines the state, and the state process is assumed to be a Markov chain. Markov-switching vector autoregression (VAR) \citep{krolzig2013markov} can be regarded as a generalization of the HMM model that allows for autoregression. In a Markov-switching VAR model, the dynamics of the observed time series takes a VAR form, but the autoregressive parameters depend on the state of a latent finite-state Markov chain. As such, Markov-switching VAR models have found widespread applications, in areas such as macroeconomics \citep[][and the references therein]{krolzig2013markov} and ecology \citep{solari2011use}. Estimation of the autoregressive parameters and the transition of the latent regime process is often of central interest in these applications.

Bayesian approaches have been developed \citep{fox2010bayesian} and applied for parameter estimation in Markov-switching VARs using various prior distributions for the model parameters \citep{billio2016interconnections,droumaguet2017granger}. Alternatively, the expectation-maximization (EM) algorithm  \citep{dempster1977maximum} provides a general approach for computing the maximum likelihood estimator in latent variable and incomplete data problems. Application of the EM algorithm in Markov-switching VAR models dates back to \citet{lindgren1978markov} and \citet{hamilton1989new}, and the method has gained great popularity since then. However, to the best of our knowledge, the theoretical properties of the estimates obtained using the EM algorithm in Markov-switching VAR models have not been rigorously investigated, even in low-dimensional settings.

Several existing studies have investigated the theoretical properties of the EM algorithm in independent data settings. For instance, the seminal work by \citet{wu1983convergence} established the convergence of the EM algorithm to the unique global optimum under unimodality of the likelihood function \citep{wu1983convergence}; however, in general cases, where the likelihood function is multi-modal, only convergence to some local optimum has been established \citep[for example, ][]{mclachlan2007algorithm}. Recent work by \citet{balakrishnan2017statistical} establishes statistical guarantees for the EM estimate in low-dimensional settings, when the algorithm is initialized within a local region around the true parameter. Under certain conditions, the authors show geometric convergence to an EM fixed point that is within statistical precision of the true parameter. \citet{wang2014high} extends the EM algorithm to high-dimensional settings (where the number of parameters is larger than sample size), by introducing truncation in the M-step. In contrast, \citet{yi2015regularized} develops a regularized EM algorithm for high-dimensional problems that incorporates regularization in the M-step. It also establishes general statistical guarantee for the resulting parameter estimate. Regularized EM algorithms are also studied in specific contexts, including high-dimensional mixture regression \citep{stadler2010l1} and graphical models \citep{hao2017simultaneous}. All of these existing studies consider a sample of independent and identically distributed observations, and their results do not generalize to dependent observations. 

Regularized EM algorithms for Markov-switching VAR models have been considered by \citet{monbet2017sparse} and \citet{maung2021estimating}, to allow the number of parameters to diverge; however, these works did not analyze the statistical properties of the estimate, which is challenging in time series settings. A major source of complication in this setting arises from the dependence of the conditional expectation in the E-step on observations in all time points, due to the unobserved latent variables.
In this work, we develop a regularized EM algorithm for parameter estimation in high-dimensional Markov-switching VAR models, and rigorously establish its performance guarantee. To deal with the temporal dependence among observations, we introduce an approximate E-step, in which we compute approximate conditional expectations. In addition to facilitating theoretical analysis, this approximation also leads to improved computation. 

To establish the consistency of the proposed algorithm, we apply novel probabilistic tools for ergodic time series. We derive an upper bound on the estimation error of the EM estimate when the initialization lies within a local neighborhood of the true parameter value. Notably, our analysis does not require the radius of this neighborhood to shrink with sample size. Establishing this bound necessitates controlling statistical errors throughout the EM iterations, which, in turn, requires uniform concentration results over function classes indexed by the parameter values. To achieve this, we develop a general approach for proving such concentration results for dependent data under 
$\beta$-mixing conditions. Our technique is more broadly applicable and can be of interest beyond the VAR setting studied in this paper.

The rest of this paper is organized as follows. In Section~\ref{sec:EM}, we introduce Markov-switching VAR models and propose a modified EM algorithm for parameter estimation. In Section~\ref{sec:theoretical}, we derive performance guarantees for the proposed algorithm by establishing an upper bound on the estimation error of the resulting estimate under mild conditions. {\color{change}The proofs of our main theoretical results are outlined in  Supplementary Appendix~\ref{sec:proof outline} and the detailed proofs are given in Supplementary Appendices~\ref{app:prooflemma} to \ref{app:proofmainthm}.} In Section~\ref{sec:numerical}, we demonstrate the performance of the proposed method through simulation studies {\color{change}and apply it to analyze a brain electroencephalography (EEG) dataset recorded on a patient experiencing epileptic seizure}.
Section~\ref{sec:discussion} concludes the paper with a discussion.

\paragraph{Notation.} Throughout the paper, we denote the $L_p$-norm of a generic vector $u = (u_1, u_2, \ldots, u_d) \in \mathbb{R}^d$ by $\|u\|_p := (\sum_{i=1}^d |u_i|^p)^{1/p}$ and the spectral norm of a generic matrix $M$ by $\|M\|_2$. We denote the transpose of a generic matrix $M$ by $M^\top$. For a symmetric matrix $M$, we use $\lambda_{\min}(M)$ and $\lambda_{\max}(M)$ to denote its minimum and maximum eigenvalues, respectively. For generic stochastic process $\{X_t\}$, we write the random vector $(X_{t_1},X_{t_1+1},\ldots, X_{t_2})$ as $X_{t_1}^{t_2}$, for $t_1 \leq t_2$. The identity matrix of dimension $d$ is denoted as $\textnormal{Id}_d$. We use $I\{\cdot\}$ to denote the indicator function and use $\otimes$ to denote the Kronecker product of matrices. For positive sequences $\{a_n\}$ and $\{b_n\}$, we write $a_n = O(b_n)$ if $\limsup a_n/b_n \leq C$ for some constant $C$, and $a_n \asymp b_n$ if $a_n=O(b_n)$ and $b_n = O(a_n)$; we write $a_n = o(b_n)$ if $\limsup a_n/b_n = 0$.

\section{EM Algorithm for Markov-Switching VAR Models}\label{sec:EM}

\subsection{Markov-switching VAR model}
{\color{change}Let $Y_t \in \mathbb{R}^d$ denote the observed vector of variables at time $t$ and $Z_t \in \{1,\ldots,K\}$ denote the latent variable that determines the regime. For $1\leq i \leq K$, let $B_i \in \mathbb{R}^{d \times d}$ denote the VAR transition matrix 
in the $i$-th regime, corresponding to the latent variable $Z_t$ taking value $i$, and let $\sigma_i^2 \in \mathbb{R}^+$ denote the variance of the noise vector in the $i$-th regime. We assume that $\{Y_t\}$ follows a regime-switching vector-autoregressive (VAR) model,
\begin{equation}\label{msvar}
    Y_t = \sum_{i=1}^{K} I\{Z_t = i\} \left(B_i^\top Y_{t-1} + \sigma_i \epsilon_t\right),
\end{equation}
where the noise vectors $\epsilon_t$ follow $N(\zero,\textnormal{Id}_d)$ and are independent and identically distributed over time.} The unobserved process $\{Z_t\}$ follows a time-homogeneous first-order Markov chain with transition matrix $P_Z \in \mathbb{R}^{K\times K}$. Let
\begin{equation}\label{latentmc}
    p_{ij} = (P_Z)_{ij} = P\left(Z_t = j | Z_{t-1} = i \right), \quad (i,j) \in \{1,\ldots, K\} \times \{1,\ldots, K\}.
\end{equation}
Note that, for the process defined by \eqref{msvar} and \eqref{latentmc}, $\{(Y_t,Z_t)\}$ is also a first-order Markov chain. 

We focus primarily on VAR of lag 1 in this paper; however, the proposed method generalizes easily to Markov-switching VAR models with general lag in the vector autoregression part; this is because any VAR$(l)$ model can be written as a VAR(1) model with ${Y}_t^\dagger = (Y_t,Y_{t-1},\ldots, Y_{t-l+1})$. Moreover, for VAR with lag $l>1$, the method can be extended to the case where the $l$ regression coefficient matrices are determined jointly by $Z_{t-l_1}^t$ for some $l_1>0$. In such cases, a new regime indicator $Z_t^\ddagger$ can be defined as $(Z_{t-l_1},\ldots,Z_t)$, which again forms a Markov-Chain.

Let $\beta_i = \vectorize{B_i}$ denote the vectorized regression coefficients by concatenating the columns of $B_i$, for $i \in \{1,\ldots,K\}$. Letting $p = \vectorize{P_Z}$, $\sigma = (\sigma_1,\ldots, \sigma_K)$ and $\beta = (\beta_1^\top, \ldots, \beta_K^\top)^\top$, we denote the parameter vector as {\color{change}$\theta = (\beta^\top, p^\top, \sigma^\top)^\top$}. We will estimate $\theta$ via a (modified) EM algorithm, noting that 
identifiability of $\theta$ has been established \citep[][Chapter 6]{krolzig2013markov}. 
Hereafter, we use superscript $*$ to denote the true parameter value to be estimated. In addition, we let $S = \textnormal{supp}(\beta^*)$ denote the support of $\beta^*$, and $|S|$ denote its cardinality.

\subsection{An approximate EM algorithm}
Let $\{Y_t\}_{t=0}^T$ be the observed outcome vectors over time. Had the latent process $\{Z_t\}$ been observed, we would estimate $\theta$ by maximizing the full sample log-likelihood function, defined as 
{\color{change}\begin{multline}
    l(\theta; Y_1^T,Z_1^T) = \frac{1}{T}\sum_{t=1}^T\Bigg[ \left(\sum_{i=1}^K\sum_{j=1}^K I\{Z_{t-1} = i, Z_t = j \} \log p_{ij}\right) \\
    -\sum_{j=1}^K I\{Z_t= j\}\Big(\frac{d}{2}\log \sigma_j^2 + \frac{d}{2}\log 2\pi
    + \frac{1}{2\sigma_j^2}\left\|Y_t - (\textnormal{Id}_d \otimes Y_{t-1})^\top \beta_j \right\|_2^2 \Big) \Bigg]. \label{eqn:approxEX}
\end{multline}}
However, as $\{Z_t\}$ is not observed, maximizing the full sample log-likelihood is infeasible. Instead, we can consider the EM algorithm, where we iterate between the E-step and the M-step. Consider the $q$-th iteration of the EM algorithm. In the E-step, given the current parameter estimate $\theta^{(q-1)}$, the unobserved variables are replaced with their conditional expectations computed under $\theta^{(q-1)}$, conditioned on the observed variables. 
In particular, given a generic parameter value $\theta$, define the smoothed probabilities
\begin{equation}\label{exactwij}
    w_{ij,\theta}(Y_0^T) = P_\theta(Z_{t-1}=i,Z_t=j | Y_0,Y_1,\ldots,Y_T),
\end{equation}
and
\begin{equation}\label{exactwj}
    w_{j,\theta}(Y_0^T) = P_\theta(Z_t=j | Y_0,Y_1,\ldots,Y_T),
\end{equation}
respectively. Here the subscript $\theta$ indicates computing expectation or probability under the parameter value $\theta$. In the E-step of the $q$-th iteration, we could consider replacing $I\{Z_{t-1} = i, Z_t =j\}$ and $I\{Z_t = j\}$ in \eqref{eqn:approxEX} with the smoothed probabilities $w_{ij,\theta^{(q-1)}}(Y_0^T)$ and $w_{j,\theta^{(q-1)}}(Y_0^T)$, respectively, to form the objective function. Subsequently, in the M-step, we maximize the resulting objective function. 

The complex dependence structure in the process and the high-dimensionality of the problem pose significant challenges both theoretically and computationally if we directly apply the EM algorithm outlined in the previous paragraph. We thus propose a modified EM algorithm to overcome these challenges.

First, in the E-step, the exact conditional expectations defined in \eqref{exactwij} and \eqref{exactwj} depend on all the outcome $Y$'s up to time $T$. Theoretically, given such dependence, it might be difficult to establish certain concentration results (see Section~\ref{sec:theoretical}) to obtain the performance guarantee of the EM algorithm. At the same time, certain recursive algorithms are needed for efficient computation of these conditional probabilities without enumerating over all possible paths of $Z_1^T$ \citep[see, for example, ][]{baum1970maximization,lindgren1978markov,hamilton1989new,kim1994dynamic}. Given these difficulties, we propose the following modification to the E-step, in which we approximate the conditional expectations. Specifically, for a generic $\theta$, we define
\begin{align}
    m_{ij,\theta}(Y_{t-s}^{t+s}) &= P_\theta(Z_{t-1}=i,Z_t=j | Z_{t-s}=1, Y_0,Y_1,\ldots,Y_{t+s}) \nonumber \\
    &=  P_\theta(Z_{t-1}=i,Z_t=j | Z_{t-s}=1, Y_{t-s},\ldots,Y_{t+s}),
\end{align}
and
\begin{align}
    m_{j,\theta}(Y_{t-s}^{t+s}) &= P_\theta(Z_t=j | Z_{t-s}=1, Y_0,Y_1,\ldots,Y_{t+s}) \nonumber \\
    &=  P_\theta(Z_t=j | Z_{t-s}=1, Y_{t-s},\ldots,Y_{t+s}),
\end{align}
respectively, for a specified value of $s$. Then, in the $q$-th iteration, we replace $I\{Z_{t-1}=i,Z_t = j\}$ and $I\{Z_t = j\}$ in \eqref{eqn:approxEX} with $m_{ij,\theta^{(q-1)}}(Y_{t-s}^{t+s})$ and $m_{j,\theta^{(q-1)}}(Y_{t-s}^{t+s})$, respectively. These quantities depend only on $Y_{t-s}^{t+s}$, due to the Markovian property of $\{(Y_t,Z_t)\}$. As will be shown in Lemma~\ref{lemma:approximationerror}, under suitable conditions, the error resulting from such approximations will be small with appropriately chosen value of $s$. Under these conditions, the choice of $Z_{t-s} = 1$ is also arbitrary, and conditioning on $Z_{t-s} = i$ for any $i$ yields an equally accurate approximation. In Section~\ref{sec:theoretical}, we will show that we can choose $s \asymp \log(T)$. Thus, evaluating the approximate conditional expectations can be done via enumerating over only $K^{2s}$ paths of length $2s$, which is computationally efficient. 

In high-dimensional problems where $d$ is comparable to, or larger than, $T$, certain modifications of the M-step are also necessary, as otherwise the maximization is ill-posed. Similar to \citet{yi2015regularized} and \citet{hao2017simultaneous}, we employ a regularized optimization approach, where we add an $L_1$ penalty on the regression coefficients $\beta$. Hence, given the current parameter estimate $\theta^{(q-1)}$, we maximize the following objective function in terms of $\tilde\theta$:
\begin{align}\label{sampleQ}
    Q_{n,\lambda}(\tilde\theta|\theta^{(q-1)}) &= \frac{1}{T}\sum_{t=1}^{T}
    \Bigg[ \left(\sum_{i=1}^K\sum_{j=1}^K m_{ij,\theta^{(q-1)}}(Y_{t-s}^{t+s}) \log \tilde p_{ij}\right) \nonumber \\
    &\quad - \frac{d}{2} \sum_{j=1}^K m_{j,\theta^{(q-1)}}(Y_{t-s}^{t+s}) \left(\log 2\pi + \log \tilde\sigma_j^2 \right) \nonumber \\
    &\quad - \left( \sum_{j=1}^K m_{j,\theta^{(q-1)}}(Y_{t-s}^{t+s}) \frac{1}{2\tilde\sigma_j^2} \left\|Y_t - (\textnormal{Id}_d \otimes Y_{t-1})^\top \tilde\beta_j \right\|_2^2\right)\Bigg] - \lambda \sum_{j=1}^K \|\tilde\beta_j\|_1.
\end{align}
{\color{change}We observe that the updates for $p$ and $(\beta,\sigma)$ can be performed separately. Moreover, to find the optimizer over $(\beta,\sigma)$ we can optimize over $(\beta_j, \sigma_j)$ in each regime separately. Here, we use an overall penalty parameter $\lambda$. Alternatively, we may apply a different penalty parameter for each regime in \eqref{sampleQ}, which may increase the flexibility in handling different level of sparsity in the $K$ regimes. The update of $p$ has a closed-form solution,
\begin{equation}\label{eq: update of p}
    p^{(q)}_{ij} = \sum_{t=1}^T m_{ij,\theta^{(q-1)}}(Y_{t-s}^{t+s}) / \sum_{t=1}^T \sum_{j=1}^K m_{ij,\theta^{(q-1)}}(Y_{t-s}^{t+s}),
\end{equation}
where $p_{ij}$ denotes the $(i,j)$-th element of the transition matrix $P_Z$. Updating $(\beta_j,\sigma_j)$ is equivalent to solving the following optimization problem 
\begin{multline}\label{eq: update of beta}
    \left(\beta_j^{(q)},\sigma_j^{(q)}\right) = \argmin_{\tilde\beta_j,\tilde\sigma_j} 
    \frac{1}{T}\sum_{t=1}^{T}
    m_{j,\theta^{(q-1)}}(Y_{t-s}^{t+s})\left\{\frac{1}{2\tilde\sigma_j^2}\left\|Y_t - (\textnormal{Id}_d \otimes Y_{t-1})^\top \tilde\beta_j \right\|_2^2 + \frac{d}{2} \log \tilde\sigma_j^2 \right\} \\
    + \lambda \|\tilde\beta_j\|_1,
\end{multline}
which is a standard penalized regression problem wherein each observation is associated with a weight $m_{j,\theta^{(q-1)}}(Y_{t-s}^{t+s})$. In practice, we first solve for $\beta_j^{(q)}$ as
\begin{align}
    \beta_j^{(q)} &= \argmin_{\tilde\beta_j} 
    \frac{1}{T}\sum_{t=1}^{T}
    m_{j,\theta^{(q-1)}}(Y_{t-s}^{t+s})\left\|Y_t - (\textnormal{Id}_d \otimes Y_{t-1})^\top \tilde\beta_j \right\|_2^2+ \lambda \|\tilde\beta_j\|_1,
\end{align}
using standard software, such as \texttt{glmnet} in \texttt{R}, and then set $(\sigma_j^{(q)})^2$ as the weighted average of the residual sum of squares. The penalty parameter $\lambda$ will change over iterations, and we use $\lambda^{(q)}$ to denote its value in the $q$-th iteration. Let $\theta^{(q)} = ( (\beta^{(q)})^\top, (p^{(q)})^\top, (\sigma^{(q)})^\top )^\top$ denote the updated parameter. The proposed approximate EM algorithm is summarized in Algorithm~\ref{modifiedEM}. }

\begin{algorithm}[t]
\caption{Approximate EM algorithm for high-dimensional Markov-switching VAR}\label{modifiedEM}
\hspace*{\algorithmicindent} \textbf{Input:} Observations $\{Y_0,Y_1,\ldots,Y_T\}$, number of regimes $K$;\\
\hspace*{\algorithmicindent} 
\textbf{Output:} Parameter estimate $\hat\theta$ 
\begin{algorithmic}[]
\State Initialize the parameter $\theta^{(0)} = ( (\beta^{(0)})^\top, (p^{(0)})^\top, (\sigma^{(0)})^\top )^\top$
\State $q \leftarrow 1$
\While{Convergence condition not met}
\State update $p$: obtain $p^{(q)}$ as in \eqref{eq: update of p}
\State choose tuning parameter $\lambda^{(q)}$
\For{$j=1,2,\ldots, K$}
\State update $(\beta_j, \sigma_j)$: obtain $(\beta_j^{(q)},\sigma_j^{(q)})$ as in \eqref{eq: update of beta}
\EndFor
\State update $\theta$: $\theta^{(q)} \leftarrow ( (\beta^{(q)})^\top, (p^{(q)})^\top, (\sigma^{(q)})^\top )^\top$
\State $q \leftarrow q+1$
\EndWhile
\State $\hat\theta \leftarrow \theta^{(q-1)}$
\end{algorithmic}
\end{algorithm}

\section{Theoretical Guarantees}\label{sec:theoretical}

\subsection{Stationarity of Markov-switching VAR models}
Before studying the consistency of the EM algorithm, we first investigate the stationarity of Markov-switching VAR models. We introduce the following assumption: 
\begin{assumption}[Norm of coefficient matrix]\label{cond: operatornorm}
There exists some constant $\tilde c<1$, such that $\|B_i^*\|_2 \leq \tilde c$ for all $i \in \{1,\ldots,K\}$.
\end{assumption}

\begin{lemma}[Stationarity and geometric ergodicity]\label{lemma: msvarstationarity}
Under Assumption \ref{cond: operatornorm}, the process $\{(Y_t,Z_t)\}$ is strictly stationary and geometrically ergodic. Moreover, under sampling from the stationary distribution, $Y_t$ is a sub-Gaussian random vector. 
\end{lemma}

A VAR model will be stable if the spectral radius of the matrix of regression coefficients is upper bounded away from one \citep{lutkepohl2013introduction}. Here, our Assumption~\ref{cond: operatornorm} requires that the spectral norm of the regression coefficient matrix $B_i$ is upper bounded away from one, for all of the regimes. This is stronger than assuming that the spectral radii of all the $B_i$'s are bounded away from one, as spectral norm is generally larger than spectral radius. However, as shown in \citet{stelzer2009markov}, the weaker assumption on spectral radius is generally not sufficient to guarantee stationarity of the regime-switching VAR models, or to guarantee the existence of all moments of $Y_t$, which is essential for applying concentration results and deriving upper bounds on the estimation error.

Assumption~\ref{cond: operatornorm} is also sufficient for geometric ergodicity of the process $\{(Y_t,Z_t)\}$. Proposition 2 in \citet{liebscher2005towards} then implies that the process is geometrically $\beta$-mixing; that is, $b_{\textnormal{mix}}(l) = O(c^l)$ for some $c \in (0,1)$, where $b_{\textnormal{mix}}(l)$ is the $\beta$-mixing coefficient defined in, for example, \citet{bradley2005basic}. 

\subsection{Statistical analysis of the EM algorithm}
Similar to previous works on the theoretical properties of EM algorithm for independent data \citep{balakrishnan2017statistical,yi2015regularized,hao2017simultaneous}, the analysis of our proposed algorithm involves two major steps. In the first step, we show that if we had access to infinite amount of data, the output of the EM algorithm would converge geometrically to the true parameter value, given proper initialization. In the second step, we focus on one iteration of the proposed algorithm, and show that the updated estimate obtained from our modified EM algorithm using finite sample is close to the updated estimate we would get with infinite amount of data. Similar to the previous works mentioned earlier, the estimation error of the proposed algorithm includes a `statistical error' term and an `optimization error' term. However, as shown in Theorem~\ref{estimationerror}, we have an additional source of error, which we term `approximation error'. This is due to using an approximation of the true conditional expectations of the unobserved variables. Theoretically, establishing the desired error bound requires novel concentration results for dependent data and particularly those for ergodic stochastic processes, which distinguishes our work in time series settings from the previous works on the EM algorithm for independent data.

Before presenting our main result, we state some conditions that will be used to establish the upper bound on the estimation error. The first condition will be useful in showing that the EM algorithm with infinite amount of data converges to the true parameter, given proper initialization. To this end, we introduce a population objective function analogous to \eqref{sampleQ}:
\begin{multline}\label{populationQ}
    Q(\tilde\theta|\theta) = \frac{1}{T}\sum_{t=1}^{T}E_0\Bigg[\left(\sum_{i=1}^K\sum_{j=1}^K w_{ij,\theta}(Y_0^T) \log \tilde p_{ij}\right) - \frac{d}{2}\sum_{j=1}^K w_{j,\theta}(Y_0^T)\left(\log 2\pi + \log \tilde\sigma_j^2 \right)\\
    -\left(\sum_{j=1}^K w_{j,\theta}(Y_0^T)\frac{1}{2\tilde\sigma_j^2}\left\|Y_t - (\textnormal{Id}_d \otimes Y_{t-1})^\top \tilde\beta_j \right\|_2^2\right)\Bigg].
\end{multline}
If we had access to infinite number of realizations of the process, given the current parameter estimate $\theta$, we would maximize $Q(\cdot|\theta)$ in the M-step of the EM algorithm. Note that if the weighted covariance matrices are invertible (see Assumption~\ref{cond:minmaxeigenvalue}), there is a unique global optimizer for \eqref{populationQ} and $L_1$ regularization on $\beta$ is not necessary. Let $M(\theta) = \argmax_{\tilde\theta}Q(\tilde\theta|\theta)$. 
We introduce a measure of (inverse) signal strength on the population level. First, we partition $M(\theta)$ into the estimates of regression coefficients, transition probability estimates and variance estimates, denoted by $M_\beta(\theta)$, $M_p(\theta)$ and $M_\sigma(\theta)$, respectively. That is, $M(\theta) = (M_\beta(\theta)^\top, M_p(\theta)^\top,M_\sigma(\theta)^\top)^\top$. In particular,
\begin{align*}
    (M_\beta(\theta))_j &= \left\{\textnormal{Id}_d \otimes E_0\left[\frac{1}{T}\sum_{t=1}^Tw_{j,\theta}(Y_0^T)Y_{t-1}Y_{t-1}^\top\right]\right\}^{-1}E_0\left[\frac{1}{T}\sum_{t=1}^Tw_{j,\theta}(Y_0^T)\left(Y_t\otimes Y_{t-1}\right)\right]; \\
    (M_\sigma(\theta))_j^2 &= E_0\left[\frac{1}{T}\sum_{t=1}^T\frac{1}{d} w_{j,\theta}(Y_{0}^T)\left\|Y_t - \left(\textnormal{Id}_d \otimes Y_{t-1}\right)^\top (M_\beta(\theta))_j\right\|_2^2\right] \left\{E_0\left[\frac{1}{T}\sum_{t=1}^T w_{j,\theta}(Y_{0}^T)\right]\right\}^{-1}; \\
    (M_p(\theta))_{ij}&= E_0\left[\frac{1}{T}\sum_{t=1}^T w_{ij,\theta}(Y_0^T)\right]\left\{E_0\left[\frac{1}{T}\sum_{t=1}^T \sum_{j=1}^K w_{ij,\theta}(Y_0^T)\right]\right\}^{-1},
\end{align*}
where $(M_\beta(\theta))_j$ is the subvector of $M_\beta(\theta)$ corresponding to $\beta_j$, the vectorized regression coefficients in regime $j$. Let $\mathcal{B}(\tilde r;\theta^*)$ denote an $l_2$-ball of radius $\tilde r$ centered at $\theta^*$, the true parameter value. We introduce the following assumption on the mapping $\theta \mapsto M(\theta)$.

\begin{assumption}[Signal strength]\label{signalstrength}
Suppose that there exist constants $\kappa < 1$ and $r>0$ such that for all $\theta^\dagger \in \mathcal{B}(r;\theta^*)$,
\begin{equation}
    \left\|\frac{\partial M(\theta)}{\partial\theta}\Bigr\rvert_{\theta = \theta^\dagger}\right\|_2 \leq \kappa.
\end{equation}
\end{assumption}
\noindent Here, the operator norm of the gradient matrix $\partial M(\theta)/\partial\theta$ serves as our (inverse) signal strength measure. The constant $r$ characterizes the region of `proper initialization', which is a ball centered around $\theta^*$. When $\partial M(\theta)/\partial \theta$ is continuous in $\theta$ and the norm of the gradient matrix evaluated at $\theta^*$ is below 1, we would expect that there exists a neighborhood around $\theta^*$ in which the norm of the gradient matrix evaluated at any $\theta$ is bounded below 1, due to continuity. In Supplementary Appendix~\ref{app:signalempirical}, we empirically examine the norm of the gradient matrix at $\theta^*$ in some examples.

\begin{lemma}[Contraction of $M(\theta)$]\label{populationcontraction}
Under Assumption~\ref{signalstrength}, $\|M(\theta) - \theta^*\|_2 \leq \kappa \|\theta-\theta^*\|_2$, for $\theta \in \mathcal{B}(r;\theta^*)$.
\end{lemma}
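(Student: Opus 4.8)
The plan is to obtain the contraction from two facts: that the true parameter is a fixed point of the population operator, $M(\theta^*) = \theta^*$, and the uniform Jacobian bound in Assumption~\ref{signalstrength}. Granting the former, the claim is just the mean value inequality applied to the differentiable map $M$ on the convex set $\mathcal{B}(r;\theta^*)$, so essentially all of the substantive work lies in establishing self-consistency of the population EM operator.

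First I would prove $M(\theta^*) = \theta^*$. The key observation is that at $\theta = \theta^*$ the weights coincide with the true posteriors of the latent indicators, namely $w_{j,\theta^*}(Y_0^t) = E_0[I\{Z_t = j\} \mid Y_0^t]$ and $w_{ij,\theta^*}(Y_0^t) = E_0[I\{Z_{t-1} = i, Z_t = j\} \mid Y_0^t]$. Since the factors multiplying these weights in the closed forms for $M_\beta, M_p, M_\sigma$---namely $Y_{t-1}Y_{t-1}^\top$, $Y_t \otimes Y_{t-1}$, the squared residuals, and constants---are all $\sigma(Y_0^t)$-measurable, the tower property lets me replace each weight by the corresponding latent indicator inside $E_0$. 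For the regression block I would then substitute the model identity $Y_t = (\textnormal{Id}_d \otimes Y_{t-1})^\top \beta_j^* + \epsilon_t$, valid on $\{Z_t = j\}$, and use $E_0[\epsilon_t \mid Z_t, Y_{t-1}] = 0$ to verify that $\beta_j^*$ solves the weighted normal equations; parallel reductions give $M_p(\theta^*) = p^*$ (normalized expected transition counts) and $M_\sigma(\theta^*) = \sigma^{*2}$ (expected residual variance). A more compact route is to note that these manipulations amount to the statement that $Q(\tilde\theta \mid \theta^*)$ equals the expected complete-data log-likelihood $E_0[\frac{1}{T}\sum_t \log p_{\tilde\theta}(Y_t, Z_t \mid Y_{t-1}, Z_{t-1})]$, which by the information inequality together with identifiability of $\theta$ is maximized uniquely at $\tilde\theta = \theta^*$.

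With self-consistency in hand, I would close the argument by the fundamental theorem of calculus. Assumption~\ref{signalstrength} presupposes that $M$ is differentiable on $\mathcal{B}(r;\theta^*)$ with the operator norm of its Jacobian at most $\kappa$ there; because the ball is convex, the segment $\theta^* + t(\theta - \theta^*)$, $t \in [0,1]$, remains in $\mathcal{B}(r;\theta^*)$ for every $\theta$ in the ball. Writing
\[
M(\theta) - \theta^* = M(\theta) - M(\theta^*) = \int_0^1 \frac{\partial M(\theta)}{\partial\theta}\bigg|_{\theta = \theta^* + t(\theta - \theta^*)} (\theta - \theta^*)\, dt,
\]
taking $L_2$ norms, and bounding the Jacobian uniformly by $\kappa$ along the segment yields $\|M(\theta) - \theta^*\|_2 \leq \kappa \|\theta - \theta^*\|_2$.

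I expect the self-consistency step, rather than the contraction itself, to be the main obstacle. Justifying the tower-property exchanges rigorously requires confirming that the filtered probabilities evaluated at $\theta^*$ are genuinely the true conditional laws of the latent states and that the weighted Gram matrices are invertible---guaranteed by Assumption~\ref{minmaxeigenvalue}---so that the closed-form blocks of $M$ are well defined. A secondary technical point is the differentiability of $M$ invoked in the integral representation; this follows from the implicit function theorem applied to the smooth first-order conditions of the population M-step, with the same invertibility ensuring that the relevant Hessian is nonsingular.
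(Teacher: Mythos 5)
Your proposal is correct and takes essentially the same route as the paper: the paper's entire proof is the observation that the claim ``follows directly from the mean-value inequality and the fact that $M(\theta^*) = \theta^*$,'' which is exactly your two-step structure. The self-consistency argument you spell out (tower property at $\theta^*$, the model identity on $\{Z_t = j\}$, and the weighted normal equations) is left implicit in the paper, so your write-up is simply a fully detailed version of the same proof.
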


The next condition ensures that the denominators of $M_\sigma(\theta)$ and $M_p(\theta)$ are bounded away from 0, so that these updates are well-defined in the population EM iteration.
\begin{assumption}[Transition probability estimate in the population EM]\label{denomtransition}
There exists a constant $\iota > 0$ such that for all $\theta \in \mathcal{B}(r;\theta^*)$ and all $i \in \{1,\ldots,K\}$, 
\begin{equation*}
    E_0\left[\frac{1}{T}\sum_{t=1}^T w_{i,\theta}(Y_0^T)\right] \geq \iota, \ E_0\left[\frac{1}{T}\sum_{t=1}^T\sum_{j=1}^K w_{ij,\theta}(Y_0^T)\right] \geq \iota.
\end{equation*}
\end{assumption}
\noindent Under this assumption, the denominators of $M_\sigma(\theta)$ and $M_p(\theta)$ are bounded away from 0 uniformly in $\theta$, which ensures that the updates for the noise variances and the transition probabilities are well-defined throughout the population EM iterations. In particular, at $\theta^*$, we have that
\begin{align*}
    E_0\left[\frac{1}{T}\sum_{t=1}^T\sum_{j=1}^K w_{ij,\theta^*}(Y_0^T) \right] &= E_0 \left[\frac{1}{T}\sum_{t=1}^T\sum_{j=1}^K P_{\theta^*}(Z_{t-1}=i,Z_t=j | Y_0,Y_1,\ldots,Y_T) \right] \\
    &= E_0 \left[\frac{1}{T}\sum_{t=1}^T P_{\theta^*}(Z_{t-1}=i | Y_0,Y_1,\ldots,Y_T) \right] \\
    &= \frac{1}{T}\sum_{t=1}^T P_{\theta^*}(Z_{t-1} = i) = P_{\theta^*}(Z_{t-1} = i),
\end{align*}
which is bounded away from 0. Similar argument can be made for the expectation involving $w_{j,\theta^*}$. Hence, similar to the discussion following Assumption~\ref{signalstrength}, when $w_{j,\theta}$ and $w_{ij,\theta}$ are continuous in $\theta$, we expect that the expectations in Assumption~\ref{denomtransition} are bounded away from 0 for $\theta$ in a neighborhood of $\theta^*$.

For our next condition, we define the following function of $P_Z$, the transition probability matrix of $\{Z_{t}\}$,
\begin{equation}
    \Xi(P_Z) = \frac{1}{2} \max\left(\max_{1\leq i,k \leq K}\sum_{j=1}^K \max_{l\neq j} \frac{\left|p_{ij}p_{kl}-p_{il}p_{kj}\right|}{p_{ij}p_{kl}+p_{il}p_{kj}}, \max_{1\leq i,k \leq K} \sum_{j} \max_{l \neq j} \frac{|p_{ji}  p_{lk}  -  p_{jk} p_{li}|}{p_{ji}   p_{lk} + p_{li}  p_{jk}} \right).
\end{equation}
This quantity is closely related to the approximation error, and we impose the following condition to control the approximation error. 
\begin{assumption}[Transition probability matrix]\label{cond:approximationerror}
There exists a constant $\phi < 1$, such that $\Xi(P_Z) \leq \phi$ for all $P_Z$ in the set $\{P_Z: p = \vectorize{P_Z}, \  \theta = (\beta^\top, p^\top, \sigma^\top )^\top \in \mathcal{B}(r,\theta^*)\}$. 
\end{assumption}
\begin{lemma}[Approximation error]\label{lemma:approximationerror}
Under Assumption~\ref{cond:approximationerror}, for all $j \in \{1,\ldots,K\}$ and all $y_0^T \in (\mathbb{R}^d)^{T+1}$,
\begin{equation}
    \left|m_{j,\theta}(y_{t-s}^{t+s}) - w_{j,\theta}(y_0^T)\right| \leq 3\phi^s, \quad \textnormal{and} \quad  \left|m_{ij,\theta}(y_{t-s}^{t+s}) - w_{ij,\theta}(y_0^T)\right| \leq 3\phi^{s-1}.
\end{equation}
\end{lemma}
\noindent Assumption~\ref{cond:approximationerror} restricts the transition matrix of $\{Z_t\}$ in a way that no entry of this matrix is too close to 0 or 1. For instance, in the case of binary $Z_t$, i.e., $K=2$, $\Xi(P_Z)$ can be replaced with the simpler quantity $|p_{11}p_{22} - p_{12}p_{21}|/(p_{11}p_{22} + p_{12}p_{21})$, which will be bounded away from 1 if $p_{ij}$ are all bounded away from 0 and 1.
Under this condition, Lemma~\ref{lemma:approximationerror} shows that the difference between the approximate and the exact conditional expectations will be exponential in $s$, uniformly in $Y$.

The next condition is on the minimum and maximum eigenvalues of the covariance matrix of $Y$, and will be useful in establishing the restricted eigenvalue (RE) condition. The RE condition is frequently imposed in the study of regularized estimators \citep{loh2012high,basu2015regularized}, and is also essential to our analysis.

\begin{assumption}[Minimum and maximum eigenvalues of covariance matrices]\label{cond:minmaxeigenvalue}
There exist constants $\rho_{\min}$ and $\rho_{\max}$ such that $0 < \rho_{\min} \leq \rho_{\max}<+\infty$, and for $\theta \in \mathcal{B}(r;\theta^*)$ and $j \in \{1,\ldots,K\}$,
\begin{align}
    \rho_{\min} &\leq \lambda_{\min}\left\{E_0\left[\frac{1}{T}\sum_{t=1}^T w_{j,\theta}(Y_0^T)Y_{t-1}Y_{t-1}^\top\right]\right\} \nonumber \\
    &\leq \lambda_{\max}\left\{E_0\left[\frac{1}{T}\sum_{t=1}^T w_{j,\theta}(Y_0^T)Y_{t-1}Y_{t-1}^\top\right]\right\} \leq \rho_{\max},
\end{align}
and
\begin{equation}
    \rho_{\min} \leq \lambda_{\min}\left\{E_0\left[Y_tY_t^\top\right]\right\} \leq \lambda_{\max}\left\{E_0\left[Y_tY_t^\top\right]\right\} \leq \rho_{\max}.
\end{equation}
\end{assumption}
\noindent The matrix $E_0[\sum_{t=1}^T w_{j,\theta}(Y_0^T)Y_{t-1}Y_{t-1}^\top/T]$ can be regarded as a weighted covariance matrix, where the weights are given by the conditional expectation of the unobserved variable $I\{Z_t = j\}$ under parameter value $\theta$. Assumption~\ref{cond:minmaxeigenvalue} requires that the minimum and maximum eigenvalues of the covariance matrix of $Y$ and the weighted covariance matrices are bounded away from 0 and infinity, respectively. 

\begin{assumption}[Geometric $\beta$-mixing]\label{betamixing}
Let $b_{\textnormal{mix}}(l)$ be the $\beta$-mixing coefficient of the process $\{Y_t\}$. Then, $b_{\textnormal{mix}}(l) \leq 2\exp(-c l^{\gamma_1})$ for some positive constants $c$ and $\gamma_1$ and all $l \in \mathbb{N}^{+}$. 
\end{assumption}
\noindent Assumption~\ref{betamixing} states that the process is geometrically $\beta$-mixing. Under Assumption~\ref{cond: operatornorm}, Lemma~\ref{lemma: msvarstationarity} implies that the process $\{(Y_t,Z_t)\}$ is geometrically ergodic, and thus by Proposition 2 in \citet{liebscher2005towards}, $\{Y_t\}$ is geometrically $\beta$-mixing. Moreover, given the discussion following Lemma~\ref{lemma: msvarstationarity}, it is reasonable to expect that $\gamma_1$ can be taken to be 1, at least for large $l$. 

Finally, to establish the RE condition uniformly over $\Theta$, we assume the following condition so that the entropy of a certain function class is controlled. We will assume a similar condition in Assumption~\ref{randomentropy}, and there we provide more discussion on the plausibility of assumptions of this type. Let $\{\tilde{Y}_{1-s}^{1+s},\tilde{Y}_{2-s}^{2+s},\ldots,\tilde{Y}_{n-s}^{n+s}\ldots\}$ be a sequence of i.i.d. random vectors whose marginal distribution is the same as the stationary distribution of $Y_{t-s}^{t+s}$.
\begin{assumption}[Upper bound on random entropy]\label{randomentropyRE} For some constant $\tilde C$ and $0<a<1$,
\begin{multline*}
    P\Bigg(\max_j\sup_{\tilde\theta \in \mathcal{B}(r,\theta^*)} \left[ \frac{1}{N}\sum_{n=1}^N \left\|\frac{\partial m_{j,\theta}(\tilde{Y}_{n-s}^{n+s})}{\partial \theta}\rvert_{\theta = \tilde\theta}\frac{\partial m_{j,\theta}(\tilde{Y}_{n-s}^{n+s})}{\partial \theta}\rvert_{\theta = \tilde\theta}^\top\right\|_2^2 \right]^{1/2}  \\
    > \frac{N^a}{\tilde{C}|S|(\log K + \log d)}\Bigg) \leq \tilde{u}(N,d),
\end{multline*}
for some function $\tilde u$ such that $\tilde{u}(T/(c\log T),d)\log T \rightarrow 0$ as $T \rightarrow \infty$ for any constant $c$.
\end{assumption}
\noindent The precise definition of the constant $\tilde C$ is given in Supplementary Appendix~\ref{app:proofRE}.

With these assumptions, we now establish the RE condition. Define $\gamma = (1/\gamma_1 + 1)^{-1}$, and note that $\gamma <1$. Furthermore, Lemma~\ref{lemma: msvarstationarity} implies that $Y_t$ is a sub-Gaussian random vector. Let $K_Y = \sup_{\nu \in \mathbb{R}^d,\|\nu\|=1}\sup_{k \geq 1}(E|\nu^\top Y_t|^k)^{1/k}k^{-1/2}$ and $\tilde K_Y:= 2 K^2_Y$. Moreover, define the following sets of parameter values:
\begin{align*}
    \Theta_\beta &= \{\beta: \|\beta - \beta^*\|_2 \leq r, \  \|(\beta - \beta^*)_{S^C}\|_1 \leq 4\sqrt{|S|}\|\beta-\beta^*\|_2\}; \\
    \Theta &= \{\theta = (\beta^\top,p^\top,\sigma^\top )^\top: \|\theta - \theta^*\|_2 \leq r, \ \beta \in \Theta_\beta\}.
\end{align*}
\begin{lemma}[Restricted Eigenvalue]\label{RE}
Suppose that Assumptions~\ref{cond: operatornorm} and \ref{cond:approximationerror}--\ref{randomentropyRE} hold. Define the following quantities: 
\begin{equation*}
    \alpha = \frac{\rho_{\min}}{3}, \quad \tau_{RE} = \frac{\rho_{\min}\log d}{3T^{1/5}}.
\end{equation*}
Then, for some constants $\tilde c$, $C$, and $C_2$ independent of $T$ and $d$, for $s \asymp \log T$ and sample size $T$ sufficiently large such that
\begin{equation*}
    T \geq 72\tilde{c}(\log T)(972K_Y^2/\rho_{\min})^2,
\end{equation*}
and
\begin{multline*}
    \sqrt{T/(\tilde{c}\log T)}\geq 1944C \rho_{\min}^{-1}\max\Bigg\{\rho_{\min}/972, 4K_Y^2+1, \\
    2C_2\sqrt{\log K + 2T^{1/5}\{1+(1+\log 30)/\log d\}}\Bigg\},
\end{multline*}
we have
\begin{equation*}
    v^\top \left[\frac{1}{T}\sum_{t=1}^{T} \textnormal{Id}_d \otimes  m_{j,\theta}(Y_{t-s}^{t+s})Y_{t-1}Y_{t-1}^\top\right]v \geq \alpha \|v\|_2^2 -\tau_{RE}\|v\|_1^2,    
\end{equation*}
for all $v \in \mathbb{R}^{d^2}$, $j \in \{1,\ldots,K\}$ and $\theta \in \Theta$, with probability at least $u_{RE}(T,d)$ such that $u_{RE}(T,d) \rightarrow 1$ as $T \rightarrow \infty$.
\end{lemma}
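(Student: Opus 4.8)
My plan is to reduce the $d^2$-dimensional statement to a statement about the $d \times d$ weighted Gram matrix and then split the quadratic form into a population part, an approximation part, and a genuine empirical fluctuation. Writing $\hat\Gamma_{j,\theta} = \frac{1}{T}\sum_{t=1}^T m_{j,\theta}(Y_{t-s}^t)Y_{t-1}Y_{t-1}^\top$ and partitioning $v \in \mathbb{R}^{d^2}$ into blocks $v_1,\dots,v_d \in \mathbb{R}^d$, the Kronecker structure gives $v^\top(\textnormal{Id}_d \otimes \hat\Gamma_{j,\theta})v = \sum_{k=1}^d v_k^\top \hat\Gamma_{j,\theta} v_k$, so it suffices to prove the one-block RE inequality $v_k^\top \hat\Gamma_{j,\theta} v_k \geq \alpha\|v_k\|_2^2 - \tau_{RE}\|v_k\|_1^2$ and sum, using $\sum_k \|v_k\|_1^2 \leq (\sum_k\|v_k\|_1)^2 = \|v\|_1^2$. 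For a single block I would decompose $v^\top \hat\Gamma_{j,\theta} v = v^\top \Gamma^{(m)}_{j,\theta} v + v^\top(\hat\Gamma_{j,\theta} - \Gamma^{(m)}_{j,\theta})v$, where $\Gamma^{(m)}_{j,\theta} = E_0[\hat\Gamma_{j,\theta}]$. The population term is handled deterministically: by Lemma~\ref{approximationerror}, $|m_{j,\theta}-w_{j,\theta}| \leq \phi^s$, so $|v^\top(\Gamma^{(m)}_{j,\theta} - \Gamma^{(w)}_{j,\theta})v| \leq \phi^s\rho_{\max}\|v\|_2^2$, and Assumption~\ref{minmaxeigenvalue} gives $v^\top\Gamma^{(w)}_{j,\theta}v \geq \rho_{\min}\|v\|_2^2$. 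Since $s \asymp \log T$ and $\phi<1$, the factor $\phi^s = T^{-c'}$ is negligible, leaving $v^\top\Gamma^{(m)}_{j,\theta}v \geq (2\rho_{\min}/3)\|v\|_2^2$ for $T$ large. The remaining task is to show the empirical fluctuation obeys $\sup_{j,\theta}|v^\top(\hat\Gamma_{j,\theta}-\Gamma^{(m)}_{j,\theta})v| \leq (\rho_{\min}/3)\|v\|_2^2 + \tau_{RE}\|v\|_1^2$ on the restricted cone, which precisely yields $\alpha = \rho_{\min}/3$.

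The heart of the argument is this uniform fluctuation bound. For fixed $(j,\theta)$ I would exploit that $m_{j,\theta}(Y_{t-s}^t) \in [0,1]$ depends only on the window $Y_{t-s}^t$ of length $s \asymp \log T$; combined with the geometric $\beta$-mixing of Assumption~\ref{betamixing}, inserting mixing gaps of order $\log T$ between blocks and applying Berbee's coupling lemma lets me replace the dependent sum by a sum over (coupled) independent blocks, at a total cost of order $T \cdot b_{\textnormal{mix}}(\log T) = T\cdot T^{-c'}$, which vanishes. Within each independent block the summands $v^\top Y_{t-1}Y_{t-1}^\top v$ are sub-exponential because $Y_t$ is sub-Gaussian by Lemma~\ref{msvarstationarity}; a Bernstein-type inequality for the independent blocks then produces a deviation that is sub-Weibull with exponent $\gamma = (1/\gamma_1+1)^{-1}$, and balancing block length against mixing decay is what produces the $T^{1/5}$ scaling appearing in $\tau_{RE}$ and in the sample-size conditions. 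To pass from a fixed direction to all $v$ I would discretize the unit sphere and restrict attention to the sparse cone, so that the $\ell_1$ term $\tau_{RE}\|v\|_1^2$ emerges through the standard mechanism (as in the VAR restricted-eigenvalue analyses of \citet{basu2015regularized} and \citet{loh2012high}), after which a one-sided discretization lemma extends the bound from the sparse set to all $v \in \mathbb{R}^{d^2}$.

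The final ingredient is uniformity over $\theta \in \Theta$ and over $j \in \{1,\dots,K\}$. The maximum over the finite regime set contributes only a $\log K$ factor absorbed into the rate. For the continuum of parameter values I would build a net of $\Theta$ and control the oscillation of the empirical process between net points via the gradient $\partial m_{j,\theta}/\partial\theta$; Assumption~\ref{randomentropyRE} is designed exactly for this, as its high-probability bound on the random entropy of the derivative class guarantees that the net can be taken fine enough that the interpolation error is dominated by $\tau_{RE}$, while the condition $\tilde u(T/(c\log T),d)\log T \to 0$ ensures the union bound over the net plus the per-point failure probabilities still tends to zero, giving $u_{RE}(T,d)\to 1$.

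The step I expect to be the main obstacle is the empirical fluctuation bound, because it must simultaneously accommodate three features absent from the i.i.d.\ or fixed-design settings of earlier RE analyses: the temporal dependence, handled by coupling but only at the price of the slow $T^{1/5}$ rate; the fact that the weights $m_{j,\theta}(Y_{t-s}^t)$ are themselves random and correlated with the Gram terms $Y_{t-1}Y_{t-1}^\top$; and the need to control the empirical process uniformly over the infinite-dimensional set $\Theta$. Reconciling the block sizes and mixing gaps required by the coupling with the entropy control of Assumption~\ref{randomentropyRE}, so that both the correct $T^{1/5}$ deviation rate and the restricted-eigenvalue form are obtained at once, is the technically demanding part of the proof.
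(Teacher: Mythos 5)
Your overall architecture tracks the paper's proof almost step for step: the Kronecker block reduction (the paper invokes Lemma B.1 of \citet{basu2015regularized}; your direct block-summation argument with $\sum_k\|v_k\|_1^2 \leq \|v\|_1^2$ is an equivalent substitute), the three-way split into population term, approximation term (bounded by $\phi^s\rho_{\max}\|v\|_2^2$ via Lemma~\ref{approximationerror} and lower-bounded by $\rho_{\min}$ via Assumption~\ref{minmaxeigenvalue}), and empirical fluctuation, the reduction from the $\beta$-mixing process to independent blocks of length $\asymp \log T$ (the paper uses Theorem~2 of \citet{karandikar2002rates} where you propose Berbee coupling --- an immaterial difference, since both cost $T\,b_{\textnormal{mix}}(k_T)$ plus an independent-sample probability at effective size $N \asymp T/\log T$), the sparse-cone discretization with a Loh--Wainwright-type extension to all of $\mathbb{R}^{d^2}$, and the net over $\Theta$ whose oscillation is controlled through the gradient of $m_{j,\theta}$ with the random Lipschitz constant tamed by Assumption~\ref{randomentropyRE}. (The paper runs symmetrization plus a Dudley entropy-integral bound, Corollary~8.3 of \citet{geer2000empirical}, jointly over $(v,\theta,j)$, rather than your pointwise Bernstein-plus-union-bound variant, but that is a workable alternative.)

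The genuine gap is your account of where $T^{1/5}$ comes from. You attribute it to ``balancing block length against mixing decay,'' but under geometric mixing (Assumption~\ref{betamixing}) the blocking costs only logarithmic factors: gaps of length $k_T \asymp \log T$ already give $T\,b_{\textnormal{mix}}(k_T) \lesssim T^{-1/2}$, and no choice of block length in the coupling produces a polynomial rate like $T^{1/5}$. In the paper, $T^{1/5}$ enters through a different dial entirely: the sparsity level of the discretization set $\mathbb{S}(2b)$ is chosen as $b = T^{1/5}/\log d$ at the very end (Step~V), which simultaneously fixes $\tau_{RE} = \rho_{\min}/(3b) = \rho_{\min}(\log d)/(3T^{1/5})$ through the Lemma~12 mechanism of \citet{loh2012high} and incurs a union-bound/entropy cost $\log\binom{d}{2b} + 2b\log 30 \asymp 2T^{1/5}\{1+(1+\log 30)/\log d\}$ over the sparse cover, which must sit under $\sqrt{N}$ --- this is exactly the second sample-size condition displayed in Lemma~\ref{RE}. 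Your plan never introduces this free parameter, so as written it cannot produce the specific value of $\tau_{RE}$ nor explain the $\sqrt{\log K + 2T^{1/5}\{\cdots\}}$ requirement, and pursuing your stated mechanism would fail. The repair is localized: keep the coupling at $\log T$ scale, introduce $b$ as a tunable sparsity level in the cone discretization, and optimize the trade-off between $\tau_{RE} = \rho_{\min}/(3b)$ and the $\binom{d}{2b}30^{2b}$-fold entropy penalty, which yields $b = T^{1/5}/\log d$ and both sample-size conditions of the lemma.
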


\noindent The precise specifications of the constants and the explicit form of $u_{RE}$ are given in Supplementary Appendix~\ref{app:proofRE}. 

For our next condition, we consider the rate of convergence in uniform law of large numbers over certain function classes. As we will see, this rate of convergence is closely related to the estimation error of the estimate from our proposed algorithm. To this end, we define the following
functions:
\begin{align*}
    f^{ijk}_\theta(Y_{t-s}^{t+s}) &= Y_{t-1,k}(Y_{t,i}-\beta_{ji}^{*\top}Y_{t-1})m_{j,\theta}(Y_{t-s}^{t+s}); \\
    f^{\sigma,j}_{\theta,\tilde\beta}(Y_{t-s}^{t+s}) &= \frac{1}{d}m_{j,\theta}(Y_{t-s}^{t+s}) \left\|Y_{t} - \left(\textnormal{Id}_d \otimes Y_{t-1}\right)^\top \tilde\beta_j\right\|_2^2,
\end{align*}
for $\theta \in \Theta$ and $\tilde\beta \in \Theta_\beta$. We assume that uniform law of large numbers holds for certain classes of functions, with a suitable rate of convergence. 

\begin{assumption}\label{DB}
Suppose that for some small probability $\delta_1(T,d)$ such that $\delta_1 \rightarrow 0$ as $T \rightarrow \infty$, there exist $\Delta$, $\Delta_\sigma$ and $\Delta_p$, all of which are functions of $T$, $d$ and $K$, such that the following holds:
\small
\begin{align*}
    &\max_{i,j,k}\sup_{\theta \in \Theta} \left|\frac{1}{T}\sum_{t=1}^{T} f^{ijk}_\theta(Y_{t-s}^{t+s})-E\left[f^{ijk}_\theta(Y_{t-s}^{t+s})\right]\right| \leq \Delta; \\
    &\max_{j}\sup_{\theta \in \Theta, \tilde\beta \in \Theta_\beta} 
    \left|\frac{1}{T}\sum_{t=1}^T f^{\sigma,j}_{\theta,\tilde\beta}(Y_{t-s}^{t+s}) 
    \left\{\frac{1}{T}\sum_{t=1}^T m_{j,\theta}(Y_{t-s}^{t+s})\right\}^{-1}
    - E\left[ f^{\sigma,j}_{\theta,\tilde\beta}(Y_{t-s}^{t+s})\right]
    E\left[m_{j,\theta}(Y_{t-s}^{t+s})\right]^{-1}
    \right| \leq \Delta_\sigma;
\end{align*}
and
\begin{multline*}
    \max_{i,j} \sup_{\theta \in \Theta} \Bigg|
    \left\{\frac{1}{T}\sum_{t=1}^T m_{ij,\theta}(Y_{t-s}^{t+s})\right\}\left\{\frac{1}{T}\sum_{t=1}^T \sum_{j=1}^K m_{ij,\theta}(Y_{t-s}^{t+s})\right\}^{-1} - \\
    \left\{E\left[m_{ij,\theta}(Y_{t-s}^{t+s})\right]\right\}\left\{E\left[\sum_{j=1}^K m_{ij,\theta}(Y_{t-s}^{t+s})\right]\right\}^{-1}\Bigg| 
    \leq \Delta_p,
\end{multline*}
\normalsize
with probability at least $1-\delta_1$.
\end{assumption}

\noindent This assumption is similar to the deviation bound condition in, for example, \citet{loh2012high} and \citet{wong2020lasso}, in that it assumes the difference between the sample average of the gradient of the objective function and its population counterpart is controlled. However, in our case, we need the difference to be controlled uniformly over the parameter space $\Theta$. This is because, in each EM iteration, the objective function in the M-step depends on the parameter estimate from the previous iteration, which itself is random and changes over iterations.

With Assumption~\ref{DB}, we are now ready to state our main theorem on the estimation error. Define a vector $D_\beta = (D_{\beta,1}^\top,\ldots,D_{\beta,K}^\top)^\top$, where $D_{\beta,j} = ( (\beta_1^*-\beta_j^*)^\top,\ldots,(\beta_{j-1}^*-\beta_j^*)^\top,(\beta_{j+1}^*-\beta_j^*)^\top,\ldots,(\beta_{K}^*-\beta_j^*)^\top)^\top$. The vector $D_\beta$ captures the difference in the regression coefficients between different regimes. Moreover, define constants $\eta$ and $\tau$ as
\begin{equation*}
     \eta = \max\left\{\sqrt{\left(\frac{2\rho_{\max}}{\alpha}\right)^2 + 2K\left(\frac{2\rho_{\max}r}{d}\right)^2}, \sqrt{2}\right\}, \quad 
     \tau = 4\kappa\eta\left(1+ \frac{8.2\sqrt{K}\rho_{\max}\kappa\eta r}{d}\right).
\end{equation*}

\begin{theorem}[Estimation error]\label{estimationerror}
Suppose that Assumptions~\ref{cond: operatornorm}--\ref{DB} hold and $\Delta$, $\Delta_p$ and $\Delta_\sigma$ in Assumption~\ref{DB} are such that $\max\{\sqrt{|S|}\Delta,\Delta_p,\Delta_\sigma\} = o(1)$ as $T \to \infty$. Moreover, suppose that $\tau < 1$. Then, for the regularized approximate EM algorithm with initialization $\theta^{(0)} \in \Theta$ and with $\lambda^{(q)}$ chosen such that
\begin{multline}\label{lambdachoice}
    \lambda^{(q)} = \tau^q \frac{\alpha}{4\sqrt{|S|}}\left(1 + \frac{8.2\sqrt{K}\rho_{\max}\eta\kappa r}{d}\right)^{-1}\|\theta^{(0)} - \theta^*\|_2 + \frac{1-\tau^q}{1-\tau} \max\Bigg\{2\Delta + 2C_1\phi^s, \\
    \frac{\alpha}{\sqrt{|S|}}\left(\frac{6}{\alpha} \phi^s\rho_{\max}(K-1)^{1/2}\|D_\beta\|_2 + K \Delta_p + C_2\phi^s + \sqrt{K} \Delta_\sigma + C_3\phi^s  \right)\Bigg\} 
\end{multline}
for all $q \geq 1$ for some constants $C_1$, $C_2$ and $C_3$, we have the following upper bound on the estimation error for all $q \geq 1$
\begin{equation*}
    \|\theta^{(q)} - \theta^*\|_2 \leq \left(1 + \frac{8.2\sqrt{K}\rho_{\max}\eta\kappa r}{d}\right) \frac{4\lambda^{(q)}\sqrt{|S|}}{\alpha},
\end{equation*}
with probability at least $1-(\delta+\delta_1)$, for $T$ sufficiently large.
\end{theorem}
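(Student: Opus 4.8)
The plan is to argue by induction on $q$, at each step combining the population-level contraction of Lemma~\ref{populationcontraction} with a one-iteration bound on the deviation between the finite-sample, approximate M-step update and its population counterpart $M(\theta^{(q-1)})$. Writing $\theta^{(q)} - \theta^* = [\theta^{(q)} - M(\theta^{(q-1)})] + [M(\theta^{(q-1)}) - \theta^*]$, the second bracket is immediately controlled by $\kappa\|\theta^{(q-1)} - \theta^*\|_2$. The core of the argument is to bound the first bracket componentwise over $\beta$, $p$, and $\sigma$, and to show that the total one-step error obeys a recursion of the form $\|\theta^{(q)} - \theta^*\|_2 \leq \tau\|\theta^{(q-1)} - \theta^*\|_2 + (\textnormal{statistical} + \textnormal{approximation error})$, with $\tau$ as defined before the theorem. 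The stated choice of $\lambda^{(q)}$, whose two summands are exactly the geometric-sum factor $(1-\tau^q)/(1-\tau)$ times the accumulated error plus the decaying factor $\tau^q$ times the initialization error, is then obtained by unrolling this recursion; substituting back into a standard Lasso error bound yields the displayed inequality.

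For the regression block $\tilde\beta^{(q-1)}$, I would run the usual Lasso analysis relative to the sparse truth $\beta^*$. The KKT conditions give the basic inequality controlling the weighted quadratic form $\hat\Delta^\top\hat\Sigma\hat\Delta$ (with $\hat\Delta = \tilde\beta^{(q-1)} - \beta^*$ and $\hat\Sigma$ the weighted empirical Gram matrix in $Q_{n,2}$) by $\|\nabla\|_\infty\|\hat\Delta\|_1$ plus the penalty difference, where $\nabla$ is the sample gradient of $Q_{n,2}$ at $\beta^*$ evaluated with weights $m_{j,\theta^{(q-1)}}$. I would split $\|\nabla\|_\infty$ into three pieces: (i) the deviation of $\frac{1}{T}\sum_t f^{ijk}_{\theta^{(q-1)}}$ from its mean, controlled by $\Delta$ via Assumption~\ref{DB}; (ii) the gap between the mean of $f^{ijk}$ computed with $m$ versus the exact filtered weights $w$, controlled by $\phi^s$ through Lemma~\ref{approximationerror} together with sub-Gaussian moment bounds on $Y_{t-1,k}(Y_{t,i}-\beta_{ji}^{*\top}Y_{t-1})$; and (iii) the population gradient at $\beta^*$ under $w_{\theta^{(q-1)}}$, which by the definition of $M_\beta$ equals the weighted covariance applied to $\beta^* - M_\beta(\theta^{(q-1)})$ and is thus of order $\rho_{\max}\|M_\beta(\theta^{(q-1)}) - \beta^*\|_2 \leq \rho_{\max}\kappa\|\theta^{(q-1)} - \theta^*\|_2$. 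Choosing $\lambda^{(q)}$ to dominate twice this combined quantity forces $\hat\Delta$ into the cone defining $\Theta_\beta$, so that the restricted eigenvalue bound of Lemma~\ref{RE} applies and gives $\|\hat\Delta\|_2 \lesssim \lambda^{(q)}\sqrt{|S|}/\alpha$; the $\phi^s$ slack entering through the RE lower bound is negligible for $s \asymp \log T$.

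For the transition-probability and variance blocks I would exploit their closed-form M-step updates. The deviation of $\tilde p^{(q-1)}$ from $M_p(\theta^{(q-1)})$ is bounded directly by $\Delta_p$ (the ratio deviation bound in Assumption~\ref{DB}) plus a $\phi^s$ approximation term, while $\|M_p(\theta^{(q-1)}) - p^*\|_2 \leq \kappa\|\theta^{(q-1)} - \theta^*\|_2$ by contraction; Assumption~\ref{denomtransition} keeps the denominators bounded away from zero, so these ratios are Lipschitz. For $\sigma$ the update depends on the already-updated $\tilde\beta^{(q-1)}$, so I would first replace $\tilde\beta^{(q-1)}$ by $\beta^*$ at a cost controlled by the regression bound just derived, then bound the residual discrepancy using $\Delta_\sigma$ and a $\phi^s$ term; here the sensitivity to weight errors is amplified by how different the regimes are, producing the factor $\rho_{\max}(K-1)^{1/2}\|D_\beta\|_2$ that appears in the second branch of the maximum in $\lambda^{(q)}$. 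Assembling the three componentwise bounds through $\|\theta^{(q)} - \theta^*\|_2^2 = \|\tilde\beta - \beta^*\|_2^2 + \|\tilde p - p^*\|_2^2 + (\tilde\sigma - \sigma^*)^2$ and collecting all the contraction contributions into the single factor $\tau$ yields the one-step recursion.

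The main obstacle I anticipate is piece (iii) above together with the requirement that everything hold uniformly in $\theta \in \Theta$. Because $\theta^{(q-1)}$ is itself random and changes across iterations, the deviation and RE bounds cannot be invoked at a single fixed parameter but must be uniform over the ball $\Theta$, which is precisely why Assumption~\ref{DB} and Lemma~\ref{RE} are stated with a supremum over $\theta$. Reconciling the moving, generally non-sparse population target $M_\beta(\theta^{(q-1)})$ with a Lasso analysis anchored at the sparse $\beta^*$ --- so that the cone condition is preserved and the contraction bias is folded cleanly into the choice of $\lambda^{(q)}$ rather than destroying sparsity --- is the delicate step, and it is what forces the two-term form of $\lambda^{(q)}$ and the compound contraction constant $\tau$.
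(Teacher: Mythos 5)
Your plan follows the paper's architecture almost exactly: a one-iteration analysis that splits the error into a statistical piece (Assumption~\ref{DB}), an approximation piece ($\phi^s$ via Lemma~\ref{approximationerror}), and a population-contraction piece (Lemma~\ref{populationcontraction} applied to $M(\theta^{(q-1)})$); closed-form treatment of the $p$ and $\sigma$ updates; the cone condition and Lemma~\ref{RE} for the $\beta$ block; and an induction unrolling the recursion into the geometric $\lambda^{(q)}$ schedule of \eqref{lambdachoice}. However, one step of your plan would fail if executed literally. You propose to fold the population-bias piece (your item (iii)) into $\|\nabla\|_\infty$ and pair it with $\|\hat\Delta\|_1$ via H\"older. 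That pairing costs $\|\hat\Delta\|_1 \leq 5\sqrt{|S|}\,\|\hat\Delta\|_2$, so the contraction contribution to the requirement on $\lambda^{(q)}$ would be of the form $\lambda^{(q)} \gtrsim \rho_{\max}\kappa\|\theta^{(q-1)}-\theta^*\|_2$ rather than $\lambda^{(q)}\sqrt{|S|}/\alpha \gtrsim \kappa\eta\|\theta^{(q-1)}-\theta^*\|_2$, inflating the recursion constant by a factor $\sqrt{|S|}$ and destroying geometric convergence for growing $|S|$; it is also inconsistent with the $\alpha/(4\sqrt{|S|})$ discount on the initialization term in \eqref{lambdachoice}, so your induction would not close. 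The paper avoids this by keeping that term as an $\ell_2$ pairing (its term 1.3): setting $\tilde\beta = M_\beta(\theta^{(q-1)})$ and using its first-order condition, the term equals $(\hat\beta-\beta^*)^\top\Sigma(\tilde\beta-\beta^*) \leq \rho_{\max}\|\hat\beta-\beta^*\|_2\|\tilde\beta-\beta^*\|_2$; only the genuinely gradient-like pieces (the $\Delta$ deviation and the noise-times-$(m_{j,\theta}-w_{j,\theta})$ term) are paired with $\|\hat\Delta\|_1$ and feed the $2\Delta + 2C_1\phi^s$ branch of \eqref{lambdachoice}. You must split the basic inequality \emph{before} choosing norm pairings, using $\ell_\infty$--$\ell_1$ for some pieces and $\ell_2$--$\ell_2$ for others.

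Two smaller corrections. The factor $\phi^s\rho_{\max}(K-1)^{1/2}\|D_\beta\|_2$ does not come from the $\sigma$ step as you suggest; it arises in the $\beta$-update gradient, from decomposing $Y_{ti} = \sum_j I\{Z_t=j\}Y_{t-1}^\top\beta^*_{ji} + \epsilon_{ti}$ inside the approximation-error term, where $(m_{j,\theta}-w_{j,\theta})$ multiplies the cross-regime signal differences collected in $D_\beta$. And your closing appeal to "a standard Lasso error bound" glosses over the origin of the prefactor $\left(1 + 8.2\rho_{\max}\eta\kappa r/d\right)$: the $\sigma$ update contributes a term \emph{quadratic} in the regression error, namely $(2\rho_{\max}/d)\|\hat\beta-\beta^*\|_2^2$, which is controlled only under the additional constraint $4\lambda^{(q)}\sqrt{|S|}/\alpha \leq 4.1\eta\kappa r$ (the paper's condition \eqref{lambdareq4}); verifying that this constraint holds at every iteration, using $\max\{\sqrt{|S|}\Delta,\Delta_p,\Delta_\sigma\} = o(1)$ and $\phi^s = T^{-1/2}$ for $T$ large, is a necessary component of the induction that your plan omits.
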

\noindent The small probabilities $\delta = 1- u_{RE}(T,d)$ and $\delta_1$ are defined in Lemma~\ref{RE} and Assumption~\ref{DB}, respectively. The precise specification of the constants $C_1$, $C_2$ and $C_3$ is given in the proof of Theorem~\ref{estimationerror} in Supplementary Appendix~\ref{app:proofmainthm}. As we will see, $\max\{\sqrt{|S|}\Delta,\Delta_p,\Delta_\sigma\}$ ultimately determines the estimation error of our estimate, and the assumption that it converges to 0 as $T$ approaches infinity is sensible. Indeed, under appropriate conditions, we can show that this assumption will hold provided that $d$ and $|S|$ do not increase too fast as $T$ increases. We provide more discussion on this later in Proposition~\ref{errororder}. The idea is that, when $d$ and $|S|$ do not increase too fast, we can control the entropy of certain function classes including $\{f_\theta^{ijk}: \theta \in \Theta, \  1 \leq i,k \leq d, \ 1\leq j \leq K\}$ so that we can apply uniform concentration results for $\beta$-mixing processes. 
The constant $\tau$ becomes smaller as $\kappa$ gets smaller. As discussed earlier, $\kappa$ serves as a measure of inverse signal strength. Thus, with a strong enough signal-to-noise ratio, we expect that $\kappa$ can be small enough such that $\tau$ is below 1.

Substituting in the expression for our choice of $\lambda$ in \eqref{lambdachoice}, we get a more explicit upper bound on the estimation error:
\begin{align}\label{error}
    \|\theta^{(q)} - \theta^*\|_2 &\leq \tau^{q} \|\theta^{(0)} - \theta^*\|_2 + \left(1 + \frac{8.2\sqrt{K}\rho_{\max}\eta\kappa r}{d}\right) \frac{4\sqrt{|S|}}{\alpha}\frac{1-\tau^{q}}{1-\tau} \max\Bigg\{2\Delta + 2C_1\phi^s, \nonumber \\
    &\quad\frac{\alpha}{\sqrt{|S|}}\left(\frac{6}{\alpha} \phi^s\rho_{\max}(K-1)^{1/2}\|D_\beta\|_2 + K \Delta_p + C_2\phi^s + \sqrt{K} \Delta_\sigma + C_3\phi^s  \right)\Bigg\}.
\end{align}
\noindent The upper bound in \eqref{error} consists of three types of terms. The term $\tau^q\|\theta^{(0)} - \theta^*\|_2$ is the `optimization error', which converges geometrically to 0 as $q$, the number of EM iterations, approaches infinity. Hence, the `optimization error' can be made negligible by selecting a sufficiently large value of $q$, that is, by running sufficient EM iterations. The terms involving $\phi^s$ are the `approximation error'. In particular, if we choose $s = \log T/(-2\log\phi)$, then $\phi^s = T^{-1/2}$. As will be seen from Proposition~\ref{errororder} below, with this choice of $s$, the approximation error is dominated by the $\Delta$ terms. We call $\Delta, \Delta_p$ and $\Delta_\sigma$ `statistical error', as each of them is a difference between a population-level quantity and its sample counterpart, and can be controlled using concentration results. We also note that the constants appearing in the estimation error bound may not be optimal.

We assume in Theorem~\ref{estimationerror} that the initialization $\theta^{(0)}$ lies in $\Theta$. In fact, the conclusion in Theorem~\ref{estimationerror} still holds when $\theta^{(0)}$ is randomly chosen in $\mathcal{B}(r;\theta^*)$ independent of the observed data.
In this case, to analyze the estimation error in the first iteration, we need concentration results similar to those in Assumption~\ref{DB} to hold pointwise for $\theta \in \mathcal{B}(r;\theta^*)$. However, such pointwise results are considerably easier to establish compared to the uniform concentration results in Assumption~\ref{DB}. We can then marginalize over the distribution of $\theta^{(0)}$. Consequently, under random initialization with $\theta^{(0)}$ independent of the observed data, similar upper bounds on the estimation error can be established, where the high probability statement is now with respect to the joint distribution of $\theta^{(0)}$ and the observed $Y$'s.

Next, we characterize the magnitude of the statistical error in Assumption~\ref{DB} under some conditions. For this purpose, let $\{\tilde{Y}_{1-s}^{1+s},\tilde{Y}_{2-s}^{2+s},\ldots,\tilde{Y}_{n-s}^{n+s}\ldots\}$ be a sequence of i.i.d. random vectors whose marginal distribution is the same as the stationary distribution of $Y_{t-s}^{t+s}$.
\begin{assumption}[Upper bound on random entropy]\label{randomentropy}
Either (a) there exists a sequence $l(T,d) \geq 1$ such that
\begin{multline*}
    P\Bigg( \sup_{\tilde\theta \in \mathcal{B}(r,\theta^*)}\max_{i,j,k} \Bigg\|\frac{1}{N}\sum_{n=1}^N\left\{\frac{\partial m_{j,\theta}(\tilde{Y}_{n-s}^{n+s})}{\partial \theta}\rvert_{\theta = \tilde{\theta}}\right\} \times \\
    \left\{h^{ijk}(\tilde{Y}_{n-1}^n)\right\}^2 \left\{\frac{\partial m_{j,\theta}(\tilde{Y}_{n-s}^{n+s})}{\partial \theta}\rvert_{\theta = \tilde{\theta}}\right\}^\top\Bigg\|_2 > l(N,d) \Bigg)    \leq u(N,d),
\end{multline*}
for some sequence $u(T,d)$ such that $u(T/(c\log T),d)\log T \rightarrow 0$ as $T \rightarrow \infty$ for a constant $c$, where $h^{ijk}(\tilde{Y}_{n-1}^n) = \tilde{Y}_{n-1,k}(\tilde{Y}_{n,i}-\beta_{ji}^{*\top}\tilde{Y}_{n-1})$;

or (b) there exists sequence $l(T,d) \geq 1$ such that for $j \in \{1,\ldots,K\}$,
\begin{equation*}
     P\left(\sup_{\tilde\theta \in \mathcal{B}(r,\theta^*)} \left[ \frac{1}{N}\sum_{n=1}^N \left\|\frac{\partial m_{j,\theta}(\tilde{Y}_{n-s}^{n+s})}{\partial \theta}\rvert_{\theta = \tilde\theta}\frac{\partial m_{j,\theta}(\tilde{Y}_{n-s}^{n+s})}{\partial \theta}\rvert_{\theta = \tilde\theta}^\top\right\|_2^2 \right]^{1/2} > l(N,d)\right) \leq \tilde{u}_j(N,d),
\end{equation*}
for some sequence $\tilde{u}_j(T,d)$ such that $\tilde{u}_j(T/(c\log T),d)\log T \rightarrow 0$ as $T \rightarrow \infty$ for a constant $c$.
\end{assumption}
\noindent Assumption~\ref{randomentropy} is useful for controlling the entropy under an empirical norm of the function class $f_\theta^{ijk}$ when varying $\theta$ over $\Theta$. Specifically, we relate the entropy of this function class to the entropy of the parameter set $\Theta$, where Assumption~\ref{randomentropy} is useful in showing that functions in this class are Lipschitz in $\theta$. Note that the entropy under the empirical norm is random, where the randomness results from the randomness in the sample used to define the empirical norm. We take the same approach as in Section~5.1 in \citet{geer2000empirical}, and assume that this random entropy number is upper bounded by a deterministic sequence with high probability. As in the discussion following Lemma~5.1 in \citet{geer2000empirical}, the random entropy number can also be controlled if the function class under consideration is a Vapnik-Chervonenkis (VC) subgraph class. Showing that $\{f_\theta^{ijk}: \theta \in \Theta\}$ is indeed a VC subgraph class is left for future research.

Under Assumption~\ref{randomentropy}, the following proposition quantifies the magnitude of $\Delta$.
\begin{proposition}\label{errororder}
Under Assumptions~\ref{cond: operatornorm}, \ref{betamixing} and \ref{randomentropy}, 
\begin{multline*}
    \max_{i,j,k}\sup_{\theta\in\Theta}\left|\frac{1}{T}\sum_{t=1}^Tf_\theta^{ijk}(Y_{t-s}^{t+s}) -E\left[f_\theta^{ijk}(Y_{t-s}^{t+s})\right]\right|  = \\
    O_P\left(\sqrt{\frac{|S|l(T,d)(\log T)^3 (\log K + \log d) + (\log T)^4}{T}} \right),
\end{multline*}
when $(\log d + \log K)^2\log T = o(T)$. 
\end{proposition}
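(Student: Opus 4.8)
The plan is to prove this as a uniform law of large numbers for the beta-mixing time series $\{Y_t\}$, adapting the empirical-process machinery for dependent data (in the spirit of \citet{geer2000empirical}, with blocking to handle the temporal dependence as in \citet{wong2020lasso}). Three features make the statement nontrivial: the integrand $f^{ijk}_\theta$ is unbounded, the observations are serially dependent, and the deviation must be controlled uniformly over the high-dimensional set $\Theta$. I would address these with truncation, a blocking/coupling argument, and a chaining argument, respectively.

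First I would record the tail behaviour of the integrand. Writing $f^{ijk}_\theta(Y_{t-s}^t) = Y_{t-1,k}(Y_{t,i}-\beta_{ji}^{*\top}Y_{t-1})\, m_{j,\theta}(Y_{t-s}^t)$ and using $m_{j,\theta}\in[0,1]$, the integrand is bounded in absolute value by a product of two sub-Gaussian quantities (Lemma~\ref{msvarstationarity}), hence is sub-exponential uniformly in $\theta$. I would then truncate at a level $M \asymp \log T$: the bias introduced in both the empirical average and its expectation is controlled by the sub-exponential tail and is of order $\exp(-cM)$, negligible relative to the target rate for an appropriate multiple of $\log T$. From here on all quantities are bounded by $M$. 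Next I would handle the dependence by partitioning $\{1,\dots,T\}$ into consecutive blocks of length $\asymp \log T$ separated by small gaps, and apply Berbee's coupling lemma to replace the dependent blocks by independent ones with matching marginal laws. By Assumption~\ref{betamixing} the total coupling cost is at most (number of blocks)$\,\times\,b_{\mathrm{mix}}(\mathrm{gap})$, which is $o(1)$ once the gap grows like a suitable power of $\log T$ (taking $\gamma_1$ close to $1$, as discussed after Assumption~\ref{betamixing}). On the resulting independent block sums, each sum is bounded by $M\times(\text{block length})\asymp(\log T)^2$ with variance proxy of the same order, so I can apply a Bernstein-type inequality for a fixed $\theta$ and each fixed triple $(i,j,k)$.

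The hard part will be making the bound uniform over $\theta\in\Theta$. Since $f^{ijk}_\theta$ depends on $\theta$ only through $m_{j,\theta}$ (the coefficient $\beta^*_{ji}$ is fixed at its true value), the Lipschitz constant of $\theta\mapsto f^{ijk}_\theta$ is governed by the gradient $\partial m_{j,\theta}/\partial\theta$ multiplied by the sub-Gaussian factor $h^{ijk}(Y_{t-1}^t)=Y_{t-1,k}(Y_{t,i}-\beta^{*\top}_{ji}Y_{t-1})$. This is exactly the quantity whose empirical norm is bounded, with high probability, by $l(T,d)$ in Assumption~\ref{randomentropy}(a). Using this bound I would run a chaining argument relating the entropy of $\{f^{ijk}_\theta:\theta\in\Theta\}$ under the empirical norm to the entropy of $\Theta$; the cone constraint defining $\Theta_\beta$ reduces the effective dimension of $\Theta$ to $O(|S|)$, so the metric entropy at each scale is of order $|S|(\log K+\log d)$. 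Combining the entropy, the effective sample size $T/\log T$ coming from the block length $\asymp\log T$, and the truncation level $M\asymp\log T$, the variance term of the Bernstein bound yields the $|S|\,l(T,d)(\log T)^3(\log K+\log d)/T$ contribution, while the squared block-sum bound $(\log T)^2$ accounts for the $(\log T)^4/T$ term; a union bound over the at most $d^2K$ triples $(i,j,k)$ supplies the $\log K+\log d$ factor.

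Finally I would collect the truncation bias, the coupling error, and the concentration bound, and verify under $(\log d+\log K)^2\log T = o(T)$ that every error probability is $o(1)$ and the claimed rate dominates. The most delicate step is coupling the data-dependent Lipschitz constant from Assumption~\ref{randomentropy}(a) with the Bernstein concentration on the independent blocks: one must carry the high-probability event on which $\partial m_{j,\theta}/\partial\theta$ is controlled through the chaining bound while simultaneously respecting the sub-exponential tails after truncation, and ensure that all the resulting small-probability events remain summable so that their union is $o(1)$.
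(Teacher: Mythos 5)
Your architecture shares all the key quantitative ingredients with the paper's proof: the sub-exponential tail control of $h^{ijk}(Y_{t-1}^t)$ inherited from the sub-Gaussianity of $Y_t$, an effective sample size $T/\log T$ coming from blocks of length $\asymp \log T$, metric entropy of order $|S|(\log K + \log d)$ for the cone-constrained set $\Theta$ (the paper gets this via Sudakov minoration on the Gaussian width of $\Theta$), Assumption~\ref{randomentropy} supplying the data-dependent Lipschitz constant for $\theta \mapsto f^{ijk}_\theta$, and a union bound over the $Kd^2$ triples. But the mechanisms are genuinely different. The paper never truncates: it symmetrizes (Theorem~\ref{symmetrization}), conditions on the data on a high-probability event where the empirical $L_2$ norm of the class is bounded (established via the sub-weibull concentration of Lemma~\ref{keyconcentration}), and then applies Corollary 8.3 of \citet{geer2000empirical} (Theorem~\ref{corollary83}) to the conditional Rademacher process, for which only empirical norms matter---so unboundedness costs nothing and no truncation bias has to be tracked. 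Dependence is handled not by Berbee coupling of consecutive blocks but by the transfer theorem of \citet{karandikar2002rates} (Theorem~\ref{karandikarthm}), which bounds the uniform-deviation probability of the mixing process by $T\, b_{\textnormal{mix}}(k_T)$ plus $k_T$ times the i.i.d.\ deviation probability at sample size $l_T \approx T/(C\log T)$; taking $k_T \asymp \log T$ (absorbing the window length $s$) yields the $2/\sqrt{T}$ mixing cost and the same effective sample size as your blocking. Your route would buy more explicit constants and a familiar Bernstein computation; the paper's buys a cleaner treatment of both the unboundedness and the random entropy.

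Two places where your plan as written would run into trouble. First, Assumption~\ref{randomentropy} is formulated for an i.i.d.\ sequence of windows $\tilde{Y}_{n-s}^n$. This meshes with the Karandikar--Vidyasagar reduction, which interleaves $k_T$ subsequences with spacing $k_T > s$ and so reduces the problem to genuinely i.i.d.\ window samples of size $l_T$. Berbee coupling of consecutive blocks of length $\asymp \log T$ leaves the windows \emph{within} each block dependent, so the empirical norms and the random Lipschitz constant in your chaining step would be computed over dependent windows, and the assumption cannot be invoked verbatim; you would need either a within-block analogue of the assumption or the interleaved decomposition instead. Second, your sequencing---Bernstein for fixed $\theta$, then chaining with a data-dependent Lipschitz constant---does not quite cohere: the entropy bound from Assumption~\ref{randomentropy} is a random (empirical) quantity, and chaining an un-symmetrized process against a random entropy essentially forces you to symmetrize first, so that conditionally on the data the process is sub-Gaussian with respect to the empirical metric and the random entropy bound can be inserted on a high-probability event and marginalized at the end. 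That is exactly the order of operations in the paper (symmetrize, condition, chain, marginalize), and adopting it also makes your truncation step unnecessary. Finally, your calibration of the gaps implicitly takes $\gamma_1 = 1$; this is consistent with the paper's own Step~III, which uses geometric mixing from Lemma~\ref{msvarstationarity}, but for general $\gamma_1 < 1$ in Assumption~\ref{betamixing} your gaps $\asymp (\log T)^{1/\gamma_1}$ would exceed the block length and degrade the stated powers of $\log T$.
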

\noindent In the case that 
$\sup_{\tilde\theta \in \mathcal{B}(r,\theta^*)}\left\|\frac{\partial m_{j,\theta}(\tilde{Y}_{n-s}^{n+s})}{\partial \theta}\rvert_{\theta = \tilde\theta}\frac{\partial m_{j,\theta}(\tilde{Y}_{n-s}^{n+s})}{\partial \theta}\rvert_{\theta = \tilde\theta}^\top\right\|_2^2$ 
has a bounded expectation, Markov inequality implies that we can take $l(T,d)$ to be $(\log T)^2$, and condition (b) in Assumption~\ref{randomentropy} is satisfied with $\tilde{u}_j(T,d)$ on the order of $1/(\log T)^2$. We demonstrate empirically in Supplementary Appendix~\ref{app:signalempirical} that bounded expectation is plausible. As a result, $\Delta$ is of the order $O_P\left(\sqrt{|S|(\log T)^5(\log K + \log d)/T}\right)$, and the statistical error is of the order $O_P\left(\sqrt{|S|^2(\log T)^5(\log K + \log d)/T}\right)$ when $\Delta_p$ and $\Delta_\sigma$ are of the same order as $\Delta$. 

Compared to the estimation error of $l_1$-regularized regression with i.i.d. data, additional $\log T$ factors appear under the square root in our convergence rate. One of these factors appears due to the temporal dependence in the time series setting, as we essentially divide the entire time series into $\log T$ blocks that are approximately independent in order to show a uniform concentration result. A coupling argument as in \citet{merlevede2011bernstein} might remove this factor, but it is unclear whether the coupling technique is directly applicable when the goal is to establish uniform concentration results over a class of functions. Another $(\log T)^2$ appears when we use Markov inequality to control the random entropy in Assumption~\ref{randomentropy} and take $l(T,d) = (\log T)^2$, as outlined in the previous paragraph. The Markov inequality may be crude in this case, as it ignores the fact that a sample average appears in the quantities we aim to control in Assumption~\ref{randomentropy}. In light of this, concentration results may be useful in showing that we may in fact be able to take $l(T,d)$ to be of a lower order. 

Finally, an additional $(\log T)^2$ results from the entropy of the class $\Theta$, which contains vectors that are \textit{weakly sparse}, in the sense that the $l_1$-norm on the inactive set is small. This entropy appears in the convergence rate in our uniform concentration result. With i.i.d. data, a sample splitting approach might be employed to avoid the need for uniform concentration, where in each iteration a new block of data is used. This way, the parameter estimate prior to an iteration is independent of the data used in the next iteration to perform the update. However, in the time series setting, even if we divide the data into non-overlapping blocks and use different block for each iteration, these non-overlapping blocks are still dependent---although this dependence can be made weaker by using blocks further apart from each other in time. This means that a sample splitting approach does not remove the need for uniform concentration results in a trivial way. Alternatively, we can consider thresholding the parameter estimate after each iteration so that the parameter estimates over the iterations vary in a smaller set containing only \textit{exactly} sparse vectors. However, the sparsification step will introduce additional estimation error by introducing false negatives, and therefore the threshold level needs to be carefully chosen so that such error can be controlled. Nonetheless, we present a variant of the proposed algorithm that includes an additional thresholding step in Supplementary Appendix~\ref{app:EMTalgorithm}. 

\section{Numerical Experiments}\label{sec:numerical}

\subsection{Simulation studies}\label{sec: simulation}
In this section, we use simulations to illustrate the performance of the proposed algorithm. We consider the case where $Z_t$ takes value in $\{1,2\}$ (i.e., there are 2 regimes), with transition probability $P(Z_t=1|Z_{t-1}=1) = 0.7$ and $P(Z_t=1|Z_{t-1}=2) = 0.3$. The dimension $d$ is varied in $\{30,90\}$, and the sample size $T$ is varied within $\{500,1000,2000\}$.

We consider two settings for generating the regression coefficient matrices. For Setting~I, we first define matrices $A, \tilde A$, both in $\mathbb{R}^{3\times 3}$, with
\[
A = 
\begin{bmatrix}
0.5 & 0.1 & 0 \\
0 & 0.1 & 0.2 \\
0 & 0.3 & 0.3
\end{bmatrix}, \quad 
\tilde A = 
\begin{bmatrix}
0.3 & 0   & 0.2 \\
0.2 & 0   & 0   \\
0   & -0.5 & -0.3
\end{bmatrix}.
\]
We then set $B_1 = \textnormal{Id}_{d/3} \otimes A$; that is, $B_1$ is a block diagonal matrix with all $d/3$ diagonal blocks set to $A$. The matrix $B_2$ is the same as $B_1$ except that the $k$-th diagonal block is changed to $\tilde A$, for $k \in \{1,2,5,10\}$ when $d=30$ and for $k \in \{1,2,5,10,11,12,15,20,21,22,\allowbreak 25,30\}$ when $d=90$. For Setting~II, we first generate an adjacency matrix $A^{\mathrm{adj}} \in \mathbb{R}^{d\times d}$ randomly by drawing its entries independently from a Bernoulli$(0.1)$ distribution. When $d=30$, we let $(B_1)_{ij} = 0$ if $A^{\mathrm{adj}}_{ij} =0$; and for $A^{\mathrm{adj}}_{ij} =1$, $(B_1)_{ij}$ takes value $0.2,-0.2,0.4,-0.4$ with probability $0.45,0.45,0.05,0.05$, respectively. When $d=90$, the active regression coefficients take value $0.12,-0.12,0.24,-0.24$ with probability $0.45,0.45,0.05,0.05$, respectively, so that the spectral norm of $B_1$ is the same as in the case of $d=30$ and is below 1. To generate $B_2$, we randomly subset $50\%$ of the active entries in $B_1$ and flip its sign. In both Settings~I and II, the conditional variance $\sigma_j^2$ is set to 1.

To generate the observed data $\{Y_t\}$, we use a burn-in period of 5,000 steps. In each iteration of the algorithm, the tuning parameter $\lambda$ is selected via a 10-fold cross validation. The algorithm is terminated when $\|\theta^{(q)} - \theta^{(q-1)}\|_\infty$ is below a tolerance level, set to $10^{-4}$. The initial estimate of the regression coefficients $\beta^{(0)}$ is generated from $N_{2d^2}(0,{0.5}^2\textnormal{Id}_{2d^2})$, and ${p_{ij}}^{(0)} = 0.5$, $\sigma^{(0)}=1$. Such random initialization has been employed in previous works, including \citet{hao2017simultaneous}, and we found it reliable across the simulation settings we considered, especially with larger sample sizes. For each simulated dataset, we use 5 random initializations, and when different initializations lead to meaningfully different parameter estimates, we select the initialization and consequently the parameter estimate that results in the largest value of the expected log-likelihood (unpenalized objective function). 

For comparison, we define an oracle estimator $\hat\theta_{\mathrm{oracle}} = (\hat\beta_{\mathrm{oracle}}^\top,\hat p_{\mathrm{oracle}}^\top,(\hat\sigma^2_{\mathrm{oracle}})^\top)^\top$, assuming that we observe $\{Z_t\}$. Specifically, $\hat\beta_{j, \mathrm{oracle}}$ is estimated with the lasso on the subset of data corresponding to regime $j$, defined as $\{Y_t\}_{t\in \mathcal{T}_j}$ where $\mathcal{T}_j = \{t: Z_t=j\}$. The transition probability $\hat p_{ij, \mathrm{oracle}}$ is defined as $\sum I\{Z_{t-1} = i,Z_t=j\}/\sum I\{Z_{t-1}=i\}$. The conditional variance estimator $\hat\sigma^2_{j,\mathrm{oracle}}$ is the mean residual sum of squares across dimension and time $t$ in $\mathcal{T}_j$. We also define another estimator $\hat\theta_{\mathrm{exact}}$ obtained via the regularized EM algorithm that uses the exact conditional expectations $w_{j,\theta}(Y_0^T)$ and $w_{ij,\theta}(Y_0^T)$ in the objective function in the M-step without using any approximation. These exact conditional expectations are computed using the forward-backward iterations. We use $\hat\theta_{\mathrm{exact}}$ as a benchmark to examine the effect of the approximation in the E-step on the estimation error. For the regression coefficients, we define the estimation error of a generic estimator $\hat\beta$ as the $L_2$-norm of the difference between the estimate and the truth, i.e., $\|\hat\beta - \beta^*\|_2$. Various estimators have been proposed in the literature to estimate the conditional variance in high-dimensional linear regressions \citep{sun2012scaled, yu2019estimating}, but the simple estimator $\hat\sigma^2_{\mathrm{oracle}}$ worked reasonably well in our simulations (assuming that $Z_t$ is observed).

\begin{figure}[t]
    \centering
    \includegraphics[width=0.95\textwidth]{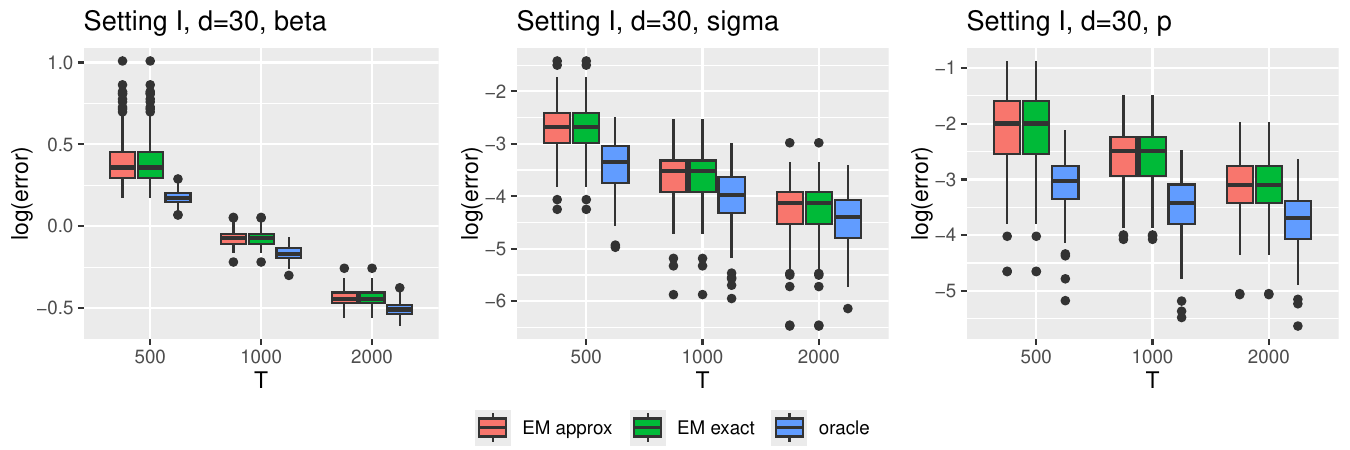}
    \includegraphics[width=0.95\textwidth]{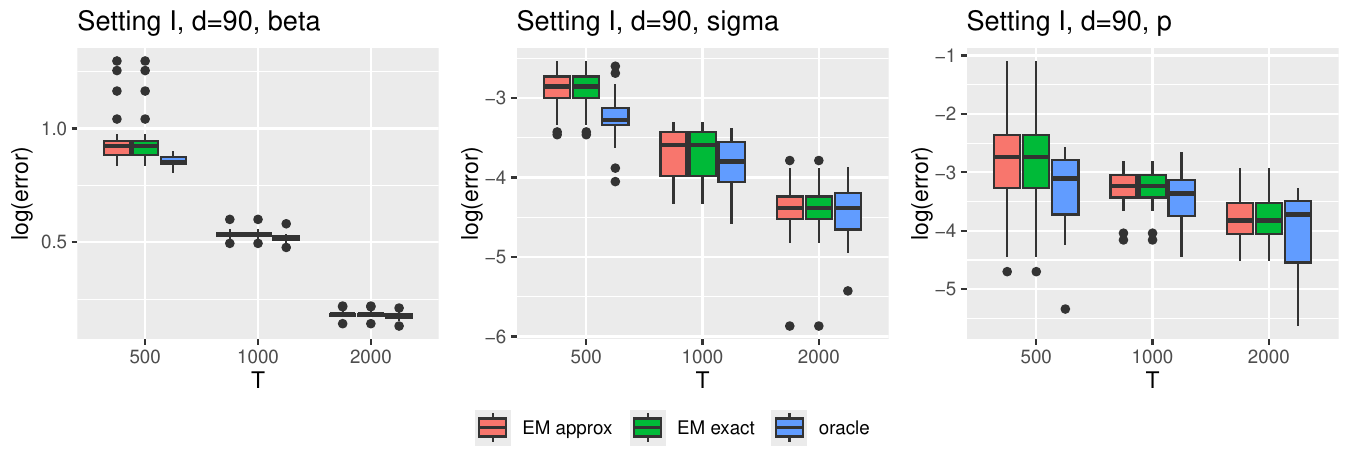}
    \caption{Estimation error of regression coefficients $\beta$ (left column), conditional variance $\sigma^2$ (middle column), and transition probabilities $p$ (right column) in Setting I, when we vary $d \in \{30, 90\}$ and $T \in \{500,1000,2000\}$. Log error is defined as $\log(\|\hat\beta - \beta^*\|_2)$, $\log(\|\hat\sigma^2 - (\sigma^*)^2\|_2)$, and $\log(\|\hat p - p^*\|_2)$, respectively. Results are based on 100 simulation replications for $d=30$, and 20 for $d=90$.}
    \label{fig: Setting1}
\end{figure}

Figure~\ref{fig: Setting1} and Figure~\ref{fig: Setting2} show the results for Settings I and II, respectively, when we vary $d \in \{30, 90\}$ and $T \in \{500,1000,2000\}$. Results for $d=30$ are based on 100 simulation replications, while results for $d=90$ are based on 20 replications. When $d=30$, we observe that in general the EM algorithm has slightly larger estimation error than the oracle estimator, but the estimation errors are mostly comparable. Not surprisingly, the performance also improves as sample size increases. Furthermore, we observe an approximately linear relationship between logarithm of estimation error and logarithm of sample size, with slope approximately $-1/2$. The results for $d = 90$ display similar overall trend. In this case, the dimension of the parameter vector increases to $16,204$ and initialization becomes more challenging with smaller sample sizes. Furthermore, in Setting II, the magnitude of individual regression coefficient is also smaller compared with $d=30$. With sample size 500, we observe that the EM algorithm has larger estimation error compared to the oracle, especially in Setting II. But as sample size increases, the performance of the EM estimate improves greatly and becomes comparable to the oracle estimates. We also observe that in certain simulation replications the EM estimate has large estimation error, which may suggests more than 5 random initializations are required to locate a good initialization. It is worth noting that for $T=500$, the EM algorithm may have smaller estimation error in $\sigma^2$ compared to the oracle. This is because the oracle often slightly underestimates the noise variance, while the EM has larger estimation error in the regression coefficients and consequently larger residual sum of squares in general, making the mean residual sum of squares larger and closer to the conditional variance. We also observe that the estimation error of our approximate EM algorithm, which uses an approximation of the conditional expectations in the E-step, is very similar to that of the exact EM algorithm, which computes the exact conditional expectations. This suggests that the approximation error is negligible with our choice of $s = \lceil \log T \rceil$.

\begin{figure}[t]
    \centering
    \includegraphics[width=0.95\textwidth]{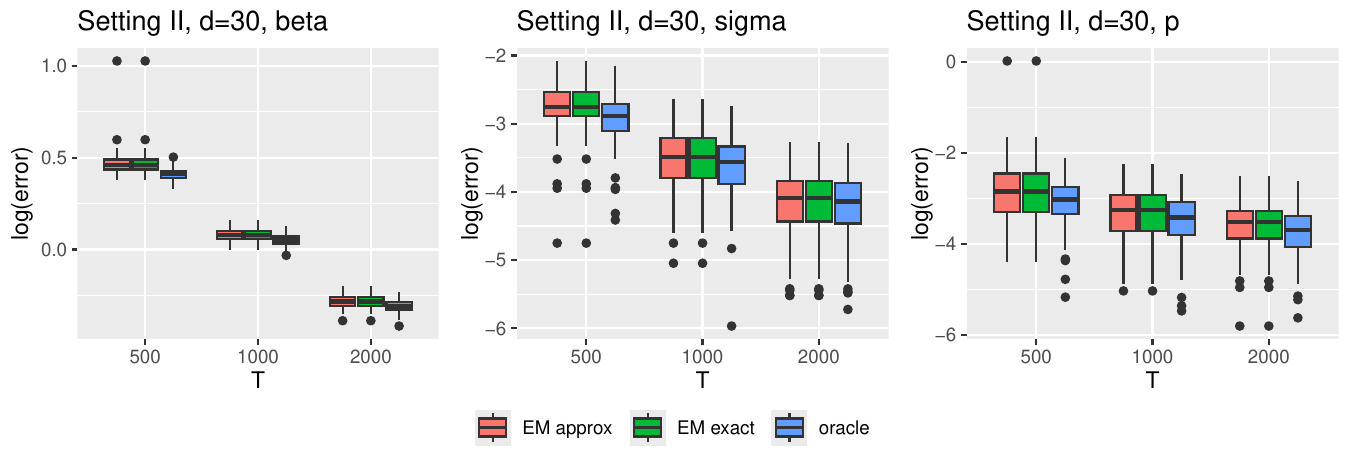}
    \includegraphics[width=0.95\textwidth]{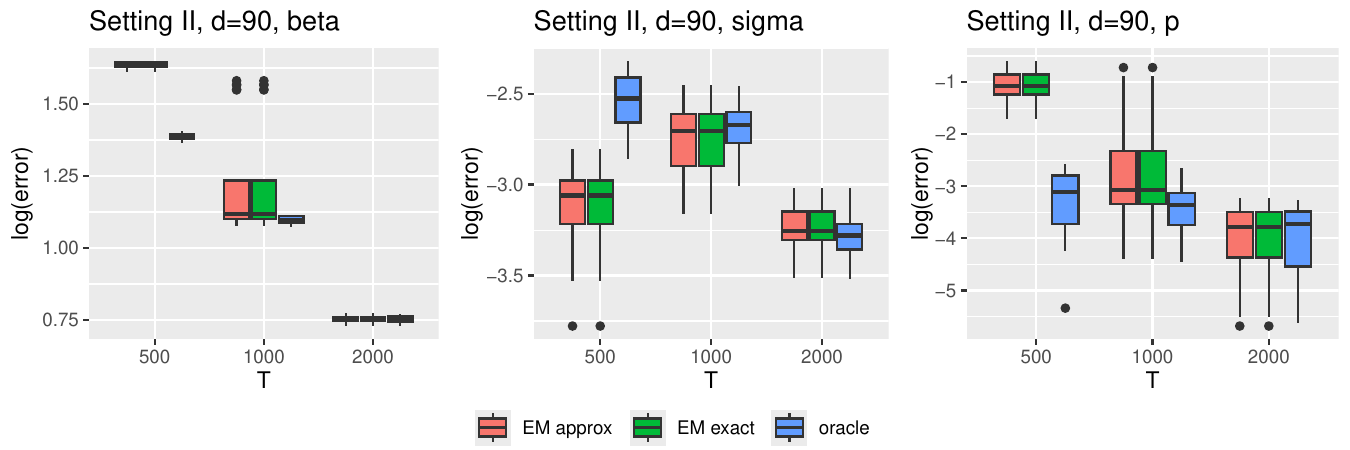}
    \caption{Estimation error of regression coefficients $\beta$ (left column), conditional variance $\sigma^2$ (middle column), and transition probabilities $p$ (right column) in Setting II, when we vary $d \in \{30, 90\}$ and $T \in \{500,1000,2000\}$. Log error is defined as $\log(\|\hat\beta - \beta^*\|_2)$, $\log(\|\hat\sigma^2 - (\sigma^*)^2\|_2)$, and $\log(\|\hat p - p^*\|_2)$, respectively. Results are based on 100 simulation replications for $d=30$, and 20 for $d=90$.}
    \label{fig: Setting2}
\end{figure}

Next, we consider an even more challenging setting with 3 regimes, where the conditional variance of the noise also differs across regimes. Specifically, the simulation setting is the same as Setting II, except that we add an additional regime with coefficient matrix $B_3 = B_1$ and $\sigma_3 = 0.5$. This results in $24,309$ parameters in total when $d=90$. The transition probabilities between regimes are given by $P_{1.} = (0.3, 0.3, 0.4)$, $P_{2.} = (0.2, 0.5, 0.3)$ and $P_{3.} = (0.5, 0.3,0.2)$, where $P_{j.}$ is the $j$-th row in the transition probability matrix $P_Z$. We refer to this setting as Setting III, and here we vary sample size $T \in \{500,1000,2000,5000\}$. The corresponding results are presented in Figure~\ref{fig: Setting3}, based on 50 simulation replications for $d=30$ and 20 simulation replications for $d=90$. We observe similar patterns as in Setting II. In particular, the estimation error of the EM estimates is larger than that of the oracle estimates for small sample size, but improves as sample size increases and becomes comparable to the oracle with moderate to large sample size. Moreover, the estimation errors of our approximate EM algorithm and the exact EM algorithm are again very similar.

\begin{figure}[t]
    \centering
    \includegraphics[width=0.95\textwidth]{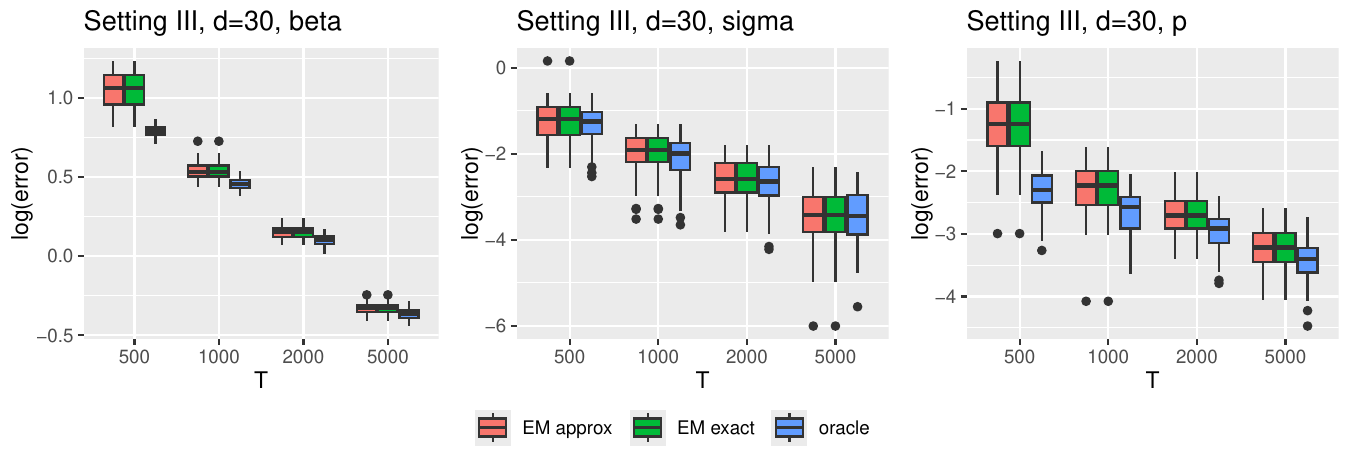}
    \includegraphics[width=0.95\textwidth]{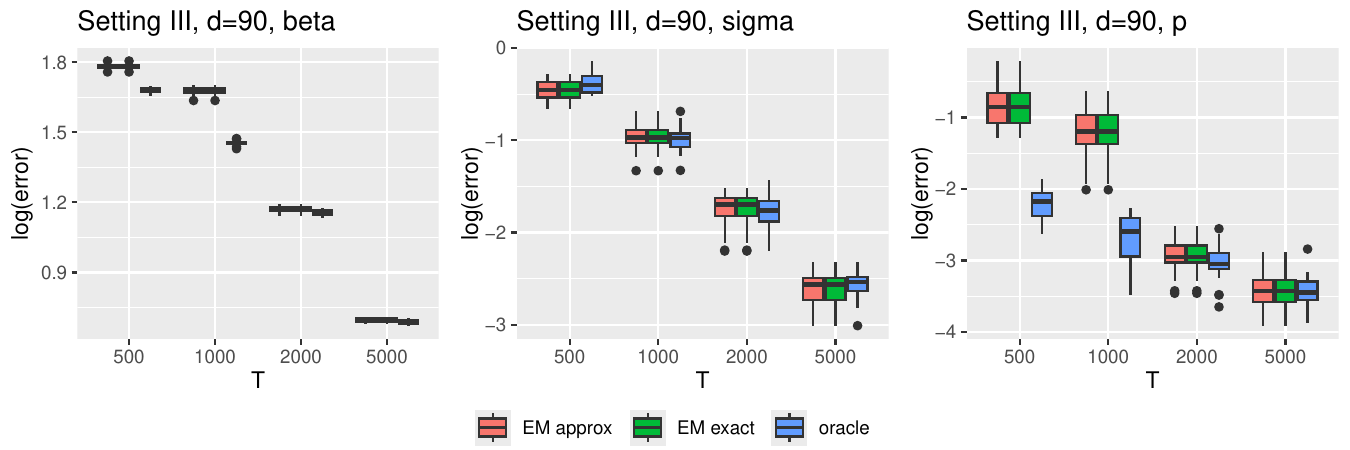}
    \caption{Estimation error of regression coefficients $\beta$ (left column), conditional variance $\sigma^2$ (middle column), and transition probabilities $p$ (right column) in Setting III with 3 regimes, when we vary $d \in \{30, 90\}$ and $T \in \{500,1000,2000, 5000\}$. Log error is defined as $\log(\|\hat\beta - \beta^*\|_2)$, $\log(\|\hat\sigma^2 - (\sigma^*)^2\|_2)$, and $\log(\|\hat p - p^*\|_2)$, respectively. Results are based on 50 simulation replications for $d=30$, and 20 for $d=90$.}
    \label{fig: Setting3}
\end{figure}

\subsection{Application to EEG data}\label{sec: EEG data}

We apply the approximate EM algorithm to analyze an EEG dataset, which measures the signals at 18 EEG channels during an epileptic seizure from a patient diagnosed with left temporal lobe epilepsy \citep{ombao2005slex}. The signals were measured over an approximately 228-second period with 100Hz frequency, resulting in 22,768 time steps in total. Our goal is to investigate how the brain connectivity network may change over time for a patient experiencing seizure. 

To apply our method, we first down-sample the observations to ten observations per second consistent with \cite{safikhani2022joint}, which results in $T=2,276$ time points. We then standardize the signal from each channel so that the overall mean and variance are 0 and 1, respectively. The top panel of Figure~\ref{fig: EEG measurement} depicts the standardized signals from the 18 channels over time.

\begin{figure}[t]
    \centering
    \includegraphics[width=0.9\linewidth]{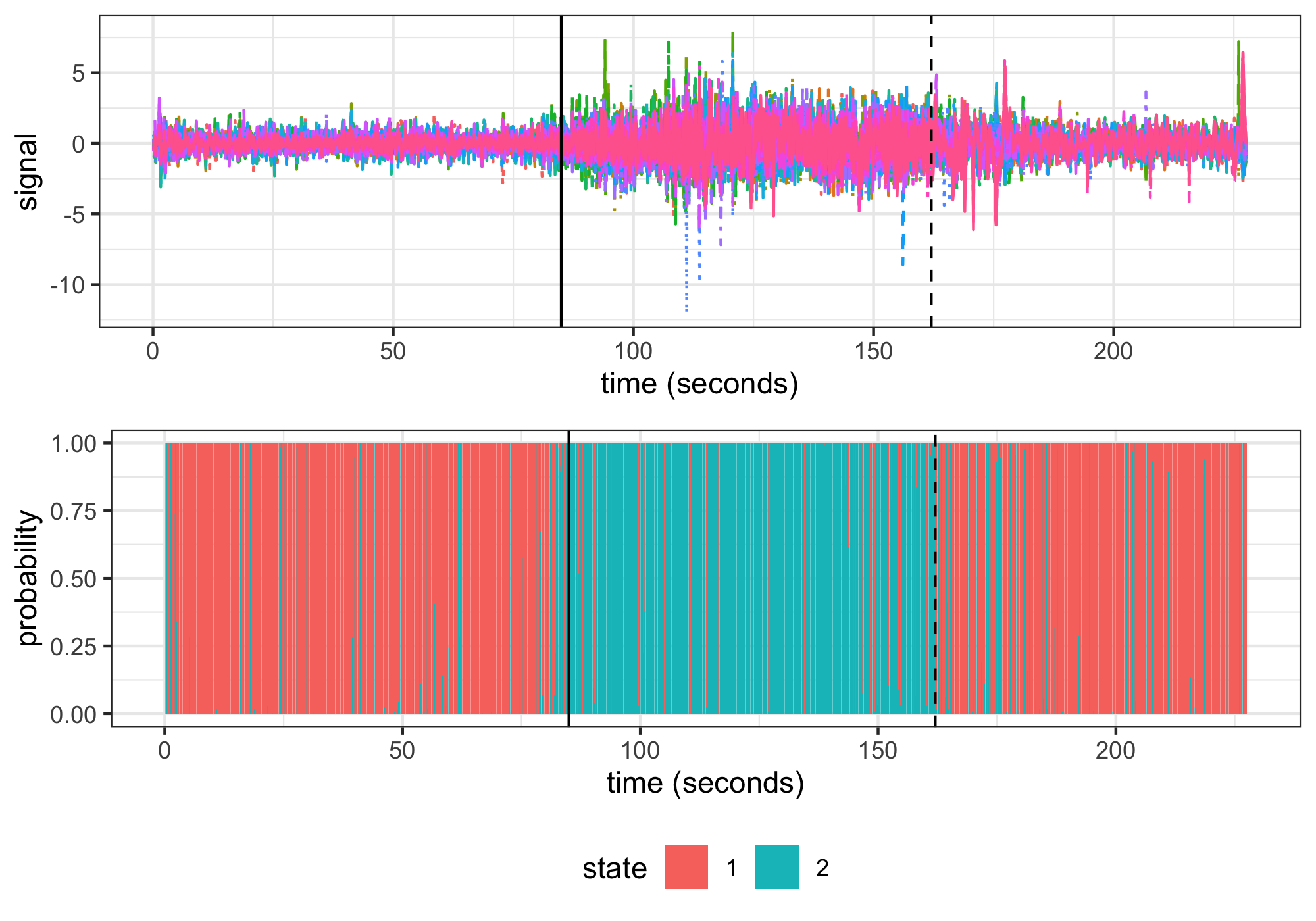}
    \caption{Top panel: standardized electroencephalograms (EEG) signals from 18 EEG channels measured from a patient experiencing an epileptic seizure over approximately 228 seconds, with different channels represented by different colors and line shapes. Bottom panel: conditional probability of each state estimated by the regularized approximate EM algorithm in a Markov-switching VAR model with 2 states. Height of colored bar represents the estimated probability of the corresponding state. Vertical solid line ($t=85$ seconds) corresponds to the time when the seizure took place, and vertical dashed line ($t=162$ seconds) corresponds to the change point estimated in \cite{safikhani2022joint}.}
    \label{fig: EEG measurement}
\end{figure}

We estimate Markov-switching VAR models with number of regimes $K \in \{2,3,4\}$ and select the best fit using a BIC criterion. We use $s=8$ in computing the approximate conditional expectations and a random initialization. The iteration is terminated if the change in the expected log likelihood function is smaller than $5\times 10^{-5}$ with a maximum of 100 iterations. {\color{change}The choice $K=2$ yields the smallest value of a high-dimensional BIC criterion \citep{wang2011consistent} and the most interpretable results. Therefore, we present the results for 2 regimes in this section, and defer the results for $K=3$ and 4 to Supplementary Appendix~\ref{app:additionalsim}.} 

The bottom panel of Figure~\ref{fig: EEG measurement} shows the approximate posterior probability of each state, $m_{j,\hat\theta}$, where $\hat\theta$ is the final EM estimate. The seizure was estimated to take place at around 85 seconds according to an expert neurologist \citep{safikhani2022joint}, which is marked by the vertical solid line in Figure~\ref{fig: EEG measurement}. We observe that starting from around that time, the process appears to switch from state 1 to state 2, as state 2 starts to consistently have (posterior) probabilities close to 1 whereas state 1 generally has higher probabilities prior to that. This corresponds well to the estimated time of the seizure occurrence by experts. Starting from $t \approx 160$ seconds, we observe oscillations between the 2 states after which the process seems to return to state 1 according to our estimates. This might indicate a recovery from the seizure, although the oscillations may also suggest increased volatility in the brain network after the seizure. The time $t \approx 160$ also agrees with the change point identified in \citet{safikhani2022joint} marked by the vertical dashed line in Figure~\ref{fig: EEG measurement}. Finally, we visualize the estimated brain connectivity networks for each regime for $K=2$ in Figure~\ref{fig: brain connectivity}. We further threshold the estimated coefficients such that approximately 10\% of entries in the coefficient matrix are non-zero in the regime before and after seizure. We observe that the brain connectivity network, as represented by the regression coefficient matrix, is much denser during the seizure period, suggesting that the brain connectivity may be highly unstable and volatile during seizure.

\begin{figure}
    \centering
    \includegraphics[width=0.6\linewidth]{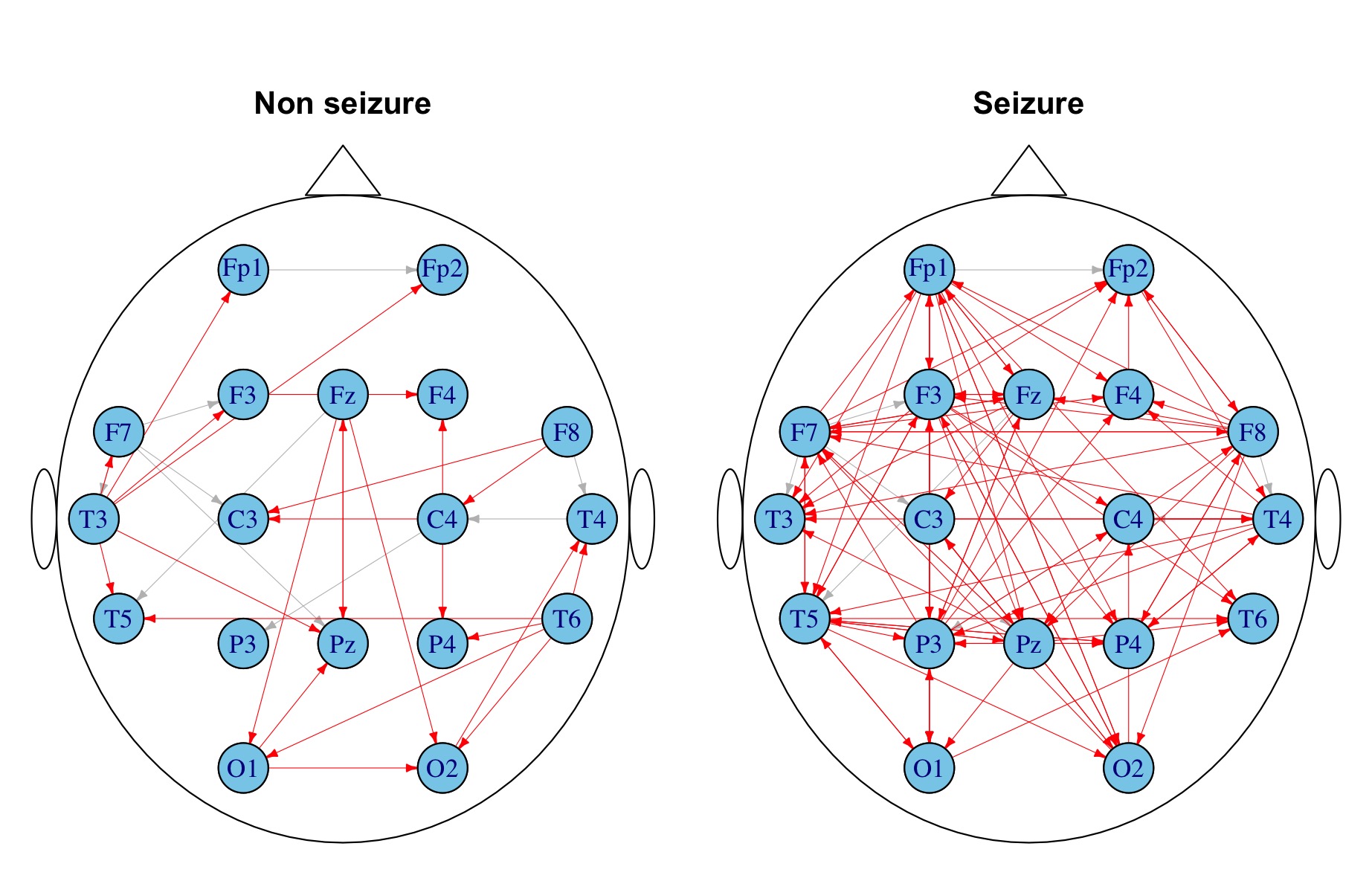}
    \caption{Brain connectivity networks in state 1 (non seizure) and state 2 (seizure). Common edges are plotted with gray color, while edges distinct to each state are plotted with red color. Regression coefficient matrices are thresholded so that approximately 10\% of entries in the adjacency matrix in state 1 is non-zero.}
    \label{fig: brain connectivity}
\end{figure}

\section{Discussion}\label{sec:discussion}
In this paper, we developed a regularized approximate EM algorithm for parameter estimation in high-dimensional Markov-switching VAR models. The proposed algorithm uses an approximation of the conditional expectation in the E-step, and allows the dimension of the observation $Y_t$ to diverge exponentially with the sample size. We also established statistical guarantees for the resulting estimate using probabilistic tools for ergodic time series.
To our knowledge, the results for convergence of EM in dependent data settings are the first of their kind. 

The proposed algorithm can be implemented efficiently as the optimization can be solved with closed-form solutions or with readily available software packages, such as \texttt{glmnet} in \texttt{R}. In our theoretical derivation (see Theorem~\ref{estimationerror}), the tuning parameter needs to be updated in each iteration. However, specifying $\lambda$ according to \eqref{lambdachoice} is challenging, as it requires knowing the true magnitude of the estimation error. In our simulations and EEG data analysis, we choose $\lambda$ based on cross-validation in each iteration, and we found that it worked well in the settings we considered. 

To the best of our knowledge, we are not aware of theoretical guarantees for the EM algorithm with arbitrary initialization. In our theory, we require that the initialization falls within a neighborhood of the true parameter value. When initial (perhaps less precise) estimates are available, the EM algorithm can be initialized using these initial estimates. We leave the development of such initial estimates in the Markov-switching VAR setting to future research. When initial estimates are not available, using multiple random initializations provides a viable solution.

In our proposed algorithm, we approximate the conditional expectations of $I\{Z_t =j\}$ and $I\{Z_{t-1}=i, Z_t =j\}$ given all observed data $Y_0^T$. These conditional expectations are referred to as \emph{smoothed probabilities} in, for example, \citet{krolzig2013markov}. An alternative is to use \emph{filtered probabilities}, defined as the conditional expectations of $I\{Z_t =j\}$ and $I\{Z_{t-1}=i, Z_t =j\}$ given only observations up to time $t$, that is, $Y_0^t$. This differs slightly from the ordinary EM algorithm.

In our consistency analysis for the smoothed-probability version, two ingredients are key: (1) the population EM mapping is a contraction with the true parameter value as a fixed point, and (2) concentration bounds control the difference between a single iteration of the sample-based EM and the population EM. It can be shown that, for the variant of the population EM based on filtered probabilities, the true parameter value also remains a fixed point. Consequently, under similar conditions ensuring sufficient signal strength and controlling both approximation and statistical errors, we expect analogous consistency results for an approximate regularized EM algorithm that uses approximations of the filtered probabilities in the E-step. In defining these approximations, one may condition only on $Y_{t-s}^t$ for a suitably chosen $s$. We examine the empirical performance in simulation studies of this variant in Supplementary Appendix~\ref{app:additionalsim} and leave its theoretical analysis for future work.

\bibliographystyle{abbrvnat}
\bibliography{references}

\newpage
\section*{Appendix}
\appendix

\section{Proof outlines}\label{sec:proof outline}
In this section, we outline the proofs of our main theoretical results in Section~\ref{sec:theoretical}, including Theorem~\ref{estimationerror} and Proposition~\ref{errororder}. We also note that the proof of Lemma~\ref{RE} follows the same general strategy as the proof of Proposition~\ref{errororder} by considering a function class indexed by sparse vector $v$ first. We then apply similar arguments as in \cite{loh2012high} to extend the result to all vectors $v$. Complete proofs of all theoretical results are presented in Supplementary Appendices~\ref{app:prooflemma} to \ref{app:proofmainthm}.

\subsection{Proof of Theorem~\ref{estimationerror}}
We prove the theorem in two main steps. In Step~I, we focus on one iteration of the EM algorithm and show that with appropriately chosen $\lambda$, the estimation error of the updated parameter estimate can be upper bounded in terms of $\lambda$. In doing so, we also derive explicit requirements on the choice of $\lambda$ to establish this upper bound. In Step~II, we choose a specific sequence of $\lambda$ values over the iterations. We use induction to show that in each iteration, our chosen $\lambda$ value satisfies the requirements in Step I, and hence our upper bound on the estimation error holds in each iteration.

\textbf{Step I: estimation error in one iteration when $\lambda$ is chosen appropriately.} We first focus on the $q$-th iteration of the EM algorithm. Let $\theta^{(q-1)}$ denote the parameter estimate prior to the $q$-th iteration, and let $\theta^{(q)}$ be the updated parameter estimate after the $q$-th iteration. For the ease of notation, in this proof, we will often write $\theta^{(q-1)}$ as $\theta = (\beta^\top, p^\top, \sigma^\top)^\top$ and $\theta^{(q)}$ as $\hat\theta = (\hat\beta^\top,\hat p^\top,\hat\sigma^\top)^\top$.

\textbf{Step I-I: estimation error of $\beta$.} For simplicity, we use $\tilde\beta$ as a shorthand notation for $M_\beta(\theta)$. Using the fact that $\hat\beta$ is the minimizer in the M-step, and rearranging the terms, we get

\begin{align*}
    &\frac{1}{T}\sum_{t=1}^{T}\sum_{j=1}^K m_{j,\theta}(Y_{t-s}^{t+s})\|(B_j^*-\hat B_j)^\top Y_{t-1}\|_2^2 \\
    &\ \ \leq \underbrace{\frac{2}{T}\sum_{t=1}^{T}\sum_{j=1}^K m_{j,\theta}(Y_{t-s}^{t+s})\left\{Y_t - (B_j^*)^\top Y_{t-1}\right\}^\top(\hat B_j -B_j^*)^\top Y_{t-1}}_{\text{term 1}} + \underbrace{\lambda \left(\|\beta^*\|_1 - \|\hat\beta\|_1\right)}_{\text{term 2}} \\
    &\quad \leq \frac{3\lambda}{2}\|(\hat\beta-\beta^*)_S\|_1-\frac{\lambda}{2}\|(\hat\beta-\beta^*)_{S^C}\|_1  \\
    &\quad \quad + (\rho_{\max}\|\tilde\beta-\beta^*\|_2 + 3\phi^s\rho_{\max}(K-1)^{1/2}\|D_\beta\|_2 )\|\hat\beta-\beta^*\|_2,
\end{align*}
if $\lambda$ is such that $\lambda\geq 2\Delta+2C_1\phi^s$. Here, to obtain the last inequality, we decompose term 1 into terms related to the statistical error, the approximation error, and the estimation error in a hypothetical iteration with infinite amount of data, and subsequently upper bound it by $\Delta\|\hat\beta-\beta^*\|_1 + \rho_{\max} \|\hat\beta-\beta^*\|_2 \|\tilde\beta-\beta^*\|_2 + 3\phi^s \rho_{\max}(K-1)^{1/2}\|\hat\beta - \beta^*\|_2\|D_\beta\|_2 +\phi^s C_1\|\hat\beta-\beta^*\|_1$.

If $\lambda$ also satisfies $2(\rho_{\max}\|\tilde\beta-\beta^*\|_2  + 3\phi^s\rho_{\max}(K-1)^{1/2}\|D_\beta\|_2) \leq \lambda\sqrt{|S|}$, we can further show that $\|\hat\beta-\beta^*\|_1 \leq 5\sqrt{|S|}\|\hat\beta-\beta^*\|_2$. This puts $\hat\beta$ into $\Theta_\beta$ if $\hat\theta$ stays in $\mathcal{B}(r, \theta^*)$ which we will show is indeed the case. Applying the restricted eigenvalue condition, for sufficiently large $T$, we have
\begin{align*}
    \frac{1}{T}\sum_{t=1}^{T}\sum_{j=1}^K m_{j,\theta}(Y_{t-s}^{t+s})\|(B_j^*-\hat B_j)^\top Y_{t-1}\|_2^2 \geq \alpha/2 \|\hat\beta-\beta^*\|_2^2,
\end{align*}
which, combined with the previous display, implies that
\begin{equation*}
    \|\hat\beta-\beta^*\|_2 \leq \frac{3\lambda\sqrt{|S|}}{\alpha} + \frac{2}{\alpha} \left(\rho_{\max}\|\tilde\beta-\beta^*\|_2 +  3\phi^s\rho_{\max}(K-1)^{1/2}\|D_\beta\|_2 \right).
\end{equation*}

\textbf{Steps I-II and I-III: estimation errors of $p$ and $\sigma$.} Leveraging the closed-form expression for the update and decomposing the difference between $\hat p$ and $p^*$ into terms related to the statistical error, the approximation error, and the estimation error in a population EM iteration, we show that
\begin{equation*}
    \|\hat p - p\|_2 \leq K \Delta_p + C_2\phi^s + \|M_p(\theta) - p^*\|_2.
\end{equation*}
Following similar steps, we can upper bound the estimation error of $\hat\sigma^2$ as follows
\begin{multline*}
    \|\hat\sigma^2 - (\sigma^*)^2\|_2 
    \leq \sqrt{K}\Delta_\sigma + \|(M_\sigma(\theta))^2 - (\sigma^*)^2\|_2 + C_3\phi^s + \\
    \frac{2\sqrt{K}\rho_{\max}}{d}\|\hat\beta - \beta^*\|_2^2 +
    \frac{2\sqrt{K}\rho_{\max}r}{d}\|\tilde\beta - \beta^*\|_2,
\end{multline*}
where we note that there are additional terms related to the estimation error of $\hat\beta$. 

\textbf{Step I-IV: estimation error of $\theta$ and requirements on $\lambda$.} Combining the estimation error for $\beta$, $p$ and $\sigma$, we can show that
\begin{multline*}
     \|\hat\theta - \theta^*\|_2 \leq \frac{3\lambda\sqrt{|S|}}{\alpha} + \frac{6}{\alpha} \phi^s\rho_{\max}(K-1)^{1/2}\|D_\beta\|_2 + K \Delta_p + C_2\phi^s + \sqrt{K} \Delta_\sigma + \\
     C_3\phi^s + \frac{2\sqrt{K}\rho_{\max}}{d}\|\hat\beta - \beta^*\|_2^2 + \eta\kappa \|\theta - \theta^*\|_2.
\end{multline*}
This estimation error can be further upper bounded as follows
\begin{equation}
    \|\hat\theta - \theta^*\|_2 \leq \left(1 + \frac{8.2\sqrt{K}\rho_{\max}\eta\kappa r}{d}\right) \frac{4\lambda\sqrt{|S|}}{\alpha}, \label{eqn:lambdarequirement}
\end{equation}
if $\lambda$ can be chosen such that
\begin{align*}
    \frac{4\lambda\sqrt{|S|}}{\alpha} &\leq 4.1\eta\kappa r, \quad \quad \lambda \geq 2\Delta + 2C_1\phi^s, \\
    \frac{\lambda \sqrt{|S|}}{\alpha} &\geq \frac{6}{\alpha} \phi^s\rho_{\max}(K-1)^{1/2}\|D_\beta\|_2 + K \Delta_p + C_2\phi^s + \sqrt{K} \Delta_\sigma + C_3\phi^s  + \eta\kappa \|\theta - \theta^*\|_2.
\end{align*}
These are stronger than the requirements on $\lambda$ used to establish the upper bound on the estimation error of $\beta$. We thus need to establish that our choice of $\lambda$ indeed satisfies these requirements. 

\textbf{Step II: induction.} 
We use induction to show that,  in each iteration, the requirements for $\lambda$ are satisfied by our choice. We can verify $\lambda^{(1)}$ indeed satisfies all the requirements above. Next, we show that if $\lambda^{(q)}$ satisfies all the requirements so that \eqref{eqn:lambdarequirement} holds,
then $\lambda^{(q+1)}$ also satisfies all the requirements so that the upper bound holds at iteration $q+1$. This can be verified by noting that: (a)
\begin{equation*}
    \lambda^{(q+1)} \geq \frac{1-\tau^{(q+1)}}{1-\tau} (2\Delta + 2C_1\phi^s) \geq 2\Delta + 2C_1\phi^s;
\end{equation*}
(b) for sufficiently large $T$, 
\begin{multline*}
    \frac{4\sqrt{|S|}}{\alpha}\frac{1}{1-\tau} \max\Bigg\{2\Delta + 2C_1\phi^s,  \\
    \frac{\alpha}{\sqrt{|S|}}\left(\frac{6}{\alpha} \phi^s\rho_{\max}(K-1)^{1/2}\|D_\beta\|_2 + K \Delta_p + C_2\phi^s + \sqrt{K}\Delta_\sigma + C_3\phi^s  \right)\Bigg\} \leq 0.1\eta\kappa r;
\end{multline*}
and (c) by leveraging the explicit expressions for $\lambda^{(q)}$ and $\lambda^{(q+1)}$, we get 
\begin{align*}
    &\quad \frac{\alpha}{\sqrt{|S|}}\left\{\frac{6}{\alpha} \phi^s\rho_{\max}(K-1)^{1/2}\|D_\beta\|_2 + K \Delta_p + C_2\phi^s + \sqrt{K} \Delta_\sigma + C_3\phi^s  + \eta\kappa \|\theta^{(q)} - \theta^*\|_2\right\} \\
    &\leq \frac{\alpha}{\sqrt{|S|}}\left\{\frac{6}{\alpha} \phi^s\rho_{\max}(K-1)^{1/2}\|D_\beta\|_2 + K \Delta_p + C_2\phi^s + \sqrt{K} \Delta_\sigma + C_3\phi^s  \right\}  + \tau \lambda^{(q)} \leq \lambda^{(q+1)}.
\end{align*}
Finally, by plugging in the expression of $\lambda^{(q)}$ into the upper bound on the estimation error, we can show that $\|\theta^{(q)} - \theta^*\|_2 \leq  \|\theta^{(0)} - \theta^*\|_2$ for sufficiently large $T$ and thus $\theta^{(q)}$ always remains in $\mathcal{B}(r,\theta^*)$. This concludes the proof. 

\subsection{Proof of Proposition~\ref{errororder}}
For the ease of notation, define the following functions:
\begin{align*}
    h^{ijk}(Y_{t-1}^t) &= Y_{t-1,k}(Y_{ti}-\beta_{ji}^{*\top}Y_{t-1}), \\
    f_\theta^{ijk}(Y_{t-s}^{t+s}) &= Y_{t-1,k}(Y_{ti}-\beta_{ji}^{*\top}Y_{t-1})m_{j,\theta}(Y_{t-s}^{t+s}) = h^{ijk}(Y_{t-1}^t)m_{j,\theta}(Y_{t-s}^{t+s}),
\end{align*}
and the function class $\mathcal{G}$ as $\mathcal{G} = \bigcup_{i,j,k} \left\{f_{\theta}^{ijk}: \theta \in \Theta \right\}.$ 
We will show that as $T\rightarrow \infty$,
\begin{equation*}
    P\left( \sup_{g \in \mathcal{G}} \left|\frac{1}{T}\sum_{t=1}^T g(Y_{t-s}^{t+s}) - E\left[g(Y_{t-s}^{t+s})\right]\right| \geq \delta \right) \rightarrow 0,
\end{equation*}
with $\delta = C^* \sqrt{(|S|l(T,d)(\log T)^3 (\log K + \log d) + (\log T)^4)T^{-1}}$. We do this in three steps. In step~I, we control the tail behavior of the random variable $h^{ijk}(Y_{t-1}^t)$, which will be useful for the concentration result later. In step~II, we establish a uniform concentration result over the function class $\mathcal{G}$ using an entropy argument, which holds for independent and identically distributed observations. In step~III, we generalize the concentration result from step~II by using $\beta$-mixing to establish  
a uniform concentration result for the original time series.

\textbf{Step I: control the tail of $h^{ijk}$.} Assumption~\ref{cond: operatornorm} implies that $\|\beta_{ji}\|_2 \leq 1$ for all $i,j$. Lemma~\ref{lemma: msvarstationarity} implies that $E[|v^\top Y_t|^q] \leq K_Y^q q^{q/2}$ for all unit vectors $v$. Using this, we can show that for all $q \geq 1$, $$(E|h^{ijk}(Y_{t-1}^t)|^q)^{1/q} \leq 4K_Y^2q, \quad \forall i,j,k$$
As $m$ is bounded, the $L_2$-norm of $g \in \mathcal{G}$ is upper bounded by $8K_Y^2$.

\textbf{Step II: uniform concentration for i.i.d. data.} Let $N = T/\{\tilde{c}\log T\}$ for some fixed constant $\tilde{c}$. Let $\{\tilde{Y}_{n-s}^{n+s}\}_{n=1}^N$ be an i.i.d. sample wherein the marginal distribution of $\tilde{Y}_{n-s}^{n+s}$ is the same as the marginal distribution of $Y_{t-s}^{t+s}$, and let $X_n = \tilde{Y}_{n-s}^{n+s}$. 

\noindent\textbf{Symmetrization.} Applying the symmetrization argument in Theorem~\ref{symmetrization}, we have
\begin{equation*}
    P\left(\sup_{g\in\mathcal{G}}\left|\frac{1}{N}\sum_{n=1}^N g(X_n) - E[g(X_n)]\right| \geq \delta \right) \leq 4P\left( \sup_{g\in\mathcal{G}}\left|\frac{1}{N}\sum_{n=1}^N W_ng(X_n)\right| \geq \delta/4 \right),
\end{equation*}
for i.i.d. Rademacher random variables $W_n$'s.

\noindent\textbf{Empirical norm of $g \in \mathcal{G}$.} Define event $\mathcal{A}$ such that $\mathcal{A}$ occurs if and only if $$\sup_{g \in \mathcal{G}}N^{-1}\sum_{n=1}^N g(X_n)^2 \leq 64K_Y^4 + 1.$$ Applying the concentration result for sub-Weibull random variables as in Lemma 13 of \citet{wong2020lasso} and a union bound, we get a uniform upper bound on the empirical norm of functions in $\mathcal{G}$ that holds with high probability,
\begin{equation*}
    P\left(\sup_{g \in \mathcal{G}}\frac{1}{N}\sum_{n=1}^N g(X_n)^2 > 64K_Y^4+ 1\right) \leq Kd^2 N\exp\left\{-\frac{N^{1/2}}{8K_Y^2\tilde{C}_1}\right\} + Kd^2 \exp\left\{-\frac{N}{\tilde{C}_2 (64K_Y^4)^2}\right\}.
\end{equation*}

\noindent\textbf{Conditioning on $X_1,\ldots,X_N$.} Applying a Sudakov minoration argument, we get $\sqrt{\log N_c(\epsilon,\Theta,\|\cdot\|_2)} \leq C_1r \sqrt{|S|(\log K + \log d)}/\epsilon$ for some constant $C_1$. We can further show that functions $g \in \mathcal{G}$ is Lipschitz in $\theta$ with respect to the empirical norm, with Lipschitz constant $L_G(X_1^N) \geq 1$, for given $i,j,k$. Therefore, taking a union over $i,j,k$, we have
\begin{equation*}
    \sqrt{\log N_c(\epsilon,\mathcal{G},\|\cdot\|_{Q_n})}  \leq C_2 r L_G(x_1^N) \sqrt{|S|(\log K + \log d)}/\epsilon.
\end{equation*}
Under Assumption~\ref{randomentropy}(b), the Lipschitz constant is upper bounded with high probability:
\begin{multline*}
    P\left(L_G^2(X_1^N) > l(N,d)(16^2K_Y^4 + 1) \right) \leq
    Kd^2 \exp\left(-\frac{N}{K_4^2\tilde C_2}\right) + Kd^2 N \exp\left(-\frac{N^{1/4}}{K_4^{1/4}\tilde C_1}\right) \\
    + \sum_{j=1}^K \tilde{u}_j(N,d),
\end{multline*}
where $K_4 = 2^4(64K_Y^4)^2$. Similar upper bound can be established under Assumption~\ref{randomentropy}(a). This will provide a high-probability upper bound on the random entropy. Now, consider a set of values $\{x_1,\ldots,x_N\}$ such that all the high-probability events defined so far occur, for sufficiently large $N$ (sufficiently large $T$), we will apply Theorem~\ref{corollary83}. In applying this result, we take $R^2 = 64K_Y^4 + 1$ and $\delta_1 = C_3 \sqrt{(|S|L_G^2(x_1^N)(\log K + \log d)(\log T)^3 + (\log T)^4)/T}$. We need to verify that $R > \delta_1$, which follows from the fact that $\delta_1 = o(1)$ if $T\gg |S|\log d$, and verify an upper bound on the entropy integral below:
\begin{multline*}
    2C \int_{\delta_1/8}^R \sqrt{\log N_c(\epsilon,\mathcal{G},\|\cdot\|_{Q_n})} d\epsilon \leq \tilde{C}_2 r L_G(x_1^N) \sqrt{|S|(\log K + \log d)} \log T \\
    \leq C_3 \sqrt{|S|L_G^2(x_1^N)(\log K + \log d)(\log T)^2 + (\log T)^3} = \sqrt{N}\delta_1,
\end{multline*}
Applying Theorem~\ref{corollary83} with the specified value of $\delta_1$, we have that
\begin{equation*}
    P\left(\sup_{g \in \mathcal{G}} \left|\frac{1}{N}\sum_{n=1}^N W_n g(x_n)\right| \geq \delta_1 \right) \leq C\exp\left\{-\frac{N\delta_1^2}{4C^2(64K_Y^4 + 1)}\right\}.
\end{equation*}

\noindent\textbf{Marginalize over $X_1,\ldots,X_N$.} With appropriately chosen $C^*$ and $C_3$, $\delta > 4\delta_1$ when the Lipschitz constant $L_G^2(X_1^N)$ is upper bounded. Therefore, marginalizing over $X_1,\ldots,X_N$, we get 

\begin{align*}
    P\left(\sup_{g\in\mathcal{G}}\left|\frac{1}{N}\sum_{n=1}^N g(X_n) - E[g(X_n)]\right| \geq \delta \right) \leq 4P\left( \sup_{g\in\mathcal{G}}\left|\frac{1}{N}\sum_{n=1}^N W_ng(X_n)\right| \geq \delta/4 \right) \leq u_{err}(N,d),
\end{align*}
where we use $u_{err}(N,d)$ to denote the sum of all the small probabilities that have appeared so far.

\textbf{Step III: uniform concentration for $\beta$-mixing process.} Under Assumption~\ref{betamixing}, applying Theorem~\ref{karandikarthm}, we get
\begin{equation*}
    P\left( \sup_{g \in \mathcal{G}} \left|\frac{1}{T}\sum_{t=1}^T g(Y_{t-s}^{t+s}) - E\left[g(Y_{t-s}^{t+s})\right]\right| \geq \delta \right) \leq C_4 (\log T) u_{err}(T/(\tilde{c}\log T),d) + 2T^{-1/2},
\end{equation*}
which converges to 0. This concludes the proof. Again, the detailed proof is given in Supplementary Appendix~\ref{app:proofmainthm}.

\section{Proof of lemmas in Section~\ref{sec:theoretical}}\label{app:prooflemma}
In this section, we prove lemmas in Section~\ref{sec:theoretical} of the main paper.

\begin{proof}[Proof of Lemma~\ref{lemma: msvarstationarity}] Under Assumption~\ref{cond: operatornorm}, stationarity follows by directly applying Theorem 3.1 and Corollary 3.1 in \citet{stelzer2009markov}, and geometric ergodicity follows by applying Theorem 5.1 and Proposition 5.3 in \citet{stelzer2009markov}. 

It remains to show that $Y_t$ is a sub-Gaussian random vector. We start by noting that Theorem 4.2 in \citet{stelzer2009markov} implies that all moments of $Y_t$ exist. Iteratively applying \eqref{msvar}, we have that
\begin{align*}
    Y_t &= \left(\sum_{i=1}^{K} I\{Z_t = i\} B_i^\top\right) Y_{t-1} + \left(\sum_{i=1}^{K} I\{Z_t = i\}\sigma_i\right)\epsilon_t \\
        &= \left(\sum_{i=1}^{K} I\{Z_t = i\} B_i^\top\right)\left(\sum_{i=1}^{K} I\{Z_{t-1} = i\} B_i^\top\right) Y_{t-2} \\
        &\quad + \left(\sum_{i=1}^{K} I\{Z_t = i\} B_i^\top\right) \left(\sum_{i=1}^{K} I\{Z_{t-1} = i\} \sigma_i\right) \epsilon_{t-1} + 
        \left(\sum_{i=1}^{K} I\{Z_t = i\}\sigma_i\right)\epsilon_t \\
        &= \left(\sum_{i=1}^{K} I\{Z_t = i\} B_i^\top\right) \left(\sum_{i=1}^{K} I\{Z_{t-1} = i\} B_i^\top\right) \left(\sum_{i=1}^{K} I\{Z_{t-2} = i\} B_i^\top\right)Y_{t-3} \\
        &\quad + \left(\sum_{i=1}^{K} I\{Z_t = i\} B_i^\top\right) \left(\sum_{i=1}^{K} I\{Z_{t-1} = i\} B_i^\top\right) \left(\sum_{i=1}^{K} I\{Z_{t-2} = i\}\sigma_i\right) \epsilon_{t-2} \\
        &\quad + \left(\sum_{i=1}^{K} I\{Z_t = i\} B_i^\top\right) \left(\sum_{i=1}^{K} I\{Z_{t-1} = i\} \sigma_i\right) \epsilon_{t-1} + 
        \left(\sum_{i=1}^{K} I\{Z_t = i\}\sigma_i\right)\epsilon_t \\
        &= \cdots \\
        &= \left[\prod_{k=0}^J\left(\sum_{i=1}^{K} I\{Z_{t-k} = i\} B_i^\top\right)\right]Y_{t-J-1} \\
        &\quad + \sum_{l=1}^J \left[ \prod_{k=0}^{l-1} \left(\sum_{i=1}^{K} I\{Z_{t-k} = i\} B_i^\top\right) \left(\sum_{i=1}^{K} I\{Z_{t-l} = i\}\sigma_i\right) \epsilon_{t-l}\right] + \left(\sum_{i=1}^{K} I\{Z_t = i\}\sigma_i\right) \epsilon_t,
\end{align*}
for any positive integer $J$. For a generic time point $t$, we define the matrix $A_t = \sum_{i=1}^K I\{Z_t =i\}B_i^\top$. Although the matrix $A_t$ is random due to the randomness in $Z_t$, Assumption~\ref{cond: operatornorm} implies that $\|A_t\|_2 \leq \tilde{c}$ with probability 1. Similarly, define a random variable $\Gamma_t = \sum_{i=1}^{K} I\{Z_t = i\}\sigma_i$ whose randomness arises due to the randomness in $Z_t$.  With the definition of $A_t$ and $\Gamma_t$, the above display can be written equivalently as
\begin{equation*}
    Y_t = \left(\prod_{k=0}^J A_{t-k}\right)Y_{t-J-1} + \sum_{l=1}^J\left\{\left(\prod_{k=0}^{l-1}A_{t-k}\right)\Gamma_{t-l}\epsilon_{t-l}\right\} + \Gamma_t\epsilon_t. 
\end{equation*}
In fact, we can continue expanding $Y_t$, and Theorem 4.2 in \citet{stelzer2009markov} implies that the stationary distribution of $Y_t$ admits the following representation
\begin{equation*}
    Y_t = \sum_{l=1}^\infty \left\{\left(\prod_{k=0}^{l-1}A_{t-k}\right)\Gamma_{t-l}\epsilon_{t-l}\right\} + \Gamma_t\epsilon_t,
\end{equation*}
where the series on the right-hand side in the above display converges in the norm $\|\cdot\|_{L^r}$ for any $r \geq 1$, with the norm defined as $(E\|X\|_2^r)^{1/r}$ for a random vector $X$.

Since $\epsilon_t$ follows a Gaussian distribution, there exists a constant $K_1>0$ such that for all $v \in \mathbb{R}^{d}$ with $\|v\|_2 = 1$,
\begin{equation}\label{errortail}
    \left(E\left|v^\top\epsilon_t\right|^p\right)^{1/p} \leq K_1 p^{1/2}.
\end{equation}
Fix an arbitrary unit-vector $v \in \mathbb{R}^d$. For the ease of notation, we introduce a truncated version of the series representation of $Y_t$ defined as $Y_{t}^J = \sum_{l=1}^J \{(\prod_{k=0}^{l-1}A_{t-k})\Gamma_{t-l}\epsilon_{t-l}\} + \Gamma_t\epsilon_t$, for a positive integer $J$. By Minkowski inequality,
\begin{align*}
    \left(E\left|v^\top Y_t\right|^p\right)^{1/p} &\leq \underbrace{\left(E\left|v^\top \left(Y_t- Y_t^J\right)\right|^p\right)^{1/p}}_{\textnormal{term a}} + \underbrace{\sum_{l=1}^J \left(E\left|v^\top \left(\prod_{k=0}^{l-1}A_{t-k}\right) \Gamma_{t-l}\epsilon_{t-l}\right|^p\right)^{1/p}}_{\textnormal{term b}}\\
    &\quad + \underbrace{\left(E\left|v^\top \Gamma_t\epsilon_t\right|^p\right)^{1/p}}_{\textnormal{term c}}.
\end{align*}
We study each term in the above display separately. Let $\sigma_{\max} = \max(\sigma_1,\ldots,\sigma_K)$. Then, $\Gamma_t \leq \sigma_{\max}$ with probability 1. First, term c is upper bounded by $\sigma_{\max} K_1 p^{1/2}$ as $\epsilon_t$ is a Gaussian random vector and $\Gamma_t$ is upper bounded. For term a, we note that
\begin{equation*}
    \textnormal{term a} \leq \left(E\left[\|v\|_2^p \left\|Y_t- Y_t^J\right\|_2^p\right]\right)^{1/p} = \left(E\left\|Y_t- Y_t^J\right\|_2^p\right)^{1/p} = \|Y_t - Y_t^J\|_{L^p}.
\end{equation*}
We now study term b. To start, we note that $\epsilon_{t-l}$ is independent of $\{Z_{t-l},Z_{t-l+1},\ldots,Z_t\}$, and hence independent of the random matrices $A_{t-l+1},\ldots,A_t$ as well as $\Gamma_{t-l}$. Now define a random vector $U_{t,l} = (\prod_{k=0}^{l-1}A_{t-k})^\top v\Gamma_{t-l}$. Then, each term in the sum in term b is equivalent to $(E|U_{t,l}^\top \epsilon_{t-l}|^p)^{1/p}$. Note that
\begin{align*}
    E\left|U_{t,l}^\top \epsilon_{t-l}\right|^p &= E\left[\left|\epsilon_{t-l}^\top\frac{U_{t,l}}{\|U_{t,l}\|_2}\right|^p\|U_{t,l}\|_2^p\right] \\
    &= E\left[E\left[\left|\epsilon_{t-l}^\top\frac{U_{t,l}}{\|U_{t,l}\|_2}\right|^p \mid Z_{t-l}, Z_{t-l+1},\ldots,Z_t\right]\|U_{t,l}\|_2^p\right].
\end{align*}
Here, condition on $\{Z_{t-l}, Z_{t-l+1},\ldots,Z_t\}$, the vector $U_{t,l}$ becomes deterministic, but the distribution of $\epsilon_{t-l}$ is unchanged due to the independence. Thus, 
\begin{equation*}
    E\left[\left|\epsilon_{t-l}^\top\frac{U_{t,l}}{\|U_{t,l}\|_2}\right|^p \mid Z_{t-l}, Z_{t-l+1},\ldots,Z_t\right] \leq \left(K_1 p^{1/2}\right)^p,
\end{equation*}
and consequently 
\begin{equation*}
    E\left|U_{t,l}^\top \epsilon_{t-l}\right|^p \leq \left(K_1 p^{1/2}\right)^p E\left[\|U_{t,l}\|_2^p\right].
\end{equation*}
The norm $\|U_{t,l}\|_2$ is upper bounded by $\Gamma_{t-l}\prod_{k=0}^{l-1}\|A_{t-k}\|_2$, which is upper bounded by $\sigma_{\max}\tilde{c}^l$ with probability 1. Thus, $E\left[\|U_{t,l}\|_2^p\right] \leq \sigma_{\max}^p \tilde{c}^{lp}$. Combining these results, we get that 
\begin{equation*}
    \left(E\left|v^\top \left(\prod_{k=0}^{l-1}A_{t-k}\right)\Gamma_{t-l}\epsilon_{t-l}\right|^p\right)^{1/p} \leq \sigma_{\max} \tilde{c}^l K_1 p^{1/2}.
\end{equation*}
Putting the upper bounds for terms a, b and c together, we have
\begin{align*}
    \left(E\left|v^\top Y_t\right|^p\right)^{1/p} &\leq \|Y_t - Y_t^J\|_{L^p} + \sum_{l=1}^J\sigma_{\max}\tilde{c}^l K_1 p^{1/2} + \sigma_{\max}K_1p^{1/2} \\
    &=\|Y_t - Y_t^J\|_{L^p} + \frac{1-\tilde{c}^{J+1}}{1-\tilde{c}} \sigma_{\max}K_1p^{1/2}.
\end{align*}
The above display holds for any positive integer $J$, and hence we can take the limit as $J$ approaches infinity. By Theorem 4.2 in \citet{stelzer2009markov}, $\|Y_t - Y_t^J\|_{L^p}$ converges to 0 and thus,
\begin{equation*}
    \left(E\left|v^\top Y_t\right|^p\right)^{1/p} \leq \frac{1}{1-\tilde{c}} \sigma_{\max} K_1p^{1/2},
\end{equation*}
which implies that $Y_t$ is a sub-Gaussian random vector.

\end{proof}

\begin{proof}[Proof of Lemma~\ref{populationcontraction}]
This lemma follows directly from the mean-value inequality and the fact that $M(\theta^*) = \theta^*$.
\end{proof}

\begin{proof}[Proof of Lemma~\ref{lemma:approximationerror}]
We study the approximation error of the smoothed probabilities $P(Z_t = j | Y_0,\ldots, Y_T)$ and $P(Z_{t-1}=i, Z_t = j | Y_0,\ldots, Y_T)$ in two steps. In Step I, we consider intermediate approximations that truncate the dependence on distant future and only condition on $Y_0^{t+s}$. In Step II, we further remove the dependence on the distant past by conditioning on $Z_{t-s}$. 

\textbf{Step I.} With the Markov structure, we see that
\begin{equation*}
    P(Z_t = j | Y_0,\ldots,Y_T, Z_{t+s} = i) = P(Z_t = j | Y_0,\ldots,Y_{t+s}, Z_{t+s} = i),
\end{equation*}
and that conditioning on $Z_{t+s}$ removes the dependence of $Z_t$ on future $Y$. Also, we have the following equalities
\begin{align*}
    &\quad P(Z_t = j | Y_0,\ldots,Y_T, Z_{t+s} = i) - P(Z_t = j | Y_0,\ldots,Y_T) \\
    &= P(Z_t = j | Y_0,\ldots,Y_T, Z_{t+s} = i) \\
    &\quad - \sum_{1\leq l \leq K} P(Z_t = j| Y_0,\ldots,Y_T, Z_{t+s} = l)P(Z_{t+s} = l | Y_0,\ldots,Y_T) \\
    &= \sum_{1\leq l \leq K}P(Z_t = j | Y_0,\ldots,Y_T, Z_{t+s} = i)P(Z_{t+s} = l | Y_0,\ldots,Y_T) \\
    &\quad - \sum_{1\leq l \leq K} P(Z_t = j| Y_0,\ldots,Y_T, Z_{t+s} = l)P(Z_{t+s} = l | Y_0,\ldots,Y_T) \\
    &= \sum_{1\leq l \leq K}\left\{P(Z_t = j | Y_0,\ldots,Y_{t+s}, Z_{t+s} = i) - P(Z_t = j | Y_0,\ldots,Y_{t+s}, Z_{t+s} = l)\right\} \\
    &\quad \quad \quad \quad \quad \times P(Z_{t+s} = l | Y_0,\ldots,Y_T),
\end{align*}
and similarly
\begin{align*}
    &\quad P(Z_t = j | Y_0,\ldots,Y_{t+s}, Z_{t+s} = i) - P(Z_t = j | Y_0,\ldots,Y_{t+s}) \\
    &= P(Z_t = j | Y_0,\ldots,Y_{t+s}, Z_{t+s} = i) \\
    &\quad - \sum_{1\leq l \leq K} P(Z_t = j| Y_0,\ldots,Y_{t+s}, Z_{t+s} = l)P(Z_{t+s} = l | Y_0,\ldots,Y_{t+s}) \\
    &= \sum_{1\leq l \leq K}P(Z_t = j | Y_0,\ldots,Y_{t+s}, Z_{t+s} = i)P(Z_{t+s} = l | Y_0,\ldots,Y_{t+s}) \\
    &\quad - \sum_{1\leq l \leq K} P(Z_t = j| Y_0,\ldots,Y_{t+s}, Z_{t+s} = l)P(Z_{t+s} = l | Y_0,\ldots,Y_{t+s}) \\
    &= \sum_{1\leq l \leq K}\left\{P(Z_t = j | Y_0,\ldots,Y_{t+s}, Z_{t+s} = i) - P(Z_t = j | Y_0,\ldots,Y_{t+s}, Z_{t+s} = l)\right\} \\
    &\quad \quad \quad \quad \quad \times P(Z_{t+s} = l | Y_0,\ldots,Y_{t+s}).
\end{align*}
Hence, it suffices to first upper bound the following quantity
\begin{equation*}
    \max_{i,l}\left|P(Z_t = j | Y_0,\ldots,Y_{t+s}, Z_{t+s} = i) - P(Z_t = j | Y_0,\ldots,Y_{t+s}, Z_{t+s} = l)\right|.
\end{equation*}
Now for generic $t_1 < t_2$, define a stochastic matrix $P^{t_1,t_2} \in \mathbb{R}^{K\times K}$ such that
\begin{equation*}
    (P^{t_1,t_2})_{ij} = P(Z_{t_1} = j | Y_0,\ldots,Y_{t_2}, Z_{t_2} = i).
\end{equation*}
Note that this is indeed a stochastic matrix as the sum of entries over $j$ (columns) for fixed $i$ (row) is 1. By the law of total expectation
\begin{align*}
    &\quad P(Z_{t_1} = j | Y_0,\ldots,Y_{t_2}, Z_{t_2} = i) \\
    &= \sum_l P(Z_{t_1} = j, Z_{t_1+1} =l | Y_0,\ldots,Y_{t_2}, Z_{t_2} = i) \\
    &= \sum_l P(Z_{t_1} = j | Z_{t_1+1} =l, Y_0,\ldots,Y_{t_2}, Z_{t_2} = i) P(Z_{t_1+1} =l | Y_0,\ldots,Y_{t_2}, Z_{t_2} = i) \\
    &= \sum_l P(Z_{t_1} = j | Z_{t_1+1} =l, Y_0,\ldots,Y_{t_1+1}) P(Z_{t_1+1} =l | Y_0,\ldots,Y_{t_2}, Z_{t_2} = i) \\
    &= \sum_l (P^{t_1,t_1+1})_{lj} (P^{t_1+1,t_2})_{il},
\end{align*}
and hence $P^{t_1,t_2} = P^{t_1+1,t_2}P^{t_1,t_1+1}$, and
\begin{equation*}
    P^{t_1,t_2} = P^{t_2-1,t_2}P^{t_2-2,t_2-1} \cdots P^{t_1+1,t_1+2}P^{t_1,t_1+1}.
\end{equation*}
Let $\psi(P^{t_1,t_2}) = \max_{1\leq i,l \leq K} \max_{1\leq j \leq K} |(P^{t_1,t_2})_{ij} - (P^{t_1,t_2})_{lj}|$, and the quantity we aim to bound is $\psi(P^{t,t+s})$. Let $\zeta(P^{t,t+1}) = 1/2 \max_{1\leq i,l \leq K} \sum_{j} |(P^{t,t+1})_{ij}-(P^{t,t+1})_{lj}|$. By Lemma~\ref{hajnallemma}, if $\zeta(P^{t,t+1}) \leq \phi$, $\psi(P^{t,t+s}) \leq \phi^s$.

{\small
\begin{align*}
    &\quad (P^{t,t+1})_{ij} \\
    &= P(Z_t = j | Y_0,\ldots,Y_{t+1},Z_{t+1}=i) \\
    &= \frac{P(Y_0,\ldots,Y_{t-1},Z_t = j, Y_t,Z_{t+1}=i, Y_{t+1})}{P(Y_0,\ldots,Y_t,Z_{t+1}=i, Y_{t+1})} \\
    &= \frac{\sum_{z_0,\ldots,z_{t-1}}P(Y_0,z_0)P_{z_0z_1}p_{z_1}(Y_1|Y_0)P_{z_1z_2}p_{z_2}(Y_2|Y_1)\cdots P_{z_{t-2}z_{t-1}}p_{z_{t-1}}(Y_{t-1}|Y_{t-2})P_{z_{t-1}j}p_j(Y_t|Y_{t-1})P_{ji}p_i(Y_{t+1}|Y_t)}{\sum_l \sum_{z_0,\ldots,z_{t-1}}P(Y_0,z_0)P_{z_0z_1}p_{z_1}(Y_1|Y_0)P_{z_1z_2}p_{z_2}(Y_2|Y_1)\cdots P_{z_{t-2}z_{t-1}}p_{z_{t-1}}(Y_{t-1}|Y_{t-2})P_{z_{t-1}l}p_l(Y_t|Y_{t-1})P_{li}p_i(Y_{t+1}|Y_t)}.
\end{align*}}
where $p_j(\cdot|\cdot)$ denotes the conditional density function of $Y_{t+1}|(Y_t,Z_t = j)$, for $j \in \{1,\ldots,K\}$. Let $\pi = (z_0,\ldots,z_{t-1})$ denote a path of $z$ from time 0 to $t-1$ and $\Pi$ denote all possible such paths. For each path, define
\begin{equation*}
    q_{j\pi} = P(Y_0,z_0)P_{z_0z_1}p_{z_1}(Y_1|Y_0)P_{z_1z_2}p_{z_2}(Y_2|Y_1)\cdots P_{z_{t-2}z_{t-1}}p_{z_{t-1}}(Y_{t-1}|Y_{t-2})P_{z_{t-1}j}p_j(Y_t|Y_{t-1}),
\end{equation*}
then we have
\begin{equation*}
    (P^{t,t+1})_{ij} = \frac{\sum_{\pi \in \Pi} q_{j\pi} P_{ji} }{\sum_{\pi \in \Pi}\sum_l q_{l\pi} P_{li} },
\end{equation*}
and therefore
\begin{align*}
    \left|(P^{t,t+1})_{ij} - (P^{t,t+1})_{kj}\right| &= \left| \frac{\sum_{\pi \in \Pi} q_{j\pi} P_{ji} }{\sum_{\pi \in \Pi}\sum_l q_{l\pi} P_{li} } - \frac{\sum_{\tilde\pi \in \Pi} q_{j\tilde\pi} P_{jk} }{\sum_{\tilde\pi \in \Pi}\sum_{\tilde l} q_{\tilde{l}\tilde\pi} P_{\tilde{l}k} }\right| \\
    &= \left|\frac{\sum_{\pi \in \Pi}\sum_{\tilde\pi \in \Pi}\left(q_{j\pi}P_{ji}\left\{\sum_{\tilde l} q_{\tilde{l}\tilde\pi} P_{\tilde{l}k} \right\} - q_{j\tilde\pi} P_{jk}\left\{\sum_l q_{l\pi} P_{li} \right\}\right)}{\sum_{\pi \in \Pi}\sum_{\tilde\pi \in \Pi} \left\{\sum_l q_{l\pi} P_{li} \sum_{\tilde l} q_{\tilde{l}\tilde\pi} P_{\tilde{l}k} \right\}} \right| \\
    &= \left|\frac{\sum_{\pi \in \Pi}\sum_{\tilde\pi \in \Pi} \sum_{l \neq j} \left(  q_{j\pi}P_{ji} q_{l\tilde\pi} P_{lk}  - q_{j\tilde\pi} P_{jk} q_{l\pi} P_{li}\right)}{\sum_{\pi \in \Pi}\sum_{\tilde\pi \in \Pi} \left\{\sum_l q_{l\pi} P_{li} \sum_{\tilde l} q_{\tilde{l}\tilde\pi} P_{\tilde{l}k} \right\}} \right| \\
    &\leq \left|\frac{\sum_{\pi \in \Pi}\sum_{\tilde\pi \in \Pi} \sum_{l \neq j} \left(  q_{j\pi}P_{ji} q_{l\tilde\pi} P_{lk}  - q_{j\tilde\pi} P_{jk} q_{l\pi} P_{li}\right)}{\sum_{\pi \in \Pi}\sum_{\tilde\pi \in \Pi} \sum_{l \neq j} \left\{ q_{j\pi} P_{ji}  q_{l\tilde\pi} P_{lk} + q_{l\pi} P_{li} q_{j\tilde\pi} P_{jk} \right\}} \right| \\
    &= \left|\frac{\sum_{\pi \in \Pi}\sum_{\tilde\pi \in \Pi} \sum_{l \neq j} q_{j\pi}q_{l\tilde\pi} \left(  P_{ji}  P_{lk}  -  P_{jk} P_{li}\right)}{\sum_{\pi \in \Pi}\sum_{\tilde\pi \in \Pi} \sum_{l \neq j} q_{j\pi}q_{l\tilde\pi}\left\{ P_{ji}   P_{lk} + P_{li}  P_{jk} \right\}} \right| \\
    &\leq \left|\frac{\sum_{\pi \in \Pi}\sum_{\tilde\pi \in \Pi} \sum_{l \neq j} q_{j\pi}q_{l\tilde\pi} \left(  P_{ji}  P_{lk}  +  P_{jk} P_{li}\right) \max_{l \neq j} \frac{|P_{ji}  P_{lk}  -  P_{jk} P_{li}|}{P_{ji}   P_{lk} + P_{li}  P_{jk}}}{\sum_{\pi \in \Pi}\sum_{\tilde\pi \in \Pi} \sum_{l \neq j} q_{j\pi}q_{l\tilde\pi}\left\{ P_{ji}   P_{lk} + P_{li}  P_{jk} \right\}} \right| \\
    &\leq \max_{l \neq j} \frac{|P_{ji}  P_{lk}  -  P_{jk} P_{li}|}{P_{ji}   P_{lk} + P_{li}  P_{jk}}.
\end{align*}
Note that to get from the second line to the third line, we change the dummy variable in the numerator from $\tilde l$ to $l$. From the third to the fourth line, we keep only a subset of terms in the denominator and change the dummy variable from $\tilde l$ to $l$. From the fourth to the fifth line, we swap the dummy variables $\pi$ and $\tilde\pi$ in the second term of the denominator and in the second term of the numerator. Therefore, under Assumption~\ref{cond:approximationerror},
\begin{equation*}
    \zeta(P^{t,t+1}) \leq \frac{1}{2} \max_{1\leq i,k \leq K} \sum_{j} \max_{l \neq j} \frac{|P_{ji}  P_{lk}  -  P_{jk} P_{li}|}{P_{ji}   P_{lk} + P_{li}  P_{jk}} \leq \phi.
\end{equation*}
As a result,
\begin{align*}
    &|P(Z_t = j | Y_0,\ldots,Y_T, Z_{t+s} = i) - P(Z_t = j | Y_0,\ldots,Y_T)| \leq \phi^s;\\
    &|P(Z_t = j | Y_0,\ldots,Y_{t+s}, Z_{t+s} = i) - P(Z_t = j | Y_0,\ldots,Y_{t+s})| \leq \phi^s.
\end{align*}
Moreover, $P(Z_t = j | Y_0,\ldots,Y_T, Z_{t+s} = i)$ and $P(Z_t = j | Y_0,\ldots,Y_{t+s}, Z_{t+s} = i)$ are equal due to the Markov property. Therefore, by triangle inequality,
\begin{equation}\label{eq: approximation truncate future}
    |P(Z_t = j | Y_0,\ldots,Y_T) - P(Z_t = j | Y_0,\ldots,Y_{t+s})| \leq 2\phi^s.
\end{equation}
Next, noting that
\begin{align*}
    P(Z_{t-1}=i, Z_t = j | Y_0,\ldots,Y_T) &= P(Z_{t-1}=i| Z_t = j, Y_0,\ldots,Y_T)P(Z_t = j | Y_0,\ldots,Y_T) \\
    &= P(Z_{t-1}=i| Z_t = j, Y_0,\ldots,Y_t)P(Z_t = j | Y_0,\ldots,Y_T),
\end{align*}
and
\begin{align*}
    P(Z_{t-1}=i, Z_t = j | Y_0,\ldots,Y_{t+s}) &= P(Z_{t-1}=i| Z_t = j, Y_0,\ldots,Y_{t+s})P(Z_t = j | Y_0,\ldots,Y_{t+s}) \\
    &= P(Z_{t-1}=i| Z_t = j, Y_0,\ldots,Y_t)P(Z_t = j | Y_0,\ldots,Y_{t+s}).
\end{align*}
Consequently, the approximation error bound in \eqref{eq: approximation truncate future} implies that
\begin{equation}\label{eq: approximation truncate future double}
    |P(Z_{t-1}=i, Z_t = j | Y_0,\ldots,Y_T) - P(Z_{t-1}=i, Z_t = j | Y_0,\ldots,Y_{t+s})| \leq 2\phi^s.
\end{equation}

\textbf{Step II.} With a slight abuse of notation, we now define the matrix $P^{t_1,t_2}$ in a different way. For generic $t_1 < t_2 \leq t$, define matrix $P^{t_1,t_2} \in \mathbb{R}^{K\times K}$ with the $(i,j)$-th entry
\begin{equation*}
    (P^{t_1,t_2})_{ij} = P_\theta(Z_{t_2} = j | Y_{t_1},\ldots,Y_{t+s},Z_{t_1} = i).
\end{equation*}
Note that regardless of the values of $t_1, t_2$, we always condition on the outcome vectors $Y$ until time $t+s$. Recall that
\begin{equation*}
    m_{j,\theta}(Y_{t-s},\ldots,Y_{t+s})=P_\theta(Z_t=j|Y_{t-s},\ldots,Y_{t+s},Z_{t-s}=1),
\end{equation*}
which, with our new definition, can be equivalently written as $(P^{t-s,t})_{1j}$. Meanwhile, we can re-write our intermediate approximation as follows.
\begin{align*}
    &\quad P(Z_t=j|Y_0,\ldots,Y_{t+s}) \\
    &= \sum_{i} P(Z_t=j|Y_0,\ldots,Y_{t+s},Z_{t-s}=i)P(Z_{t-s}=i|Y_0,\ldots,Y_{t+s}) \\
    &= \sum_{i} P(Z_t=j|Y_{t-s},\ldots,Y_{t+s},Z_{t-s}=i)P(Z_{t-s}=i|Y_0,\ldots,Y_{t+s}) \\
    &= \sum_{i} P(Z_t=j|Y_{t-s},\ldots,Y_{t+s},Z_{t-s}=1)P(Z_{t-s}=i|Y_0,\ldots,Y_{t+s}) \\
    &\quad + \sum_{i}\{P(Z_t=j|Y_{t-s},\ldots,Y_{t+s},Z_{t-s}=i)- P(Z_t=j|Y_{t-s},\ldots,Y_{t+s},Z_{t-s}=1)\}P(Z_{t-s}=i|Y_0,\ldots,Y_{t+s}) \\
    &= (P^{t-s,t})_{1j} + \sum_{i}\{(P^{t-s,t})_{ij}-(P^{t-s,t})_{1j}\}P(Z_{t-s}=i|Y_0,\ldots,Y_{t+s}).
\end{align*}
Hence, 
\begin{align*}
    &\quad |P(Z_t=j|Y_0,\ldots,Y_{t+s}) - m_{j,\theta}(Y_{t-s},\ldots,Y_{t+s})|  \\
    &= \left|\sum_{i}\{(P^{t-s,t})_{ij}-(P^{t-s,t})_{1j}\}P(Z_{t-s}=i|Y_0,\ldots,Y_{t+s})\right| \\
    &\leq \sum_{i}\left|(P^{t-s,t})_{ij}-(P^{t-s,t})_{1j}\right|P(Z_{t-s}=i|Y_0,\ldots,Y_{t+s}) \\
    &\leq \max_{i,k}\max_j|(P^{t-s,t})_{ij}-(P^{t-s,t})_{kj}|\sum_{i}P(Z_{t-s}=i|Y_0,\ldots,Y_{t+s}) \\
    &= \psi(P^{t-s,t}),
\end{align*}
where $\psi(\cdot)$ is defined as before and in the same way as in Supplementary Appendix~\ref{app:usefullemma}. Thus, if we can show that $\psi(P^{t-s,t}) \leq \phi^s$, we can get the desired upper bound on the approximation error.

We first note that condition on $Y$'s, $\{Z_t\}$ form a time-inhomogeneous Markov chain, and thus
\begin{equation*}
    P^{t-s,t} = P^{t-s,t-s+1}P^{t-s+1,t-s+2} \cdots P^{t-2,t-1}P^{t-1,t}.
\end{equation*}
By Lemma~\ref{hajnallemma}, it suffices to show that $\zeta(P^{t_1,t_1+1}) \leq \phi$ for generic $t_1 < t$, where $\zeta(\cdot)$ is defined in Supplementary Appendix~\ref{app:usefullemma}. Note that we can take the stochastic matrix on the very right in Lemma~\ref{hajnallemma} to be the identity matrix, and $\psi(\textnormal{Id}_K) = 1$. By Bayes rule, 
\begin{align*}
    &\quad (P^{t_1,t_1+1})_{ij} \\
    &= P(Z_{t_1+1} = j | Y_{t_1},\ldots,Y_{t+s},Z_{t_1} = i) \\
    &= \frac{P(Z_{t_1} = i, Y_{t_1},Z_{t_1+1} = j, Y_{t_1+1},Y_{t_1+2} \ldots,Y_{t+s})}{P(Z_{t_1} = i,Y_{t_1}, Y_{t_1+1},\ldots,Y_{t+s})} \\
    &= \frac{P(Z_{t_1} = i, Y_{t_1},Z_{t_1+1} = j, Y_{t_1+1}, \ldots,Y_{t+s})}{\sum_{l}P(Z_{t_1} = i,Y_{t_1},Z_{t_1+1} = l,Y_{t_1+1},\ldots,Y_{t+s})} \\
    &= \frac{\sum_{z_{t_1+2},\ldots,z_{t+s}}P_{ij}p_j(Y_{t_1+1}|Y_{t_1})P_{jz_{t_1+2}}p_{z_{t_1+2}}(Y_{t_1+2}|Y_{t_1+1})\ldots P_{z_{t+s-1}z_{t+s}}p_{z_{t+s}}(Y_{t+s}|Y_{t+s-1})}{\sum_{l}\sum_{z_{t_1+2},\ldots,z_{t+s}}P_{il}p_l(Y_{t_1+1}|Y_{t_1})P_{lz_{t_1+2}}p_{z_{t_1+2}}(Y_{t_1+2}|Y_{t_1+1})\ldots P_{z_{t+s-1}z_{t+s}}p_{z_{t+s}}(Y_{t+s}|Y_{t+s-1})},
\end{align*}
where $p_j(\cdot|\cdot)$ denotes the conditional density function of $Y_{t+1}|(Y_t,Z_t = j)$, for $j \in \{1,\ldots,K\}$. Let $\Pi = \{(z_{t_1+2},\ldots,z_{t+s}): z_{t_1+2},\ldots,z_{t+s} \in \{1,\ldots,K\}\}$. Furthermore, for $l \in \{1,\ldots,K\}$ and $\pi \in \Pi$, define
\begin{equation*}
    f_{l\pi} = p_l(Y_{t_1+1}|Y_{t_1})P_{lz_{t_1+2}}p_{z_{t_1+2}}(Y_{t_1+2}|Y_{t_1+1})\ldots P_{z_{t+s-1}z_{t+s}}p_{z_{t+s}}(Y_{t+s}|Y_{t+s-1}).
\end{equation*}
Then,
\begin{equation*}
     (P^{t_1,t_1+1})_{ij} = \frac{\sum_{\pi \in \Pi}P_{ij}f_{j\pi}}{\sum_{l}\sum_{\pi \in \Pi}P_{il}f_{l\pi}}.
\end{equation*}
Thus,
\begin{align*}
    &\quad \left|(P^{t_1,t_1+1})_{ij} - (P^{t_1,t_1+1})_{kj}\right| \\
    &= \left|\frac{\sum_{\pi \in \Pi}P_{ij}f_{j\pi}}{\sum_{l}\sum_{\pi \in \Pi}P_{il}f_{l\pi}} - \frac{\sum_{\tilde\pi \in \Pi}P_{kj}f_{j\tilde\pi}}{\sum_{\tilde l}\sum_{\tilde \pi \in \Pi}P_{k\tilde l}f_{\tilde l\tilde \pi}} \right| \\
    &= \frac{\left|\sum_{\pi \in \Pi}\sum_{\tilde\pi \in \Pi}\left[P_{ij}f_{j\pi}\sum_{\tilde l}P_{k\tilde l}f_{\tilde l\tilde \pi} - P_{kj}f_{j\tilde\pi}\sum_{l}P_{il}f_{l\pi}\right] \right|}{\sum_{\pi \in \Pi}\sum_{\tilde\pi \in \Pi}\left[\sum_{l}P_{il}f_{l\pi}\sum_{\tilde l}P_{k\tilde l}f_{\tilde l\tilde \pi}\right]} \\
    &\leq \frac{\left|\sum_{\pi \in \Pi}\sum_{\tilde\pi \in \Pi}\sum_{l\neq j}\left(P_{ij}f_{j\pi}P_{kl}f_{l\tilde\pi} - P_{kj}f_{j\tilde\pi}P_{il}f_{l\pi}\right)\right|}{\sum_{\pi \in \Pi}\sum_{\tilde\pi \in \Pi}\sum_{l\neq j}\left(P_{il}f_{l\pi}P_{kj}f_{j\tilde \pi}+P_{ij}f_{j\pi}P_{kl}f_{l\tilde \pi}\right)} \\
    &= \frac{\left|\sum_{\pi \in \Pi}\sum_{\tilde\pi \in \Pi}\sum_{l\neq j}\left(P_{ij}f_{j\tilde\pi}P_{kl}f_{l\pi} - P_{kj}f_{j\tilde\pi}P_{il}f_{l\pi}\right)\right|}{\sum_{\pi \in \Pi}\sum_{\tilde\pi \in \Pi}\sum_{l\neq j}\left(P_{il}f_{l\pi}P_{kj}f_{j\tilde \pi}+P_{ij}f_{j\tilde\pi}P_{kl}f_{l\pi}\right)}\\ 
    &\leq \frac{\sum_{\pi \in \Pi}\sum_{\tilde\pi \in \Pi}\sum_{l\neq j}\left|P_{ij}f_{j\tilde\pi}P_{kl}f_{l\pi} - P_{kj}f_{j\tilde\pi}P_{il}f_{l\pi}\right|}{\sum_{\pi \in \Pi}\sum_{\tilde\pi \in \Pi}\sum_{l\neq j}\left(P_{il}f_{l\pi}P_{kj}f_{j\tilde \pi}+P_{ij}f_{j\tilde\pi}P_{kl}f_{l\pi}\right)} \\
    &\leq \frac{\sum_{\pi \in \Pi}\sum_{\tilde\pi \in \Pi}\sum_{l\neq j}\left\{\left(P_{il}f_{l\pi}P_{kj}f_{j\tilde \pi}+P_{ij}f_{j\tilde\pi}P_{kl}f_{l\pi}\right)\max_{l\neq j}\frac{\left|P_{ij}P_{kl} - P_{il}P_{kj}\right|}{\left(P_{ij}P_{kl} + P_{il}P_{kj}\right)}\right\}}{\sum_{\pi \in \Pi}\sum_{\tilde\pi \in \Pi}\sum_{l\neq j}\left(P_{il}f_{l\pi}P_{kj}f_{j\tilde \pi}+P_{ij}f_{j\tilde\pi}P_{kl}f_{l\pi}\right)} \\
    &= \max_{l\neq j}\frac{\left|P_{ij}P_{kl} - P_{il}P_{kj}\right|}{\left(P_{ij}P_{kl} + P_{il}P_{kj}\right)}.
\end{align*}
Recall that
\begin{align*}
    \zeta(P^{t_1,t_1+1}) &= \max_{1\leq i,k \leq K} \sum_{j: (P^{t_1,t_1+1})_{ij} > (P^{t_1,t_1+1})_{kj}} \left((P^{t_1,t_1+1})_{ij}-(P^{t_1,t_1+1})_{kj}\right) \\
    &= \max_{1\leq i,k \leq K} \sum_{j: (P^{t_1,t_1+1})_{ij} > (P^{t_1,t_1+1})_{kj}} \left|(P^{t_1,t_1+1})_{ij}-(P^{t_1,t_1+1})_{kj}\right| \\
    &= \frac{1}{2}\max_{1\leq i,k \leq K} \sum_{j} \left|(P^{t_1,t_1+1})_{ij}-(P^{t_1,t_1+1})_{kj}\right|,
\end{align*}
where the last line follows from the fact that $\sum_{j}(P^{t_1,t_1+1})_{kj} = \sum_j(P^{t_1,t_1+1})_{ij} = 1$. This implies that
\begin{align*}
    \zeta(P^{t_1,t_1+1}) &\leq  \frac{1}{2} \max_{1\leq i,k\leq K}\sum_{j=1}^{K}\max_{l\neq j}\frac{\left|P_{ij}P_{kl} - P_{il}P_{kj}\right|}{\left(P_{ij}P_{kl} + P_{il}P_{kj}\right)} \leq \phi,
\end{align*}
under Assumption~\ref{cond:approximationerror}. As a result, we get
\begin{equation*}
    |P(Z_t=j|Y_0,\ldots,Y_{t+s}) - m_{j,\theta}(Y_{t-s},\ldots,Y_{t+s})| \leq \phi^s,
\end{equation*}
which, combined with \eqref{eq: approximation truncate future}, implies the desired result that
\begin{equation*}
    |P(Z_t = j | Y_0,\ldots,Y_T) - m_{j,\theta}(Y_{t-s},\ldots,Y_{t+s})| \leq 3\phi^s.
\end{equation*}

Finally, the approximation error of $m_{ij,\theta}$ can be upper bounded in a similar fashion. Specifically, recall that
\begin{equation*}
    m_{ij,\theta}(Y_{t-s}^{t+s}) =  P_\theta(Z_{t-1}=i,Z_t=j | Z_{t-s}=1, Y_{t-s},\ldots,Y_{t+s}).
\end{equation*}
And the difference between $m_{ij,\theta}(Y_{t-s}^{t+s})$ and the intermediate approximation $P(Z_{t-1}=i, Z_t = j | Y_0,\ldots,Y_{t+s})$ can be written as follows.
\begin{align*}
    &\quad P(Z_{t-1}=i, Z_t = j | Y_0,\ldots,Y_{t+s}) - P(Z_{t-1}=i,Z_t=j | Z_{t-s}=1, Y_{t-s},\ldots,Y_{t+s}) \\
    &= \sum_{k=1}^K P(Z_{t-1}=i,Z_t=j |Z_{t-s}=k, Y_0,Y_1,\ldots,Y_{t+s})P(Z_{t-s}=k | Y_0,Y_1,\ldots,Y_{t+s}) \\
    &\quad - P(Z_{t-1}=i,Z_t=j | Z_{t-s}=1, Y_{t-s},\ldots,Y_{t+s})  \\
    &= \sum_{k=1}^K P(Z_{t-1}=i,Z_t=j |Z_{t-s}=k, Y_{t-s},\ldots,Y_{t+s})P(Z_{t-s}=k | Y_0,Y_1,\ldots,Y_{t+s}) \\
    &\quad - P(Z_{t-1}=i,Z_t=j | Z_{t-s}=1, Y_{t-s},\ldots,Y_{t+s}) \\
    &= \sum_{k=1}^K \Big\{P(Z_t=j |Z_{t-1}=i, Y_{t-1},\ldots,Y_{t+s})P(Z_{t-1}=i |Z_{t-s}=k, Y_{t-s},\ldots,Y_{t+s}) \\
    &\quad \quad \times P(Z_{t-s}=k | Y_0,Y_1,\ldots,Y_{t+s})\Big\} \\
    &\quad - P(Z_t=j |Z_{t-1}=i, Y_{t-1},\ldots, Y_{t+s})P(Z_{t-1}=i | Z_{t-s}=1, Y_{t-s},\ldots,Y_{t+s}) \\
    &= P(Z_t=j |Z_{t-1}=i, Y_{t-1},\ldots, Y_{t+s}) \times \Bigg[\sum_{k=1}^K \Big\{P(Z_{t-1}=i |Z_{t-s}=k, Y_{t-s},\ldots,Y_{t+s}) - \\
    &\qquad \qquad P(Z_{t-1}=i |Z_{t-s}=1, Y_{t-s},\ldots,Y_{t+s})\Big\}P(Z_{t-s}=k | Y_0,Y_1,\ldots,Y_{t+s})\Bigg] \\
    &\leq \max_k \{(P^{t-s,t-1})_{ki} - (P^{t-s,t-1})_{1i}\} \leq \psi(P^{t-s,t-1}) \\
    &\leq \phi^{s-1},
\end{align*}
since $\zeta(P^{t_1,t_1+1}) \leq \phi$ for generic $t_1$ as we have shown. Combining with \eqref{eq: approximation truncate future double}, we get the desired result.

\end{proof}

\section{Proof of Lemma~\ref{RE}}\label{app:proofRE}
\begin{proof}[Proof of Lemma~\ref{RE}]
Recall that in this lemma, we aim to show that for all $j \in \{1,\ldots,K\}$ and all $v \in \mathbb{R}^{d^2}$, 
\begin{equation*}
    v^\top \left[\frac{1}{T}\sum_{t=1}^T \textnormal{Id}_d \otimes \left\{m_{j,\theta}(Y_{t-s}^{t+s}) Y_{t-1}Y_{t-1}^\top\right\}\right] v \geq \alpha \|v\|_2^2 - \tau_{RE}\|v\|_1^2,
\end{equation*}
uniformly over $\theta \in \Theta$, with high probability. We first give an outline of the proof. We start by controlling the tail behavior of $v^\top \{m_{j,\theta}(Y_{t-s}^{t+s}) Y_{t-1}Y_{t-1}^\top\} v$, which will enable us to obtain a uniform concentration results over $\theta$ over a collection of sparse vectors $v$, if we had independent and identically distributed data. Under $\beta$-mixing, this translates to a uniform concentration result for our original time series data, which leads to a lower bound on $v^\top \left[\frac{1}{T}\sum_{t=1}^T \textnormal{Id}_d \otimes \left\{m_{j,\theta}(Y_{t-s}^{t+s}) Y_{t-1}Y_{t-1}^\top\right\}\right] v$ for sparse $v$. We will then show that this lower bound for sparse vectors implies a lower bound for all vectors $v$. In the following, we present the complete proof.

First, Lemma B.1 in \citet{basu2015regularized} implies that it suffices to show that 
\begin{equation*}
    v^\top \left[\frac{1}{T}\sum_{t=1}^T m_{j,\theta}(Y_{t-s}^{t+s}) Y_{t-1}Y_{t-1}^\top\right] v \geq \alpha \|v\|_2^2 - \tau_{RE}\|v\|_1^2,
\end{equation*}
for all $v \in \mathbb{R}^d$ uniformly over $\theta$. We will prove the above statement under the assumptions of Lemma~\ref{RE} in the following steps.

\textbf{Step I: control the tail behavior of $v^\top \{m_{j,\theta}(Y_{t-s}^{t+s}) Y_{t-1}Y_{t-1}^\top\} v$.} To start, we recall that under the stationary distribution, $Y_t$ is a sub-Gaussian random vector and we define
\begin{equation*}
    K_Y = \sup_{v \in \mathbb{R}^d,\|v\|_2=1}\sup_{q \geq 1}(E|v^\top Y_t|^q)^{1/q}q^{-1/2}.
\end{equation*}
Therefore,
\begin{equation}\label{subgaussianvector}
    E|v^\top Y_t|^q \leq K_Y^q q^{q/2}, \quad \forall q \geq 1, \ \forall v\in\mathbb{R}^d,\  \|v\|_2=1.
\end{equation}
This implies that
\begin{align*}
    E\left[\left\{m_{j,\theta}(Y_{t-s}^{t+s}) v^\top  Y_{t-1}Y_{t-1}^\top v\right\}^q\right] &= E\left[\left\{m_{j,\theta}(Y_{t-s}^{t+s})\right\}^q\left\{ v^\top  Y_{t-1}Y_{t-1}^\top v\right\}^q\right] \\
    &\leq E\left[\left\{ v^\top  Y_{t-1}Y_{t-1}^\top v\right\}^q\right] \\
    &= E\left[\left|v^\top  Y_{t-1}\right|^{2q}\right] \\
    &\leq K_Y^{2q} (2q)^q,
\end{align*}
and therefore
\begin{equation*}
    \left(E\left[\left\{m_{j,\theta}(Y_{t-s}^{t+s}) v^\top  Y_{t-1}Y_{t-1}^\top v\right\}^q\right]\right)^{1/q} \leq 2K_Y^2 q.
\end{equation*}

\textbf{Step II: uniform concentration for i.i.d data with sparse $v$.} For a vector $v \in \mathbb{R}^d$, let $\textnormal{supp}(v) \subseteq \{1,\ldots,d\}$ denote the support of $v$, that is, $\textnormal{supp}(v) = \{i: v_i \neq 0\}$. Let $\mathbb{S}(2b) = \{v \in \mathbb{R}^d: |\textnormal{supp}(v)| \leq 2b, \|v\|_2 \leq 1\}$ denote the set of $2b$-sparse vectors in the $d$-dimensional unit ball. We will specify the exact value of $b$ in a later step. For any subset $\tilde{S}$ of $\{1,\ldots,d\}$ with cardinality $2b$, let $\mathbb{S}_{\tilde{S}}$ denote the subset of $\mathbb{S}(2b)$ supported on $\tilde S$. Let $\mathbb{K}_{\tilde S}$ be a $1/10$-cover of $\mathbb{S}_{\tilde S}$, and define $\mathbb{K} = \cup_{\tilde S} \mathbb{K}_{\tilde S}$. Then, $\mathbb{K}$ is a $1/10$-cover of $\mathbb{S}(2b)$. Since the $\epsilon$ covering number of a $2b$-dimensional unit ball is upper bounded by $(3/\epsilon)^{2b}$, we have $|\mathbb{K}| \leq \binom{d}{2b} 30^{2b}$. Here, the binomial coefficient $\binom{d}{2b}$ arises from the fact that $v \in \mathbb{S}(2b)$ is supported on one of the $\binom{d}{2b}$ subsets of cardinality $2b$ of $\{1,\ldots,d\}$.

Define a function $f_{v,\theta}^j$ as
\begin{equation*}
    f_{v,\theta}^j(Y_{t-s}^{t+s}) = m_{j,\theta}(Y_{t-s}^{t+s})v^\top Y_{t-1}Y_{t-1}^\top v,
\end{equation*}
and define a function class
\begin{equation*}
    \mathcal{F} = \left\{f_{v,\theta}^j: j \in \{1,\ldots,K\}, v \in \mathbb{K}, \theta \in \Theta \right\}.
\end{equation*}
We now establish a uniform concentration result over the function class $\mathcal{F}$ for i.i.d. data. To this end, for a fixed constant $\tilde{c}$, let $N = T/\{\tilde{c}\log T\}$. Let $\{\tilde{Y}_{n-s}^{n+s}\}_{n=1}^N$ be an i.i.d. sample where the marginal distribution of $\tilde{Y}_{n-s}^{n+s}$ is the same as the marginal distribution of $Y_{t-s}^{t+s}$. For the ease of notation, let $X_n = \tilde{Y}_{n-s}^{n+s}$. Note that the sample size of this i.i.d. sample is smaller than the sample size of the original time series by a log factor, and the reason for this shall become clear in later steps of the proof. We study the following tail probability
\begin{equation*}
    P\left(\sup_{f \in \mathcal{F}} \left|\frac{1}{N}\sum_{n=1}^N f(X_n) - E[f(X_n)] \right| > \frac{\rho_{\min}}{243}\right),
\end{equation*}
where $\rho_{\min}$ is defined in Assumption~\ref{cond:minmaxeigenvalue}.

\textbf{Step II-I: symmetrization.} We start with a symmetrization argument using Theorem~\ref{symmetrization}. To apply this theorem, we first need to find an upper bound for the $L_2$ norm of functions in $\mathcal{F}$. For any fixed function $f \in \mathcal{F}$, there exist $j \in \{1,\ldots,K\}$, $v \in \mathbb{K}$ and $\theta \in \Theta$ such that
\begin{align*}
    \|f\|_2^2 &= E\left[f(X_n)^2\right] = E\left[m_{j,\theta}(Y_{t-s}^{t+s})^2 \left\{v^\top Y_{t-1}Y_{t-1}^\top v \right\}^2 \right] \\
    &\leq E\left[\left\{v^\top Y_{t-1}Y_{t-1}^\top v \right\}^2 \right] \\
    &\leq 16 K_Y^4,
\end{align*}
where the second line follows from the fact that $m_{j,\theta}$ is uniformly upper bounded by 1, and the third line follows from our result from Step I. The upper bound above holds for any $v$, $j$ and $\theta$. Thus, we can take $R = 4K_Y^2$ in Theorem~\ref{symmetrization}. Now to apply this theorem, we only need that $T \geq 72\tilde{c}(\log T)(972K_Y^2/\rho_{\min})^2$. Note that this holds for sufficiently large $T$. Under this condition, Theorem~\ref{symmetrization} implies that
\begin{equation*}
    P\left(\sup_{f \in \mathcal{F}} \left|\frac{1}{N}\sum_{n=1}^N f(X_n) - E[f(X_n)] \right| > \frac{\rho_{\min}}{243}\right) \leq 4P\left(\sup_{f \in \mathcal{F}} \left|\frac{1}{N}\sum_{n=1}^N W_n f(X_n)\right| \geq \frac{\rho_{\min}}{972}\right),
\end{equation*}
for i.i.d. Rademacher random variables $W_n$.

\textbf{Step II-II: Control the empirical norm of functions in $\mathcal{F}$.} To control the probability on the right-hand side of the above display, we first condition on $X_1,\ldots, X_N$. Define the event $\mathcal{A}$ such that $I_{\mathcal{A}}\{X_1,\ldots,X_N\}= 1$ if and only if
\begin{equation*}
    \sup_{f \in \mathcal{F}} \frac{1}{N}\sum_{n=1}^N f(X_n)^2 \leq 16K_Y^4 + 1.
\end{equation*}
We now study the probability of the event $\mathcal{A}$. For a fixed vector $v \in \mathbb{K}$, define a function class
\begin{equation*}
    \mathcal{F}_v = \left\{f_{v,\theta}^j: j \in \{1,\ldots,K\}, \theta \in \Theta\right\}.
\end{equation*}
With this definition, the function class $\mathcal{F}$ can be written as $\mathcal{F} = \cup_{v \in \mathbb{K}}\mathcal{F}_v$.
\begin{align*}
    \sup_{f \in \mathcal{F}_v} \frac{1}{N}\sum_{n=1}^N f(X_n)^2 &= \max_{j}\sup_{\theta \in \Theta} \frac{1}{N}\sum_{n=1}^N m_{j,\theta}(\tilde Y_{n-s}^{n+s})^2 \left\{v^\top \tilde Y_{n-1}\tilde Y_{n-1}^\top v \right\}^2 \\
    &\leq \max_{j}\sup_{\theta \in \Theta} \frac{1}{N}\sum_{n=1}^N \left\{v^\top \tilde Y_{n-1}\tilde Y_{n-1}^\top v \right\}^2 \\
    &= \frac{1}{N}\sum_{n=1}^N \left\{v^\top \tilde Y_{n-1}\tilde Y_{n-1}^\top v \right\}^2.
\end{align*}
As shown in Step I, for any fixed vector $v$, the random variable $v^\top \tilde Y_{n-1}\tilde Y_{n-1}^\top v$ is sub-weibull$(1)$ with sub-weibull norm $2K_Y^2$. By Lemma 6 in \citet{wong2020lasso}, $\{v^\top \tilde Y_{n-1}\tilde Y_{n-1}^\top v\}^2$ is sub-weibull$(1/2)$ with sub-weibull norm $16K_Y^4$. Now by Lemma~\ref{keyconcentration},
\begin{multline*}
    P\left(\left|\frac{1}{N}\sum_{n=1}^N \left\{v^\top \tilde Y_{n-1}\tilde Y_{n-1}^\top v \right\}^2 - E\left[\left\{v^\top \tilde Y_{n-1}\tilde Y_{n-1}^\top v \right\}^2\right]\right| > 1\right) \\
    \leq N\exp\left\{-\frac{N^{1/2}}{4K_Y^2\tilde C_1}\right\} + \exp\left\{-\frac{N}{256K_Y^8\tilde C_2}\right\},
\end{multline*}
for some constants $\tilde C_1$ and $\tilde C_2$. As we have shown, $E[\{v^\top \tilde Y_{n-1}\tilde Y_{n-1}^\top v \}^2] \leq 16 K_Y^4$. Together with the above display, this implies that
\begin{equation*}
    P\left(\frac{1}{N}\sum_{n=1}^N \left\{v^\top \tilde Y_{n-1}\tilde Y_{n-1}^\top v \right\}^2 > 16K_Y^4 + 1\right)\leq N\exp\left\{-\frac{N^{1/2}}{4K_Y^2\tilde C_1}\right\} + \exp\left\{-\frac{N}{256K_Y^8\tilde C_2}\right\},
\end{equation*}
and therefore
\begin{equation*}
    P\left(\sup_{f \in \mathcal{F}_v} \frac{1}{N}\sum_{n=1}^N f(X_n)^2 > 16K_Y^4 + 1 \right) \leq N\exp\left\{-\frac{N^{1/2}}{4K_Y^2\tilde C_1}\right\} + \exp\left\{-\frac{N}{256K_Y^8\tilde C_2}\right\},
\end{equation*}
for a fixed $v \in \mathbb{K}$. Applying a union bound over $\mathbb{K}$, we have that
\begin{multline*}
    P\left(\sup_{f \in \mathcal{F}} \frac{1}{N}\sum_{n=1}^N f(X_n)^2 > 16K_Y^4 + 1 \right) \leq \binom{d}{2b} 30^{2b}N\exp\left\{-\frac{N^{1/2}}{4K_Y^2\tilde C_1}\right\}  \\
    +  \binom{d}{2b}30^{2b}\exp\left\{-\frac{N}{256K_Y^8\tilde C_2}\right\}.
\end{multline*}
This provides a way to control the empirical norm of functions in the class $\mathcal{F}$ uniformly with high probability.

\textbf{Step II-III: condition on $X_1,\ldots, X_N$.} We now condition on $X_1,\ldots,X_N$ and study the probability
\begin{equation*}
    P\left(\sup_{f \in \mathcal{F}} \left|\frac{1}{N}\sum_{n=1}^N W_n f(X_n)\right| \geq \frac{\rho_{\min}}{972} \Bigr\rvert X_1=x_1,\ldots, X_N=x_N, I_\mathcal{A}\left\{X_1,\ldots,X_N\right\} =1 \right).
\end{equation*}
for a set of values $x_1,\ldots, x_N$ such that $I_{\mathcal{A}}\{x_1,\ldots,x_N\} = 1$. Our main tool to control the probability above is Corollary 8.3 in \citet{geer2000empirical}, which requires controlling the entropy of the function class $\mathcal{F}$.

By a Sudakov minoration argument similar to the proof of Proposition~\ref{errororder}, we have
\begin{equation*}
    \sqrt{\log N_c(\epsilon,\Theta,\|\cdot\|_2)} \leq C_1r \sqrt{|S|(\log K + \log d)}/\epsilon,
\end{equation*}
for some constant $C_1$. Now consider the function class $\mathcal{F}_v^j = \{f_{v,\theta}^j: \theta \in \Theta\}$. We aim to relate the entropy of $\mathcal{F}_v^j$ to that of $\Theta$, by showing that functions in $\mathcal{F}_v^j$ are Lipschitz with respect to $\theta$. Let $Q_n(x_1,\ldots,x_N)$ denote the empirical distribution that puts mass $1/N$ at each value $x_n$. We will often omit in the notation its dependence on $(x_1,\ldots,x_N)$ and write $Q_n$ for simplicity. For a function $f$, define its norm under $Q_n$, $\|f\|_{Q_n}$, such that $\|f\|_{Q_n}^2 = \int f^2dQ_n = \sum_{n=1}^N f^2(x_n)/N$.
\begin{align*}
    &\quad \left\|f_{v,\theta_1}^j - f_{v,\theta_2}^j \right\|_{Q_n}^2 \\
    &= \frac{1}{N}\sum_{n=1}^N \left\{f_{v,\theta_1}^j(\tilde Y_{n-s}^{n+s}) - f_{v,\theta_2}^j(\tilde Y_{n-s}^{n+s})\right\}^2 \\
    &= \frac{1}{N}\sum_{n=1}^N \left\{m_{j,\theta_1}(\tilde Y_{n-s}^{n+s})v^\top \tilde Y_{n-1} \tilde Y_{n-1}^\top v - m_{j,\theta_2}(\tilde Y_{n-s}^{n+s})v^\top \tilde Y_{n-1} \tilde Y_{n-1}^\top v\right\}^2 \\
    &= \frac{1}{N}\sum_{n=1}^N \left\{v^\top \tilde Y_{n-1} \tilde Y_{n-1}^\top v\right\}^2  \left\{\int_0^1 \frac{\partial m_{j,\theta}(\tilde{Y}_{n-s}^{n+s})}{\partial \theta}\rvert_{\theta = u\theta_1 + (1-u)\theta_2}^\top (\theta_2 - \theta_1)du\right\}^2 \\
    &\leq \left[\frac{1}{N}\sum_{n=1}^N \left\{v^\top \tilde Y_{n-1} \tilde Y_{n-1}^\top v\right\}^4\right]^{1/2} \left[\frac{1}{N}\sum_{n=1}^N \left\{\int_0^1 \frac{\partial m_{j,\theta}(\tilde{Y}_{n-s}^{n+s})}{\partial \theta}\rvert_{\theta = u\theta_1 + (1-u)\theta_2}^\top (\theta_2 - \theta_1)du\right\}^4\right]^{1/2},
\end{align*}
where the last line follows from Cauchy-Schwarz inequality. The first term in the last line is upper bounded by some constant uniformly in $v$ with high probability. To see this, we apply Lemma~6 in \citet{wong2020lasso} again and get that $\{v^\top \tilde Y_{n-1} \tilde Y_{n-1}^\top v\}^4$ is sub-weibull$(1/4)$ with sub-weibull norm $K_{4,Y} = 2^4 (16 K_Y^4)^2$. Applying Lemma~\ref{keyconcentration}, we get the following concentration result:
\begin{multline*}
    P\left( \left|\frac{1}{N}\sum_{n=1}^N \left\{v^\top \tilde Y_{n-1} \tilde Y_{n-1}^\top v\right\}^4 - E\left[\left\{v^\top \tilde Y_{n-1} \tilde Y_{n-1}^\top v\right\}^4\right]\right| > 1\right) \leq N\exp\left(-\frac{N^{1/4}}{K_{4,Y}^{1/4}\tilde C_1}\right) \\
    + \exp\left(-\frac{N}{K_{4,Y}^2\tilde C_2}\right),
\end{multline*}
for $N \geq 4$. Recall that we have shown in step I that $E[\{v^\top \tilde Y_{n-1} \tilde Y_{n-1}^\top v\}^4] \leq 8^4K_Y^8$ for all $v \in \mathbb{K}$. Combined with the above concentration result, we have that
\begin{equation*}
    P\left( \frac{1}{N}\sum_{n=1}^N \left\{v^\top \tilde Y_{n-1} \tilde Y_{n-1}^\top v\right\}^4 > 1 + 8^4K_Y^8\right) \leq N\exp\left(-\frac{N^{1/4}}{K_{4,Y}^{1/4}\tilde C_1}\right) + \exp\left(-\frac{N}{K_{4,Y}^2\tilde C_2}\right),
\end{equation*}
for any fixed $v \in \mathbb{K}$. Applying a union bound over $\mathbb{K}$, we have that
\begin{multline*}
    P\left( \sup_{v \in \mathbb{K}} \frac{1}{N}\sum_{n=1}^N \left\{v^\top \tilde Y_{n-1} \tilde Y_{n-1}^\top v\right\}^4 > 1 + 8^4K_Y^8\right) \leq \binom{d}{2b}30^{2b} N\exp\left(-\frac{N^{1/4}}{K_{4,Y}^{1/4}\tilde C_1}\right) \\
    + \binom{d}{2b}30^{2b} \exp\left(-\frac{N}{K_{4,Y}^2\tilde C_2}\right).
\end{multline*}
We now turn to the second term, 
\begin{align*}
    &\quad \left[\frac{1}{N}\sum_{n=1}^N \left\{\int_0^1 \frac{\partial m_{j,\theta}(\tilde{Y}_{n-s}^{n+s})}{\partial \theta}\rvert_{\theta = u\theta_1 + (1-u)\theta_2}^\top (\theta_2 - \theta_1)du\right\}^4\right]^{1/2} \\
    &\leq \left[\frac{1}{N}\sum_{n=1}^N \int_0^1 \left\{\frac{\partial m_{j,\theta}(\tilde{Y}_{n-s}^{n+s})}{\partial \theta}\rvert_{\theta = u\theta_1 + (1-u)\theta_2}^\top (\theta_2 - \theta_1)\right\}^4 du \right]^{1/2} \\
    &= \left[\frac{1}{N}\sum_{n=1}^N \int_0^1 \left\{(\theta_2 - \theta_1)^\top\frac{\partial m_{j,\theta}(\tilde{Y}_{n-s}^{n+s})}{\partial \theta}\rvert_{\theta = u\theta_1 + (1-u)\theta_2}\frac{\partial m_{j,\theta}(\tilde{Y}_{n-s}^{n+s})}{\partial \theta}\rvert_{\theta = u\theta_1 + (1-u)\theta_2}^\top (\theta_2 - \theta_1) \right\}^2 du \right]^{1/2} \\
    &= \left[\int_0^1 \frac{1}{N}\sum_{n=1}^N\left\{(\theta_2 - \theta_1)^\top\frac{\partial m_{j,\theta}(\tilde{Y}_{n-s}^{n+s})}{\partial \theta}\rvert_{\theta = u\theta_1 + (1-u)\theta_2}\frac{\partial m_{j,\theta}(\tilde{Y}_{n-s}^{n+s})}{\partial \theta}\rvert_{\theta = u\theta_1 + (1-u)\theta_2}^\top (\theta_2 - \theta_1) \right\}^2 du \right]^{1/2} \\
    &\leq \left[\int_0^1 \frac{1}{N}\sum_{n=1}^N \left\|\frac{\partial m_{j,\theta}(\tilde{Y}_{n-s}^{n+s})}{\partial \theta}\rvert_{\theta = u\theta_1 + (1-u)\theta_2}\frac{\partial m_{j,\theta}(\tilde{Y}_{n-s}^{n+s})}{\partial \theta}\rvert_{\theta = u\theta_1 + (1-u)\theta_2}^\top\right\|_2^2 \|\theta_2 - \theta_1\|_2^4 du\right]^{1/2} \\
    &= \left[\int_0^1 \frac{1}{N}\sum_{n=1}^N \left\|\frac{\partial m_{j,\theta}(\tilde{Y}_{n-s}^{n+s})}{\partial \theta}\rvert_{\theta = u\theta_1 + (1-u)\theta_2}\frac{\partial m_{j,\theta}(\tilde{Y}_{n-s}^{n+s})}{\partial \theta}\rvert_{\theta = u\theta_1 + (1-u)\theta_2}^\top\right\|_2^2 du\right]^{1/2} \|\theta_2 - \theta_1\|_2^2 \\
    &\leq \sup_{\tilde\theta \in \mathcal{B}(r,\theta^*)} \left[ \frac{1}{N}\sum_{n=1}^N \left\|\frac{\partial m_{j,\theta}(\tilde{Y}_{n-s}^{n+s})}{\partial \theta}\rvert_{\theta = \tilde\theta}\frac{\partial m_{j,\theta}(\tilde{Y}_{n-s}^{n+s})}{\partial \theta}\rvert_{\theta = \tilde\theta}^\top\right\|_2^2 \right]^{1/2}\|\theta_2 - \theta_1\|_2^2.
\end{align*}
Define a Lipschitz constant $L(x_1^N)$ such that
\begin{equation*}
    L^2(x_1^N) = \left(1 + 64K_Y^4\right) \max_j\sup_{\tilde\theta \in \mathcal{B}(r,\theta^*)} \left[ \frac{1}{N}\sum_{n=1}^N \left\|\frac{\partial m_{j,\theta}(\tilde{Y}_{n-s}^{n+s})}{\partial \theta}\rvert_{\theta = \tilde\theta}\frac{\partial m_{j,\theta}(\tilde{Y}_{n-s}^{n+s})}{\partial \theta}\rvert_{\theta = \tilde\theta}^\top\right\|_2^2 \right]^{1/2},
\end{equation*}
and then we have
\begin{equation*}
    \left\|f_{v,\theta_1}^j - f_{v,\theta_1}^j \right\|_{Q_n}^2 \leq L^2(x_1^N)\|\theta_2 - \theta_1\|_2^2.
\end{equation*}
Therefore, we can construct an $\epsilon L(x_1^N)$-cover of the function class $\mathcal{F}_v^j$ from an $\epsilon$-cover of $\Theta$. As a result,
\begin{equation*}
    \sqrt{\log N_c(\epsilon L(x_1^N),\mathcal{F}_v^j,\|\cdot\|_{Q_n})} \leq \sqrt{\log N_c(\epsilon,\Theta,\|\cdot\|_2)} \leq \frac{C_1r \sqrt{|S|(\log K + \log d)}}{\epsilon},
\end{equation*}
and 
\begin{equation*}
    \sqrt{\log N_c(\epsilon,\mathcal{F}_v^j,\|\cdot\|_{Q_n})} \leq \frac{C_1r L(x_1^N) \sqrt{|S|(\log K + \log d)}}{\epsilon}.
\end{equation*}
As $\mathcal{F} = \cup_{j,v}\mathcal{F}_v^j$, we have that
\begin{equation*}
    \sqrt{\log N_c(\epsilon,\mathcal{F},\|\cdot\|_{Q_n})} \leq \sqrt{\log K + \log \binom{d}{2b} + 2b\log 30} + \sqrt{\log N_c(\epsilon,\mathcal{F}_v^j,\|\cdot\|_{Q_n})}.
\end{equation*}

We are now ready to apply Theorem~\ref{corollary83}. If $I_{A}\{x_1,\ldots,x_N\}=1$, we can take $R^2 = \max\{16K_Y^4+1,(\rho_{\min}/972)^2\}$ which guarantees that $R \geq \rho_{\min}/972$. We now compute the entropy integral,
\begin{align*}
    \int_{(\rho_{\min}/972)/8}^R \sqrt{\log N_c(\epsilon,\mathcal{F},\|\cdot\|_{Q_n})} d\epsilon &\leq C_2\sqrt{\log K + \log \binom{d}{2b} + 2b\log 30} \\
    &\quad + C_1 r L(x_1^N) \sqrt{|S|(\log K + \log d)} \int_{(\rho_{\min}/972)/8}^R \frac{1}{\epsilon}d\epsilon \\
    &\leq  C_2\sqrt{\log K + \log \binom{d}{2b} + 2b\log 30} \\
    &\quad + C_3 r L(x_1^N) \sqrt{|S|(\log K + \log d)}.
\end{align*}
When the following holds,
\begin{multline}\label{corollary83rec}
    \sqrt{N}\geq 1944C \rho_{\min}^{-1}\max\Bigg\{\rho_{\min}/972, 4K_Y^2+1, \\
    C_2\sqrt{\log K + \log \binom{d}{2b} + 2b\log 30} + C_3 r L(x_1^N) \sqrt{|S|(\log K + \log d)}\Bigg\},
\end{multline}
Theorem~\ref{corollary83} implies that
\begin{equation*}
    P\left(\sup_{f \in \mathcal{F}} \left|\frac{1}{N}\sum_{n=1}^N W_n f(x_n)\right| \geq \frac{\rho_{\min}}{972}\right) \leq C\exp\left\{-\left(\frac{\rho_{\min}}{972}\right)^2\frac{N}{4C^2\max\{16K_Y^4+1,(\rho_{\min}/972)^2\}}\right\}.
\end{equation*}

\textbf{Step II-IV: marginalize over $X_1,\ldots,X_N$.} Let $\mathcal{A}_1$ denote the event that 
\begin{equation*}
    \sqrt{N}\geq 3888 C \rho_{\min}^{-1}C_3 r L(x_1^N) \sqrt{|S|(\log K + \log d)}.
\end{equation*}
Suppose that $T$ is such that
\begin{equation*}
    \sqrt{N}\geq 1944C \rho_{\min}^{-1}\max\left\{\rho_{\min}/972, 4K_Y^2+1, 2C_2\sqrt{\log K + \log \binom{d}{2b} + 2b\log 30}\right\},
\end{equation*}
and that $\mathcal{A}_1$ happens, then \eqref{corollary83rec} is met. In the previous step, we derived an upper bound on the tail probability of interest, conditioned on $(x_1,\ldots, x_N)$ such that $I_{\mathcal{A}} =1$ and $I_{\mathcal{A}_1} =1$. In this step, we marginalize over $(X_1,\ldots,X_N)$. For two events $E_1$ and $E_2$, let $E_1 \wedge E_2$ denote the event that $E_1$ and $E_2$ both happen, and $E_1 \vee E_2$ denote the event that at least one of $E_1$ and $E_2$ happens. Then, we have
\begin{align*}
    &\quad P\left(\sup_{f \in \mathcal{F}} \left|\frac{1}{N}\sum_{n=1}^N W_n f(X_n)\right| \geq \frac{\rho_{\min}}{972}\right) \\
    &\leq P\left(\sup_{f \in \mathcal{F}} \left|\frac{1}{N}\sum_{n=1}^N W_n f(X_n)\right| \geq \frac{\rho_{\min}}{972} \wedge \mathcal{A} \wedge \mathcal{A}_1 \right) + P(\mathcal{A}^C) + P(\mathcal{A}_1^C) \\
    &\leq C\exp\left\{-\left(\frac{\rho_{\min}}{972}\right)^2\frac{N}{4C^2\max\{16K_Y^4+1,(\rho_{\min}/972)^2\}}\right\} \\
    &\quad + \binom{d}{2b} 30^{2b}N\exp\left\{-\frac{N^{1/2}}{4K_Y^2\tilde C_1}\right\} +  \binom{d}{2b}30^{2b}\exp\left\{-\frac{N}{256K_Y^8\tilde C_2}\right\} \\
    &\quad + \binom{d}{2b}30^{2b} N\exp\left(-\frac{N^{1/4}}{K_{4,Y}^{1/4}\tilde C_1}\right) + \binom{d}{2b}30^{2b} \exp\left(-\frac{N}{K_{4,Y}^2\tilde C_2}\right) \\
    &\quad + \tilde{u}(N,d) \\
    &:= \tilde{u}_{RE}(N,d).
\end{align*}
Combined with the symmetrization result we obtained from step II-I, we have that
\begin{equation*}
    P\left(\sup_{f \in \mathcal{F}} \left|\frac{1}{N}\sum_{n=1}^N f(X_n) - E[f(X_n)] \right| > \frac{\rho_{\min}}{243}\right) \leq 4\tilde{u}_{RE}(N,d).
\end{equation*}

\textbf{Step III: uniform concentration for $\beta$-mixing process with sparse $v$.} Now we extend the above uniform concentration results to the process $\{Y_{t-s}^{t+s}\}$ by applying Theorem~\ref{karandikarthm}. The detailed argument is the same as in the proof of Proposition~\ref{errororder}.
\begin{equation*}
    P\left(\sup_{f \in \mathcal{F}} \left|\frac{1}{T}\sum_{t=1}^T f(Y_{t-s}^{t+s}) - E[f(Y_{t-s}^{t+s})] \right| > \frac{\rho_{\min}}{243}\right) \leq C_3(\log T)\tilde{u}_{RE}(N,d) + \frac{2}{T^{1/2}},
\end{equation*}
for some constant $C_3$.

\textbf{Step IV: extension to all vectors $v$.} For every $v \in \mathbb{S}(2b)$, there exists some $\tilde{v}(v) \in \mathbb{K}$ such that $\|v-\tilde{v}(v)\| \leq 1/10$ and that $v$ and $\tilde{v}(v)$ have the same support. For ease of notation, define 
\begin{equation*}
    \Delta_{\Sigma,\theta}^j := \frac{1}{T}\sum_{t=1}^{T}  m_{j,\theta}(Y_{t-s}^{t+s})Y_{t-1}Y_{t-1}^\top - E\left[\frac{1}{T}\sum_{t=1}^{T}  m_{j,\theta}(Y_{t-s}^{t+s})Y_{t-1}Y_{t-1}^\top\right].
\end{equation*}
Then, for a fixed $\theta$,
\begin{align*}
V &:= \sup_{v \in \mathbb{S}(2b)} |v^\top \Delta_{\Sigma,\theta}^j v| \\
&= \sup_{v \in \mathbb{S}(2b)} |(v-\tilde{v}(v))^\top \Delta_{\Sigma,\theta}^j (v-\tilde{v}(v)) + \tilde{v}(v)^\top \Delta_{\Sigma,\theta}^j \tilde{v}(v) + 2(v-\tilde{v}(v))^\top \Delta_{\Sigma,\theta}^j \tilde{v}(v)| \\
&\leq \sup_{v \in \mathbb{S}(2b)} |(v-\tilde{v}(v))^\top \Delta_{\Sigma,\theta}^j (v-\tilde{v}(v))| + \max_{\tilde{v}\in\mathbb{K}} |\tilde{v}^\top \Delta_{\Sigma,\theta}^j \tilde{v}| + 2\sup_{v \in \mathbb{S}(2b)}|(v-\tilde{v}(v))^\top \Delta_{\Sigma,\theta}^j \tilde{v}(v)|.
\end{align*}
The first term in the last line is upper bounded by $V/100$ as $10(v-\tilde{v}(v)) \in \mathbb{S}(2b)$ for all $v \in \mathbb{S}(2b)$. Now consider the third term.
First we note that
\begin{multline*}
    2(v-\tilde{v}(v))^\top \Delta_{\Sigma,\theta}^j (v-\tilde{v}(v)) = \frac{1}{10}[\{\tilde{v}(v)+10(v-\tilde{v}(v))\}^\top \Delta_{\Sigma,\theta}^j \{\tilde{v}(v)+10(v-\tilde{v}(v))\} \\
    -\tilde{v}(v)^\top \Delta_{\Sigma,\theta}^j \tilde{v}(v) - 10(v-\tilde{v}(v))^\top \Delta_{\Sigma,\theta}^j \{10(v-\tilde{v}(v))\} ].
\end{multline*}
And so,
\begin{align*}
    &\quad 2\sup_{v \in \mathbb{S}(2b)}|(v-\tilde{v}(v))^\top \Delta_{\Sigma,\theta}^j \tilde{v}(v)| \\
    &\leq \frac{1}{10}\sup|\{\tilde{v}(v)+10(v-\tilde{v}(v))\}^\top \Delta_{\Sigma,\theta}^j \{\tilde{v}(v)+10(v-\tilde{v}(v))\}| \\
    &\ \ + \frac{1}{10}\sup|\tilde{v}(v)^\top \Delta_{\Sigma,\theta}^j \tilde{v}(v)| + \frac{1}{10}\sup|10(v-\tilde{v}(v))^\top \Delta_{\Sigma,\theta}^j \{10(v-\tilde{v}(v))\}| \\
    &\leq \frac{4}{10}\sup\left|\left\{\frac{\tilde{v}(v)+10(v-\tilde{v}(v))}{2}\right\}^\top \Delta_{\Sigma,\theta}^j \left\{\frac{\tilde{v}(v)+10(v-\tilde{v}(v))}{2}\right\}\right| \\
    &\ \ + \frac{1}{10}\sup|\tilde{v}(v)^\top \Delta_{\Sigma,\theta}^j \tilde{v}(v)| + \frac{1}{10}\sup|10(v-\tilde{v}(v))^\top \Delta_{\Sigma,\theta}^j \{10(v-\tilde{v}(v))\}| \\
    &\leq \frac{4}{10}V + \frac{1}{10}V + \frac{1}{10}V,
\end{align*}
where the last inequality above follows from the fact that $\{\tilde{v}(v)+10(v-\tilde{v}(v))\}/2$, $\tilde{v}(v)$ and $10(v-\tilde{v}(v))$ all have norm at most 1 and hence lie in $\mathbb{S}(2b)$. Re-arranging terms, we have that
\begin{equation*}
    V \leq \frac{100}{39} \max_{\tilde{v}\in\mathbb{K}} |\tilde{v}^\top \Delta_{\Sigma,\theta}^j \tilde{v}| \leq 3\max_{\tilde{v}\in\mathbb{K}} |\tilde{v}^\top \Delta_{\Sigma,\theta}^j \tilde{v}|.
\end{equation*}
The above argument holds for any $\theta \in \Theta$ and $j$, and thus
\begin{equation*}
    \max_j\sup_{\theta \in \Theta}\sup_{v \in \mathbb{S}(2b)} |v^\top \Delta_{\Sigma,\theta}^j v| \leq 3\max_j\sup_{\theta \in \Theta}\max_{\tilde{v}\in\mathbb{K}} |\tilde{v}^\top \Delta_{\Sigma,\theta}^j \tilde{v}| = 3\sup_{f \in \mathcal{F}} \left|\frac{1}{T}\sum_{t=1}^T f(Y_{t-s}^{t+s}) - E[f(Y_{t-s}^{t+s})] \right|.
\end{equation*}
Therefore, with probability at least 
\begin{equation*}
    1-C_3(\log T)\tilde{u}_{RE}(N,d) - \frac{2}{T^{1/2}},
\end{equation*}
we have that
\begin{equation*}
    \left|\frac{1}{T}\sum_{t=1}^{T}  m_{j,\theta}(Y_{t-s}^{t+s})v^\top Y_{t-1}Y_{t-1}^\top v - E\left[  m_{j,\theta}(Y_{t-s}^{t+s})v^\top Y_{t-1}Y_{t-1}^\top v\right]\right| \leq \frac{\rho_{\min}}{81},
\end{equation*}
for all $j \in \{1,\ldots,K\}$, $\theta \in\Theta$ and $v \in \mathbb{S}^{2b}$.

Next, Lemma 12 in \citet{loh2012high} implies that, with probability at least
\begin{equation*}
    1-C_3(\log T)\tilde{u}_{RE}(N,d) - \frac{2}{T^{1/2}},
\end{equation*}
we have that
\begin{equation*}
    \left|\frac{1}{T}\sum_{t=1}^{T}  m_{j,\theta}(Y_{t-s}^{t+s})v^\top Y_{t-1}Y_{t-1}^\top v - E\left[  m_{j,\theta}(Y_{t-s}^{t+s})v^\top Y_{t-1}Y_{t-1}^\top v\right]\right| \leq \frac{\rho_{\min}}{3}\left(\|v\|_2^2 + \|v\|_1^2/b\right),
\end{equation*}
for all $v \in \mathbb{R}^d$, $j \in \{1,\ldots,K\}$ and $\theta \in\Theta$. The above display implies that
\begin{align*}
    v^\top \left[\frac{1}{T}\sum_{t=1}^{T}  m_{j,\theta}(Y_{t-s}^{t+s})Y_{t-1}Y_{t-1}^\top\right]v &\geq v^\top E\left[\frac{1}{T}\sum_{t=1}^{T}  m_{j,\theta}(Y_{t-s}^{t+s})Y_{t-1}Y_{t-1}^\top\right]v-\frac{\rho_{\min}}{3}(\|v\|_2^2 + \|v\|_1^2/b)
\end{align*}
Next we relate $m_{j,\theta}$ back to $w_{j,\theta}$. In particular
\begin{align*}
    &\quad\left|v^\top E\left[\frac{1}{T}\sum_{t=1}^{T}  m_{j,\theta}(Y_{t-s}^{t+s})Y_{t-1}Y_{t-1}^\top\right]v - v^\top E\left[\frac{1}{T}\sum_{t=1}^{T}  w_{j,\theta}(Y_0^T)Y_{t-1}Y_{t-1}^\top\right]v\right| \\
    &= \left|v^\top E\left[\frac{1}{T}\sum_{t=1}^{T}  \left\{ m_{j,\theta}(Y_{t-s}^{t+s}) - w_{j,\theta}(Y_0^T) \right\} Y_{t-1}Y_{t-1}^\top\right]v\right| \\
    &\leq \|v\|_2^2 \left\|E\left[\frac{1}{T}\sum_{t=1}^{T}  \left\{ m_{j,\theta}(Y_{t-s}^{t+s}) - w_{j,\theta}(Y_0^T) \right\} Y_{t-1}Y_{t-1}^\top\right]\right\|_2 \\
    &\leq 3\phi^s \|v\|_2^2 \|E[Y_{t-1}Y_{t-1}^\top]\|_2 \\
    & \leq 3\phi^s \rho_{\max}\|v\|_2^2.
\end{align*}
Hence,
\begin{align*}
    v^\top E\left[\frac{1}{T}\sum_{t=1}^{T}  m_{j,\theta}(Y_{t-s}^{t+s})Y_{t-1}Y_{t-1}^\top\right]v &\geq v^\top E\left[\frac{1}{T}\sum_{t=1}^{T}  w_{j,\theta}(Y_0^T)Y_{t-1}Y_{t-1}^\top\right]v  - 3\phi^s \rho_{\max}\|v\|_2^2 \\
    &\geq \left(\rho_{\min}-3\phi^s\rho_{\max}\right)\|v\|_2^2 \geq 2\rho_{\min}\|v\|_2^2/3,
\end{align*}
when $s \asymp \log T$ and $T > \exp\{\log(\rho_{\min}/(9\rho_{\max}))/\log\phi\}$. This implies that
\begin{equation*}
    v^\top \left[\frac{1}{T}\sum_{t=1}^{T}  m_{j,\theta}(Y_{t-s}^{t+s})Y_{t-1}Y_{t-1}^\top\right]v \geq \frac{\rho_{\min}}{3} \|v\|_2^2 -\frac{\rho_{\min}}{3b}\|v\|_1^2,    
\end{equation*}
for all $v$, $j$ and $\theta$.

\textbf{Step V: choose $b$ and conclusion of the proof.} So far, we have shown that 
\begin{equation*}
    v^\top \left[\frac{1}{T}\sum_{t=1}^{T}  m_{j,\theta}(Y_{t-s}^{t+s})Y_{t-1}Y_{t-1}^\top\right]v \geq \frac{\rho_{\min}}{3} \|v\|_2^2 -\frac{\rho_{\min}}{3b}\|v\|_1^2,    
\end{equation*}
for all $v \in \mathbb{R}^d$, $j \in \{1,\ldots,K\}$ and $\theta \in \Theta$, with probability at least
\begin{align*}
    1 - \frac{2}{T^{1/2}} - C_3(\log T)&\Bigg\{ C\exp\left\{-\left(\frac{\rho_{\min}}{972}\right)^2\frac{N}{4C^2\max\{16K_Y^4+1,(\rho_{\min}/972)^2\}}\right\} \\
    &\quad + \binom{d}{2b} 30^{2b}N\exp\left\{-\frac{N^{1/2}}{4K_Y^2\tilde C_1}\right\} +  \binom{d}{2b}30^{2b}\exp\left\{-\frac{N}{256K_Y^8\tilde C_2}\right\} \\
    &\quad + \binom{d}{2b}30^{2b} N\exp\left(-\frac{N^{1/4}}{K_{4,Y}^{1/4}\tilde C_1}\right) + \binom{d}{2b}30^{2b} \exp\left(-\frac{N}{K_{4,Y}^2\tilde C_2}\right) \\
    &\quad + \tilde{u}(N,d)\Bigg\}.
\end{align*}
Using the upper bound that $\binom{d}{2b} \leq (\frac{ed}{2b})^{2b}$, and set $b = T^{1/5}/(\log d)$, the above probability is lower bounded by
\begin{align*}
    u_{RE}(T,d) &= 1-\frac{2}{T^{1/2}} - C_3 C\exp\left\{\log\log T-\left(\frac{\rho_{\min}}{972}\right)^2\frac{T}{4\tilde c(\log T)C^2\max\{16K_Y^4+1,(\rho_{\min}/972)^2\}}\right\} \\
    &\quad -C_3(\log T)\tilde{u}(T/(\tilde c\log T),d) \\
    &\quad - C_3\tilde{c}^{-1}\exp\left\{C_4T^{1/5}- \frac{(T/(\tilde{c}\log T))^{1/4}}{\max\{4K_Y^2\tilde C_1,256K_Y^8\tilde C_2,K_{4,Y}^{1/4}\tilde C_1,K_{4,Y}^2\tilde C_2\}}\right\},
\end{align*}
which converges to 1. Therefore, with probability at least $u_{RE}(T,d)$,
\begin{equation*}
    v^\top \left[\frac{1}{T}\sum_{t=1}^{T}  m_{j,\theta}(Y_{t-s}^{t+s})Y_{t-1}Y_{t-1}^\top\right]v \geq \frac{\rho_{\min}}{3} \|v\|_2^2 -\frac{\rho_{\min}(\log d)}{3T^{1/5}}\|v\|_1^2,    
\end{equation*}
for all $v \in \mathbb{R}^d$, $j \in \{1,\ldots,K\}$ and $\theta \in \Theta$, for $T$ sufficiently large such that
\begin{equation*}
    T \geq 72\tilde{c}(\log T)(972K_Y^2/\rho_{\min})^2,
\end{equation*}
and
\begin{multline*}
    \sqrt{T/(\tilde{c}\log T)}\geq 1944C \rho_{\min}^{-1}\max\Bigg\{\rho_{\min}/972, 4K_Y^2+1, \\
    2C_2\sqrt{\log K + 2T^{1/5}\{1+(1+\log 30)/\log d\}}\Bigg\}.
\end{multline*}
\end{proof}

\section{Proof of Theorem~\ref{estimationerror} and Proposition~\ref{errororder}}\label{app:proofmainthm}
\begin{proof}[Proof of Theorem~\ref{estimationerror}]
We prove the theorem in two major steps. In the first step, we focus on one iteration of the EM algorithm. We show that with appropriate choice of $\lambda$, the estimation error of the updated parameter estimate can be upper bounded in terms of $\lambda$. Moreover, we give explicit requirements that the value of $\lambda$ needs to satisfy to establish this upper bound. In the second step, we choose a specific sequence of values for $\lambda$ over the iterations. We use induction to show that in each iteration, our chosen $\lambda$ value satisfies the requirements in the first step, and hence our upper bound on the estimation error holds in each iteration.

\textbf{Step I: estimation error in one iteration when $\lambda$ is chosen appropriately.} We first focus on the $q$-th iteration of the EM algorithm. Let $\theta^{(q-1)}$ denote the parameter estimate prior to the $q$-th iteration, and let $\theta^{(q)}$ be the updated parameter estimate after the $q$-th iteration. For the ease of notation, in this proof, we will often write $\theta^{(q-1)}$ as $\theta = (\beta^\top,p^\top,\sigma^\top)^\top$ and $\theta^{(q)}$ as $\hat\theta = (\hat\beta^\top,\hat p^\top,\hat\sigma^\top)^\top$.

\textbf{Step I-I: estimation error of $\beta$.} Recall that in the M-step, we have
\begin{align*}
    \hat\beta &= \argmin_{\tilde\beta} \frac{1}{T}\sum_{t=1}^{T}
    \left(\sum_{j=1}^K m_{j,\theta}(Y_{t-s}^{t+s})\left\|Y_t - (\textnormal{Id}_d \otimes Y_{t-1})^\top \tilde\beta_j \right\|_2^2\right)+ \lambda\sum_{j=1}^K \|\tilde\beta_j\|_1  \\
    &= \argmin_{\tilde\beta} \frac{1}{T}\sum_{t=1}^{T}
    \left(\sum_{j=1}^K m_{j,\theta}(Y_{t-s}^{t+s})\left\|Y_t - \tilde B_j^\top Y_{t-1} \right\|_2^2\right)+ \lambda\sum_{j=1}^K \|\tilde\beta_j\|_1,
\end{align*}
where we recall that $\tilde\beta_j = \vectorize{\tilde B_j}$. Therefore, by definition,
\begin{equation*}
    \frac{1}{T}\sum_{t=1}^T\sum_{j=1}^K m_{j,\theta}(Y_{t-s}^{t+s})\left\|Y_t -\hat B_j^\top Y_{t-1} \right\|_2^2 + \lambda\|\hat\beta\|_1 \leq \frac{1}{T}\sum_{t=1}^T\sum_{j=1}^K m_{j,\theta}(Y_{t-s}^{t+s})\left\|Y_t - (B_j^*)^\top Y_{t-1} \right\|_2^2 + \lambda\|\beta^*\|_1.
\end{equation*}
Re-arranging terms, we have that
\begin{align*}
    &\frac{1}{T}\sum_{t=1}^{T}\sum_{j=1}^K m_{j,\theta}(Y_{t-s}^{t+s})\|(B_j^*-\hat B_j)^\top Y_{t-1}\|_2^2 \\
    &\ \ \leq \underbrace{\frac{2}{T}\sum_{t=1}^{T}\sum_{j=1}^K m_{j,\theta}(Y_{t-s}^{t+s})\left\{Y_t - (B_j^*)^\top Y_{t-1}\right\}^\top(\hat B_j -B_j^*)^\top Y_{t-1}}_{\text{term 1}} + \underbrace{\lambda \left(\|\beta^*\|_1 - \|\hat\beta\|_1\right)}_{\text{term 2}}.
\end{align*}
We proceed by studying the two terms on the right-hand side of the above inequality separately.

Term 2 is easier to study. Recall that $S$ denotes the support of $\beta^*$. Then we have
\begin{align*}
    \|\hat\beta\|_1 - \|\beta^*\|_1 &= \|(\hat\beta-\beta^*+\beta^*)_S\|_1 +\|(\hat\beta-\beta^*+\beta^*)_{S^C}\|_1  - \|\beta^*\|_1 \\
    &= \|(\hat\beta-\beta^*)_S+\beta^*_S\|_1 +\|(\hat\beta-\beta^*)_{S^C}\|_1  - \|\beta^*\|_1 \\
    &\geq \|\beta^*_S\|_1 - \|(\hat\beta-\beta^*)_S\|_1 +\|(\hat\beta-\beta^*)_{S^C}\|_1  - \|\beta^*\|_1 \\
    &= \|(\hat\beta-\beta^*)_{S^C}\|_1 - \|(\hat\beta-\beta^*)_S\|_1.
\end{align*}
Thus, 
\begin{equation*}
    \lambda \left(\|\beta^*\|_1 - \|\hat\beta\|_1\right) \leq \lambda\left(\|(\hat\beta-\beta^*)_S\|_1-\|(\hat\beta-\beta^*)_{S^C}\|_1\right).
\end{equation*}

Next, we study term 1. Note that in the setting of penalized regression without regime switching, $\{Y_t-(B_j^*)^\top Y_{t-1}\}$ will be replaced by the error $\epsilon$ and thus we can apply concentration inequalities directly. However, in our setting, we need to further decompose it. Let $\hat\beta_{ji}$ denote the $i$-th column of $\hat B_j$ and $\beta_{ji}^*$ denote the $i$-th column of $B_j^*$. Then
\begin{align*}
    \text{term 1} &= \frac{2}{T}\sum_{t=1}^{T}\sum_{i=1}^{d}\sum_{j=1}^{K}\left[(\hat\beta_{ji}-\beta_{ji}^*)^\top Y_{t-1}(Y_{ti}-\beta_{ji}^{*\top}Y_{t-1})m_{j,\theta}\right] \\
    &= 2\underbrace{\sum_{i=1}^{d}\sum_{j=1}^{K}\left[(\hat\beta_{ji}-\beta_{ji}^*)^\top \left\{\frac{1}{T}\sum_{t=1}^{T} Y_{t-1}(Y_{ti}-\beta_{ji}^{*\top}Y_{t-1})m_{j,\theta}-E\left[Y_{t-1}(Y_{ti}-\beta_{ji}^{*\top}Y_{t-1})m_{j,\theta}\right]\right\}\right]}_{\text{term 1.1}} \\
    &\ \ + 2\underbrace{\sum_{i=1}^{d}\sum_{j=1}^{K}\left\{(\hat\beta_{ji}-\beta_{ji}^*)^\top E\left[\frac{1}{T}\sum_{t=1}^{T}Y_{t-1}(Y_{ti}-\beta_{ji}^{*\top}Y_{t-1})\left(m_{j,\theta}-w_{j,\theta}\right)\right]\right\}}_{\text{term 1.2}} \\
    &\ \ + 2\underbrace{\sum_{i=1}^{d}\sum_{j=1}^{K}\left\{(\hat\beta_{ji}-\beta_{ji}^*)^\top E\left[\frac{1}{T}\sum_{t=1}^{T}Y_{t-1}(Y_{ti}-\beta_{ji}^{*\top}Y_{t-1})w_{j,\theta}\right]\right\}}_{\text{term 1.3}}.
\end{align*}
Now, to control term 1.1, we have
\begin{align*}
    \text{term 1.1} &\leq \sum_{i=1}^{d}\sum_{j=1}^{K}\left\|\hat\beta_{ji}-\beta_{ji}^*\right\|_1 \left\|\frac{1}{T}\sum_{t=1}^{T} Y_{t-1}(Y_{ti}-\beta_{ji}^{*\top}Y_{t-1})m_{j,\theta}-E\left[Y_{t-1}(Y_{ti}-\beta_{ji}^{*\top}Y_{t-1})m_{j,\theta}\right]\right\|_{\infty} \\
    &\leq \max_{i,j}\left\|\frac{1}{T}\sum_{t=1}^{T} Y_{t-1}(Y_{ti}-\beta_{ji}^{*\top}Y_{t-1})m_{j,\theta}-E\left[Y_{t-1}(Y_{ti}-\beta_{ji}^{*\top}Y_{t-1})m_{j,\theta}\right]\right\|_{\infty}\sum_{i=1}^{d}\sum_{j=1}^{K}\left\|\hat\beta_{ji}-\beta_{ji}^*\right\|_1 \\
    &= \|\hat\beta-\beta^*\|_1 \max_{i,j,k}\left|\frac{1}{T}\sum_{t=1}^{T} Y_{t-1,k}(Y_{ti}-\beta_{ji}^{*\top}Y_{t-1})m_{j,\theta}-E\left[Y_{t-1,k}(Y_{ti}-\beta_{ji}^{*\top}Y_{t-1})m_{j,\theta}\right]\right| \\
    &\leq \Delta\|\hat\beta-\beta^*\|_1,
\end{align*}
where the last line holds with high probability under Assumption~\ref{DB}. Note that Assumption~\ref{DB} is a uniform concentration results when $\theta$ vary over $\Theta$ where the vector of regression coefficients is approximately sparse. As we will show later, this approximate sparsity can indeed be achieved by choosing $\lambda$ appropriately. Therefore, $\theta$ will indeed lie in $\Theta$ after the first iteration. However, in the first iteration, $\theta^{(0)}$ may not belong to $\Theta$ as discussed below Theorem~\ref{estimationerror}. In fact, when $\theta^{(0)}$ is chosen randomly in $\mathcal{B}(r;\theta^*)$ independent of the observed data, concentration results similar to those in Assumption~\ref{DB} is expected. To handle such random initialization, we only need such concentration results to hold pointwise in $\theta$, which is considerably easier to establish. Indeed, we can condition on the random initialization without changing the distribution of the observed data to upper bound the conditional probability, and then marginalize over the random initialization.

Next, for term 1.3, we first define another parameter estimate $\tilde\beta$ as
\begin{equation*}
    \tilde\beta = \argmin_{\beta} E\left[\frac{1}{T}\sum_{t=1}^{T}\sum_{j=1}^K w_{j,\theta}\left\|Y_t-\left(\textnormal{Id}_d \otimes Y_{t-1}\right)^\top \beta_j\right\|_2^2\right],
\end{equation*}
which can be regarded as one iteration in the population EM algorithm. Let $\tilde\beta_{ji}$ denote the sub-vector corresponding to the $i$-th column of $\tilde B_j$, which satisfies the first order condition
\begin{equation*}
    E\left[\frac{1}{T}\sum_{t=1}^{T}w_{j,\theta}Y_{t-1}(Y_{ti}-\tilde\beta_{ji}^\top Y_{t-1}) \right] = 0.
\end{equation*}
Hence, term 1.3 can be written as
\begin{align*}
    &\sum_{i=1}^{d}\sum_{j=1}^{K}\left[(\hat\beta_{ji}-\beta_{ji}^*)^\top E\left[\frac{1}{T}\sum_{t=1}^{T}Y_{t-1}Y_{t-1}^{\top}w_{j,\theta}\right](\tilde\beta_{ji}-\beta_{ji}^*)\right] \\
    &= (\hat\beta-\beta^*)^\top \Sigma (\tilde\beta-\beta^*) \\
    &\leq \rho_{\max}\|\hat\beta-\beta^*\|_2 \|\tilde\beta-\beta^*\|_2 \\
\end{align*}
where $\Sigma$ is a block-diagonal matrix with the $K$ diagonal blocks given by $\textnormal{Id}_d \otimes E\left[\frac{1}{T}\sum_{t=1}^{T}Y_{t-1}Y_{t-1}^{\top}w_{j,\theta}\right]$. In the above display, the third line holds under Assumption~\ref{cond:minmaxeigenvalue}. Next, we work with term 1.2. Note that $Y_{ti} = \sum_{j=1}^K I\{Z_t = j\}Y_{t-1}^\top \beta_{ji}^*+\tilde{\epsilon}_{ti}$ where $\tilde{\epsilon}_{ti} = \sum_{j=1}^K I\{Z_t = j\}\sigma_j\epsilon_{ti}$, and therefore we have
\begin{align*}
    \text{term 1.2} &= \sum_{j=1}^{K}\sum_{i=1}^{d}\left\{(\hat\beta_{ji}-\beta_{ji}^*)^\top E\left[\frac{1}{T}\sum_{t=1}^{T}Y_{t-1}\left(Y_{t-1}^\top \sum_{l\neq j}I\{Z_t = l\} (\beta_{li}^*-\beta_{ji}^*)+\tilde{\epsilon}_{ti}\right) \left(m_{j,\theta}-w_{j,\theta}\right)\right]\right\}.
\end{align*}
Recall that we have defined the vector $D_\beta = (D_{\beta,1}^\top,\ldots,D_{\beta,K}^\top)^\top$, $D_{\beta,j} = (\beta_1^{*\top}-\beta_j^{*\top},\ldots,\beta_{j-1}^{*\top}-\beta_j^{*\top},\beta_{j+1}^{*\top}-\beta_j^{*\top},\ldots,\beta_{K}^{*\top}-\beta_j^{*\top})^\top$. With these notations, we can further split the above expression into two terms with the first one being
\begin{align*}
    &\sum_{j=1}^{K}\sum_{i=1}^{d}(\hat\beta_{ji}-\beta_{ji}^*)^\top E\left[\frac{1}{T}\sum_{t=1}^{T}\left\{Y_{t-1}\left(m_{j,\theta}-w_{j,\theta}\right)Y_{t-1}^\top\sum_{l \neq j}I\{Z_t=l\} (\beta_{li}^*-\beta_{ji}^*) \right\} \right] \\
    &= \sum_{l \neq j, 1\leq l \leq K} \sum_{j=1}^{K}\sum_{i=1}^{d}(\hat\beta_{ji}-\beta_{ji}^*)^\top E\left[\frac{1}{T}\sum_{t=1}^{T} Y_{t-1}\left(m_{j,\theta}-w_{j,\theta}\right)I\{Z_t=l\}Y_{t-1}^\top \right](\beta_{li}^*-\beta_{ji}^*)  \\
    &= \sum_{l \neq j, 1\leq l \leq K} \sum_{j=1}^{K} (\hat\beta_{j}-\beta_{j}^*)^\top \left\{\textnormal{Id}_d \otimes E\left[\frac{1}{T}\sum_{t=1}^{T} Y_{t-1}\left(m_{j,\theta}-w_{j,\theta}\right)I\{Z_t=l\}Y_{t-1}^\top \right] \right\} (\beta_{l}^*-\beta_{j}^*) \\
    &= \sum_{j=1}^K \left\{\boldsymbol{e}_{K-1} \otimes (\hat\beta_j - \beta_j^*)\right\}^\top \left(\textnormal{Id}_{K-1} \otimes M_{1j} \right)D_{\beta,j} \\
    &= v_\beta M_1 D_\beta 
     \leq \|v_\beta\|_2 \|M_1D_\beta\|_2 \\
    &\leq (K-1)^{1/2}\|\hat\beta - \beta^*\|_2 \|M_1D_\beta\|_2,
\end{align*}
where $M_{1j}$ is the block diagonal matrix $\textnormal{Id}_d \otimes E\left[\frac{1}{T}\sum_{t=1}^{T}Y_{t-1}I\{Z_t=l\} \left(m_{j,\theta}-w_{j,\theta}\right)Y_{t-1}^\top \right]$, $M_1$ is a block diagomal matrix with the $j$-th diagonal block being $\textnormal{Id}_{K-1} \otimes M_{1j}$, and $v_\beta$ is obtained by concatenating the vectors $\boldsymbol{e}_{K-1} \otimes (\hat\beta_j - \beta_j^*)$ with $\boldsymbol{e}_{K-1}$ being a vector of 1 of length $K-1$. Note that the last line in the above display holds since $\|v_\beta\|_2 \leq (K-1)^{1/2}\|\hat\beta - \beta^*\|_2$. The operator norm of $M_1$ is the same as the maximum of the operator norms of $E\left[\frac{1}{T}\sum_{t=1}^{T}Y_{t-1}I\{Z_t=l\} \left(m_{j,\theta}-w_{j,\theta}\right)Y_{t-1}^\top \right]$. In particular,
\begin{align*}
    &\quad \left\|E\left[\frac{1}{T}\sum_{t=1}^{T}Y_{t-1}I\{Z_t=l\} \left(m_{j,\theta}-w_{j,\theta}\right)Y_{t-1}^\top \right]\right\|_2 \\
    &= \max\Bigg\{\left|\lambda_{\min}\left(E\left[\frac{1}{T}\sum_{t=1}^{T}Y_{t-1}I\{Z_t=l\} \left(m_{j,\theta}-w_{j,\theta}\right)Y_{t-1}^\top \right]\right)\right|,\\
    &\quad \left|\lambda_{\max}\left(E\left[\frac{1}{T}\sum_{t=1}^{T}Y_{t-1}I\{Z_t=l\} \left(m_{j,\theta}-w_{j,\theta}\right)Y_{t-1}^\top \right]\right)\right|\Bigg\} \\
    &= \max_{\|u\|_2=1} \left|E\left[u^\top\frac{1}{T}\sum_{t=1}^{T}Y_{t-1}I\{Z_t=l\} \left(m_{j,\theta}-w_{j,\theta}\right)Y_{t-1}^\top u \right]\right| \\
    &\leq \max_{\|u\|_2=1} E\left[\frac{1}{T}\sum_{t=1}^{T}u^\top Y_{t-1}Y_{t-1}^\top u \left|I\{Z_t=l\} \left(m_{j,\theta}-w_{j,\theta}\right)\right| \right] \\
    &\leq 3\phi^s \max_{\|u\|_2=1} E\left[\frac{1}{T}\sum_{t=1}^{T}u^\top Y_{t-1}Y_{t-1}^\top u \right] \\
    &\leq 3\phi^s\rho_{\max},
\end{align*}
by Lemma~\ref{lemma:approximationerror} and Assumption~\ref{cond:minmaxeigenvalue}. Thus, we have that
\begin{equation*}
    (K-1)^{1/2}\|\hat\beta - \beta^*\|_2 \|M_1D_\beta\|_2 \leq 3\phi^s \rho_{\max}(K-1)^{1/2}\|\hat\beta - \beta^*\|_2\|D_\beta\|_2.
\end{equation*}
The second term in term 1.2 is given by
\begin{align*}
    &\sum_{j=1}^{K}\sum_{i=1}^{d}\left\{(\hat\beta_{ji}-\beta_{ji}^*)^\top E\left[\frac{1}{T}\sum_{t=1}^{T}Y_{t-1}\tilde{\epsilon}_{ti} \left(m_{j,\theta}-w_{j,\theta}\right)\right]\right\} \\
    &\quad \leq \|\hat\beta-\beta^*\|_1\max_{i,j,k}\left|E\left[\frac{1}{T}\sum_{t=1}^{T}Y_{t-1,k}\tilde{\epsilon}_{ti} \left(m_{j,\theta}-w_{j,\theta}\right)\right]\right| \\
    &\quad \leq 3\phi^s \|\hat\beta-\beta^*\|_1\max_{i,j,k}E\left|Y_{t-1,k}\tilde{\epsilon}_{ti}\right| \\
    &\quad \leq \phi^sC_1\|\hat\beta-\beta^*\|_1,
\end{align*}
for some constant $C_1$, where we note that $E|Y_{t-1,k}\tilde{\epsilon}_{ti}| = E|Y_{t-1,k}\sum_{j=1}^K I\{Z_t = j\}\sigma_j\epsilon_{ti}| \leq \sigma_{\max}E|Y_{t-1,k}\epsilon_{ti}|$, and $E|Y_{t-1,k}\epsilon_{ti}| \leq (E[Y_{t-1,k}^2\epsilon_{ti}^2])^{1/2} \leq (E[Y_{t-1,k}^2]E[\epsilon_{ti}^2])^{1/2}$, which is upper bounded uniformly in $i$ and $k$ as $Y_t$ is a sub-Gaussian random vector.

Up to now, we have decomposed $\frac{1}{T}\sum_{t=1}^{T}\sum_{j=1}^K m_{j,\theta}(Y_{t-s}^{t+s})\|(B_j^*-\hat B_j)^\top Y_{t-1}\|_2^2$ into different terms and bounded each term. Putting these upper bounds together, we have that
\begin{align}\label{preRE}
    &\frac{1}{T}\sum_{t=1}^{T}\sum_{j=1}^K m_{j,\theta}(Y_{t-s}^{t+s})\|(B_j^*-\hat B_j)^\top Y_{t-1}\|_2^2 \nonumber \\
    &\ \ \leq \Delta\|\hat\beta-\beta^*\|_1 + \rho_{\max} \|\hat\beta-\beta^*\|_2 \|\tilde\beta-\beta^*\|_2 + 3\phi^s \rho_{\max}(K-1)^{1/2}\|\hat\beta - \beta^*\|_2\|D_\beta\|_2 +\phi^s C_1\|\hat\beta-\beta^*\|_1 \nonumber \\ 
    &\ \ + \lambda\left(\|(\hat\beta-\beta^*)_S\|_1-\|(\hat\beta-\beta^*)_{S^C}\|_1\right) \nonumber \\
    &\ \ \leq \frac{3\lambda}{2}\|(\hat\beta-\beta^*)_S\|_1-\frac{\lambda}{2}\|(\hat\beta-\beta^*)_{S^C}\|_1 + (\rho_{\max}\|\tilde\beta-\beta^*\|_2 + 3\phi^s\rho_{\max}(K-1)^{1/2}\|D_\beta\|_2 )\|\hat\beta-\beta^*\|_2,
\end{align}
where the last line follows provided that $\lambda$ is chosen such that $\lambda\geq 2\Delta+2C_1\phi^s$. Thus,
\begin{align*}
    \lambda\|(\hat\beta-\beta^*)_{S^C}\|_1 &\leq 3\lambda\|(\hat\beta-\beta^*)_S\|_1 + 2(\rho_{\max}\|\tilde\beta-\beta^*\|_2  + 3\phi^s\rho_{\max}(K-1)^{1/2}\|D_\beta\|_2 )\|\hat\beta-\beta^*\|_2 \\
    &\leq \left\{3\lambda \sqrt{|S|}+2(\rho_{\max}\|\tilde\beta-\beta^*\|_2  + 3\phi^s\rho_{\max}(K-1)^{1/2}\|D_\beta\|_2 )\right\}\|\hat\beta-\beta^*\|_2 \\
    &\leq 4\lambda \sqrt{|S|} \|\hat\beta-\beta^*\|_2,
\end{align*}
when $2(\rho_{\max}\|\tilde\beta-\beta^*\|_2  + 3\phi^s\rho_{\max}(K-1)^{1/2}\|D_\beta\|_2) \leq \lambda\sqrt{|S|}$. The above display would also imply that $\|\hat\beta-\beta^*\|_1 \leq 5\sqrt{|S|}\|\hat\beta-\beta^*\|_2$. Note that we need to choose $\lambda$ to be such that
\begin{align}
    \lambda &\geq 2\Delta + 2C_1\phi^s; \label{lambdareq1}\\
    \lambda\sqrt{|S|} &\geq 2(\rho_{\max}\|\tilde\beta-\beta^*\|_2  + 3\phi^s\rho_{\max}(K-1)^{1/2}\|D_\beta\|_2), \label{lambdareq2}
\end{align}
to ensure that $\hat\beta - \beta^*$ is approximately sparse in the sense that $\|\hat\beta-\beta^*\|_1 \leq 5\sqrt{|S|}\|\hat\beta-\beta^*\|_2$, which will be important when we apply the restricted eigenvalue condition later. Now, with such choice of $\lambda$, by the restricted eigenvalue condition in Lemma~\ref{RE}, we have that
\begin{align*}
    &\frac{1}{T}\sum_{t=1}^{T}\sum_{j=1}^K m_{j,\theta}(Y_{t-s}^{t+s})\|(B_j^*-\hat B_j)^\top Y_{t-1}\|_2^2 \\
    &\geq \alpha \|\hat\beta-\beta^*\|_2^2 - \tau_{RE}\|\hat\beta-\beta^*\|_1^2 \\
    &\geq \alpha \|\hat\beta-\beta^*\|_2^2 -25\tau_{RE} |S| \|\hat\beta-\beta^*\|_2^2 \\ 
    &\geq \alpha/2 \|\hat\beta-\beta^*\|_2^2,
\end{align*}
for sufficiently large $T$. Combined with \eqref{preRE}, we have that
\begin{multline*}
    \frac{\alpha}{2} \|\hat\beta-\beta^*\|_2^2 \leq \frac{1}{T}\sum_{t=1}^{T}\sum_{j=1}^K m_{j,\theta}(Y_{t-s}^{t+s})\|(B_j^*-\hat B_j)^\top Y_{t-1}\|_2^2 \\
    \leq \frac{3\lambda}{2}\sqrt{|S|}\|\hat\beta-\beta^*\|_2 + (\rho_{\max}\|\tilde\beta-\beta^*\|_2 + 3\phi^s\rho_{\max}(K-1)^{1/2}\|D_\beta\|_2 )\|\hat\beta-\beta^*\|_2.
\end{multline*}
This in turn implies that
\begin{equation*}
    \|\hat\beta-\beta^*\|_2 \leq \frac{3\lambda\sqrt{|S|}}{\alpha} + \frac{2}{\alpha} \left(\rho_{\max}\|\tilde\beta-\beta^*\|_2 +  3\phi^s\rho_{\max}(K-1)^{1/2}\|D_\beta\|_2 \right).
\end{equation*}

\textbf{Step I-II: estimation error of $p$.} Next, we consider the estimation of the transition probabilities $p_{ij}$. For the ease of notation, we define $m_{i\cdot,\theta} = \sum_{j=1}^K m_{ij,\theta}$ and $w_{i\cdot,\theta} = \sum_{j=1}^K w_{ij,\theta}$. Note that the update has a closed form solution
\begin{equation*}
    \hat p_{ij} = \left\{\frac{1}{T}\sum_{t=1}^T m_{ij,\theta}(Y_{t-s}^{t+s})\right\}\left\{\frac{1}{T}\sum_{t=1}^T m_{i\cdot,\theta}(Y_{t-s}^{t+s})\right\}^{-1},
\end{equation*}
and therefore,
\begin{align*}
    \hat p_{ij} - p_{ij}^* &=  \left\{\frac{1}{T}\sum_{t=1}^T m_{ij,\theta}(Y_{t-s}^{t+s})\right\}\left\{\frac{1}{T}\sum_{t=1}^T m_{i\cdot,\theta}(Y_{t-s}^{t+s})\right\}^{-1} - p_{ij}^* \\
    &= \left\{\frac{1}{T}\sum_{t=1}^T m_{ij,\theta}(Y_{t-s}^{t+s})\right\}\left\{\frac{1}{T}\sum_{t=1}^T m_{i\cdot,\theta}(Y_{t-s}^{t+s})\right\}^{-1} \\
    &\qquad \quad -\left\{E\left[\frac{1}{T}\sum_{t=1}^T m_{ij,\theta}(Y_{t-s}^{t+s})\right]\right\}\left\{E\left[\frac{1}{T}\sum_{t=1}^T m_{i\cdot,\theta}(Y_{t-s}^{t+s})\right]\right\}^{-1} \\
    &\quad + \left\{E\left[\frac{1}{T}\sum_{t=1}^T m_{ij,\theta}(Y_{t-s}^{t+s})\right]\right\}\left\{E\left[\frac{1}{T}\sum_{t=1}^T m_{i\cdot,\theta}(Y_{t-s}^{t+s})\right]\right\}^{-1} \\
    &\qquad \quad - \left\{E\left[\frac{1}{T}\sum_{t=1}^T w_{ij,\theta}(Y_{0}^T)\right]\right\}\left\{E\left[\frac{1}{T}\sum_{t=1}^T w_{i\cdot,\theta}(Y_0^T)\right]\right\}^{-1} \\
    &\quad + \left\{E\left[\frac{1}{T}\sum_{t=1}^T w_{ij,\theta}(Y_{0}^T)\right]\right\}\left\{E\left[\frac{1}{T}\sum_{t=1}^T w_{i\cdot,\theta}(Y_0^T)\right]\right\}^{-1} - p_{ij}^*.
\end{align*}
We denote the three differences as term $p.1$, term $p.2$, and term $p.3$. Similarly, the vector form $\hat p - p^*$ can be decomposed into 3 terms, which we denote as $(\hat p -p^*)_1$, $(\hat p -p^*)_2$ and $(\hat p -p^*)_3$. In particular, the third term $(\hat p -p^*)_3$ corresponds to the difference between the updated parameter value in a population EM algorithm and the true parameter value, that is, $M_p(\theta) - p^*$. Under Assumption~\ref{DB}, term $p.1$ is upper bounded by $\Delta_p$ with high probability, and thus $\|(\hat p - p^*)_1\|_2 \leq K\Delta_p$ with high probability. Finally, by Lemma~\ref{lemma:approximationerror}, $\|(\hat p -p^*)_2\|_2$ is upper bounded by $C_2 \phi^s$ for some constant $C_2$ for sufficiently large $T$. To see this, we note that term $p.2$ can be split into two differences, with the first one being
\begin{multline*}
    \left\{E\left[\frac{1}{T}\sum_{t=1}^T m_{ij,\theta}(Y_{t-s}^{t+s})\right]\right\}\left\{E\left[\frac{1}{T}\sum_{t=1}^T m_{i\cdot,\theta}(Y_{t-s}^{t+s})\right]\right\}^{-1} - \\ \left\{E\left[\frac{1}{T}\sum_{t=1}^T m_{ij,\theta}(Y_{t-s}^{t+s})\right]\right\}\left\{E\left[\frac{1}{T}\sum_{t=1}^T w_{i\cdot,\theta}(Y_0^T)\right]\right\}^{-1},
\end{multline*}
whose absolute value is bounded by
\begin{equation*}
    \left|\left\{E\left[\frac{1}{T}\sum_{t=1}^T m_{i\cdot,\theta}(Y_{t-s}^{t+s})\right]\right\}^{-1} - \left\{E\left[\frac{1}{T}\sum_{t=1}^T w_{i\cdot,\theta}(Y_0^T)\right]\right\}^{-1}\right|,
\end{equation*}
as $m_{ij,\theta}$ is upper bounded by 1 uniformly. The above absolute value is in turn upper bounded by
\begin{equation*}
    \left|\left\{E\left[\frac{1}{T}\sum_{t=1}^T m_{i\cdot,\theta}(Y_{t-s}^{t+s})\right]\right\}^{-1}\right| \left|\left\{E\left[\frac{1}{T}\sum_{t=1}^T w_{i\cdot,\theta}(Y_0^T)\right]\right\}^{-1}\right|E\left[\frac{1}{T}\sum_{t=1}^T\left| m_{i\cdot,\theta}(Y_{t-s}^{t+s}) - w_{i\cdot,\theta}(Y_0^T)\right|\right].
\end{equation*}
The third factor is upper bounded by a constant multiple of $\phi^s$, as the difference between $m_{i\cdot,\theta}$ and $w_{i\cdot,\theta}$ is upper bounded by $\phi^{s-1}$ which can be shown in a similar fashion as in the proof of Lemma~\ref{lemma:approximationerror}. By Assumption~\ref{denomtransition}, the second factor is upper bounded by $\iota^{-1}$, and for sufficiently large value of $T$ and hence $s$, the first factor is upper bounded by $2\iota^{-1}$. The second difference we need to consider is
\begin{equation*}
   E\left[\frac{1}{T}\sum_{t=1}^T \left|m_{ij,\theta}(Y_{t-s}^{t+s})-w_{ij,\theta}(Y_0^T)\right|\right]\left\{E\left[\frac{1}{T}\sum_{t=1}^T w_{i\cdot,\theta}(Y_0^T)\right]\right\}^{-1},
\end{equation*}
which is upper bounded by a constant multiple of $\iota^{-1}\phi^s$ under Assumption~\ref{denomtransition} and by Lemma~\ref{lemma:approximationerror}. Therefore, $\|(\hat p - p^*)_2\|_2$ is upper bounded by $C_2\phi^s$ for some constant $C_2$ that only depends on $\iota$. Combining the upper bounds we have derived, we have an upper bound on the estimation error of the transition probabilities
\begin{equation*}
    \|\hat p - p\|_2 \leq K \Delta_p + C_2\phi^s + \|M_p(\theta) - p^*\|_2.
\end{equation*}

\textbf{Step I-III: estimation error of $\sigma$.} Next we consider the update of $\sigma^2$. Recall that for each regime $1 \leq j \leq K$, the update for $\sigma_j^2$ has a closed-form expression given by
\begin{equation*}
    \hat\sigma_j^2 = \left\{\frac{1}{T}\sum_{t=1}^T \frac{1}{d} m_{j,\theta} \left\|Y_{t} - \left(\textnormal{Id}_d \otimes Y_{t-1}\right)^\top \hat\beta_j\right\|_2^2\right\}\left\{\frac{1}{T}\sum_{t=1}^T m_{j,\theta} \right\}^{-1}.
\end{equation*}
Then, using this expression, we can decompose the estimation error as follows
\begin{align*}
    \hat\sigma_j^2 - (\sigma_j^2)^* &=\left\{\frac{1}{T}\sum_{t=1}^T \frac{1}{d} m_{j,\theta} \left\|Y_{t} - \left(\textnormal{Id}_d \otimes Y_{t-1}\right)^\top \hat\beta_j\right\|_2^2\right\}
    \left\{\frac{1}{T}\sum_{t=1}^T m_{j,\theta} \right\}^{-1} \\
    &\qquad \quad - 
    E\left[ \frac{1}{T}\sum_{t=1}^T \frac{1}{d} m_{j,\theta} \left\|Y_{t} - \left(\textnormal{Id}_d \otimes Y_{t-1}\right)^\top \hat\beta_j\right\|_2^2\right] 
    E\left[\frac{1}{T}\sum_{t=1}^T m_{j,\theta} \right]^{-1} \\
    &\quad + E\left[ \frac{1}{T}\sum_{t=1}^T \frac{1}{d} m_{j,\theta} \left\|Y_{t} - \left(\textnormal{Id}_d \otimes Y_{t-1}\right)^\top \hat\beta_j\right\|_2^2\right] 
    E\left[\frac{1}{T}\sum_{t=1}^T m_{j,\theta} \right]^{-1} \\
    &\qquad \quad -
    E\left[ \frac{1}{T}\sum_{t=1}^T \frac{1}{d} w_{j,\theta} \left\|Y_{t} - \left(\textnormal{Id}_d \otimes Y_{t-1}\right)^\top \hat\beta_j\right\|_2^2\right] 
    E\left[\frac{1}{T}\sum_{t=1}^T w_{j,\theta} \right]^{-1}
    \\
    &\quad + E\left[ \frac{1}{T}\sum_{t=1}^T \frac{1}{d} w_{j,\theta} \left\|Y_{t} - \left(\textnormal{Id}_d \otimes Y_{t-1}\right)^\top \hat\beta_j\right\|_2^2\right] 
    E\left[\frac{1}{T}\sum_{t=1}^T w_{j,\theta} \right]^{-1} \\
    &\qquad \quad - 
    E\left[ \frac{1}{T}\sum_{t=1}^T \frac{1}{d} w_{j,\theta} \left\|Y_{t} - \left(\textnormal{Id}_d \otimes Y_{t-1}\right)^\top \tilde\beta_j\right\|_2^2\right] 
    E\left[\frac{1}{T}\sum_{t=1}^T w_{j,\theta} \right]^{-1}\\
    &\quad E\left[ \frac{1}{T}\sum_{t=1}^T \frac{1}{d} w_{j,\theta} \left\|Y_{t} - \left(\textnormal{Id}_d \otimes Y_{t-1}\right)^\top \tilde\beta_j\right\|_2^2\right] 
    E\left[\frac{1}{T}\sum_{t=1}^T w_{j,\theta} \right]^{-1}
    - (\sigma_j^2)^*. 
\end{align*}
We refer to these four differences as term $\sigma.1$ to term $\sigma.4$ according to the order in the above display. Under Assumption~\ref{DB}, term $\sigma.1$ is upper bounded by $\Delta_\sigma$ with high probability. Term $\sigma.4$ is equivalent to $|(M_\sigma(\theta))_j^2 - (\sigma_j^*)^2|$, which is the error of the population EM estimate. Term $\sigma.2$ involves the approximation error of $m_{j,\theta}$ in both the numerator and denominator. To upper bound this term, we further decompose it into 2 terms, with the first one corresponding to the approximation error of $m_{j,\theta}$ in the numerator,
\begin{multline*}
    E\left[ \frac{1}{T}\sum_{t=1}^T \frac{1}{d} m_{j,\theta} \left\|Y_{t} - \left(\textnormal{Id}_d \otimes Y_{t-1}\right)^\top \hat\beta_j\right\|_2^2\right] 
    E\left[\frac{1}{T}\sum_{t=1}^T w_{j,\theta} \right]^{-1} - \\
    E\left[ \frac{1}{T}\sum_{t=1}^T \frac{1}{d} w_{j,\theta} \left\|Y_{t} - \left(\textnormal{Id}_d \otimes Y_{t-1}\right)^\top \hat\beta_j\right\|_2^2\right] 
    E\left[\frac{1}{T}\sum_{t=1}^T w_{j,\theta} \right]^{-1},
\end{multline*}
whose absolute value is upper bounded, by Assumption~\ref{denomtransition}, by 
\begin{equation*}
    \iota^{-1} E\left[\frac{1}{T}\sum_{t=1}^T \frac{1}{d} \left|m_{j,\theta}(Y_{t-s}^{t+s}) -w_{j,\theta}(Y_0^T)\right| \left\|Y_{t} - \left(\textnormal{Id}_d \otimes Y_{t-1}\right)^\top \hat\beta_j\right\|_2^2\right].
\end{equation*}
By Lemma~\ref{lemma:approximationerror}, this is upper bounded by
\begin{equation*}
    3\iota^{-1} \phi^s E\left[ \frac{1}{d} \left\|Y_{t} - \left(\textnormal{Id}_d \otimes Y_{t-1}\right)^\top \hat\beta_j\right\|_2^2\right] = 3\iota^{-1} \phi^s E\left[ \frac{1}{d}\sum_{i=1}^d \left(Y_{ti} - \hat\beta_{ji}^\top Y_{t-1}\right)^2\right],
\end{equation*}
which is upper bounded by $C_{3,1} \phi^s$ for some constant $C_{3,1}$. To see this, we note that the expectation can be upper bounded by some constant due to the fact that $Y_t$ is a sub-Gaussian random vector and that the estimation error of $\hat\beta$ is upper bounded by some constant as we will show later in the proof. The second term in term $\sigma.2$ corresponds to the approximation error of $m_{j,\theta}$ in the denominator,
\begin{align*}
    &\quad E\left[ \frac{1}{T}\sum_{t=1}^T \frac{1}{d} m_{j,\theta} \left\|Y_{t} - \left(\textnormal{Id}_d \otimes Y_{t-1}\right)^\top \hat\beta_j\right\|_2^2\right] 
    E\left[\frac{1}{T}\sum_{t=1}^T m_{j,\theta} \right]^{-1} \\
    &\qquad - E\left[ \frac{1}{T}\sum_{t=1}^T \frac{1}{d} m_{j,\theta} \left\|Y_{t} - \left(\textnormal{Id}_d \otimes Y_{t-1}\right)^\top \hat\beta_j\right\|_2^2\right] 
    E\left[\frac{1}{T}\sum_{t=1}^T w_{j,\theta} \right]^{-1} \\
    &= E\left[ \frac{1}{T}\sum_{t=1}^T \frac{1}{d} m_{j,\theta} \left\|Y_{t} - \left(\textnormal{Id}_d \otimes Y_{t-1}\right)^\top \hat\beta_j\right\|_2^2\right]
    E\left[\frac{1}{T}\sum_{t=1}^T m_{j,\theta} \right]^{-1}
    E\left[\frac{1}{T}\sum_{t=1}^T w_{j,\theta} \right]^{-1} \\
    &\qquad \times \left\{E\left[\frac{1}{T}\sum_{t=1}^T w_{j,\theta} \right] - E\left[\frac{1}{T}\sum_{t=1}^T m_{j,\theta} \right]\right\}.
\end{align*}
To upper bound this term, we note that
\begin{equation*}
    \left|E\left[\frac{1}{T}\sum_{t=1}^T w_{j,\theta} \right] - E\left[\frac{1}{T}\sum_{t=1}^T m_{j,\theta} \right]\right| \leq 3\phi^s
\end{equation*}
by Lemma~\ref{lemma:approximationerror}. Combined with the fact that $E[\sum_{t=1}^T w_{j,\theta}/T] \geq \iota$ from Assumption~\ref{denomtransition}, we have $E[\sum_{t=1}^T m_{j,\theta}/T] \geq \iota/2$ for sufficiently large $T$ (hence sufficiently large $s$). Therefore, this second term in term $\sigma.2$ is upper bounded by $C_{3,2}\phi^s$ for some constant $C_{3,2}$. Finally, we study term $\sigma.3$ which corresponds to the estimation error resulting from the estimation error of $\beta$, which is
\begin{equation*}
    E\left[\frac{1}{T}\sum_{t=1}^T w_{j,\theta}\right]^{-1}
    E\left[\frac{1}{T}\sum_{t=1}^T \frac{1}{d} w_{j,\theta}\sum_{i=1}^d \left\{ \left(Y_{ti} - \hat\beta_{ji}^\top Y_{t-1}\right)^2 -\left(Y_{ti} - \tilde\beta_{ji}^\top Y_{t-1}\right)^2 \right\}\right].
\end{equation*}
The first factor is upper bounded by $\iota^{-1}$ and the second factor is equivalent to
\begin{multline*}
    \frac{2}{d}\sum_{i=1}^d E\left[\frac{1}{T}\sum_{t=1}^T w_{j,\theta} \left(Y_{ti} - \tilde\beta_{ji}^\top Y_{t-1}\right)Y_{t-1}^\top(\hat\beta_{ji}-\tilde\beta_{ji}) \right]
    + \\
    \frac{1}{d}\sum_{i=1}^d E\left[\frac{1}{T}\sum_{t=1}^T  w_{j,\theta} (\hat\beta_{ji} - \tilde\beta_{ji})^\top Y_{t-1}Y_{t-1}^\top (\hat\beta_{ji} - \tilde\beta_{ji})\right],
\end{multline*}
and by the definition of $\tilde\beta$ the first term is 0. The second term is upper bounded by $\rho_{\max}\|\hat\beta - \tilde\beta\|_2^2/d$ by Assumption~\ref{cond:minmaxeigenvalue}, which is in turn upper bounded by $2\rho_{\max}(\|\hat\beta - \beta^*\|_2^2 + \|\tilde\beta - \beta^*\|_2^2)/d$ by triangle inequality. Combining all these upper bounds, we have that for the subvector corresponding to $\sigma_j^2$ across all regimes, 
\begin{align*}
    &\quad \|\hat\sigma^2 - (\sigma^*)^2\|_2 \\
    &\leq \sqrt{K}\Delta_\sigma + \|(M_\sigma(\theta))^2 - (\sigma^*)^2\|_2 + C_3\phi^s + \frac{2\sqrt{K}\rho_{\max}}{d}\left(\|\hat\beta - \beta^*\|_2^2 + \|\tilde\beta - \beta^*\|_2^2 \right) \\
    &\leq \sqrt{K}\Delta_\sigma + \|(M_\sigma(\theta))^2 - (\sigma^*)^2\|_2 + C_3\phi^s + \frac{2\sqrt{K}\rho_{\max}}{d}\|\hat\beta - \beta^*\|_2^2 + \frac{2\sqrt{K}\rho_{\max}r}{d}\|\tilde\beta - \beta^*\|_2,
\end{align*}
where the third line follows as $\|M(\theta) - \theta^*\| \leq \|\theta - \theta^*\| \leq r$ and $C_3$ absorbs $\sqrt{K}$.

\textbf{Step I-IV: estimation error of $\theta$ and requirements on $\lambda$.} Combining the estimation error for $\beta$, $p$ and $\sigma$, we get that
\small
\begin{align*}
    &\quad \|\hat\theta - \theta^*\|_2 \\ 
    &= \left\{\|\hat\beta - \beta^*\|_2^2 + \|\hat p - p^*\|_2^2 + \|\hat\sigma^2 - (\sigma^*)^2\|_2^2\right\}^{1/2} \\
    &\leq \Bigg\{ \left[\frac{3\lambda\sqrt{|S|}}{\alpha} + \frac{2}{\alpha} \left(\rho_{\max}\|\tilde\beta-\beta^*\|_2 +  3\phi^s\rho_{\max}(K-1)^{1/2}\|D_\beta\|_2 \right)\right]^2 \\
    &\quad + \left[K \Delta_p + C_2\phi^s + \|M_p(\theta) - p^*\|_2\right]^2 \\
    &\quad + \left[\sqrt{K}\Delta_\sigma + \|(M_\sigma(\theta))^2 - (\sigma^*)^2\|_2 + C_3\phi^s + \frac{2\sqrt{K}\rho_{\max}}{d}\|\hat\beta - \beta^*\|_2^2 + \frac{2\sqrt{K}\rho_{\max}r}{d}\|\tilde\beta - \beta^*\|_2\right]^2\Bigg\}^{1/2} \\
    &\leq \frac{3\lambda\sqrt{|S|}}{\alpha} + \frac{2}{\alpha} 3\phi^s\rho_{\max}(K-1)^{1/2}\|D_\beta\|_2 + K \Delta_p + C_2\phi^s + \sqrt{K} \Delta_\sigma + C_3\phi^s + \frac{2\sqrt{K}\rho_{\max}}{d}\|\hat\beta - \beta^*\|_2^2 \\
    &\quad + \left\{\left(\frac{2\rho_{\max}}{\alpha}\right)^2 \|\tilde\beta - \beta^*\|_2^2 + \|M_p(\theta) - p^*\|_2^2 + 2 \|(M_\sigma(\theta))^2 - (\sigma^*)^2\|_2^2 + 2\left(\frac{2\sqrt{K}\rho_{\max}r}{d}\right)^2\|\tilde\beta - \beta^*\|_2^2\right\}^{1/2} \\
    &\leq \frac{3\lambda\sqrt{|S|}}{\alpha} + \frac{2}{\alpha} 3\phi^s\rho_{\max}(K-1)^{1/2}\|D_\beta\|_2 + K \Delta_p + C_2\phi^s + \sqrt{K} \Delta_\sigma + C_3\phi^s + \frac{2\sqrt{K} \rho_{\max}}{d}\|\hat\beta - \beta^*\|_2^2 \\
    &\quad + \max\left\{\left(\frac{2\rho_{\max}}{\alpha}\right)^2 + 2K\left(\frac{2\rho_{\max}r}{d}\right)^2, 2\right\}^{1/2} \left\{\|\tilde\beta - \beta^*\|_2^2 + \|M_p(\theta) - p^*\|_2^2 + \|(M_\sigma(\theta))^2 - (\sigma^*)^2\|^2 \right\}^{1/2}  \\
    &= \frac{3\lambda\sqrt{|S|}}{\alpha} + \frac{2}{\alpha} 3\phi^s\rho_{\max}(K-1)^{1/2}\|D_\beta\|_2 + K \Delta_p + C_2\phi^s + \sqrt{K} \Delta_\sigma + C_3\phi^s + \frac{2\sqrt{K}\rho_{\max}}{d}\|\hat\beta - \beta^*\|_2^2 \\
    &\quad + \max\left\{\left(\frac{2\rho_{\max}}{\alpha}\right)^2 + 2K \left(\frac{2\rho_{\max}r}{d}\right)^2, 2\right\}^{1/2} \|M(\theta) - \theta^*\|_2 \\
    &\leq \frac{3\lambda\sqrt{|S|}}{\alpha} + \frac{6}{\alpha} \phi^s\rho_{\max}(K-1)^{1/2}\|D_\beta\|_2 + K \Delta_p + C_2\phi^s + \sqrt{K} \Delta_\sigma + C_3\phi^s + \frac{2\sqrt{K}\rho_{\max}}{d}\|\hat\beta - \beta^*\|_2^2 \\
    &\quad + \eta\kappa \|\theta - \theta^*\|_2,
\end{align*}
\normalsize
where the last line follows from Lemma~\ref{populationcontraction} and the definition of $\eta$. Now suppose we choose $\lambda$ in a way that
\begin{equation}\label{lambdareq3}
    \frac{\lambda \sqrt{|S|}}{\alpha} \geq \frac{6}{\alpha} \phi^s\rho_{\max}(K-1)^{1/2}\|D_\beta\|_2 + K \Delta_p + C_2\phi^s + \sqrt{K} \Delta_\sigma + C_3\phi^s  + \eta\kappa \|\theta - \theta^*\|_2,
\end{equation}
we would have the following upper bound on the estimation error,
\begin{equation*}
    \|\hat\theta - \theta^*\|_2 \leq \frac{4\lambda\sqrt{|S|}}{\alpha} + \frac{2\sqrt{K}\rho_{\max}}{d}\|\hat\beta - \beta^*\|_2^2.
\end{equation*}

Note that the requirement on $\lambda$ in \eqref{lambdareq3} is stronger than the one in \eqref{lambdareq2}, because
\begin{align*}
    &\quad \frac{6}{\alpha} \phi^s\rho_{\max}(K-1)^{1/2}\|D_\beta\|_2 + K \Delta_p + C_2\phi^s + \sqrt{K} \Delta_\sigma + C_3\phi^s  + \eta\kappa \|\theta - \theta^*\|_2 \\
    &\geq \frac{6}{\alpha} \phi^s\rho_{\max}(K-1)^{1/2}\|D_\beta\|_2 + \eta \|M(\theta) - \theta^*\|_2 \\
    &\geq \frac{6}{\alpha} \phi^s\rho_{\max}(K-1)^{1/2}\|D_\beta\|_2 + \eta \|\tilde\beta-\beta^*\|_2 \\
    &\geq \frac{6}{\alpha}\phi^s\rho_{\max}(K-1)^{1/2}\|D_\beta\|_2 + \frac{2}{\alpha}\rho_{\max}\|\tilde\beta-\beta^*\|_2.
\end{align*}
Therefore, if $\lambda$ is chosen such that \eqref{lambdareq3} is satisfied, this $\lambda$ will also satisfy \eqref{lambdareq2}. Recall that we have the following upper bound for the estimation error of $\beta$ when $\lambda$ satisfies \eqref{lambdareq1} and \eqref{lambdareq2},
\begin{equation*}
    \|\hat\beta-\beta^*\|_2 \leq \frac{3\lambda\sqrt{|S|}}{\alpha} + \frac{2}{\alpha} \left(\rho_{\max}\|\tilde\beta-\beta^*\|_2 +  3\phi^s\rho_{\max}(K-1)^{1/2}\|D_\beta\|_2 \right),
\end{equation*}
and therefore when $\lambda$ satisfies \eqref{lambdareq1} and \eqref{lambdareq3},
\begin{equation*}
    \|\hat\beta-\beta^*\|_2 \leq \frac{4\lambda\sqrt{|S|}}{\alpha}.
\end{equation*}
Consequently, we have the following upper bound for the overall estimation error
\begin{equation*}
    \|\hat\theta - \theta^*\|_2 \leq \frac{4\lambda\sqrt{|S|}}{\alpha} + \frac{2\sqrt{K}\rho_{\max}}{d}\left(\frac{4\lambda\sqrt{|S|}}{\alpha}\right)^2.
\end{equation*}
Finally, suppose that $\lambda$ is such that
\begin{equation}\label{lambdareq4}
    \frac{4\lambda\sqrt{|S|}}{\alpha} \leq 4.1\eta\kappa r,
\end{equation}
the estimation error is upper bounded by
\begin{equation}\label{onesteperror}
    \|\hat\theta - \theta^*\|_2 \leq \left(1 + \frac{8.2\sqrt{K}\rho_{\max}\eta\kappa r}{d}\right) \frac{4\lambda\sqrt{|S|}}{\alpha}.
\end{equation}
We note again that to achieve the error bound in \eqref{onesteperror}, $\lambda$ needs to be chosen such that
\begin{align*}
    \frac{4\lambda\sqrt{|S|}}{\alpha} &\leq 4.1\eta\kappa r, \\
    \lambda &\geq 2\Delta + 2C_1\phi^s, \\
    \frac{\lambda \sqrt{|S|}}{\alpha} &\geq \frac{6}{\alpha} \phi^s\rho_{\max}(K-1)^{1/2}\|D_\beta\|_2 + K \Delta_p + C_2\phi^s + \sqrt{K} \Delta_\sigma + C_3\phi^s  + \eta\kappa \|\theta - \theta^*\|_2,
\end{align*}
which are the requirements in \eqref{lambdareq1}, \eqref{lambdareq3} and \eqref{lambdareq4}.

\textbf{Step II: induction.} Now we study our specified choice of $\lambda$. We use induction to show that in each iteration, the requirements for $\lambda$ are met with our choice, and hence the estimation error in each iteration is upper bounded according to \eqref{onesteperror}. To start, recall that we choose $\lambda$ such that
\begin{align*}
    \lambda^{(q)} &= \tau^q \frac{\alpha}{4\sqrt{|S|}}\left(1 + \frac{8.2\sqrt{K}\rho_{\max}\eta\kappa r}{d}\right)^{-1}\|\theta^{(0)} - \theta^*\|_2 + \frac{1-\tau^q}{1-\tau} \max\Bigg\{2\Delta + 2C_1\phi^s, \\
    &\qquad \frac{\alpha}{\sqrt{|S|}}\left(\frac{6}{\alpha} \phi^s\rho_{\max}(K-1)^{1/2}\|D_\beta\|_2 + K \Delta_p + C_2\phi^s + \sqrt{K}\Delta_\sigma + C_3\phi^s  \right)\Bigg\},
\end{align*}
where $\tau = 4\kappa \eta \left(1 + \frac{8.2\sqrt{K}\rho_{\max}\eta\kappa r}{d}\right)$. 

\textbf{Step II-I: $q=1$.} When $q = 1$, we have
\begin{align*}
    \lambda^{(1)} &\geq \frac{1-\tau}{1-\tau} (2\Delta + 2C_1\phi^s) = 2\Delta + 2C_1\phi^,
\end{align*}
and thus \eqref{lambdareq1} is met. Moreover,
\begin{align*}
    \lambda^{(1)} &\geq \frac{1-\tau}{1-\tau}\frac{\alpha}{\sqrt{|S|}}\left(\frac{6}{\alpha} \phi^s\rho_{\max}(K-1)^{1/2}\|D_\beta\|_2 + K \Delta_p + C_2\phi^s + \sqrt{K}\Delta_\sigma + C_3\phi^s  \right) \\
    &\quad + \tau \frac{\alpha}{4\sqrt{|S|}}\left(1 + \frac{8.2\sqrt{K}\rho_{\max}\eta\kappa r}{d}\right)^{-1}\|\theta^{(0)} - \theta^*\|_2 \\
    &\geq \frac{\alpha}{\sqrt{|S|}}\left\{\left(\frac{6}{\alpha} \phi^s\rho_{\max}(K-1)^{1/2}\|D_\beta\|_2 + K \Delta_p + C_2\phi^s + \sqrt{K}\Delta_\sigma + C_3\phi^s  \right) + \kappa \eta \|\theta^{(0)} - \theta^*\|_2\right\},
\end{align*}
and therefore \eqref{lambdareq3} is met. Finally, we have that
\begin{align*}
    &\quad \frac{4\lambda^{(1)}\sqrt{|S|}}{\alpha} \\
    &= \frac{4\sqrt{|S|}}{\alpha}\max\left\{2\Delta + 2C_1\phi^s, \frac{\alpha}{\sqrt{|S|}}\left(\frac{6}{\alpha} \phi^s\rho_{\max}(K-1)^{1/2}\|D_\beta\|_2 + K \Delta_p + C_2\phi^s + \sqrt{K}\Delta_\sigma + C_3\phi^s  \right)\right\} \\ 
    &\quad + \tau \left(1 + \frac{8.2\sqrt{K}\rho_{\max}\eta\kappa r}{d}\right)^{-1}\|\theta^{(0)} - \theta^*\|_2 \\
    &\leq \frac{4\sqrt{|S|}}{\alpha}\max\left\{2\Delta + 2C_1\phi^s, \frac{\alpha}{\sqrt{|S|}}\left(\frac{6}{\alpha} \phi^s\rho_{\max}(K-1)^{1/2}\|D_\beta\|_2 + K \Delta_p + C_2\phi^s + \sqrt{K}\Delta_\sigma + C_3\phi^s  \right)\right\} \\ 
    &\quad + 4\eta\kappa r,
\end{align*}
and therefore \eqref{lambdareq4} is met if 
\begin{equation*}
    \frac{8\sqrt{|S|}}{\alpha}(\Delta + C_1\phi^s) \leq 0.1 \eta\kappa r,
\end{equation*}
and
\begin{equation*}
    4\left(\frac{6}{\alpha} \phi^s\rho_{\max}(K-1)^{1/2}\|D_\beta\|_2 + K \Delta_p + C_2\phi^s + \sqrt{K}\Delta_\sigma + C_3\phi^s  \right) \leq 0.1 \eta\kappa r.
\end{equation*}
With the choice that $s = -\log T/(2\log\phi)$, we have $\phi^s = T^{-1/2}$, and therefore $\phi^s\sqrt{|S|}$ will be approaching 0 if $|S| = o(T)$. Moreover, we have assumed that $\sqrt{|S|}\Delta$, $\Delta_p$ and $\Delta_\sigma$ are all $o(1)$ when $T$ approaches infinity. Therefore, the quantities $\frac{8\sqrt{|S|}}{\alpha}(\Delta + C_1\phi^s)$ and $ 4(\frac{6}{\alpha} \phi^s\rho_{\max}(K-1)^{1/2}\|D_\beta\|_2 + K \Delta_p + C_2\phi^s + \sqrt{K}\Delta_\sigma + C_3\phi^s)$ can be made arbitrarily small when $T$ is sufficiently large, and in particular, they will be smaller than $0.1\eta\kappa r$ for sufficiently large $T$.

\textbf{Step II-II: $q>1$.} Next we show that if $\lambda^{(q)}$ satisfies \eqref{lambdareq1}, \eqref{lambdareq3} and \eqref{lambdareq4} so that
\begin{equation*}
    \|\theta^{(q)} - \theta^*\|_2 \leq \left(1 + \frac{8.2\sqrt{K}\rho_{\max}\eta\kappa r}{d}\right) \frac{4\lambda^{(q)}\sqrt{|S|}}{\alpha},
\end{equation*}
then $\lambda^{(q+1)}$ satisfies \eqref{lambdareq1}, \eqref{lambdareq3} and \eqref{lambdareq4} so that
\begin{equation*}
    \|\theta^{(q+1)} - \theta^*\|_2 \leq \left(1 + \frac{8.2\sqrt{K}\rho_{\max}\eta\kappa r}{d}\right) \frac{4\lambda^{(q+1)}\sqrt{|S|}}{\alpha}.
\end{equation*}
To start, we note that in fact \eqref{lambdareq1} is always met. Indeed, as $\tau < 1$, we have
\begin{equation*}
    \lambda^{(q+1)} \geq \frac{1-\tau^{(q+1)}}{1-\tau} (2\Delta + 2C_1\phi^s) \geq 2\Delta + 2C_1\phi^s.
\end{equation*}
Also, \eqref{lambdareq4} is always met when $T$ is sufficiently large. Indeed, one term in $4\lambda^{(q+1)}\sqrt{|S|}/\alpha$ is upper bounded by
\begin{equation*}
    \tau^{q+1}\left(1+\frac{8.2\sqrt{K}\rho_{\max}\eta\kappa r}{d}\right)^{-1} \|\theta^{(0)} - \theta^*\|_2 \leq 4\eta\kappa r \tau^q \leq 4\eta\kappa r,
\end{equation*}
as $\tau < 1$. Thus, \eqref{lambdareq4} is satisfied if 
\begin{multline*}
    \frac{4\sqrt{|S|}}{\alpha}\frac{1-\tau^{q+1}}{1-\tau} \max\Bigg\{2\Delta + 2C_1\phi^s, \\
    \frac{\alpha}{\sqrt{|S|}}\left(\frac{6}{\alpha} \phi^s\rho_{\max}(K-1)^{1/2}\|D_\beta\|_2 + K \Delta_p + C_2\phi^s + \sqrt{K}\Delta_\sigma + C_3\phi^s  \right)\Bigg\} \leq 0.1\eta\kappa r,
\end{multline*}
which would be the case if 
\begin{multline*}
    \frac{4\sqrt{|S|}}{\alpha}\frac{1}{1-\tau} \max\Bigg\{2\Delta + 2C_1\phi^s, \\
    \frac{\alpha}{\sqrt{|S|}}\left(\frac{6}{\alpha} \phi^s\rho_{\max}(K-1)^{1/2}\|D_\beta\|_2 + K \Delta_p + C_2\phi^s + \sqrt{K}\Delta_\sigma + C_3\phi^s  \right)\Bigg\} \leq 0.1\eta\kappa r,
\end{multline*}
which is indeed the case for sufficiently large $T$. The argument is essentially the same as establishing \eqref{lambdareq4} for the case $q=1$. Therefore, it remains to establish \eqref{lambdareq3}, which in this case is equivalent to
\begin{equation*}
    \frac{\alpha}{\sqrt{|S|}}\left\{\frac{6}{\alpha} \phi^s\rho_{\max}(K-1)^{1/2}\|D_\beta\|_2 + K \Delta_p + C_2\phi^s + \sqrt{K} \Delta_\sigma + C_3\phi^s  + \eta\kappa \|\theta^{(q)} - \theta^*\|_2\right\} \leq \lambda^{(q+1)}.
\end{equation*}
To show this, we start with the left-hand side quantity, and apply the bound we have for $\|\theta^{(q)} - \theta^*\|_2$,
\begin{align*}
    &\quad \frac{\alpha}{\sqrt{|S|}}\left\{\frac{6}{\alpha} \phi^s\rho_{\max}(K-1)^{1/2}\|D_\beta\|_2 + K \Delta_p + C_2\phi^s + \sqrt{K} \Delta_\sigma + C_3\phi^s  + \eta\kappa \|\theta^{(q)} - \theta^*\|_2\right\} \\
    &\leq \frac{\alpha}{\sqrt{|S|}}\left\{\frac{6}{\alpha} \phi^s\rho_{\max}(K-1)^{1/2}\|D_\beta\|_2 + K \Delta_p + C_2\phi^s + \sqrt{K} \Delta_\sigma + C_3\phi^s  \right\}  + \tau \lambda^{(q)} \\
    &\leq \max\left\{2\Delta + 2C_1\phi^s, \frac{\alpha}{\sqrt{|S|}}\left(\frac{6}{\alpha} \phi^s\rho_{\max}(K-1)^{1/2}\|D_\beta\|_2 + K \Delta_p + C_2\phi^s + \sqrt{K} \Delta_\sigma + C_3\phi^s  \right)\right\} \\
    &\quad + \tau \frac{1-\tau^q}{1-\tau} \max\left\{2\Delta + 2C_1\phi^s, \frac{\alpha}{\sqrt{|S|}}\left(\frac{6}{\alpha} \phi^s\rho_{\max}(K-1)^{1/2}\|D_\beta\|_2 + K \Delta_p + C_2\phi^s + \sqrt{K} \Delta_\sigma + C_3\phi^s  \right)\right\} \\
    &\quad + \tau^{q+1} \frac{\alpha}{4\sqrt{|S|}}\left(1 + \frac{8.2\sqrt{K}\rho_{\max}\eta\kappa r}{d}\right)^{-1}\|\theta^{(0)} - \theta^*\|_2 \\
    &\leq \frac{1-\tau^{q+1}}{1-\tau} \max\left\{2\Delta + 2C_1\phi^s, \frac{\alpha}{\sqrt{|S|}}\left(\frac{6}{\alpha} \phi^s\rho_{\max}(K-1)^{1/2}\|D_\beta\|_2 + K \Delta_p + C_2\phi^s + \sqrt{K} \Delta_\sigma + C_3\phi^s  \right)\right\}\\
    &\quad + \tau^{q+1} \frac{\alpha}{4\sqrt{|S|}}\left(1 + \frac{8.2\sqrt{K}\rho_{\max}\eta\kappa r}{d}\right)^{-1}\|\theta^{(0)} - \theta^*\|_2 \\
    &= \lambda^{(q+1)}.
\end{align*}
where we have made use of the explicit expression of our choice of $\lambda^{(q)}$ to obtain the second inequality and that of $\lambda^{(q+1)}$ in the final equality. 

We have shown that $\lambda^{(q+1)}$ satisfies \eqref{lambdareq1}, \eqref{lambdareq3} and \eqref{lambdareq4}. Thus, we have
\begin{equation*}
    \|\theta^{(q+1)} - \theta^*\|_2 \leq \left(1 + \frac{8.2\sqrt{K}\rho_{\max}\eta\kappa r}{d}\right) \frac{4\lambda^{(q+1)}\sqrt{|S|}}{\alpha},
\end{equation*}
and using the explicit expression for $\lambda^{(q+1)}$, we have
\begin{align*}
    \|\theta^{(q+1)} - \theta^*\|_2 &\leq \left(1 + \frac{8.2\sqrt{K}\rho_{\max}\eta\kappa r}{d}\right) \frac{4\sqrt{|S|}}{\alpha}\frac{1-\tau^{q+1}}{1-\tau} \max\Bigg\{2\Delta + 2C_1\phi^s, \\
    &\qquad \quad \frac{\alpha}{\sqrt{|S|}}\left(\frac{6}{\alpha} \phi^s\rho_{\max}(K-1)^{1/2}\|D_\beta\|_2 + K \Delta_p + C_2\phi^s + \sqrt{K} \Delta_\sigma + C_3\phi^s  \right)\Bigg\} \\
    &\quad +  \tau^{q+1} \|\theta^{(0)} - \theta^*\|_2.
\end{align*}
This shows that for sufficiently large $T$, $\theta^{(q)}$ indeed remains in $\mathcal{B}(r;\theta^*)$ for all $q$, as the first term in the upper bound above can be made arbitrarily small with sufficiently large $T$. This finishes the induction.

\end{proof}

\begin{proof}[Proof of Proposition~\ref{errororder}]
For the ease of notation, define the following functions
\begin{align*}
    h^{ijk}(Y_{t-1}^t) &= Y_{t-1,k}(Y_{ti}-\beta_{ji}^{*\top}Y_{t-1}); \\
    f_\theta^{ijk}(Y_{t-s}^{t+s}) &= Y_{t-1,k}(Y_{ti}-\beta_{ji}^{*\top}Y_{t-1})m_{j,\theta}(Y_{t-s}^{t+s}) = h^{ijk}(Y_{t-1}^t)m_{j,\theta}(Y_{t-s}^{t+s}).
\end{align*}
Define the set of parameter values $\Theta$ such that
\begin{equation*}
    \Theta = \{\theta = (\beta^\top,p^\top,\sigma^\top)^\top: \|\theta - \theta^*\|_2 \leq r, \|(\beta - \beta^*)_{S^C}\|_1 \leq 4\sqrt{|S|}\|\beta-\beta^*\|_2\}
\end{equation*}
Define the function class $\mathcal{G}$ as
\begin{equation*}
    \mathcal{G} = \bigcup_{i,j,k} \left\{f_{\theta}^{ijk}: \theta \in \Theta \right\}.
\end{equation*}
For
\begin{equation*}
    \delta = C \sqrt{\frac{|S|l(T,d)(\log T)^3 (\log K + \log d) + (\log T)^4}{T}},
\end{equation*}
we will show that as $T \rightarrow 0$,
\begin{equation*}
    P\left(\max_{i,j,k}\sup_{\theta\in\Theta}\left|\frac{1}{T}\sum_{t=1}^Tf_\theta^{ijk}(Y_{t-s}^{t+s}) -E\left[f_\theta^{ijk}(Y_{t-s}^{t+s})\right]\right| \geq \delta \right) \rightarrow 0,
\end{equation*}
that is,
\begin{equation}\label{uniformConcentration}
    P\left( \sup_{g \in \mathcal{G}} \left|\frac{1}{T}\sum_{t=1}^T g(Y_{t-s}^{t+s}) - E\left[g(Y_{t-s}^{t+s})\right]\right| \geq \delta \right) \leq u(T,d),
\end{equation}
for some function $u(T,d)$ that converges to 0 as $T$ approaches $\infty$. 

We shall prove this claim in three steps. In step I, we control the tail behavior of the random variable $h^{ijk}(Y_{t-1}^t)$, which will be useful for the concentration result we establish later. In step II, we establish a uniform concentration result over the function class $\mathcal{G}$ using entropy argument, which holds for independent and identically distributed observations. In step III, we establish a uniform concentration result for the original time series data that is $\beta$-mixing, based on the concentration result for i.i.d. data from step II.

\textbf{Step I: Control the tail of $h^{ijk}$}. 
To start, we recall that under the stationary distribution, $Y_t$ is a sub-Gaussian random vector and we define $$K_Y = \sup_{v \in \mathbb{R}^d,\|v\|=1}\sup_{q \geq 1}(E|v^\top Y_t|^q)^{1/q}q^{-1/2}.$$ Therefore,
\begin{equation}\label{subgaussianvector2}
    (E|v^\top Y_t|^q)^{1/q} \leq K_Yq^{1/2}, \quad \forall q \geq 1, \ \forall v\in\mathbb{R}^d, \|v\|=1.
\end{equation}
Next, consider the random variable $h^{ijk}(Y_{t-1}^t) = Y_{t-1,k}(Y_{ti}-\beta_{ji}^{*\top}Y_{t-1})$, and fix an integer $q \geq 1$.
\begin{align*}
    \left(E|h^{ijk}(Y_{t-1}^t)|^q\right)^{1/q} &= \left(E\left|Y_{t-1,k}(Y_{ti}-\beta_{ji}^{*\top}Y_{t-1})\right|^q\right)^{1/q} \\
    &\leq \left(E\left|Y_{t-1,k}Y_{ti}\right|^q\right)^{1/q} + \left(E\left|Y_{t-1,k}\beta_{ji}^{*\top}Y_{t-1}\right|^q\right)^{1/q} \\
    &\leq \left(E|Y_{t-1,k}|^{2q} E|Y_{t,i}|^{2q}\right)^{1/(2q)} + \left(E|Y_{t-1,k}|^{2q} E|\beta_{ji}^{*\top}Y_{t-1}|^{2q}\right)^{1/(2q)} \\
    &= \left(E|Y_{t-1,k}|^{2q}\right)^{1/(2q)} \left(E|Y_{t,i}|^{2q}\right)^{1/(2q)} + \left(E|Y_{t-1,k}|^{2q}\right)^{1/(2q)} \left(E|\beta_{ji}^{*\top}Y_{t-1}|^{2q}\right)^{1/(2q)} \\
    &\leq K_Y (2q)^{1/2}K_Y (2q)^{1/2} + K_Y (2q)^{1/2}K_Y (2q)^{1/2} \\
    &= 4K_Y^2q,
\end{align*}
where the second line follows from triangle inequality, the third line follows from Cauchy-Schwarz inequality, and the fifth line follows from \eqref{subgaussianvector2}. In particular, the argument above holds for any $q \geq 1$ and for any $(i,j,k)$, and therefore, $(E|h^{ijk}(Y_{t-1}^t)|^q)^{1/q} \leq 4K_Y^2q$ for all $q \geq 1$ and $(i,j,k)$. This implies that $h^{ijk}(Y_{t-1}^t)$ is sub-exponential and sub-weibull(1) for any $(i,j,k)$. Moreover, when setting $q = 2$, we have that
\begin{equation*}
    E\left[\left\{h^{ijk}(Y_{t-1}^t)\right\}^2\right] \leq 64 K_Y^4, \quad \forall i,j,k.
\end{equation*}

\textbf{Step II: uniform concentration for i.i.d data.} For any fixed constant $\tilde{c}$, let $N = T/\{\tilde{c}\log T\}$. Let $\{\tilde{Y}_{n-s}^{n+s}\}_{n=1}^N$ be an i.i.d. sample where the marginal distribution of $\tilde{Y}_{n-s}^{n+s}$ is the same as the marginal distribution of $Y_{t-s}^{t+s}$. For the ease of notation, let $X_n = \tilde{Y}_{n-s}^{n+s}$. Note that the sample size of this i.i.d. sample is smaller than the sample size of the original time series by a log factor, and the reason for this shall become clear in step III. To establish the desired upper bound on the tail probability with time series data, we first derive an upper bound for the following analogous tail probability with i.i.d data:
\begin{equation}\label{iidtruncate}
    P\left(\sup_{g \in \mathcal{G}}\left|\frac{1}{N}\sum_{n=1}^N g(X_n) - E[g(X_n)]\right| \geq \delta \right).
\end{equation}

\textbf{Step II-I: symmetrization.} We start by a symmetrization argument.
\begin{theorem}[Corollary 3.4 in \citet{geer2000empirical}, symmetrization]\label{symmetrization}
Suppose $\sup_{g\in\mathcal{G}}\|g\| \leq R$. Then for $N \geq 72R^2/\delta^2$,
\begin{equation*}
    P\left(\sup_{g\in\mathcal{G}}\left|\frac{1}{N}\sum_{n=1}^N g(X_n) - E[g(X_n)]\right| \geq \delta \right) \leq 4P\left( \sup_{g\in\mathcal{G}}\left|\frac{1}{N}\sum_{n=1}^N W_ng(X_n)\right| \geq \delta/4 \right),
\end{equation*}
where $\{W_1,\ldots,W_N\}$ is a sequence of i.i.d. Rademacher random variables, that is, $P(W_n=1) = P(W_n=-1)=1/2$, independent of $\{X_1,\ldots,X_N\}$.
\end{theorem}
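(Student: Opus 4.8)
The final statement is the classical symmetrization inequality of \citet{geer2000empirical} (her Corollary~3.4), so in the write-up the cleanest route is simply to invoke that reference. If one wished to establish it from first principles, the plan is the standard \emph{ghost-sample} argument, carried out in two moves: a de-symmetrization step that replaces the centering $E[g(X_n)]$ by a second independent empirical average, followed by a Rademacher symmetrization step. The role of the hypotheses $\sup_{g}\|g\|\le R$ and $N\ge 72R^2/\delta^2$ is confined entirely to the first move.

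First I would introduce an independent copy $X_1',\dots,X_N'$ of the sample, with empirical average $\bar g' := N^{-1}\sum_n g(X_n')$, and (by a measurable-selection argument) fix a nearly-extremal $g^\star\in\mathcal G$ on the event $A:=\{\sup_g |\bar g - Eg|\ge \delta\}$. Writing $\bar g_\star := N^{-1}\sum_n g^\star(X_n)$ and $\bar g_\star' := N^{-1}\sum_n g^\star(X_n')$, Chebyshev's inequality applied conditionally to the ghost average gives
\begin{equation*}
    P\!\left( \left| \bar g_\star' - E g^\star \right| > \delta/2 \,\middle|\, X_1,\dots,X_N \right) \le \frac{\operatorname{Var}\!\big(g^\star(X_1')\big)}{N(\delta/2)^2} \le \frac{4R^2}{N\delta^2} \le \frac{1}{18},
\end{equation*}
where the second inequality uses $\operatorname{Var}(g^\star)\le \|g^\star\|^2\le R^2$ and the last uses $N\ge 72R^2/\delta^2$. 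Thus, on $A$ we have $|\bar g_\star-\bar g_\star'|\ge \delta-\delta/2=\delta/2$ with conditional probability at least $17/18$; integrating out the conditioning and bounding by the supremum yields $P(A)\le (18/17)\,P(\sup_g|\bar g-\bar g'|\ge \delta/2)\le 2\,P(\sup_g|\bar g-\bar g'|\ge \delta/2)$.

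Next I would symmetrize. Since $g(X_n)-g(X_n')$ is an antisymmetric function of the exchangeable pair $(X_n,X_n')$ and the pairs are i.i.d., the joint law of $\big(g(X_n)-g(X_n')\big)_{n}$ is invariant under independent sign flips; hence $\sup_g\big|N^{-1}\sum_n(g(X_n)-g(X_n'))\big|$ has the same distribution as $\sup_g\big|N^{-1}\sum_n W_n(g(X_n)-g(X_n'))\big|$ for Rademacher $W_n$ independent of the data. A triangle inequality splits this into the two empirical Rademacher processes over $X$ and over $X'$, which are identically distributed; thresholding each at $\delta/4$ and taking a union bound contributes the second factor of $2$ and lowers the level to $\delta/4$. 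Composing the two moves gives exactly $P(A)\le 4\,P\big(\sup_g|N^{-1}\sum_n W_n g(X_n)|\ge \delta/4\big)$.

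The only genuinely delicate point is the de-symmetrization step: the conditional Chebyshev bound must be applied to the data-dependent maximizer $g^\star$, which is precisely where the uniform $L_2$ envelope $\sup_g\|g\|\le R$ and the sample-size threshold $N\ge 72R^2/\delta^2$ are needed (they guarantee the ghost average concentrates within $\delta/2$ of the mean with probability bounded below by $1/2$, uniformly in the choice of $g^\star$). Everything else is distributional bookkeeping via exchangeability. For the paper itself I would therefore simply cite \citet{geer2000empirical}, since the result is used here as an off-the-shelf tool inside the proof of Proposition~\ref{errororder}.
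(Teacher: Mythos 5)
The paper offers no proof of this statement at all---it is imported verbatim as Corollary~3.4 of \citet{geer2000empirical} and used as an off-the-shelf tool inside the proof of Proposition~\ref{errororder}---so your proposal to simply cite that reference is exactly what the paper does. Your supplementary ghost-sample sketch is also correct as stated: the conditional Chebyshev bound $4R^2/(N\delta^2)\le 1/18$ under $N\ge 72R^2/\delta^2$, the resulting de-symmetrization factor $18/17\le 2$, and the sign-flip plus union-bound factor $2$ at level $\delta/4$ compose to give precisely the constant $4$ in the statement.
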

To apply this theorem, we need to find the value of $R$ for the function class $\mathcal{G}$ we have defined. Note that fix a function $g \in \mathcal{G}$, there exist $(i,j,k)$ and $\theta$ such that,
\begin{align*}
    \|g\|^2 = E[g(X_n)^2] &= E\left[\left\{\tilde{Y}_{n-1,k}\left(\tilde{Y}_{n,i} - \beta_{ji}^{*\top}\tilde{Y}_{n-1}\right)m_{j,\theta}(\tilde{Y}_{n-s}^{n+s})\right\}^2\right] \\
    &\leq E\left[\left\{\tilde{Y}_{n-1,k}\left(\tilde{Y}_{n,i} - \beta_{ji}^{*\top}\tilde{Y}_{n-1}\right)\right\}^2\right] \\
    &= E\left[\left\{h^{ijk}(\tilde{Y}_{n-1}^n)\right\}^2\right] \\
    &\leq 64 K_Y^4,
\end{align*}
where the second line follows as $m_{j,\theta}$ is upper bounded by 1 for any $\theta$. The fourth line follows from step I and the fact that the marginal distribution of $\tilde{Y}_{n-1}^n$ is the same as the marginal distribution of $Y_{t-1}^t$. The above display holds for any $(i,j,k)$ and $\theta$, and hence holds for any $g \in \mathcal{G}$. Therefore, $\sup_{g \in \mathcal{G}}\|g\| \leq 8K_Y^2$, and we can take $R = 8K_Y^2$ which is a constant. To apply Theoerem~\ref{symmetrization} we only need to check $N\delta^2 \geq 8R^2$. Given the definition of $\delta$, we have that $N\delta^2$ approaches infinity as $T$ approaches infinity, and hence it will be larger than $8R^2$ for sufficiently large $T$.

\textbf{Step II-II: Control the empirical norm of functions in $\mathcal{G}$}. Given step II-I, it now suffices to control the probability 
\begin{equation*}
    P\left( \sup_{g\in\mathcal{G}}\left|\frac{1}{N}\sum_{n=1}^N W_ng(X_n)\right| \geq \delta/4 \right).
\end{equation*}
To do this, we condition on $X_1,X_2,\ldots,X_N$. Define the event $\mathcal{A}$ such that $I_{\mathcal{A}}\{X_1,\ldots,X_N\} = 1$ if and only if
\begin{equation*}
    \sup_{g \in \mathcal{G}}\frac{1}{N}\sum_{n=1}^N g(X_n)^2 \leq 64K_Y^4 + 1
\end{equation*}
We now study the probability of the event $\mathcal{A}$. Define the function class $\mathcal{G}^{ijk} = \{f^{ijk}_\theta: \theta \in \Theta\}$, and note that $\mathcal{G} = \cup_{i,j,k} \mathcal{G}^{ijk}$. 
\begin{align*}
    \sup_{g \in \mathcal{G}^{ijk}}\frac{1}{N}\sum_{n=1}^N g(X_n)^2 &= \sup_{\theta \in \Theta}\frac{1}{N}\sum_{n=1}^N f^{ijk}_\theta(X_n)^2 \\
    &= \sup_{\theta \in \Theta} \frac{1}{N}\sum_{n=1}^N \left\{\tilde{Y}_{n-1,k}\left(\tilde{Y}_{n,i} - \beta_{ji}^{*\top}\tilde{Y}_{n-1}\right)m_{j,\theta}(\tilde{Y}_{n-s}^{n+s})\right\}^2  \\
    &\leq \sup_{\theta \in \Theta} \frac{1}{N}\sum_{n=1}^N \left\{\tilde{Y}_{n-1,k}\left(\tilde{Y}_{n,i} - \beta_{ji}^{*\top}\tilde{Y}_{n-1}\right)\right\}^2 \\
    &= \frac{1}{N}\sum_{n=1}^N \left\{\tilde{Y}_{n-1,k}\left(\tilde{Y}_{n,i} - \beta_{ji}^{*\top}\tilde{Y}_{n-1}\right)\right\}^2 \\
    &= \frac{1}{N}\sum_{n=1}^N\left\{h^{ijk}(\tilde{Y}_{n-1}^n)\right\}^2.
\end{align*}
As shown in step I, the random variable $h^{ijk}(\tilde{Y}_{n-1}^n)$ is sub-weibull(1) with sub-weibull norm $4K_Y^2$. By Lemma 6 in \citet{wong2020lasso}, $\{h^{ijk}(\tilde{Y}_{n-1}^n)\}^2$ is sub-weibull($1/2$) with sub-weibull norm $64K_Y^4$. By Lemma 13 in \citet{wong2020lasso},
\begin{equation*}
    P\left(\left|\frac{1}{N}\sum_{n=1}^N \{h^{ijk}(\tilde{Y}_{n-1}^n)\}^2 - E\left[\{h^{ijk}(\tilde{Y}_{n-1}^n)\}^2\right]\right| > 1\right) \leq N\exp\left\{-\frac{N^{1/2}}{8K_Y^2 \tilde{C}_1}\right\} + \exp\left\{-\frac{N}{\tilde{C}_2 (64K_Y^4)^2}\right\},
\end{equation*}
for some constants $\tilde{C}_1$ and $\tilde{C}_2$. As we have shown, $E[\{h^{ijk}(\tilde{Y}_{n-1}^n)\}^2] \leq 64K_Y^4$. Together with the above display, this implies that 
\begin{equation*}
    P\left(\frac{1}{N}\sum_{n=1}^N \{h^{ijk}(\tilde{Y}_{n-1}^n)\}^2 > 64K_Y^4+ 1\right) \leq N\exp\left\{-\frac{N^{1/2}}{8K_Y^2\tilde{C}_1}\right\} + \exp\left\{-\frac{N}{\tilde{C}_2 (64K_Y^4)^2}\right\},
\end{equation*}
and therefore
\begin{equation*}
    P\left(\sup_{g \in \mathcal{G}^{ijk}}\frac{1}{N}\sum_{n=1}^N g(X_n)^2 > 64K_Y^4+ 1\right) \leq N\exp\left\{-\frac{N^{1/2}}{8K_Y^2\tilde{C}_1}\right\} + \exp\left\{-\frac{N}{\tilde{C}_2 (64K_Y^4)^2}\right\},
\end{equation*}
for any $(i,j,k)$. Applying a union bound, we have that
\begin{equation}\label{empiricalnorm}
    P\left(\sup_{g \in \mathcal{G}}\frac{1}{N}\sum_{n=1}^N g(X_n)^2 > 64K_Y^4+ 1\right) \leq Kd^2 N\exp\left\{-\frac{N^{1/2}}{8K_Y^2\tilde{C}_1}\right\} + Kd^2 \exp\left\{-\frac{N}{\tilde{C}_2 (64K_Y^4)^2}\right\}.
\end{equation}
This provides a way to control the empirical norm of functions in the class $\mathcal{G}$. Specifically, the empirical norm is upper bounded by $ 64K_Y^4+ 1$, uniformly over $G$, with high probability. For notation convenience, define $u_1(N,d)$ such that
\begin{equation}\label{defineu1}
    u_1(N,d) = Kd^2 N\exp\left\{-\frac{N^{1/2}}{8K_Y^2\tilde{C}_1}\right\} + Kd^2 \exp\left\{-\frac{N}{\tilde{C}_2 (64K_Y^4)^2}\right\}.
\end{equation}

\textbf{Step II-III: condition on $X_1,X_2,\ldots,X_N$.} We now condition on $X_1,X_2,\ldots,X_N$ and study the probability
\begin{multline*}
    P\left( \sup_{g\in\mathcal{G}}\left|\frac{1}{N}\sum_{n=1}^N W_ng(X_n)\right| \geq \delta/4 \mid X_1=x_1,\ldots, X_N=x_N, I_\mathcal{A}\left\{X_1,\ldots,X_N\right\} =1 \right) \\
    = P\left( \sup_{g\in\mathcal{G}}\left|\frac{1}{N}\sum_{n=1}^N W_ng(x_n)\right| \geq \delta/4 \right),
\end{multline*}
for a set of values $x_1,\ldots, x_N$ such that $I_{\mathcal{A}}\{x_1,\ldots,x_N\} = 1$. Our main tool to control the probability above is Corollary 8.3 in \citet{geer2000empirical}, which requires controlling the entropy of the function class $\mathcal{G}$.

Let $Q_n(x_1,\ldots,x_N)$ denote the empirical distribution that puts mass $1/N$ at each value $x_n$. We will often omit in the notation its dependence on $(x_1,\ldots,x_N)$ and write $Q_n$ for simplicity. For a function $g$, define its norm under $Q_n$, $\|g\|_{Q_n}$, such that $\|g\|_{Q_n}^2 = \int g^2dQ_n = \sum_{n=1}^N g^2(x_n)/N$. 

Recall that we define a subset of the parameter space $\Theta \subseteq \mathbb{R}^{Kd^2 + K^2}$ as $\Theta = \{\theta = (\beta^\top,p^\top,\sigma^\top)^\top: \|\theta - \theta^*\|_2 \leq r, \|(\beta - \beta^*)_{S^C}\|_1 \leq 4\sqrt{|S|}\|\beta-\beta^*\|_2\}$. We first derive an upper bound on the entropy of $\Theta$, which will be used later to upper bound the entropy of $\mathcal{G}$.

\begin{lemma}[Sudakov Minoration]
Let $A \sim N(0,\textnormal{Id}_{\tilde{d}})$. For any $\Theta \subseteq \mathbb{R}^{\tilde{d}}$ and any $\epsilon>0$,
\begin{equation*}
    \epsilon\sqrt{\log M(\epsilon,\Theta,\|\cdot\|_2)} \leq c E\left[\sup_{\theta \in \Theta} \langle \theta, A \rangle\right],
\end{equation*}
for some constant $c$, where $M(\epsilon,\Theta,\|\cdot\|_2)$ denotes the $\epsilon$-packing number of $\Theta$.
\end{lemma}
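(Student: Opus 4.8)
The plan is to prove this via the standard Gaussian comparison route for Sudakov minoration. First I would reduce the packing number to a finite, well-separated point set. Let $M = M(\epsilon,\Theta,\|\cdot\|_2)$ and fix a maximal $\epsilon$-separated collection $\{\theta_1,\ldots,\theta_M\} \subseteq \Theta$, so that $\|\theta_i - \theta_j\|_2 > \epsilon$ for all $i \neq j$ (when $M \leq 1$ the claim is trivial, so assume $M \geq 2$). Since each $\theta_i \in \Theta$, we have the crude but essential inequality $\sup_{\theta \in \Theta}\langle \theta, A\rangle \geq \max_{1 \leq i \leq M}\langle \theta_i, A\rangle$, so it suffices to lower bound $E[\max_i X_i]$ where $X_i := \langle \theta_i, A\rangle$.

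The vector $(X_1,\ldots,X_M)$ is centered jointly Gaussian (a linear image of $A \sim N(0,\mathrm{Id}_{\tilde d})$), with increment variances $E[(X_i - X_j)^2] = \|\theta_i - \theta_j\|_2^2 > \epsilon^2$ for $i \neq j$. I would then introduce an independent auxiliary Gaussian vector $Y_i := (\epsilon/\sqrt 2)\, g_i$, where $g_1,\ldots,g_M$ are i.i.d.\ $N(0,1)$, chosen precisely so that $E[(Y_i - Y_j)^2] = \epsilon^2 \leq E[(X_i - X_j)^2]$ for all $i \neq j$. Applying the Sudakov--Fernique comparison inequality, whose hypothesis is exactly this increment-variance domination, yields
\begin{equation*}
    E\Big[\max_{1 \leq i \leq M} X_i\Big] \geq E\Big[\max_{1 \leq i \leq M} Y_i\Big] = \frac{\epsilon}{\sqrt 2}\, E\Big[\max_{1 \leq i \leq M} g_i\Big].
\end{equation*}

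The remaining ingredient is the classical lower bound on the expected maximum of i.i.d.\ standard Gaussians, namely $E[\max_{1 \leq i \leq M} g_i] \geq c'\sqrt{\log M}$ for some universal constant $c' > 0$ and all $M \geq 2$; this follows from a standard tail estimate $P(g_i > t) \gtrsim t^{-1}e^{-t^2/2}$ together with the independence of the $g_i$. Chaining the two displays gives $E[\sup_{\theta \in \Theta}\langle \theta, A\rangle] \geq (c'/\sqrt 2)\,\epsilon\sqrt{\log M}$, which is the claimed bound with $c = \sqrt 2 / c'$. The main obstacle is the Gaussian comparison step: Sudakov minoration is the \emph{lower} bound on the expected supremum and is genuinely more delicate than the Dudley-type upper bound, since it cannot be obtained by a simple union/chaining argument and rests on Slepian-type inequalities. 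If citing Sudakov--Fernique and the Gaussian-maximum lower bound is not permitted, I would instead supply the tail-estimate derivation of the latter and a direct interpolation proof of the former, but both are standard facts in Gaussian process theory.
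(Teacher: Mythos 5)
Your proposal is correct: the reduction to a maximal $\epsilon$-separated set $\{\theta_1,\ldots,\theta_M\}$, the comparison of the Gaussian process $X_i = \langle \theta_i, A\rangle$ (with $E[(X_i-X_j)^2] = \|\theta_i-\theta_j\|_2^2 > \epsilon^2$) against the independent surrogate $Y_i = (\epsilon/\sqrt{2})g_i$ via Sudakov--Fernique, and the lower bound $E[\max_i g_i] \geq c'\sqrt{\log M}$ together give exactly the stated inequality with $c = \sqrt{2}/c'$, and the degenerate case $M \leq 1$ is handled since $E[\sup_{\theta\in\Theta}\langle\theta,A\rangle] \geq 0$ for nonempty $\Theta$. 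Note, however, that there is nothing in the paper to compare against: the authors state this lemma inside the proof of Proposition 3.3 as a classical fact of Gaussian process theory and give no proof, so your argument is not an alternative route but rather the standard textbook derivation filling in what the paper treats as known. The one point worth making explicit if you write this out in full is the direction of the Sudakov--Fernique inequality (the process with pointwise larger increment variances has the larger expected maximum), since that, together with Slepian-type interpolation, is where the genuine content of the minoration lies, as you correctly flag.
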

\noindent Here, $\tilde{d} = Kd^2 + K^2$. Let $\tilde S = S \cup \{Kd^2+1 ,\ldots, \tilde{d}\}$ denote the support of $\theta^*$, that is, the positions of non-zero components of $\theta^*$. Then,
\begin{align*}
    \langle\theta-\theta^*,A\rangle &= \sum_{n=1}^{\tilde{d}} (\theta_i-\theta^*_i) A_i = \sum_{i \in \tilde{S}} (\theta_i-\theta^*_i)A_i + \sum_{i \in \tilde{S}^C} (\theta_i-\theta^*_i) A_i \\
    &\leq \left\|(\theta-\theta^*)_{\tilde{S}}\right\|_2 \left\|A_{\tilde{S}}\right\|_2 + \left\|(\theta-\theta^*)_{\tilde{S}^C}\right\|_1 \left\|A_{\tilde{S}^C}\right\|_\infty \\
    &= \left\|(\theta-\theta^*)_{\tilde{S}}\right\|_2 \left\|A_{\tilde{S}}\right\|_2 + \left\|(\beta-\beta^*)_{S^C}\right\|_1 \left\|A_{S^C}\right\|_\infty \\
    &\leq \left\|\theta-\theta^*\right\|_2 \left\|A_{\tilde{S}}\right\|_2 + 4\sqrt{|S|}\|\theta-\theta^*\|_2 \left\|A_{S^C}\right\|_\infty \\
    &\leq r\left(\left\|A_{\tilde{S}}\right\|_2 + 4\sqrt{|S|}\left\|A_{S^C}\right\|_\infty\right).
\end{align*}
Therefore,
\begin{align*}
    E\left[\sup_{\theta \in \Theta}\langle\theta,A\rangle \right] &= E\left[\sup_{\theta\in\Theta}\langle\theta-\theta^*,A\rangle + \langle\theta^*,A\rangle\right] = E\left[\sup_{\theta\in\Theta}\langle\theta-\theta^*,A\rangle\right] \\
    &\leq rE\left[\left\|A_{\tilde{S}}\right\|_2 + 4\sqrt{|S|}\left\|A_{S^C}\right\|_\infty\right] \\
    &\leq r\left(4\sqrt{|S|}\sqrt{2\log K+ 4\log d} + \frac{\sqrt{2}\Gamma(\frac{|S|+K^2+1}{2})}{\Gamma(\frac{|S|+K^2}{2})}\right),
\end{align*}
where we have used the fact that the components of $A$ are i.i.d. standard normal random variables and $\Gamma(\cdot)$ here is the Gamma function. The interesting case is when $|S|$ approaches infinity asymptotically, and in this case the ratio between the two gamma functions in the last line is of order $\sqrt{|S|}$. Thus, there exists some constant $C_1$ such that $E\left[\sup_{\theta \in \Theta}\langle\theta,A\rangle \right] \leq C_1 r \sqrt{|S|(\log K + \log d)}$. As a result,
\begin{equation*}
    \sqrt{\log N_c(\epsilon,\Theta,\|\cdot\|_2)} \leq \sqrt{\log M(\epsilon,\Theta,\|\cdot\|_2)} \leq cC_1r \sqrt{|S|(\log K + \log d)}/\epsilon,
\end{equation*}
where $N_c(\epsilon,\Theta,\|\cdot\|_2)$ is the $\epsilon$-covering number of $\Theta$, which is upper bounded by the $\epsilon$-packing number of $\Theta$.

Next, consider the function class $\mathcal{G}^{ijk} = \{g^{ijk}_\theta: g^{ijk}_\theta(\tilde{Y}_{n-s}^{n+s}) = h^{ijk}(\tilde{Y}_{n-1}^n)m_{j,\theta}(\tilde{Y}_{n-s}^{n+s}), \theta\in\Theta \}$. To link the entropy of $\mathcal{G}^{ijk}$ with the entropy of $\Theta$, we show that functions in $\mathcal{G}^{ijk}$ are Lipschitz in $\theta$.
\begin{align*}
    &\quad \|g^{ijk}_{\theta_1} - g^{ijk}_{\theta_2}\|_{Q_n}^2 \\
    &= \frac{1}{N}\sum_{n=1}^N \left\{g^{ijk}_{\theta_1}(\tilde{Y}_{n-s}^{n+s})-g^{ijk}_{\theta_2}(\tilde{Y}_{n-s}^{n+s})\right\}^2 \\
    &= \frac{1}{N}\sum_{n=1}^N \left\{h^{ijk}(\tilde{Y}_{n-1}^n)m_{j,\theta_1}(\tilde{Y}_{n-s}^{n+s}) - h^{ijk}(\tilde{Y}_{n-1}^n)m_{j,\theta_2}(\tilde{Y}_{n-s}^{n+s})\right\}^2 \\
    &= \frac{1}{N}\sum_{n=1}^N \left\{h^{ijk}(\tilde{Y}_{n-1}^n)\right\}^2  \left\{\int_0^1 \frac{\partial m_{j,\theta}(\tilde{Y}_{n-s}^{n+s})}{\partial \theta}\rvert_{\theta = u\theta_1 + (1-u)\theta_2}^\top (\theta_2 - \theta_1)du\right\}^2 \\
    &\leq \frac{1}{N}\sum_{n=1}^N \Bigg[\left\{h^{ijk}(\tilde{Y}_{n-1}^n)\right\}^2 \times \\
    &\quad \int_0^1 (\theta_2 - \theta_1)^\top \left\{\frac{\partial m_{j,\theta}(\tilde{Y}_{n-s}^{n+s})}{\partial \theta}\rvert_{\theta = u\theta_1 + (1-u)\theta_2}\right\}\left\{\frac{\partial m_{j,\theta}(\tilde{Y}_{n-s}^{n+s})}{\partial \theta}\rvert_{\theta = u\theta_1 + (1-u)\theta_2}\right\}^\top(\theta_2 - \theta_1)du\Bigg] \\
    &= \int_0^1 \frac{1}{N}\sum_{n=1}^N \Bigg[\left\{h^{ijk}(\tilde{Y}_{n-1}^n)\right\}^2 \times \\
    &\quad  (\theta_2 - \theta_1)^\top \left\{\frac{\partial m_{j,\theta}(\tilde{Y}_{n-s}^{n+s})}{\partial \theta}\rvert_{\theta = u\theta_1 + (1-u)\theta_2}\right\}\left\{\frac{\partial m_{j,\theta}(\tilde{Y}_{n-s}^{n+s})}{\partial \theta}\rvert_{\theta = u\theta_1 + (1-u)\theta_2}\right\}^\top(\theta_2 - \theta_1)\Bigg]du \\
    &= \int_0^1 (\theta_2 - \theta_1)^\top \Bigg[\frac{1}{N}\sum_{n=1}^N \left\{\frac{\partial m_{j,\theta}(\tilde{Y}_{n-s}^{n+s})}{\partial \theta}\rvert_{\theta = u\theta_1 + (1-u)\theta_2}\right\}\left\{h^{ijk}(\tilde{Y}_{n-1}^n)\right\}^2 \times \\
    &\quad \left\{\frac{\partial m_{j,\theta}(\tilde{Y}_{n-s}^{n+s})}{\partial \theta}\rvert_{\theta = u\theta_1 + (1-u)\theta_2}\right\}^\top\Bigg](\theta_2 - \theta_1)du. \\
\end{align*}
Define a matrix $M_{\theta_1, \theta_2}^{ijk}(u,\tilde{Y}_{n-s}^{n+s})$ such that
\begin{equation*}
    M_{\theta_1, \theta_2}^{ijk}(u,\tilde{Y}_{n-s}^{n+s}) = \left\{\frac{\partial m_{j,\theta}(\tilde{Y}_{n-s}^{n+s})}{\partial \theta}\rvert_{\theta = u\theta_1 + (1-u)\theta_2}\right\}\left\{h^{ijk}(\tilde{Y}_{n-1}^n)\right\}^2 \left\{\frac{\partial m_{j,\theta}(\tilde{Y}_{n-s}^{n+s})}{\partial \theta}\rvert_{\theta = u\theta_1 + (1-u)\theta_2}\right\}^\top,
\end{equation*}
then we have
\begin{align*}
    \|g^{ijk}_{\theta_1} - g^{ijk}_{\theta_2}\|_{Q_n}^2 &\leq \int_0^1 (\theta_2 - \theta_1)^\top \left\{\frac{1}{N}\sum_{n=1}^N M_{\theta_1, \theta_2}^{ijk}(u,\tilde{Y}_{n-s}^{n+s})\right\}(\theta_2 - \theta_1) du \\
    &\leq \int_0^1 \left\|\frac{1}{N}\sum_{n=1}^N M_{\theta_1, \theta_2}^{ijk}(u,\tilde{Y}_{n-s}^{n+s})\right\|_2 \left\|\theta_2 - \theta_1 \right\|_2^2 du \\
    &= \left\|\theta_2 - \theta_1 \right\|_2^2 \int_0^1 \left\|\frac{1}{N}\sum_{n=1}^N M_{\theta_1, \theta_2}^{ijk}(u,\tilde{Y}_{n-s}^{n+s})\right\|_2 du
\end{align*}
Now for $\tilde{\theta} \in \mathcal{B}(r,\theta^*)$, define a matrix $M(\tilde{\theta},Y_{n-s}^n)$ as
\begin{equation*}
    M_{\tilde{\theta}}^{ijk}(\tilde{Y}_{n-s}^{n+s}) = \left\{\frac{\partial m_{j,\theta}(\tilde{Y}_{n-s}^{n+s})}{\partial \theta}\rvert_{\theta = \tilde{\theta}}\right\}\left\{h^{ijk}(\tilde{Y}_{n-1}^n)\right\}^2 \left\{\frac{\partial m_{j,\theta}(\tilde{Y}_{n-s}^{n+s})}{\partial \theta}\rvert_{\theta = \tilde{\theta}}\right\}^\top.
\end{equation*}
We then have
\begin{align*}
    \|g^{ijk}_{\theta_1} - g^{ijk}_{\theta_2}\|_{Q_n}^2 &\leq \|\theta_2 - \theta_1\|_2^2 \int_0^1 \sup_{\tilde\theta \in \mathcal{B}(r,\theta^*)}\left\|\frac{1}{N}\sum_{n=1}^N M_{\tilde{\theta}}^{ijk}(\tilde{Y}_{n-s}^{n+s})\right\|_2 du \\
    &\leq \left\{\sup_{\tilde\theta \in \mathcal{B}(r,\theta^*)}\left\|\frac{1}{N}\sum_{n=1}^N M_{\tilde{\theta}}^{ijk}(\tilde{Y}_{n-s}^{n+s})\right\|_2\right\} \|\theta_2 - \theta_1\|_2^2.
\end{align*}
Define a Lipschitz constant $L_{ijk}(X_1^N)$ such that
\begin{equation*}
    L_{ijk}^2(X_1^N) = L_{ijk}^2(\tilde{Y}_{1-s}^{1-s},\ldots,\tilde{Y}_{n-s}^{n+s}) = \sup_{\tilde\theta \in \mathcal{B}(r,\theta^*)}\left\|\frac{1}{N}\sum_{n=1}^N M_{\tilde{\theta}}^{ijk}(\tilde{Y}_{n-s}^{n+s})\right\|_2,
\end{equation*}
and a Lipschitz constant $L(X_1^N)$ such that
\begin{align*}
    L^2(X_1^N) = L^2(\tilde{Y}_{1-s}^{1-s},\ldots,\tilde{Y}_{n-s}^{n+s}) &= \max\left\{\sup_{\tilde\theta \in \mathcal{B}(r,\theta^*)}\max_{i,j,k}\left\|\frac{1}{N}\sum_{n=1}^N M_{\tilde{\theta}}^{ijk}(\tilde{Y}_{n-s}^{n+s})\right\|_2, 1 \right\} \\
    &= \max\left\{\sup_{\tilde\theta \in \mathcal{B}(r,\theta^*)}\left\|\frac{1}{N}\sum_{n=1}^N M_{\tilde{\theta}}(\tilde{Y}_{n-s}^{n+s})\right\|_2, 1\right\},
\end{align*}
where the matrix $M_{\tilde{\theta}}$ is a block-diagonal matrix with the diagonal blocks given by $M_{\tilde{\theta}}^{ijk}$. 

We give an alternative definition of the Lipschitz constant. 
\begin{align*}
    &\quad \|g^{ijk}_{\theta_1} - g^{ijk}_{\theta_2}\|_{Q_n}^2 \\
    &= \frac{1}{N}\sum_{n=1}^N \left\{g^{ijk}_{\theta_1}(\tilde{Y}_{n-s}^{n+s})-g^{ijk}_{\theta_2}(\tilde{Y}_{n-s}^{n+s})\right\}^2 \\
    &= \frac{1}{N}\sum_{n=1}^N \left\{h^{ijk}(\tilde{Y}_{n-1}^n)m_{j,\theta_1}(\tilde{Y}_{n-s}^{n+s}) - h^{ijk}(\tilde{Y}_{n-1}^n)m_{j,\theta_2}(\tilde{Y}_{n-s}^{n+s})\right\}^2 \\
    &= \frac{1}{N}\sum_{n=1}^N \left\{h^{ijk}(\tilde{Y}_{n-1}^n)\right\}^2  \left\{\int_0^1 \frac{\partial m_{j,\theta}(\tilde{Y}_{n-s}^{n+s})}{\partial \theta}\rvert_{\theta = u\theta_1 + (1-u)\theta_2}^\top (\theta_2 - \theta_1)du\right\}^2 \\
    &\leq \left[\frac{1}{N}\sum_{n=1}^N \left\{h^{ijk}(\tilde{Y}_{n-1}^n)\right\}^4\right]^{1/2} \left[\frac{1}{N}\sum_{n=1}^N \left\{\int_0^1 \frac{\partial m_{j,\theta}(\tilde{Y}_{n-s}^{n+s})}{\partial \theta}\rvert_{\theta = u\theta_1 + (1-u)\theta_2}^\top (\theta_2 - \theta_1)du\right\}^4\right]^{1/2},
\end{align*}
where the last line follows by Cauchy-Schwarz inequality. We now show that the first term in the last line of the above display is upper bounded by some constant, uniformly in $(i,j,k)$ with high probability. To this end, first we recall that we have shown $\{h^{ijk}(\tilde{Y}_{n-1}^n)\}^2$ is sub-weibull($1/2$) with sub-weibull norm $64K_Y^4$ for any $(i,j,k)$. Applying Lemma~6 in \citet{wong2020lasso} again, we get that $\{h^{ijk}(\tilde{Y}_{n-1}^n)\}^4$ is sub-weibull($1/4$) with sub-weibull norm $K_4 = 2^4(64K_Y^4)^2$. Now applying Lemma~13 in \citet{wong2020lasso}, we get the following concentration result:
\begin{equation*}
    P\left(\left|\frac{1}{N}\sum_{n=1}^N \left\{h^{ijk}(\tilde{Y}_{n-1}^n)\right\}^4 - E\left[\left\{h^{ijk}(\tilde{Y}_{n-1}^n)\right\}^4\right]\right| > 1 \right) \leq N \exp\left(-\frac{N^{1/4}}{K_4^{1/4}\tilde C_1}\right) + \exp\left(-\frac{N}{K_4^2\tilde C_2}\right),
\end{equation*}
for $N > 4$. Recall that $h^{ijk}(\tilde{Y}_{n-1}^n)$ is sub-exponential, and therefore $(E|h^{ijk}(\tilde{Y}_{n-1}^n)|^q)^{1/q} \leq 4K_Y^2q$ for all $q \geq 1$. In particular, this implies that
\begin{equation*}
    E\left[\left\{h^{ijk}(\tilde{Y}_{n-1}^n)\right\}^4\right] \leq 16^4K_Y^8, \quad \forall (i,j,k).
\end{equation*}
Combined with our earlier concentration result, we have that 
\begin{equation*}
    P\left(\frac{1}{N}\sum_{n=1}^N \left\{h^{ijk}(\tilde{Y}_{n-1}^n)\right\}^4  > 16^4 K_Y^8 + 1 \right) \leq N \exp\left(-\frac{N^{1/4}}{K_4^{1/4}\tilde C_1}\right) + \exp\left(-\frac{N}{K_4^2\tilde C_2}\right).
\end{equation*}
Applying a union bound over $(i,j,k)$, we get that
\begin{equation*}
    P\left(\max_{i,j,k}\frac{1}{N}\sum_{n=1}^N \left\{h^{ijk}(\tilde{Y}_{n-1}^n)\right\}^4  > 16^4 K_Y^8 + 1 \right)
    \leq Kd^2 N \exp\left(-\frac{N^{1/4}}{K_4^{1/4}\tilde C_1}\right) + Kd^2 \exp\left(-\frac{N}{K_4^2\tilde C_2}\right).
\end{equation*}
Therefore, $[\sum_{n=1}^N \{h^{ijk}(\tilde{Y}_{n-1}^n)\}^4/N]^{1/2}$ is upper bounded by $16^2K_Y^4 + 1$, uniformly in $(i,j,k)$, with high probability. We now turn to the second term, 
\begin{align*}
    &\quad \left[\frac{1}{N}\sum_{n=1}^N \left\{\int_0^1 \frac{\partial m_{j,\theta}(\tilde{Y}_{n-s}^{n+s})}{\partial \theta}\rvert_{\theta = u\theta_1 + (1-u)\theta_2}^\top (\theta_2 - \theta_1)du\right\}^4\right]^{1/2} \\
    &\leq \left[\frac{1}{N}\sum_{n=1}^N \int_0^1 \left\{\frac{\partial m_{j,\theta}(\tilde{Y}_{n-s}^{n+s})}{\partial \theta}\rvert_{\theta = u\theta_1 + (1-u)\theta_2}^\top (\theta_2 - \theta_1)\right\}^4 du \right]^{1/2} \\
    &= \left[\frac{1}{N}\sum_{n=1}^N \int_0^1 \left\{(\theta_2 - \theta_1)^\top\frac{\partial m_{j,\theta}(\tilde{Y}_{n-s}^{n+s})}{\partial \theta}\rvert_{\theta = u\theta_1 + (1-u)\theta_2}\frac{\partial m_{j,\theta}(\tilde{Y}_{n-s}^{n+s})}{\partial \theta}\rvert_{\theta = u\theta_1 + (1-u)\theta_2}^\top (\theta_2 - \theta_1) \right\}^2 du \right]^{1/2} \\
    &= \left[\int_0^1 \frac{1}{N}\sum_{n=1}^N\left\{(\theta_2 - \theta_1)^\top\frac{\partial m_{j,\theta}(\tilde{Y}_{n-s}^{n+s})}{\partial \theta}\rvert_{\theta = u\theta_1 + (1-u)\theta_2}\frac{\partial m_{j,\theta}(\tilde{Y}_{n-s}^{n+s})}{\partial \theta}\rvert_{\theta = u\theta_1 + (1-u)\theta_2}^\top (\theta_2 - \theta_1) \right\}^2 du \right]^{1/2} \\
    &\leq \left[\int_0^1 \frac{1}{N}\sum_{n=1}^N \left\|\frac{\partial m_{j,\theta}(\tilde{Y}_{n-s}^{n+s})}{\partial \theta}\rvert_{\theta = u\theta_1 + (1-u)\theta_2}\frac{\partial m_{j,\theta}(\tilde{Y}_{n-s}^{n+s})}{\partial \theta}\rvert_{\theta = u\theta_1 + (1-u)\theta_2}^\top\right\|_2^2 \|\theta_2 - \theta_1\|_2^4 du\right]^{1/2} \\
    &= \left[\int_0^1 \frac{1}{N}\sum_{n=1}^N \left\|\frac{\partial m_{j,\theta}(\tilde{Y}_{n-s}^{n+s})}{\partial \theta}\rvert_{\theta = u\theta_1 + (1-u)\theta_2}\frac{\partial m_{j,\theta}(\tilde{Y}_{n-s}^{n+s})}{\partial \theta}\rvert_{\theta = u\theta_1 + (1-u)\theta_2}^\top\right\|_2^2 du\right]^{1/2} \|\theta_2 - \theta_1\|_2^2 \\
    &\leq \sup_{\tilde\theta \in \mathcal{B}(r,\theta^*)} \left[ \frac{1}{N}\sum_{n=1}^N \left\|\frac{\partial m_{j,\theta}(\tilde{Y}_{n-s}^{n+s})}{\partial \theta}\rvert_{\theta = \tilde\theta}\frac{\partial m_{j,\theta}(\tilde{Y}_{n-s}^{n+s})}{\partial \theta}\rvert_{\theta = \tilde\theta}^\top\right\|_2^2 \right]^{1/2}\|\theta_2 - \theta_1\|_2^2.
\end{align*}
Define a Lipschitz constant $L(x_1^N)$ such that
\begin{equation*}
    L^2(x_1^N) = \max_{i,j,k}\left[\frac{1}{N}\sum_{n=1}^N \left\{h^{ijk}(\tilde{Y}_{n-1}^n)\right\}^4\right]^{1/2} \sup_{\tilde\theta \in \mathcal{B}(r,\theta^*)} \left[ \frac{1}{N}\sum_{n=1}^N \left\|\frac{\partial m_{j,\theta}(\tilde{Y}_{n-s}^{n+s})}{\partial \theta}\rvert_{\theta = \tilde\theta}\frac{\partial m_{j,\theta}(\tilde{Y}_{n-s}^{n+s})}{\partial \theta}\rvert_{\theta = \tilde\theta}^\top\right\|_2^2 \right]^{1/2}.
\end{equation*}
For some sequence $l(N,d)$, let $\tilde{u}_j(N,d)$ denote the following probability
\begin{equation*}
    \tilde{u}_j(N,d) = P\left(\sup_{\tilde\theta \in \mathcal{B}(r,\theta^*)} \left[ \frac{1}{N}\sum_{n=1}^N \left\|\frac{\partial m_{j,\theta}(\tilde{Y}_{n-s}^{n+s})}{\partial \theta}\rvert_{\theta = \tilde\theta}\frac{\partial m_{j,\theta}(\tilde{Y}_{n-s}^{n+s})}{\partial \theta}\rvert_{\theta = \tilde\theta}^\top\right\|_2^2 \right]^{1/2} > l(N,d)\right),
\end{equation*}
then $L^2(x_1^N) \leq l(N,d)(16^2K_Y^4 + 1)$ with probability at least 
\begin{equation*}
    1 - \sum_{j=1}^K \tilde{u}_j(N,d) - Kd^2 N \exp\left(-\frac{N^{1/4}}{K_4^{1/4}\tilde C_1}\right) - Kd^2 \exp\left(-\frac{N}{K_4^2\tilde C_2}\right).
\end{equation*}

With the defined Lipschitz constant, we have $\|g^{ijk}_{\theta_1} - g^{ijk}_{\theta_2}\|_{Q_n} \leq L(x_1^N) \|\theta_1-\theta_2\|_2$. Therefore, one can construct an $\epsilon L(x_1^N)$-cover of the function class $\mathcal{G}^{ijk}$ from an $\epsilon$-cover of $\Theta$.
\begin{equation*}
    \sqrt{\log N_c(\epsilon L(x_1^N),\mathcal{G}^{ijk},\|\cdot\|_{Q_n})} \leq \sqrt{\log N_c(\epsilon,\Theta,\|\cdot\|_2)} \leq \frac{cC_1r \sqrt{|S|(\log K + \log d)}}{\epsilon},
\end{equation*}
and 
\begin{equation*}
    \sqrt{\log N_c(\epsilon,\mathcal{G}^{ijk},\|\cdot\|_{Q_n})} \leq \frac{cC_1r L(x_1^N) \sqrt{|S|(\log K + \log d)}}{\epsilon}.
\end{equation*}
As $\mathcal{G} = \cup_{i,j,k} \mathcal{G}^{ijk}$, we have that for some constant $C_2$,
\begin{equation*}
    \sqrt{\log N_c(\epsilon,\mathcal{G},\|\cdot\|_{Q_n})} \leq \sqrt{\log \left\{Kd^2 N_c(\epsilon,\mathcal{G}^{ijk},\|\cdot\|_{Q_n})\right\}} \leq \frac{C_2 r L(x_1^N) \sqrt{|S|(\log K + \log d)}}{\epsilon}.
\end{equation*}
The above display gives an upper bound on the entropy of the class $\mathcal{G}$.

We are now ready to apply Corollary 8.3 in \citet{geer2000empirical}, which is included as Theorem~\ref{corollary83} here for completeness.
\begin{theorem}[Corollary 8.3 in \citet{geer2000empirical}, uniform concentration]\label{corollary83} Suppose that $\sup_{g \in \mathcal{G}} \|g\|_{Q_n} \leq R$. Then for some constant C and $\delta_1>0$ satisfying $R>\delta_1$ and
\begin{equation}\label{requirement}
    \sqrt{N}\delta_1 \geq 2C \max\left\{R, \int_{\delta_1/8}^R \sqrt{\log N_c(\epsilon,\mathcal{G},\|\cdot\|_{Q_n})} d\epsilon\right\},
\end{equation}
we have
\begin{equation*}
    P\left(\sup_{g \in \mathcal{G}} \left|\frac{1}{N}\sum_{n=1}^N W_n g(x_n)\right| \geq \delta_1 \right) \leq C\exp\left\{-\frac{N\delta_1^2}{4C^2R^2}\right\}.
\end{equation*}
\end{theorem}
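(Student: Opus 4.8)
The statement is a restatement of Corollary 8.3 in \citet{geer2000empirical}, so the cleanest route is to invoke that result directly; for completeness I sketch the self-contained chaining argument that underlies it. Since the $x_n$ are fixed values, each $g(x_n)$ is a deterministic real number and the only randomness is carried by the Rademacher signs $W_1,\ldots,W_N$. Writing $\nu_N(g) = \frac{1}{N}\sum_{n=1}^N W_n g(x_n)$, the object to bound is $Z := \sup_{g\in\mathcal{G}}|\nu_N(g)|$. The engine of the argument is a sub-Gaussian increment bound: for any $g,g'\in\mathcal{G}$ the quantity $\nu_N(g)-\nu_N(g')=\frac{1}{N}\sum_n W_n\{g(x_n)-g'(x_n)\}$ is a sum of independent, mean-zero terms bounded by $|g(x_n)-g'(x_n)|/N$ in absolute value, so Hoeffding's inequality gives
\begin{equation*}
    P\left(\left|\nu_N(g)-\nu_N(g')\right| > t \right) \leq 2\exp\left(-\frac{N t^2}{2\|g-g'\|_{Q_n}^2}\right).
\end{equation*}
Hence $g\mapsto\nu_N(g)$ is a sub-Gaussian process with respect to the rescaled empirical metric $d(g,g')=\|g-g'\|_{Q_n}/\sqrt{N}$, whose diameter is at most $R/\sqrt{N}$.

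Next I would run Dudley's chaining. Building nested $\epsilon$-nets of $\mathcal{G}$ at the dyadic scales $\epsilon_k = R2^{-k}$, linking an arbitrary $g$ to its net representatives, and summing the increment tail bound along the chain with a union bound over each net (whose log-cardinality is $\log N_c(\epsilon_k,\mathcal{G},\|\cdot\|_{Q_n})$) converts the increment bound into a tail bound for $\sup_g|\nu_N(g)-\nu_N(g_0)|$ controlled by the entropy integral $\frac{1}{\sqrt{N}}\int_0^R\sqrt{\log N_c(\epsilon,\mathcal{G},\|\cdot\|_{Q_n})}\,d\epsilon$; terminating the chain at a resolution of order $\delta_1$ truncates this integral at the lower limit $\delta_1/8$ exactly as in \eqref{requirement} (the hypothesis $R>\delta_1$ guarantees a nonempty integration range). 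The fixed base point $g_0$ contributes only a single sub-Gaussian variable of variance $\|g_0\|_{Q_n}^2/N\le R^2/N$.

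Finally I would combine the pieces. Condition \eqref{requirement}, namely $\sqrt{N}\delta_1\ge 2C\max\{R,\int_{\delta_1/8}^R\sqrt{\log N_c(\epsilon,\mathcal{G},\|\cdot\|_{Q_n})}\,d\epsilon\}$, is precisely what forces $\delta_1$ to dominate both the base-point scale $R/\sqrt{N}$ and the chaining contribution, so that these ``mean-level'' terms are absorbed into $\delta_1$. The residual fluctuation of $Z$ about this level is then sub-Gaussian with variance proxy of order $R^2/N$ (the maximal per-$g$ variance being $\|g\|_{Q_n}^2/N\le R^2/N$), which yields
\begin{equation*}
    P\left(Z\ge\delta_1\right)\le C\exp\left\{-\frac{N\delta_1^2}{4C^2R^2}\right\}.
\end{equation*}

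I expect the main obstacle to be the bookkeeping in the chaining step: tracking the union-bound constants across the dyadic scales so that the per-scale exponents telescope into the single clean exponent above, and verifying that the numerical constants in \eqref{requirement} are exactly those needed to absorb both the entropy integral and the base-point term into $\delta_1$. This is the delicate portion of van de Geer's derivation; if a fully self-contained proof is not required, one may instead invoke Corollary 8.3 of \citet{geer2000empirical} verbatim.
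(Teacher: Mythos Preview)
The paper does not prove this statement at all; it simply records Corollary~8.3 of \citet{geer2000empirical} for later application and cites the source. Your proposal does the same thing---invoke the cited result directly---and the accompanying sketch of the sub-Gaussian increment plus Dudley chaining argument is the standard and correct derivation underlying that corollary, so your approach is fully aligned with the paper's.
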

\noindent From step II-II, when $I_{\mathcal{A}}\{x_1,\ldots,x_N\}= 1$, we can take $R^2 = 64K_Y^4 + 1$. Here, we shall take 
\begin{equation*}
    \delta_1 = C_3 \sqrt{\frac{|S|L^2(x_1^N)(\log K + \log d)(\log T)^3 + (\log T)^4}{T}},
\end{equation*}
for an appropriate constant $C_3$, and show that this choice of $\delta_1$ satisfies the requirements in Theorem~\ref{corollary83}. The requirement that $R > \delta_1$ is easily met noting that $R$ is a constant while $\delta_1$ converges to 0 as $T \rightarrow \infty$ (given that $L(x_1^N)$ is well-behaved, and we will discuss this explicitly later.) To check \eqref{requirement}, we first note that $\sqrt{N}\delta_1 \geq 2CR$ for sufficiently large $T$, as $\sqrt{N}\delta_1$ approaches infinity when $T$ approaches infinity. Therefore, we focus on the entropy integral
\begin{align*}
    \int_{\delta_1/8}^R \sqrt{\log N_c(\epsilon,\mathcal{G},\|\cdot\|_{Q_n})} d\epsilon &\leq C_2 r L(x_1^N) \sqrt{|S|(\log K + \log d)} \int_{\delta_1/8}^R \frac{1}{\epsilon}d\epsilon \\
    &= C_2 r L(x_1^N) \sqrt{|S|(\log K + \log d)}\left\{ \log R + \log{8/\delta_1}\right\}.
\end{align*}
In particular,
\begin{align*}
    \log(8/\delta_1) &= \log(8/C_3) + \frac{1}{2}\left\{\log T - \log\left(|S|L(x_1^N)(\log K + \log d) (\log T)^3+ (\log T)^4\right)\right\} \\
    &\leq \log(8/C_3) + \frac{1}{2}\log T.
\end{align*}
Therefore
\begin{multline*}
    2C \int_{\delta_1/8}^R \sqrt{\log N_c(\epsilon,\mathcal{G},\|\cdot\|_{Q_n})} d\epsilon \leq \tilde{C}_2 r L(x_1^N) \sqrt{|S|(\log K + \log d)} \log T \\
    \leq C_3 \sqrt{|S|L^2(x_1^N)(\log K + \log d)(\log T)^2 + (\log T)^3} = \sqrt{N}\delta_1,
\end{multline*}
for a large enough constant $C_3$. Applying Theorem~\ref{corollary83} with the specified value of $\delta_1$, we have that
\begin{equation*}
    P\left(\sup_{g \in \mathcal{G}} \left|\frac{1}{N}\sum_{n=1}^N W_n g(x_n)\right| \geq \delta_1 \right) \leq C\exp\left\{-\frac{N\delta_1^2}{4C^2(64K_Y^4 + 1)}\right\}.
\end{equation*}

\textbf{Step II-III: marginalize over $(X_1,\ldots, X_N)$.} In the above argument, we derived an upper bound on the probability, conditioned on $(x_1,\ldots, x_N)$ such that $I_{\mathcal{A}} =1$. At the same time, the upper bound depends on a random entropy number involving $L(x_1^N)$. Now we marginalize over $(X_1,\ldots, X_N)$. For two events $E_1$ and $E_2$, let $E_1 \wedge E_2$ denote the event that $E_1$ and $E_2$ both happen, and $E_1 \vee E_2$ denote the event that at least one of $E_1$ and $E_2$ happens. Then, we have
\begin{align*}
    &\quad P\left( \sup_{g\in\mathcal{G}}\left|\frac{1}{N}\sum_{n=1}^N W_ng(X_n)\right| \geq \delta/4 \right) \\
    &= P\left( \sup_{g\in\mathcal{G}}\left|\frac{1}{N}\sum_{n=1}^N W_ng(X_n)\right| \geq \frac{\delta}{4} \wedge  \left\{L^2(X_1^N) < l(T,d) \wedge I_{\mathcal{A}} = 1\right\}\right)  \\
    &\quad + P\left( \sup_{g\in\mathcal{G}}\left|\frac{1}{N}\sum_{n=1}^N W_ng(X_n)\right| \geq \frac{\delta}{4} \wedge \left\{ L^2(X_1^N) \geq l(T,d) \vee I_{\mathcal{A}} = 0\right\}\right) \\
    &\leq P\left( \sup_{g\in\mathcal{G}}\left|\frac{1}{N}\sum_{n=1}^N W_ng(X_n)\right| \geq \frac{\delta}{4} \wedge  \left\{L^2(X_1^N) < l(T,d) \wedge I_{\mathcal{A}} = 1\right\}\right) \\
    &\quad + P\left(L^2(X_1^N) \geq l(T,d)\right) + P(I_{\mathcal{A}}= 0).
\end{align*}
In particular, 
\begin{align*}
    &\quad P\left( \sup_{g\in\mathcal{G}}\left|\frac{1}{N}\sum_{n=1}^N W_ng(X_n)\right| \geq \frac{\delta}{4} \wedge  \left\{L^2(X_1^N) < l(T,d) \wedge I_{\mathcal{A}} = 1\right\}\right) \\
    &= E\left[P\left( \sup_{g\in\mathcal{G}}\left|\frac{1}{N}\sum_{n=1}^N W_ng(X_n)\right| \geq \frac{\delta}{4} \wedge  L^2(X_1^N) < l(T,d) \wedge I_{\mathcal{A}}=1 \mid X_1, \ldots, X_N \right) \right] \\
    &= E\left[P\left( \sup_{g\in\mathcal{G}}\left|\frac{1}{N}\sum_{n=1}^N W_ng(X_n)\right| \geq \frac{\delta}{4} \mid X_1, \ldots, X_N \right) I\{L^2(X_1^N) < l(T,d)\} I_{\mathcal{A}}\{X_1^N\}\right].
\end{align*}
Note that condition on the event $L^2(X_1^N) < l(T,d)$, we have $\delta_1 \leq \delta/4$, when $C_3$ in the definition of $\delta_1$ is properly chosen. Hence the expectation in the above display is upper bounded by
\begin{align*}
    &\quad E\left[P\left( \sup_{g\in\mathcal{G}}\left|\frac{1}{N}\sum_{n=1}^N W_ng(x_n)\right| \geq \delta_1 \mid X_1, \ldots, X_N \right) I\{L^2(X_1^N) < l(T,d)\} I_{\mathcal{A}}\{X_1^N\}\right] \\
    &= E\left[C\exp\left\{-\frac{N\delta_1^2}{4C^2(64K_Y^2 + 1)}\right\}I\{L^2(X_1^N) < l(T,d)\} I_{\mathcal{A}}\{X_1^N\}\right].
\end{align*}
Now, we note that by definition $L^2(x_1^N) \geq 1$ for all values of $x_1^N$, and therefore,
\begin{align*}
    \delta_1^2 &= C_3^2 \frac{|S|L^2(x_1^N)(\log K + \log d)(\log T)^3 + (\log T)^4}{T} \\
    &\geq C_3^2 \frac{|S|(\log K + \log d)(\log T)^3 + (\log T)^4}{T} := \delta_2^2.
\end{align*}
In particular, the quantity $\delta_2$ is now independent of the values of $X_1,\ldots,X_N$. Then, we have
\begin{equation*}
    P\left( \sup_{g\in\mathcal{G}}\left|\frac{1}{N}\sum_{n=1}^N W_ng(x_n)\right| \geq \frac{\delta}{4} \wedge  L^2(X_1^N) < l(T,d) \wedge I_{\mathcal{A}}(X_1^N)\right) \leq C\exp\left\{-\frac{N\delta_2^2}{4C^2(64K_Y^2 + 1)}\right\},
\end{equation*}
and
\begin{equation*}
    P\left( \sup_{g\in\mathcal{G}}\left|\frac{1}{N}\sum_{n=1}^N W_ng(x_n)\right| \geq \delta/4 \right) \leq C\exp\left\{-\frac{N\delta_2^2}{4C^2(64K_Y^2 + 1)}\right\} + u(N,d) + u_1(N,d).
\end{equation*}
By Theorem~\ref{symmetrization}, 
\begin{equation*}
    P\left(\sup_{g\in\mathcal{G}}\left|\frac{1}{N}\sum_{n=1}^N g(x_n) - E[g(x_n)]\right| \geq \delta \right) \leq 4C\exp\left\{-\frac{N\delta_2^2}{4C^2(64K_Y^2 + 1)}\right\} + 4u(N,d) + 4u_1(N,d).
\end{equation*}

\textbf{Step III: uniform concentration for $\beta$-mixing processes.} Now we extend the above uniform concentration results to the process $\{Y_{t-s}^{t+s}\}$.
\begin{theorem}[Theorem 2 in \citet{karandikar2002rates}]\label{karandikarthm}
Let $P$ be a shift invariant probability measure and $P^*$ be the infinite product probability measure with the same one-dimensional marginals as $P$. Fix a sequence $\{k_T\}$ such that $k_T \leq T$ and let $l_T$ be the integer part of $T/k_T$. Then
\begin{equation*}
    q(T,\delta,P) \leq T b_{\textnormal{mix}}(k_T,P) + k_T\max\left\{q(l_T+1, \delta, P^*),q(l_T, \delta, P^*)\right\}.
\end{equation*}
\end{theorem}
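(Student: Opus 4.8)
The plan is to derive the bound for the dependent measure $P$ from the i.i.d. bound for $P^*$ by a blocking-and-coupling argument, where $q(n,\delta,\cdot)$ denotes the probability that the uniform empirical deviation $\sup_g |n^{-1}\sum g(X_\cdot) - E[g]|$ over the relevant function class exceeds $\delta$ on a sample of size $n$. First I would partition $\{1,\ldots,T\}$ into $k_T$ interleaved subsequences, the $j$-th consisting of the coordinates $X_j, X_{j+k_T}, X_{j+2k_T},\ldots$ for $j = 1,\ldots,k_T$. Writing $T = k_T l_T + r$ with $0 \le r < k_T$, exactly $r$ of these subsequences contain $l_T+1$ terms and the remaining $k_T - r$ contain $l_T$ terms; I denote the length of the $j$-th subsequence by $m_j \in \{l_T, l_T+1\}$.

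Next I would reduce the full empirical average to the subsequence averages. Since the centered full average $T^{-1}\sum_{t=1}^T g(X_t) - E[g]$ is a convex combination, with weights $m_j/T$ summing to one, of the centered subsequence averages, its sup-norm over $g$ is at most the corresponding weighted average of the subsequence sup-norms. A weighted average of nonnegative quantities can exceed $\delta$ only if at least one of them does, so a union bound gives
\begin{equation*}
    q(T,\delta,P) \le \sum_{j=1}^{k_T} P\!\left(\sup_g \left|\frac{1}{m_j}\sum_i g(X_{j+ik_T}) - E[g]\right| > \delta\right).
\end{equation*}

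The heart of the argument is to replace the law of each subsequence, under the dependent measure $P$, by the product (i.i.d.) law $P^*$. The coordinates within the $j$-th subsequence are separated by gaps of exactly $k_T$, so I would peel them off one at a time: passing from $i-1$ to $i$ revealed coordinates, I first replace the joint law by the law of the first $i-1$ coordinates tensored with the marginal of the $i$-th, an operation controlled by the $\beta$-mixing coefficient between $\sigma(X_j,\ldots,X_{j+(i-1)k_T})$ and $\sigma(X_{j+ik_T})$, whose temporal separation is $k_T$; an inductive application then handles the first $i-1$ coordinates. Telescoping these $m_j-1$ steps via the triangle inequality for total variation bounds the distance between the joint law of the subsequence and the $m_j$-fold product of the common marginal by $(m_j-1)\, b_{\textnormal{mix}}(k_T,P)$. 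Hence, for the measurable deviation event $A_j$,
\begin{equation*}
    P(A_j) \le q(m_j,\delta,P^*) + (m_j-1)\, b_{\textnormal{mix}}(k_T,P).
\end{equation*}

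Finally I would sum over $j$. Using $\sum_{j=1}^{k_T}(m_j-1) \le \sum_{j=1}^{k_T} m_j = T$ bounds the mixing contribution by $T\, b_{\textnormal{mix}}(k_T,P)$, while replacing each $q(m_j,\delta,P^*)$ by $\max\{q(l_T,\delta,P^*),\, q(l_T+1,\delta,P^*)\}$ and noting there are $k_T$ subsequences produces the factor $k_T \max\{q(l_T+1,\delta,P^*),\, q(l_T,\delta,P^*)\}$, which yields the claimed inequality. I expect the main obstacle to be the telescoping total-variation bound: one must verify, invoking the definition of $\beta$-mixing at each step together with conditioning on the already-revealed coordinates of the subsequence, that the cumulative coupling error grows only linearly in $m_j$ rather than compounding. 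The remaining ingredients — the interleaving partition, the convexity-based union bound, and the final summation — are routine.
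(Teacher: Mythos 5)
Your proof is correct and is essentially the argument behind the cited result: the paper itself imports Theorem~\ref{karandikarthm} from \citet{karandikar2002rates} without proof, and your interleaved-blocking decomposition, the convexity-plus-union bound over the $k_T$ subsequences, and the telescoping total-variation coupling giving error $(m_j-1)\,b_{\textnormal{mix}}(k_T)$ per subsequence (the Eberlein--Yu lemma) is exactly the route of the original reference. The one step you flag as delicate is indeed fine: the accumulation is linear rather than compounding because tensoring two laws with the same fixed marginal does not increase total-variation distance, so the triangle inequality telescopes across the $m_j-1$ peeling steps.
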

Recall that we chose $s = \log (T)/(-2\log(\phi))$ and that $b_{\textnormal{mix}}(l) \leq 2\exp\{-c(l-2s)\}$ for $l-2s$ sufficiently large. So we apply the above theorem with $k_T = 3\log(T)/(2c) - \log (T)/\log(\phi) = C\log(T)$ such that the first term
\begin{equation*}
    Tb_{\textnormal{mix}}(k_T) \leq 2 T \exp\{-c3\log(T)/(2c)\}  = 2/\sqrt{T},
\end{equation*}
which converges to 0. Note that now $l_T$ is of the order $T/\{\tilde{c}\log T\}$, that is, $N$. So the second probability is of the same order as
\begin{equation*}
     4C\exp\left\{\log\log T-\frac{N\delta_2^2}{4C^2(64K_Y^4 + 1)}\right\} + 4(\log T) u(N,d) + 4(\log T) u_1(N,d).
\end{equation*}
Now plugging in the expression for $N$ and $\delta_2$, we have that the above quantity is upper bounded by
\begin{align*}
    &\quad 4C\exp\left\{\log\log T-C_3^2\frac{|S|(\log K + \log d)(\log T)^2 + (\log T)^3}{4C^2(64K_Y^4 + 1)} \right\} + \\
    &+ 4(\log T) u(T/\log T,d) \\
    &+ 4\exp\left\{2\log T + 2\log d + 2\log K-\frac{T^{1/2}}{(\log T)^{1/2}8K_Y^2\tilde{C}_1}\right\} \\
    &+ 4\exp\left\{\log\log T + 2\log d + \log K-\frac{T}{\log T\tilde{C}_2 (64K_Y^4)^2}\right\}
\end{align*}
which converges to 0 provided that 
\begin{equation*}
    (\log T) u(T/\log T,d) \rightarrow 0, \textnormal{ when } T \rightarrow \infty,
\end{equation*}
and that
\begin{equation*}
    \frac{\left(\log d + \log K\right)^2\log T}{T} = o(1).
\end{equation*}

\textbf{Step-IV: conclusion of the proof}. Combining everything, we have shown that under our conditions, for
\begin{equation*}
    \delta = C \sqrt{\frac{|S|l(T,d)(\log T)^3 (\log K + \log d) + (\log T)^4}{T}},
\end{equation*}
we have that as $T \rightarrow 0$,
\begin{equation*}
    P\left(\max_{i,j,k}\sup_{\theta\in\Theta}\left|\frac{1}{T}\sum_{t=1}^Tf_\theta^{ijk}(Y_{t-s}^{t+s}) -E\left[f_\theta^{ijk}(Y_{t-s}^{t+s})\right]\right| \geq \delta \right) \rightarrow 0.
\end{equation*}
This provides a means to control $\Delta$, and similar argument can be used to control $\Delta_p$ and $\Delta_\sigma$.
\end{proof}

\section{Useful lemmas}\label{app:usefullemma}
First we introduce an auxiliary lemma that will be useful to establish the approximation error bound in Lemma~\ref{lemma:approximationerror}. First, consider a generic (row) stochastic matrix $M \in \mathbb{R}^{a\times b}$, that is, a matrix whose entries are all non-negative and where entries in each row sum up to 1. Following \citet{hajnal1958weak}, we define the following quantities to measure the extent to which the rows of $M$ differ from each other. Define $\zeta(M) := \max_{1\leq i,k \leq a}\{1-\sum_{1\leq j \leq b}\min(M_{ij},M_{kj})\}$. We note that $\zeta(M)$ can be written as
\begin{align*}
    \zeta(M) &= \max_{1\leq i,k \leq a}\{1-\sum_{1\leq j \leq b}\min(M_{ij},M_{kj})\} \\
    &= \max_{1\leq i,k \leq a}\{\sum_{1\leq j \leq b} M_{ij}-\sum_{1\leq j \leq b}\min(M_{ij},M_{kj})\} \\
    &= \max_{1\leq i,k \leq a} \sum_{j: M_{ij} > M_{kj}} (M_{ij}-M_{kj}),
\end{align*}
and therefore $\zeta(M)$ is zero if and only if the rows of $M$ are all the same. Define $\psi(M) := \max_{1\leq i,k \leq a} \max_{1\leq j \leq b} |M_{ij} - M_{kj}|$. Again, $\psi(M) = 0$ if and only if the rows of $M$ are all the same. The following lemma establishes an important property of these measures.

\begin{lemma}[Lemma 3 in \citet{hajnal1958weak}]\label{hajnallemma}
If $M = M_1 M_2$ where $M_1$ and $M_2$ are both stochastic matrices, then $\psi(M) \leq \zeta(M_1) \psi(M_2)$.
\end{lemma}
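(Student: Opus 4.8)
The plan is to work entry-wise and exploit the sign structure of the row differences of $M_1$. Fix two row indices $i,k$ (of $M_1$, hence of $M$) and a column index $j$, and set $d_l := (M_1)_{il} - (M_1)_{kl}$ for the difference of the $l$-th entries in the two chosen rows of $M_1$. Since $M_1$ is stochastic, each of its rows sums to one, so $\sum_l d_l = 0$. Expanding the matrix product gives
\[
M_{ij} - M_{kj} = \sum_l \left[(M_1)_{il} - (M_1)_{kl}\right](M_2)_{lj} = \sum_l d_l\,(M_2)_{lj},
\]
a linear combination of the $j$-th column entries of $M_2$ whose coefficients sum to zero.

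Next I would separate the positive and negative parts of the $d_l$. Let $\delta := \sum_{l:\,d_l>0} d_l$ be the total positive mass; because $\sum_l d_l = 0$, the total negative mass satisfies $\sum_{l:\,d_l<0}|d_l| = \delta$ as well. Note that $\delta = \sum_{l:\,(M_1)_{il}>(M_1)_{kl}}\left[(M_1)_{il}-(M_1)_{kl}\right]$ is exactly the summand in the third expression for $\zeta$ established just before the lemma, so $\delta \le \zeta(M_1)$. If $\delta = 0$ the two rows of $M_1$ coincide and $M_{ij}-M_{kj}=0$, so assume $\delta>0$. Then $\{d_l/\delta : d_l>0\}$ and $\{|d_l|/\delta : d_l<0\}$ are two probability distributions over the index $l$, and we may write
\[
M_{ij} - M_{kj} = \delta\left(\sum_{l:\,d_l>0}\frac{d_l}{\delta}(M_2)_{lj} - \sum_{l:\,d_l<0}\frac{|d_l|}{\delta}(M_2)_{lj}\right),
\]
displaying the bracketed quantity as a difference of two convex combinations of the numbers $\{(M_2)_{lj}\}_l$.

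The difference of two convex combinations of a finite set of reals is bounded in absolute value by the spread $\max_l (M_2)_{lj} - \min_l (M_2)_{lj}$ of that set, which is at most $\psi(M_2)$ by the definition of $\psi$. Hence $|M_{ij}-M_{kj}| \le \delta\,\psi(M_2) \le \zeta(M_1)\,\psi(M_2)$, and taking the maximum over $i$, $k$ and $j$ yields $\psi(M) \le \zeta(M_1)\psi(M_2)$. The step requiring the most care is the sign-splitting bookkeeping: the equality of the positive and negative masses, forced by the stochasticity of $M_1$, is precisely what allows the scalar $\delta$ to be factored out and the remaining factor to be recognized as a genuine difference of convex combinations; the bound $\delta \le \zeta(M_1)$ then follows directly from the reformulation of $\zeta$ recorded above.
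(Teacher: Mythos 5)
Your proof is correct and complete. Note that the paper itself gives no proof of this lemma---it is imported verbatim as Lemma 3 of \citet{hajnal1958weak}---and your argument is essentially the classical one from that source: expand $M_{ij}-M_{kj}=\sum_l d_l\,(M_2)_{lj}$ with $\sum_l d_l=0$, split the coefficients into positive and negative parts of common mass $\delta \le \zeta(M_1)$ (using exactly the reformulation of $\zeta$ that the paper records just before the lemma), and bound the resulting difference of two convex combinations of $\{(M_2)_{lj}\}_l$ by the column spread, which is at most $\psi(M_2)$. The sign-splitting bookkeeping you flag as the delicate step is handled correctly, so there is nothing to fix.
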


Next, we state a key concentration result for $\beta$-mixing processes that we will apply when proving the restricted eigenvalue condition.
\begin{lemma}[Adapted from Lemma 13 of \citet{wong2020lasso}; see also \citet{merlevede2011bernstein}]\label{keyconcentration}
Let $\{X_t\}_{t=1}^T$ be a strictly stationary sequence of mean zero random variables that are subweibull$(\gamma_2)$ with subweibull constant $K_X$. Denote their sum by $S_T$. Suppose their $\beta$-mixing coefficients satisfy $b_{\textnormal{mix}}(l) \leq 2\exp(-c(l-s)^{\gamma_1})$ for $l \geq s$ and $s \leq C\log(T)$ for some constant $C$. Let $\gamma$ be a parameter given by $\gamma = (1/\gamma_1 + 1/\gamma_2)^{-1}$, and further assume $\gamma < 1$. Then for $T>4$ and any $t > T^{-1/2}$,
\begin{equation}
    P\left(\left|S_T/T\right| > t\right) \leq T\exp\left\{-\frac{(tT)^\gamma}{K_X^\gamma C_1}\right\} + \exp\left\{-\frac{t^2T}{K_X^2C_2}\right\},
\end{equation}
where the constants $C_1$ and $C_2$ depend only on $\gamma_1$, $\gamma_2$ and $c$.
\end{lemma}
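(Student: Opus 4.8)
The plan is to obtain the result as an adaptation of Lemma~13 in \citet{wong2020lasso}, whose argument rests on the Bernstein-type inequality of \citet{merlevede2011bernstein} for exponentially $\beta$-mixing sequences. The only structural departure from the original statement is the shift by $s$ in the hypothesized mixing bound, $b_{\textnormal{mix}}(l) \leq 2\exp\{-c(l-s)^{\gamma_1}\}$ for $l \geq s$, which arises because the approximate conditional expectations depend on a past window of length $s$. Accordingly, the whole task reduces to re-running the blocking argument and checking that this shift enters only through constants and polylogarithmic factors, given $s \leq C\log T$.

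First I would set up the standard blocking scheme, partitioning $\{1,\ldots,T\}$ into alternating big blocks of length $a$ and small buffer blocks of length $b$. The buffers are what render non-adjacent big blocks approximately independent, and here they must exceed $s$, since the assumed mixing coefficient only begins to decay at lags beyond $s$; indeed, for $s \leq l \leq 2s$ the bound $2\exp\{-c(l-s)^{\gamma_1}\}$ is essentially vacuous, so I would rely on the buffers alone to supply the decoupling rather than on any decorrelation at short lags. By Berbee's coupling lemma, I would replace the big-block sums by independent copies whose total-variation cost is at most the number of blocks times $b_{\textnormal{mix}}(b) \leq 2\exp\{-c(b-s)^{\gamma_1}\}$. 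Choosing $b$ so that $b > s$ and $(b-s)^{\gamma_1}$ exceeds a large multiple of $\log T$ forces $b$ to be only of polylogarithmic order, making the coupling error negligible while discarding just an $O(\mathrm{poly}(\log T)/T)$ fraction of the data, so the effective sample size remains $T(1+o(1))$.

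Next I would apply a Bernstein inequality to the decoupled sum of independent big-block sums. Each big-block sum aggregates $a$ subweibull$(\gamma_2)$ variables and, through the interplay between the mixing rate $\gamma_1$ and the tail index $\gamma_2$ quantified in \citet{merlevede2011bernstein}, behaves like a subweibull$(\gamma)$ variable with $\gamma = (1/\gamma_1 + 1/\gamma_2)^{-1}$; this is precisely where the parameter $\gamma$ in the statement comes from, and the assumption $\gamma < 1$ ensures the heavy tail dominates the Gaussian one in the large-deviation regime. The resulting two-regime bound produces the sub-Gaussian term $\exp\{-t^2 T/(K_X^2 C_2)\}$, dominant for moderate $t$, together with the heavier term $T\exp\{-(tT)^\gamma/(K_X^\gamma C_1)\}$, dominant for large $t$, with the leading factor $T$ coming from the union bound over blocks. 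The restriction $t > T^{-1/2}$ places us past the small-deviation regime controlled by the block-variance proxy, so that the two exponential terms indeed govern the probability.

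I expect the main obstacle to be the bookkeeping around the shift $s$: I must verify that enlarging the buffer length to clear $s$ costs only polylogarithmic factors, so that neither $C_1$ nor $C_2$ degrades with $s$ and the exponents in the two tail terms are unchanged from the $s=0$ case. A related subtlety is to confirm that the subweibull constant of the block sums does not blow up as $a$ grows with $T$; this is the quantitative core inherited from \citet{merlevede2011bernstein}, and the cleanest route is to invoke their moment bounds directly rather than reprove them, treating the shift purely as a modification of the coupling step.
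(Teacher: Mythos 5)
The paper never reproves this lemma: it is imported essentially verbatim from Lemma~13 of \citet{wong2020lasso}, whose quantitative core is the Bernstein inequality of \citet{merlevede2011bernstein} (proved there by truncation plus a multi-scale, Cantor-type construction, not by one-scale blocking). The only ``adaptation'' is the lag shift in the mixing hypothesis, which arises because the windowed process $\{Y_{t-s}^t\}$ inherits $b_{\textnormal{mix}}(l)\leq b_{\textnormal{mix}}^{Y}(l-s)$ for $l\geq s$; since $(l-s)^{\gamma_1}\geq (l/2)^{\gamma_1}$ for $l\geq 2s$ and $s\leq C\log T$, the shifted bound gives back unshifted geometric decay (with constant $c/2^{\gamma_1}$) beyond lag $2s$, and the short lags contribute only additively at polylogarithmic scale. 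Your closing remark---invoke the Merlev\`ede--Peligrad--Rio bounds directly and confine the shift to the mixing-coefficient bookkeeping---is exactly the paper's (implicit) route, and note that in the paper's actual applications the dependent-to-independent reduction is done separately via the Karandikar--Vidyasagar blocking with $k_T\asymp\log T$, so the lemma is mostly invoked on i.i.d.\ surrogates where the mixing hypothesis is vacuous.

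The genuine gap is in your featured argument. A single-scale big-block/small-block scheme with Berbee coupling and \emph{polylogarithmic} buffers cannot yield the stated inequality uniformly over $t>T^{-1/2}$: the decoupling cost is at least (number of blocks) $\times\, b_{\textnormal{mix}}(b)\asymp T\exp\{-c(b-s)^{\gamma_1}\}$, an error floor independent of $t$, whereas for $t$ of constant order the claimed right-hand side is of order $\exp\{-cT^{\gamma}\}$---exponentially smaller than any polylog-buffer coupling error. Choosing $b$ adaptively in $t$ (one needs $b-s\gtrsim (tT)^{\gamma/\gamma_1}$) forces polynomially long buffers, and then the subweibull constants and variances of the big-block sums grow with the block length, degrading $C_1,C_2$ by $\log T$ or worse; this is precisely why \citet{merlevede2011bernstein} avoid naive blocking. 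Relatedly, your claim that the shift costs nothing needs care: crudely absorbing the shift into the mixing constant (demanding $2\exp(-c'l^{\gamma_1})\geq 1$ for $l<2s$) forces $c'\asymp c/(\log T)^{\gamma_1}$, which would make $C_1,C_2$ depend on $T$; keeping them dependent only on $(\gamma_1,\gamma_2,c)$ requires the shift to be handled inside the truncation/covariance steps of the original proof, not by rescaling $c$. Finally, a minor misattribution: the prefactor $T$ in the first term comes from the union bound over the $T$ summands exceeding the truncation level in the subweibull truncation step, not from a union over blocks.
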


\section{Additional numerical results}\label{app:additionalsim}

\subsection{Additional simulation results}
We consider the same three settings as in Section~\ref{sec: simulation} and include two additional estimators. The first one is based on a variant of the EM algorithm that replaces the unobserved indicator functions with the filtered probabilities in the E-step. Specifically, the filtered probabilities are defined as
\begin{align*}
    \tilde w_{j,\theta}(Y_0^t) &= P_\theta(Z_t=j| Y_0, Y_1,\ldots, Y_t); \\
    \tilde w_{ij,\theta}(Y_0^t) &= P_\theta(Z_{t-1}=i, Z_t=j| Y_0, Y_1,\ldots, Y_t).
\end{align*}
The second estimator uses approximations of these filtered probabilities in the E-step defined as follows
\begin{align*}
    \tilde m_{j,\theta}(Y_0^t) &= P_\theta(Z_t=j| Z_{t-s}=1,Y_{t-s}, Y_{t-s+1},\ldots, Y_t); \\
    \tilde m_{ij,\theta}(Y_0^t) &= P_\theta(Z_{t-1}=i, Z_t=j| Z_{t-s}=1,Y_{t-s}, Y_{t-s+1},\ldots, Y_t),
\end{align*}
with $s \asymp \log T$. Again, the approximations are equally accurate given $Z_{t-s}=i$ for any value $i.$

We present the results for Settings I-III in Figures~\ref{fig: Setting1_w_filter}-\ref{fig: Setting3_w_filter}. In addition to the patterns noted in Section~\ref{sec: simulation}, we also observe that using (approximations of) the smoothed probabilities and filtered probabilities leads to very similar estimation errors across all settings considered except with moderate sample size ($T=1000$) in Setting II. The filtered-probability-based variants have slightly larger estimation errors in the regression coefficients than their smoothed-probability-based counterparts.

\begin{figure}
    \centering
    \includegraphics[width=0.95\textwidth]{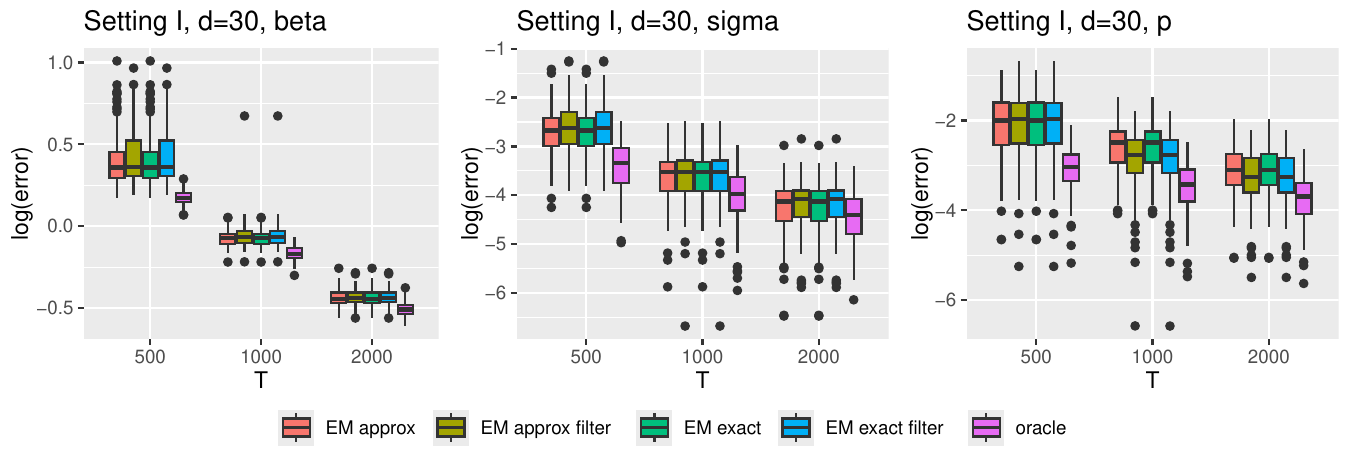}
    \includegraphics[width=0.95\textwidth]{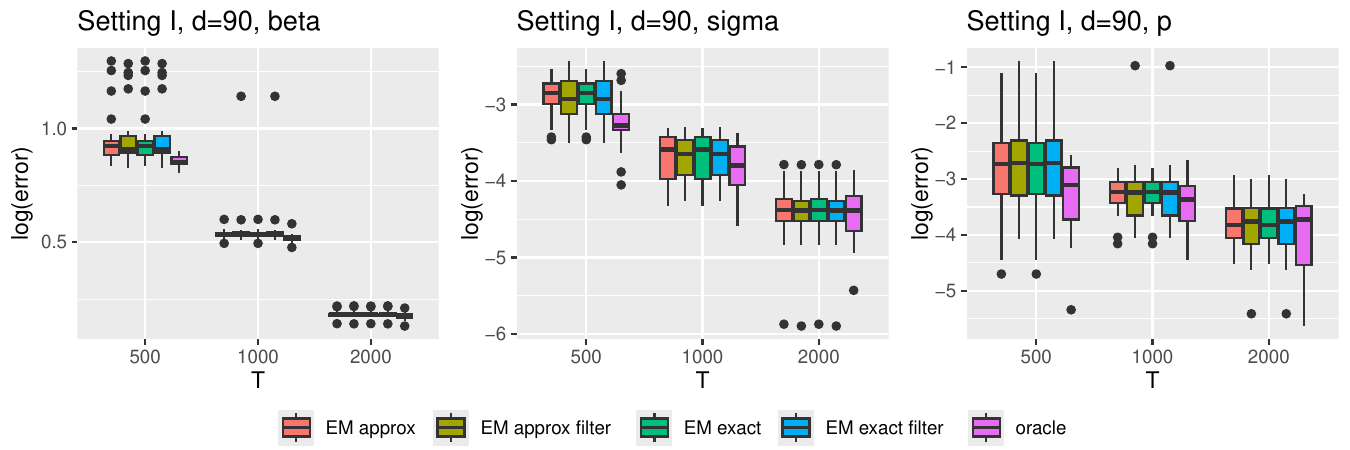}
    \caption{Estimation error of regression coefficients $\beta$ (left column), conditional variance $\sigma^2$ (middle column), and transition probabilities $p$ (right column) in Setting I, comparing the oracle estimator (oracle), the EM algorithm with exact smoothed probabilities (EM exact), the approximate EM algorithm with approximate smoothed probabilities (EM approx), variant of the EM using filtered probabilities (EM exact filter), and variant of the approximate EM using approximate filtered probabilities (EM approx filter). We vary $d \in \{30, 90\}$ and $T \in \{500,1000,2000\}$. Log error is defined as $\log(\|\hat\beta - \beta^*\|_2)$, $\log(\|\hat\sigma^2 - (\sigma^*)^2\|_2)$, and $\log(\|\hat p - p^*\|_2)$, respectively. Results are based on 100 simulation replications for $d=30$, and 20 for $d=90$.}
    \label{fig: Setting1_w_filter}
\end{figure}

\begin{figure}
    \centering
    \includegraphics[width=0.95\textwidth]{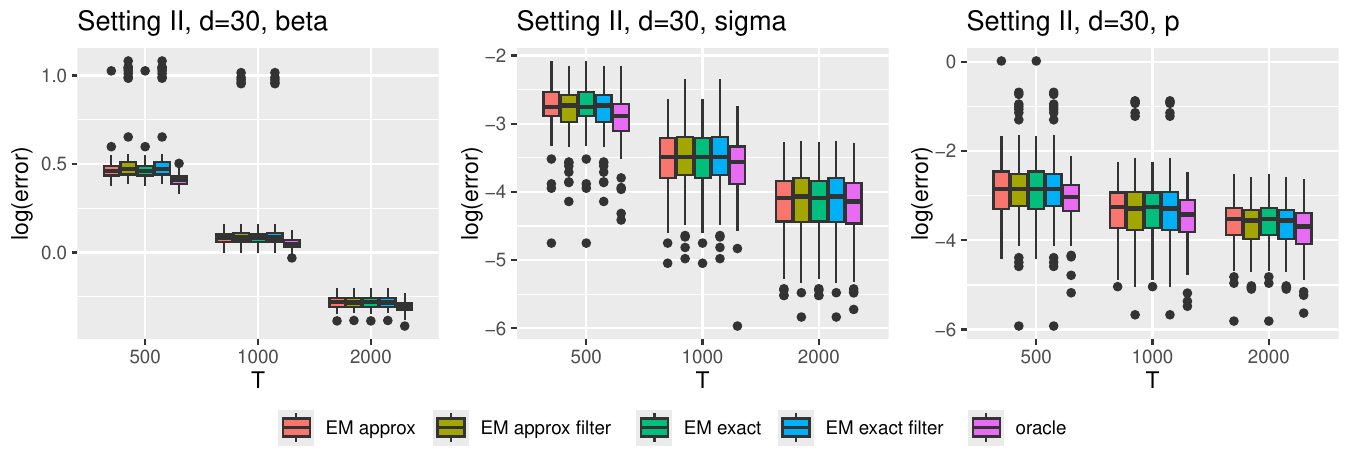}
    \includegraphics[width=0.95\textwidth]{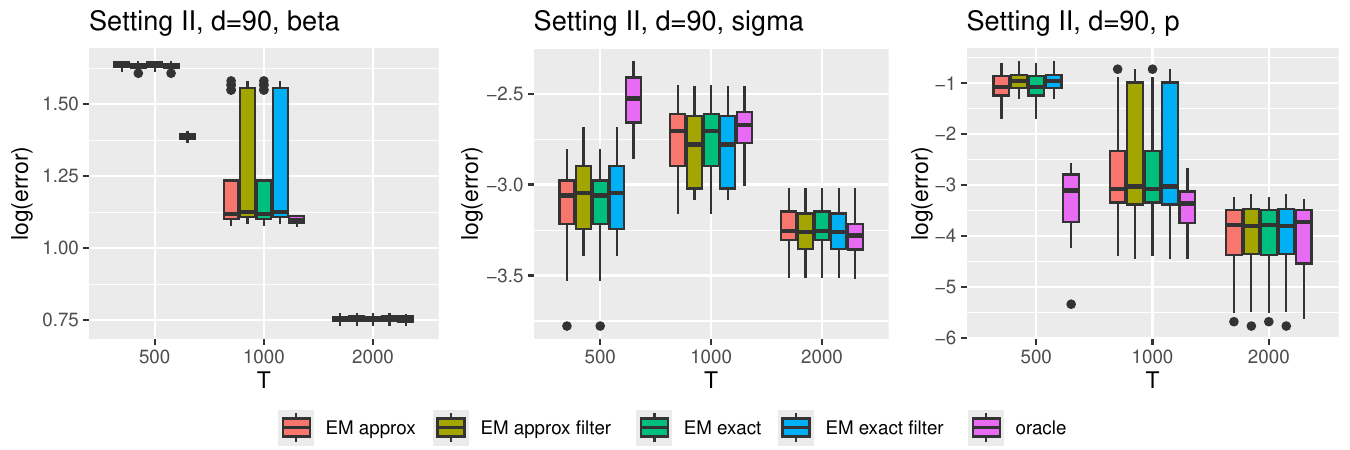}
    \caption{Estimation error of regression coefficients $\beta$ (left column), conditional variance $\sigma^2$ (middle column), and transition probabilities $p$ (right column) in Setting II, comparing the oracle estimator (oracle), the EM algorithm with exact smoothed probabilities (EM exact), the approximate EM algorithm with approximate smoothed probabilities (EM approx), variant of the EM using filtered probabilities (EM exact filter), and variant of the approximate EM using approximate filtered probabilities (EM approx filter). We vary $d \in \{30, 90\}$ and $T \in \{500,1000,2000\}$. Log error is defined as $\log(\|\hat\beta - \beta^*\|_2)$, $\log(\|\hat\sigma^2 - (\sigma^*)^2\|_2)$, and $\log(\|\hat p - p^*\|_2)$, respectively. Results are based on 100 simulation replications for $d=30$, and 20 for $d=90$.}
    \label{fig: Setting2_w_filter}
\end{figure}

\begin{figure}
    \centering
    \includegraphics[width=0.95\textwidth]{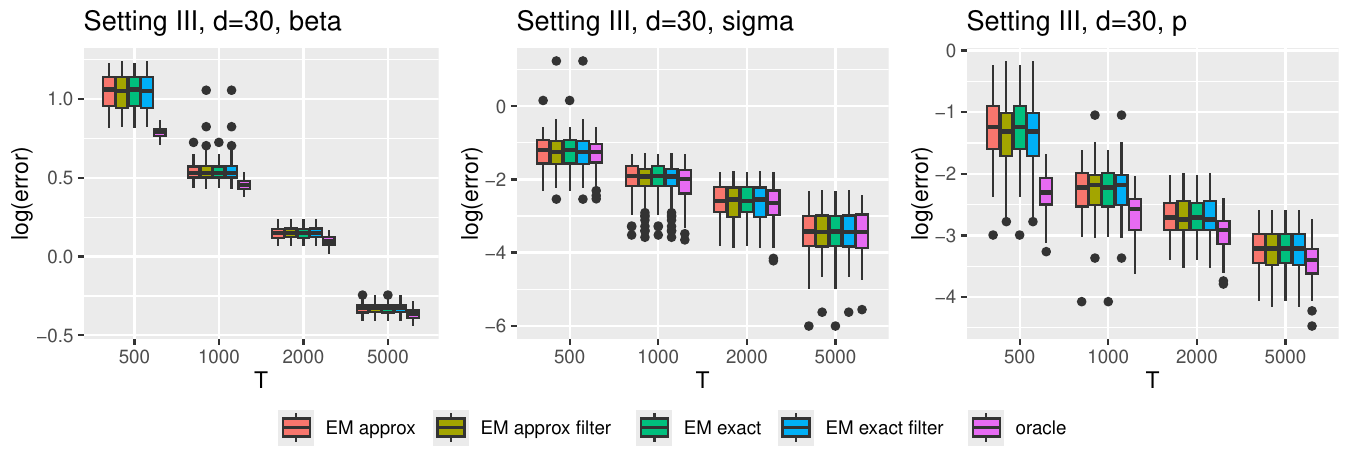}
    \includegraphics[width=0.95\textwidth]{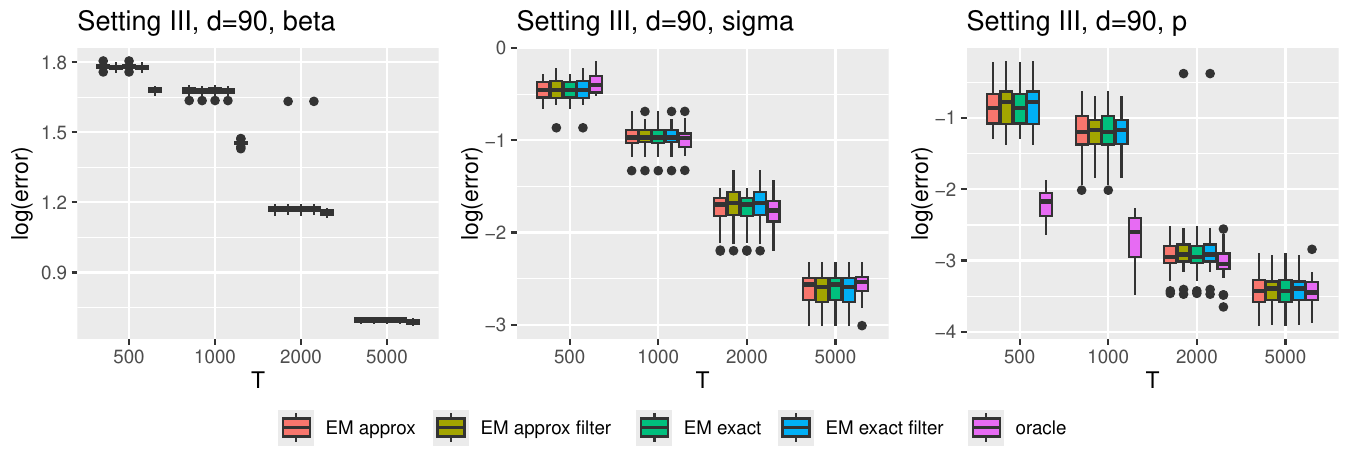}
    \caption{Estimation error of regression coefficients $\beta$ (left column), conditional variance $\sigma^2$ (middle column), and transition probabilities $p$ (right column) in Setting III with 3 regimes, comparing the oracle estimator (oracle), the EM algorithm with exact smoothed probabilities (EM exact), the approximate EM algorithm with approximate smoothed probabilities (EM approx), variant of the EM using filtered probabilities (EM exact filter), and variant of the approximate EM using approximate filtered probabilities (EM approx filter). We vary $d \in \{30, 90\}$ and $T \in \{500,1000,2000, 5000\}$. Log error is defined as $\log(\|\hat\beta - \beta^*\|_2)$, $\log(\|\hat\sigma^2 - (\sigma^*)^2\|_2)$, and $\log(\|\hat p - p^*\|_2)$, respectively. Results are based on 50 simulation replications for $d=30$, and 20 for $d=90$.}
    \label{fig: Setting3_w_filter}
\end{figure}

\subsection{Additional results on the EEG data analysis}
In the EEG data analysis, we fit Markov-switching VARs varying $K \in \{2,3,4\}$, and plot the estimated posterior probabilities of each state in Figure~\ref{fig: state prob comp}. We observe that the results with $K=3$ and $K=4$ are less interpretable as the estimated states do not correspond well to the seizure events and are generally mixed over time. To further validate our selection of $K=2$, we compute a high-dimensional BIC criteria for the estimated models with different numbers of states. Specifically, we use the EBIC given in \citet{wang2011consistent} \citep[see, also, ][]{chen2008extended},
\begin{equation*}
    \textnormal{EBIC}_\gamma(M) = -2\log \hat L + \left\{\log T + 2\gamma \log p \right\}|M|,
\end{equation*}
where $\hat L$ is the value of the likelihood function,$|M|$ is the number of active parameters in a model $M$, $p$ is the total number of parameters which equals to $Kd^2 + K^2$ in our setting, and $T$ is the number of time points. We choose the hyperparameter $\gamma = 1$. As the latent regime variable $Z_t$ is not observable, we cannot compute the full-data likelihood function or the observed-data likelihood function with ease without marginalizing over all possible paths of $Z_t$, which is computationally infeasible. Instead, we replace $\log \hat L$ with its conditional expectation, which is the unpenalized objective function in the M-step:
\begin{multline}\label{eq: expected log likelihood}
    \sum_{t=1}^{T}
    \Bigg[ \left(\sum_{i=1}^K\sum_{j=1}^K m_{ij,\hat\theta}(Y_{t-s}^{t+s}) \log \hat p_{ij}\right) 
    - \frac{d}{2} \sum_{j=1}^K m_{j,\hat\theta}(Y_{t-s}^{t+s}) \left(\log 2\pi + \log \hat\sigma_j^2 \right) \\
    - \left( \sum_{j=1}^K m_{j,\hat\theta}(Y_{t-s}^{t+s}) \frac{1}{2\hat\sigma_j^2} \left\|Y_t - (\textnormal{Id}_d \otimes Y_{t-1})^\top \hat\beta_j \right\|_2^2\right)\Bigg].
\end{multline}
Computing the EBIC for the fitted models, we get values $24,026.39$, $26,849.29$ and $29,999.76$ for $K=2,3,4$, respectively. This suggests choosing $K=2$ as this leads to the smallest EBIC value.

\begin{figure}
    \centering
    \includegraphics[width=0.75\linewidth,height=0.65\textwidth]{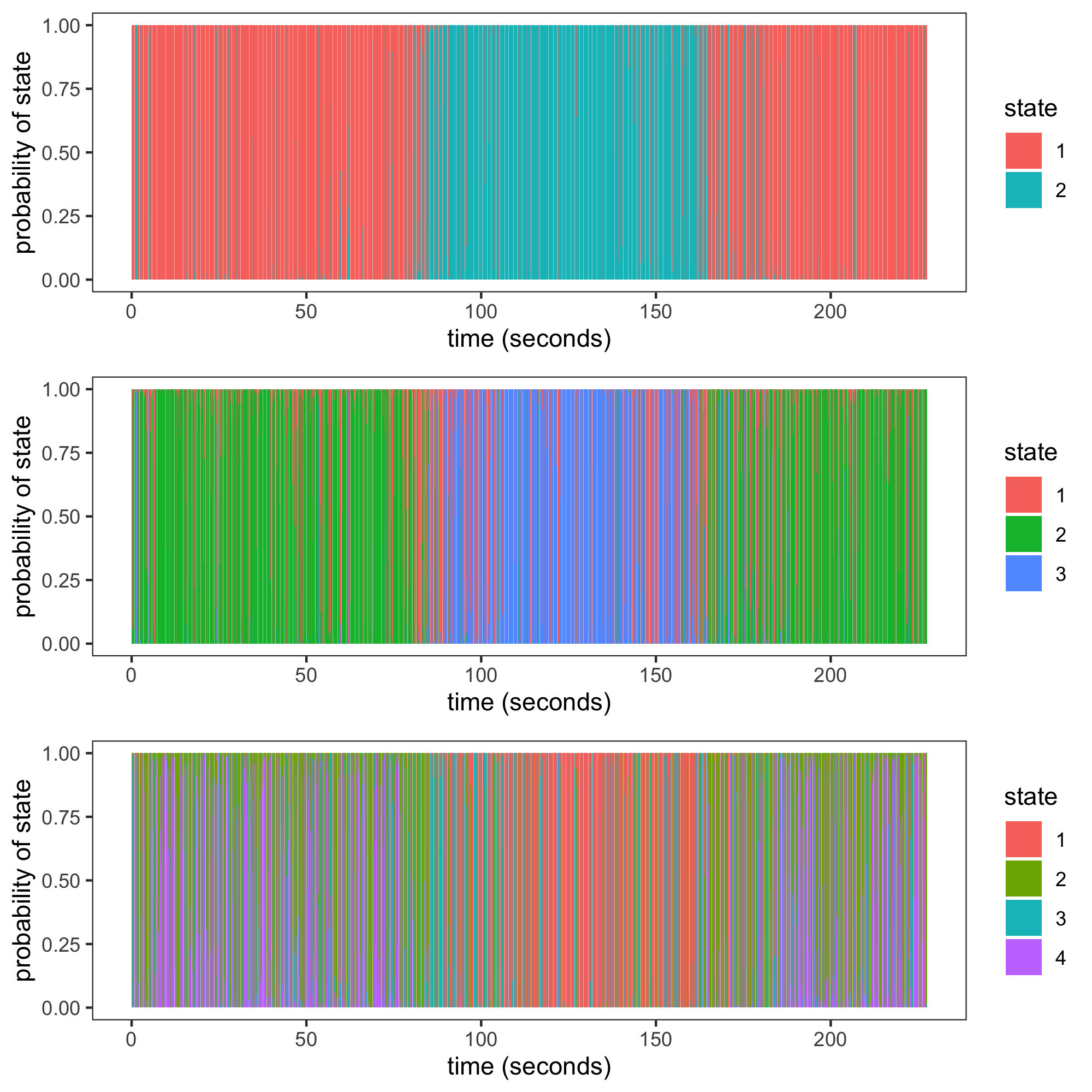}
    \caption{Conditional probability of each state estimated by the regularized approximate EM algorithm in a Markov-switching VAR model with 2, 3 or 4 states. Height of colored bar represents the estimated probability of the corresponding state}
    \label{fig: state prob comp}
\end{figure}

\section{Signal strength in a symmetric case with i.i.d. $Z_t$}\label{app:signalempirical}
In this Supplementary Appendix, we study the (inverse) signal strength measure $\kappa$ introduced in Assumption~\ref{signalstrength} more closely in an example. Specifically, we focus on the case where the regime variables $\{Z_t\}$ are binary with value 1 or 2, and are independent and identically distributed over time, that is, $p_{11} = p_{21}$ in the transition matrix. Moreover, we assume that the transition probability of $Z$, $p$, is known, the variance $\sigma^2$ is known to be 1 in both regimes, and the regression coefficient matrices in the two regimes are such that $B_1 = B$ and $B_2 = -B$. As $B_1  = -B_2$, we refer to this case as a symmetric mixture (of vector autoregression.) 

Let $\beta = \vectorize{B}$, and $\beta$ is the only parameter that needs to be estimated. Hence, hereafter in Supplementary Appendix~\ref{app:signalempirical}, we will write $\beta$ for the parameter vector instead of $\theta$. Let $p_1 = P(Z_t = 1)$ and $p_2 = P(Z_t = 2)$. The independence among the $Z_t$'s allows us to obtain a simple form for the smoothed probability $w_{j,\beta}(Y_0^T)$. Indeed, we have
\begin{align*}
    w_{1,\beta}(Y_0^T) &= P(Z_t = 1 | Y_0, Y_1,\ldots,Y_T) = P(Z_t = 1 | Y_{t-1},Y_t) \\
    &= \frac{P(Y_{t-1},Z_t = 1, Y_t)}{P(Y_{t-1},Y_t)} = \frac{P(Z_t = 1, Y_t | Y_{t-1})}{P(Y_t|Y_{t-1})} \\
    &= \frac{P(Y_t |Y_{t-1},Z_t = 1)P(Z_t = 1|Y_{t-1})}{\sum_{j=1}^2 P(Y_t |Y_{t-1},Z_t = j)P(Z_t = j|Y_{t-1})} \\
    &= \frac{p_1 P(Y_t |Y_{t-1},Z_t = 1)}{\sum_{j=1}^2 p_j P(Y_t |Y_{t-1},Z_t = j)} \\
    &= \frac{p_1 \exp\left(-\frac{1}{2}\left\|Y_t - (\textnormal{Id}_d \otimes Y_{t-1})^\top \beta\right\|_2^2\right)}{p_1 \exp\left(-\frac{1}{2}\left\|Y_t - (\textnormal{Id}_d \otimes Y_{t-1})^\top \beta\right\|_2^2\right) + p_2 \exp\left(-\frac{1}{2}\left\|Y_t + (\textnormal{Id}_d \otimes Y_{t-1})^\top \beta\right\|_2^2\right)},
\end{align*}
and $w_{2,\beta} = 1 - w_{1,\beta}$. As these filtered probabilities depend only on $Y_{t-1}$ and $Y_t$, we will write them as $w_{j,\beta}(Y_{t-1}^t)$. By some algebra, we have the gradient of $w_{1,\beta}(Y_{t-1}^t)$ with respect to $\beta$:
\begin{align*}
    \frac{\partial w_{1,\beta}(Y_{t-1}^t)}{\partial \beta} &= 2w_{1,\beta}(Y_{t-1}^t)\left\{1 - w_{1,\beta}(Y_{t-1}^t)\right\}\left(\textnormal{Id}_d \otimes Y_{t-1}\right) Y_t \\
    &=  2w_{1,\beta}(Y_{t-1}^t)\left\{1 - w_{1,\beta}(Y_{t-1}^t)\right\}\left(Y_t \otimes Y_{t-1} \right),
\end{align*}
where the second equality follows from the mixed-product property of Kronecker product. 

As both $p$ and $\sigma^2$ are assumed to be known, we only need to optimize the following objective function in the population EM algorithm
\begin{align*}
    Q(\tilde\beta | \beta) &= \frac{1}{T}\sum_{t=1}^T E_0 \left[ -\frac{1}{2}w_{1,\beta}(Y_0^T)\|Y_t - (\textnormal{Id}_d \otimes Y_{t-1})^\top \tilde\beta \|_2^2 -\frac{1}{2}\left\{1 - w_{1,\beta}(Y_0^T)\right\}\|Y_t + (\textnormal{Id}_d \otimes Y_{t-1})^\top \tilde\beta \|_2^2 \right] \\
    &= \frac{1}{T}\sum_{t=1}^T E_0 \left[ -\frac{1}{2}w_{1,\beta}(Y_{t-1}^t)\|Y_t - (\textnormal{Id}_d \otimes Y_{t-1})^\top \tilde\beta \|_2^2 -\frac{1}{2}\left\{1 - w_{1,\beta}(Y_{t-1}^t)\right\}\|Y_t + (\textnormal{Id}_d \otimes Y_{t-1})^\top \tilde\beta \|_2^2 \right] \\
    &= E_0 \left[ -\frac{1}{2}w_{1,\beta}(Y_{t-1}^t)\|Y_t - (\textnormal{Id}_d \otimes Y_{t-1})^\top \tilde\beta \|_2^2 -\frac{1}{2}\left\{1 - w_{1,\beta}(Y_{t-1}^t)\right\}\|Y_t + (\textnormal{Id}_d \otimes Y_{t-1})^\top \tilde\beta \|_2^2 \right].
\end{align*}
To maximize the objective function $Q(\tilde\beta|\beta)$, we first take its derivative with respect to $\tilde\beta$,
\begin{equation*}
    \frac{\partial Q(\tilde\beta|\beta)}{\partial \tilde\beta} = - E_0\left[(\textnormal{Id}_d \otimes Y_{t-1})(\textnormal{Id}_d \otimes Y_{t-1})^\top \tilde\beta + \left\{1-2w_{1,\beta}(Y_{t-1}^t)\right\}(\textnormal{Id}_d \otimes Y_{t-1})Y_t\right].
\end{equation*}
Setting the derivative to 0, we have that
\begin{equation*}
    M(\beta) = \left\{E_0\left[(\textnormal{Id}_d \otimes Y_{t-1})(\textnormal{Id}_d \otimes Y_{t-1})^\top\right]\right\}^{-1} E_0\left[\left\{2w_{1,\beta}(Y_{t-1}^t) -1 \right\}(\textnormal{Id}_d \otimes Y_{t-1})Y_t\right].
\end{equation*}
Consequently,
\begin{align*}
    &\frac{\partial M(\beta)}{\partial \beta} \\
    &= \left\{E_0\left[(\textnormal{Id}_d \otimes Y_{t-1})(\textnormal{Id}_d \otimes Y_{t-1})^\top\right]\right\}^{-1} E_0\left[(\textnormal{Id}_d \otimes Y_{t-1})Y_t 2\frac{\partial w_{1,\beta}(Y_{t-1}^t)}{\partial \beta}^\top\right] \\
    &= \left\{E_0\left[(\textnormal{Id}_d \otimes Y_{t-1})(\textnormal{Id}_d \otimes Y_{t-1})^\top\right]\right\}^{-1} E_0\left[4w_{1,\beta}(Y_{t-1}^t)\left\{1 - w_{1,\beta}(Y_{t-1}^t)\right\}(\textnormal{Id}_d \otimes Y_{t-1})Y_t \left(Y_t \otimes Y_{t-1} \right)^\top \right] \\
    &= \left\{E_0\left[(\textnormal{Id}_d \otimes Y_{t-1})(\textnormal{Id}_d \otimes Y_{t-1})^\top\right]\right\}^{-1} E_0\left[4w_{1,\beta}(Y_{t-1}^t)\left\{1 - w_{1,\beta}(Y_{t-1}^t)\right\}\left(Y_t \otimes Y_{t-1} \right) \left(Y_t \otimes Y_{t-1} \right)^\top \right] \\
    &= \left\{E_0\left[\textnormal{Id}_d \otimes (Y_{t-1} Y_{t-1}^\top)\right]\right\}^{-1} E_0\left[4w_{1,\beta}(Y_{t-1}^t)\left\{1 - w_{1,\beta}(Y_{t-1}^t)\right\}\left(Y_t \otimes Y_{t-1} \right) \left(Y_t \otimes Y_{t-1} \right)^\top \right] \\
    &= \left\{ \textnormal{Id}_d \otimes E_0\left[Y_{t-1} Y_{t-1}^\top\right]\right\}^{-1} E_0\left[4w_{1,\beta}(Y_{t-1}^t)\left\{1 - w_{1,\beta}(Y_{t-1}^t)\right\}\left(Y_t \otimes Y_{t-1} \right) \left(Y_t \otimes Y_{t-1} \right)^\top \right]
\end{align*}
where the third and fourth equality is again due to the mixed-product property of Kronecker product. 

We now derive an upper bound on $\|\partial M(\beta)/\partial \beta\|_2$, and later we will examine this upper bound empirically. Note that the matrix $\textnormal{Id}_d \otimes E_0[Y_{t-1} Y_{t-1}^\top]$ is positive definite, and so is its inverse. Thus, we have that
\begin{align*}
    &\left\|\frac{\partial M(\beta)}{\partial \beta}\right\|_2 \\
    &\leq \left\|\left\{ \textnormal{Id}_d \otimes E_0\left[Y_{t-1} Y_{t-1}^\top\right]\right\}^{-1}\right\|_2 \left\|E_0\left[4w_{1,\beta}(Y_{t-1}^t)\left\{1 - w_{1,\beta}(Y_{t-1}^t)\right\}\left(Y_t \otimes Y_{t-1} \right) \left(Y_t \otimes Y_{t-1} \right)^\top \right]\right\|_2 \\
    &= \lambda_{\max}\left(\left\{ \textnormal{Id}_d \otimes E_0\left[Y_{t-1} Y_{t-1}^\top\right]\right\}^{-1}\right) \left\|E_0\left[4w_{1,\beta}(Y_{t-1}^t)\left\{1 - w_{1,\beta}(Y_{t-1}^t)\right\}\left(Y_t \otimes Y_{t-1} \right) \left(Y_t \otimes Y_{t-1} \right)^\top \right]\right\|_2 \\
    &= \left\{\lambda_{\min}\left( \textnormal{Id}_d \otimes E_0\left[Y_{t-1} Y_{t-1}^\top\right]\right) \right\}^{-1} \left\|E_0\left[4w_{1,\beta}(Y_{t-1}^t)\left\{1 - w_{1,\beta}(Y_{t-1}^t)\right\}\left(Y_t \otimes Y_{t-1} \right) \left(Y_t \otimes Y_{t-1} \right)^\top \right]\right\|_2 \\
    &= \left\{\lambda_{\min}\left( E_0\left[Y_{t-1} Y_{t-1}^\top\right]\right) \right\}^{-1} \left\|E_0\left[4w_{1,\beta}(Y_{t-1}^t)\left\{1 - w_{1,\beta}(Y_{t-1}^t)\right\}\left(Y_t \otimes Y_{t-1} \right) \left(Y_t \otimes Y_{t-1} \right)^\top \right]\right\|_2.
\end{align*}
Define an inverse signal-to-noise ratio as
\begin{equation*}
    \textnormal{ISNR}_\beta = \left\{\lambda_{\min}\left( E_0\left[Y_{t-1} Y_{t-1}^\top\right]\right) \right\}^{-1} \left\|E_0\left[4w_{1,\beta}(Y_{t-1}^t)\left\{1 - w_{1,\beta}(Y_{t-1}^t)\right\}\left(Y_t \otimes Y_{t-1} \right) \left(Y_t \otimes Y_{t-1} \right)^\top \right]\right\|_2,
\end{equation*}
which is a continuous function of $\beta$. 

We empirically examine the magnitude of ISNR$_{\beta^*}$ in two sets of experiments as we increase the magnitude or dimension of $\beta^*$. In the first set of experiments, we consider a low-dimensional setting with $d=3$, and increase the magnitude of $\beta^*$. In particular, we define the matrix $A$ as
\[
A =
\begin{bmatrix}
0.5 & 0 & 0 \\
0.1 & 0.1 & 0.3\\
0 & 0.2 & 0.3
\end{bmatrix},
\]
and let $B^* = \mu A$, for a scaling factor $\mu$ that we vary from $0.3$ to $1.5$. The choice of the range of $\mu$ is such that $\|B^*\|_2$ does not exceed 1 per Assumption~\ref{cond: operatornorm}. Let $\beta^* = \vectorize{B^*}$, and we examine ISNR$_{\beta^*}$ as we increase $\mu$. The relevant expectations in the definition of ISNR are approximated with the corresponding sample average, using a sample of size 100,000 taken after a 50,000-step burn-in period. We set $p_1 = p_2 = 0.5$. The results are presented in Figure~\ref{ISNRscale}. We observe that as the magnitude of the regression coefficients increases, the inverse signal-to-noise ratio decreases. Therefore, the ISNR is indeed a reasonable measure of signal strength, and the signal becomes strong as the magnitude of the regression coefficients becomes larger. 
\begin{figure}
    \centering
    \includegraphics[width=0.8\textwidth]{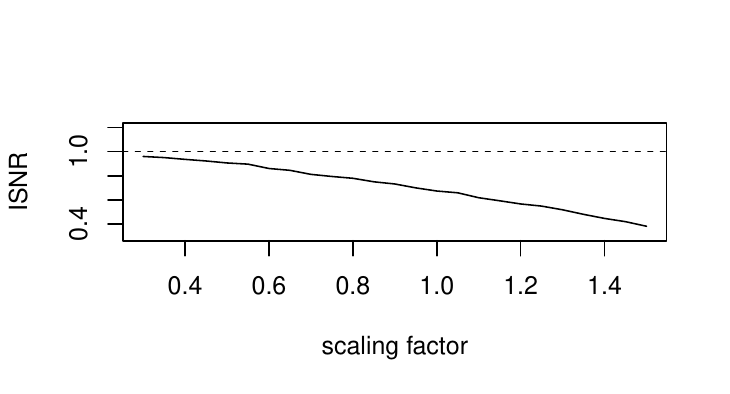}
    \caption{Inverse signal-to-noise ratio (ISNR) in a low-dimensional setting ($d=3$), when the scaling factor increases from 0.3 to 1.5. The magnitude of the regression coefficients is proportional to the scaling factor. Dashed line corresponds to ISNR being 1.}
    \label{ISNRscale}
\end{figure}

In the second set of experiments, we fix the scaling factor $\mu$ at 1, but increase the dimension $d$ from 3 to 30. We take $B^* = \textnormal{Id}_{d/3} \otimes A$ and $\beta^* = \vectorize{B^*}$. Figure~\ref{ISNRdimension} plots the ISNR against the dimension. We observe that when individual diagonal block in $B^*$ remains the same as the dimension $d$ increases, the signal becomes stronger and the ISNR becomes smaller. Moreover, this suggests that as dimension increases, we can allow the magnitude of the difference of a individual regression coefficient between regimes to shrink and still get a non-decreasing signal-to-noise ratio.
\begin{figure}
    \centering
    \includegraphics[width=0.8\textwidth]{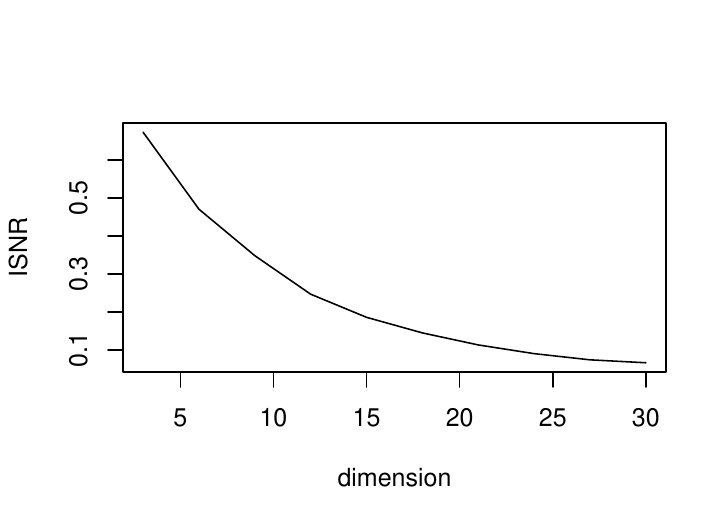}
    \caption{Inverse signal-to-noise ratio (ISNR) as dimension increases from 3 to 30, with the scaling factor fixed at 1.}
    \label{ISNRdimension}
\end{figure}

We also study the expectation of $\|\frac{\partial w_{1,\beta}}{\partial \beta}\frac{\partial w_{1,\beta}}{\partial \beta}^\top \|_2^2$. Recall that $\frac{\partial w_{1,\beta}(Y_{t-1}^t)}{\partial \beta} =  2w_{1,\beta}(Y_{t-1}^t)\{1 - w_{1,\beta}(Y_{t-1}^t)\}(Y_t \otimes Y_{t-1})$, and therfore
\begin{equation*}
    \frac{\partial w_{1,\beta}(Y_{t-1}^t)}{\partial \beta}\frac{\partial w_{1,\beta}(Y_{t-1}^t)}{\partial \beta}^\top =  \left[2w_{1,\beta}(Y_{t-1}^t)\left\{1 - w_{1,\beta}(Y_{t-1}^t)\right\}\right]^2\left(Y_t \otimes Y_{t-1} \right)\left(Y_t \otimes Y_{t-1} \right)^\top.
\end{equation*}
Hence, 
\begin{align*}
    \left\|\frac{\partial w_{1,\beta}(Y_{t-1}^t)}{\partial \beta}\frac{\partial w_{1,\beta}(Y_{t-1}^t)}{\partial \beta}^\top \right\|_2^2 
    &=  \left[2w_{1,\beta}(Y_{t-1}^t)\left\{1 - w_{1,\beta}(Y_{t-1}^t)\right\}\right]^4 \left\|\left(Y_t \otimes Y_{t-1} \right)\left(Y_t \otimes Y_{t-1} \right)^\top\right\|_2^2 \\
    &= \left[2w_{1,\beta}(Y_{t-1}^t)\left\{1 - w_{1,\beta}(Y_{t-1}^t)\right\}\right]^4 \left\|\left(Y_t \otimes Y_{t-1} \right)\right\|_2^4 \\
    &=  \left[2w_{1,\beta}(Y_{t-1}^t)\left\{1 - w_{1,\beta}(Y_{t-1}^t)\right\}\right]^4 \left\|Y_t\right\|_2^4 \left\|Y_{t-1}\right\|_2^4.
\end{align*}
We study the expectation of the quantity above when $\beta = \beta^*$ as we increase the magnitude of the regression coefficients or the dimension. The expectation is approximated using a sample average in the same way as described earlier. Figure~\ref{normscale} plots the expected squared norm as we increase the scaling factor from 0.3 to 1.5 while keeping the dimension fixed at $d=3$. Figure~\ref{normdimension} plots the expected squared norm as we increase the dimension $d$ from 3 to 90 while keeping the scaling factor at 1. The observed trends suggest that Assumption~\ref{randomentropy} is plausible using arguments based on Markov inequality with the expectation being bounded. 

\begin{figure}
    \centering
    \includegraphics[width=0.8\textwidth]{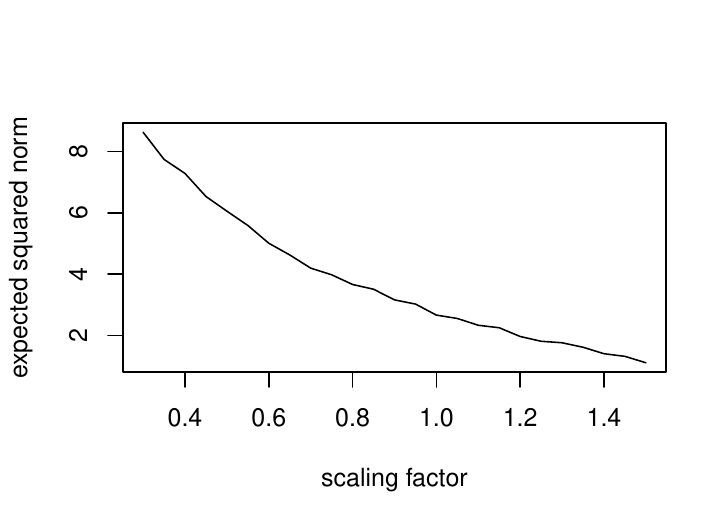}
    \caption{Expected squared norm of $(\partial w_{1,\beta}/\partial \beta)(\partial w_{1,\beta}/\partial \beta)^\top$ in a low-dimensional setting ($d=3$), when the scaling factor increases from 0.3 to 1.5. The magnitude of the regression coefficients is proportional to the scaling factor.}
    \label{normscale}
\end{figure}

\begin{figure}
    \centering
    \includegraphics[width=0.8\textwidth]{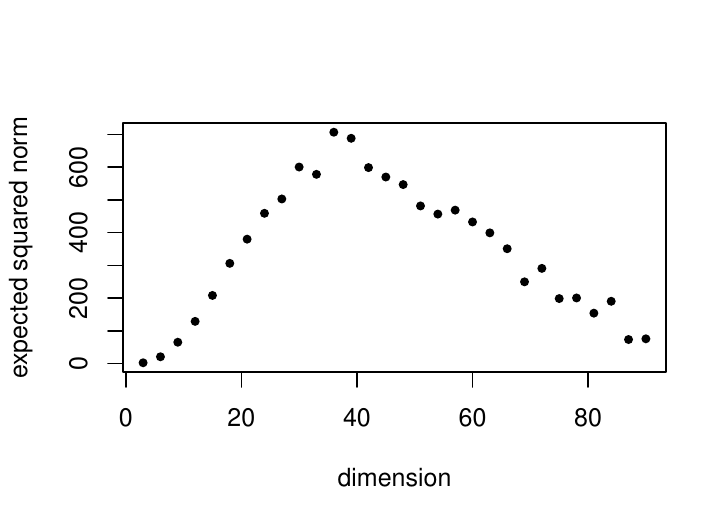}
    \caption{Expected squared norm of $(\partial w_{1,\beta}/\partial \beta)(\partial w_{1,\beta}/\partial \beta)^\top$ as dimension increases from 3 to 90, with the scaling factor fixed at 1.}
    \label{normdimension}
\end{figure}

\section{An expectation-maximization-truncation algorithm}\label{app:EMTalgorithm}
To control the number of false positives in the estimate of the regression coefficients $\hat\beta$, we introduce an additional thresholding step. Specifically, for a given function $\xi$ of sample size $T$ and dimension $d$, we let $\xi(T,d)$ be the threshold level. Define the thresholded estimate $\hat\beta_{thres}$ such that $\hat\beta_{thres}^k = \hat\beta^k I\{|\hat\beta^k| \geq \xi(T,d)\}$, where $\hat\beta_{thres}^k$ and $\hat\beta^k$ denote the $k$-th element of the vector $\hat\beta_{thres}$ and $\hat\beta$, respectively, for $k \in \{1,\ldots, Kd^2\}$. This thresholding step allows us to control the (non-)sparsity level of the regression coefficient estimates uniformly throughout the EM iterations, and can potentially facilitate the theoretical analysis of the algorithm as discussed in Section~\ref{sec:theoretical}.

We outline such an expectation-maximization-truncation (EMT) algorithm in Algorithm~\ref{modifiedEMT}. The penalty parameter $\lambda$ and the threshold level $\xi$ will change over iterations, and we use $\lambda^{(q)}$ and $\xi^{(q)}$ to denote their values in the $q$-th iteration. 
\begin{algorithm}
\caption{An EMT algorithm for high-dimensional Markov-switching VAR model}\label{modifiedEMT}
\hspace*{\algorithmicindent} \textbf{Input:} Observations $\{Y_0,Y_1,\ldots,Y_T\}$, number of regimes $K$;\\
\hspace*{\algorithmicindent} 
\textbf{Output:} Parameter estimate $\hat\theta$ 
\begin{algorithmic}[]
\State Initialize the parameter $\theta^{(0)} = ((\beta^{(0)})^\top, (p^{(0)})^\top,(\sigma^{(0)})^\top)^\top$
\State $q \leftarrow 1$
\While{Convergence condition not met}
\State (a) choose tuning parameter $\lambda^{(q)}$ and threshold level $\xi^{(q)}$
\State (b) optimize the objective \eqref{sampleQ}: $\hat\theta = (\hat\beta^\top,\hat p^\top, \hat\sigma^\top)^\top = \argmax_{\tilde\theta} Q_{n,\lambda^{(q)}} (\tilde\theta|\theta^{(q-1)})$ 
\State (c) update $p$ and $\sigma$: $p^{(q)} \leftarrow \hat p$, $\sigma^{(q)} \leftarrow \hat\sigma$
\State (d) update $\beta$: $\beta^{(q)} \leftarrow \hat\beta_{thres}$ with $\hat\beta_{thres}^k = \hat\beta^k I\{|\hat\beta^k| \geq \xi^{(q)}\}$
\State (e) update $\theta$: $\theta^{(q)} \leftarrow ((\beta^{(q)})^\top,(p^{(q)})^\top,(\sigma^{(q)})^\top)^\top$
\State $q \leftarrow q+1$
\EndWhile
\State $\hat\theta \leftarrow \theta^{(q-1)}$

\end{algorithmic}
\end{algorithm}

\end{document}